\documentclass[]{llncs}

\usepackage{amsmath,proof}
\usepackage{amssymb}
\usepackage{stmaryrd}
\usepackage{mathpartir}

\usepackage{color}
\usepackage[]{todonotes}
\usepackage{thm-restate}
\usepackage{hyperref}

\newcommand{\lft}[1]{{{#1}\mathsf{L}}}
\newcommand{\rgt}[1]{{{#1}\mathsf{R}}}

\newcommand{\B}[1]{#1}

\newcommand{\m}[1]{\mathsf{#1}}
\newcommand{\nub}{{\boldsymbol{\nu}}}

\newcommand{\zero}{\boldsymbol{0}}

\newcommand{\pcase}[1]{\m{case}\{#1\}}

\newcommand{\seq}{\vdash} % too many arrows...
\newcommand{\semi}{\mathrel{;}}
\newcommand{\lolli}{\multimap}
\newcommand{\with}{\mathbin{\binampersand}}
\newcommand{\tensor}{\otimes}
\newcommand{\one}{\mathbf{1}}
\newcommand{\bang}{{!}}

\newcommand{\cut}{\mathsf{cut}}

% Encodings stuff
\newcommand{\lb}{\llbracket}
\newcommand{\rb}{\rrbracket}

\newcommand{\Ga}{\Gamma}
\newcommand{\D}{\Delta}

\newcommand{\ccase}[2]{\mathsf{case}\;#1\;\mathsf{of}\;#2}

\newcommand{\hPi}{\Pi}

\newcommand{\tra}[1]{\xrightarrow{#1}}

\newcommand{\type}{\m{type}}
\newcommand{\stype}{\m{stype}}
\newcommand{\ov}[1]{\overline{#1}}

\newcommand{\monad}[1]{\{#1\}}

\newcommand{\ltrue}{\mathtt{t}}
\newcommand{\lfalse}{\mathtt{f}}
\newcommand{\myparagraph}[1]{\paragraph{\bf #1}}

\pagestyle{headings}

\makeatletter
\let\c@proposition\c@theorem
\let\c@corollary\c@theorem
\let\c@lemma\c@theorem
\let\c@definition\c@theorem
\let\c@example\c@theorem
\let\c@remark\c@theorem
\makeatother
\numberwithin{proposition}{section}
\numberwithin{corollary}{section}
\numberwithin{lemma}{section}
\numberwithin{theorem}{section}
\numberwithin{definition}{section}
\numberwithin{example}{section}
\numberwithin{remark}{section}

\begin{document}

\title{Depending on Session-Typed Processes} 
\author{Bernardo Toninho \and Nobuko Yoshida}
\institute{Imperial College London, United Kingdom}
%  \email{b.toninho@imperial.ac.uk}, \email{n.yoshida@imperial.ac.uk}
%}

\maketitle

\begin{abstract}
  % Session types are a typing discipline for channel-based concurrency that
  % enables the specification of simple communication protocols as channel
  % \emph{types}, which can then be statically enforced to ensure various
  % properties such as deadlock-freedom and protocol fidelity.
  %
  This work proposes a dependent type theory that combines functions
  and session-typed processes (with value dependencies) through a
  contextual monad, internalising typed processes in a
  dependently-typed $\lambda$-calculus. The proposed framework, by
  allowing session processes to depend on functions and vice-versa,
  enables us to specify and statically verify protocols where the
  choice of the next communication action can depend on specific
  values of received data. Moreover, the type theoretic nature of the
  framework endows us with the ability to internally describe and
  prove predicates on process behaviours.
  Our main results are type soundness of the framework, and a faithful
  embedding of the functional layer of the calculus within the
  session-typed layer, showcasing the expressiveness of dependent
  session types.

\end{abstract}

%%% Local Variables:
%%% mode: latex
%%% TeX-master: "main"
%%% End:

\section{Introduction}
\label{sec:intro}
%Session types
%\cite{DBLP:conf/parle/TakeuchiHK94,DBLP:conf/esop/HondaVK98} are a
%typing discipline for channel-based concurrency that enables the
%specification of simple communication protocols as channel
%\emph{types}, which can then be statically enforced to ensure various
%properties such as deadlock freedom and protocol fidelity.
%%
%%
%In such a setting, where protocols are types and type conformance implies
%protocol conformance, the natural way to enrich protocol languages is
%to consider increasingly sophisticated type theories. To this end,
%many works have considered variations and extensions of the initial
%proposal of session types to account for a variety of 
%features such as participant parameterisation \cite{DYBH12}, protocol
%polymorphism \cite{DBLP:conf/esop/CairesPPT13}, among many others.
%Such has especially been the case with the advent of the connection
%between session types and linear logic
%\cite{DBLP:conf/concur/CairesP10,DBLP:journals/mscs/CairesPT16}, which
%gave rise to a variety of session typing constructs
%through the proof theoretic exploration of logical concepts such as
%quantification \cite{Toninho:2011:DST:2003476.2003499}, which
%introduced a form of value dependent session types where one can
%express properties of exchanged data in protocol specifications.
%%

Session types
\cite{DBLP:conf/parle/TakeuchiHK94,DBLP:conf/esop/HondaVK98} are a
typing discipline for communication protocols, whose simplicity
provides an extensible framework that allows for integration with a
variety of functional type features.  One useful instance arising from
the proof theoretic exploration of logical quantification is {\em
  value dependent session types}
\cite{Toninho:2011:DST:2003476.2003499}. In this work, one can express
properties of exchanged data in protocol specifications separately
from communication, but {\em cannot} describe protocols where
communication actions depend on the actual exchanged data 
(e.g.~\cite[\S~2]{DBLP:journals/pacmpl/IgarashiTVW17}).
Moreover, it does not allow functions or values to depend on protocols
(i.e.~sessions) or communication, thus preventing reasoning about
dependent process behaviours, exploring the proofs-as-programs
paradigm of dependent type theory,
e.g.~\cite{DBLP:conf/popl/KrishnaswamiPB15,DBLP:conf/esop/GeorgesMOP17}.

Our work addresses the limitations of existing formulations of
session types by proposing a type theory that
integrates dependent functions \emph{and} session types using a
\emph{contextual monad}. This monad internalises a session-typed calculus
within a dependently-typed $\lambda$-calculus. By allowing session
types to depend on $\lambda$-terms \emph{and} $\lambda$-terms to
depend on typed processes (using the monad), we are able to achieve
heightened degrees of expressiveness. Exploiting the former direction,
we enable writing actual data-dependent communication
protocols. Exploiting the latter, we can define and \emph{prove}
properties of linearly-typed objects (i.e. processes) within our
intuitionistic theory.

% The previous work studies {\em value dependent types} 
% \cite{Toninho:2011:DST:2003476.2003499}, which only
% allows for types to express properties of exchanged values, akin to a
% refinement type system at the level of data messages. 
% In these systems
% it is not possible to express a protocol (i.e.~a 
% \emph{type-level} specification) for which a \emph{communication
%   action} is decided by a previously received value.

%
%Our work goes beyond the state of the art by combining value
%dependencies with type-level $\lambda$-abstractions in order to
%achieve a much richer kind of protocol specification.

To informally demonstrate how our type theory goes beyond the
state of the art in order to represent
data-dependent protocols, consider the following session type
(we write $\tau \wedge A$ for $\exists x{:}\tau.A$ where $x$ does not
occur in $A$ and similarly $\tau \supset A$ for $\forall x{:}\tau.A$
when $x$ is not free in $A$),
$T \triangleq \m{Bool}\supset \oplus\{\ltrue : \m{Nat} \wedge \one ,
\lfalse : \m{Bool}\wedge \one\}$, representable in existing
session typing systems.  The type $T$ denotes a protocol which first,
inputs a boolean and then either emits the label $\ltrue$, which will
be followed by an output of a natural number; or emits the label
$\lfalse$ and a boolean. \B{The intended protocol described by} $T$ is to
take the $\ltrue$ branch if the received value is $\ltrue$ and the
$\lfalse$ branch otherwise, which we can implement as $Q$ with channel
$z$ typed by $T$ as follows:
\[
\begin{array}{lcl}
Q & \triangleq & z(x).\ccase{x}{(\m{true} \Rightarrow z.\ltrue;
  z\langle 23 \rangle.\zero, \ 
%    & & \qquad\qquad\qquad
        \m{false} \Rightarrow z.\lfalse;z\langle \m{true}
        \rangle.\zero)}
%:: z{:} T
\end{array}
\]
where $z(x).P$ denotes an input process, $z.\ltrue$ is a process which
selects label $\ltrue$ and $z\langle 23 \rangle.P$ is an output on
$z$. However, since the specification is imprecise, process
$z(x).\ccase{x}{(\m{false} \Rightarrow z.\ltrue; z\langle 23
  \rangle.\zero, \ \m{true} \Rightarrow z.\lfalse;z\langle \m{true}
  \rangle.\zero)}$ is also a type-correct implementation of $T$ that
does not adhere to the intended protocol.
%\[
%\begin{array}{lcl}
%  Q' & \triangleq & z(x).\ccase{x}{\m{false} \Rightarrow z.t; z\langle 23 \rangle.\zero\\
%    & & \qquad\qquad\qquad
%        \m{true} \Rightarrow z.f;z\langle \m{true} \rangle.\zero} :: z{:} T
%\end{array}
%\]
Using our dependent type system, we can narrow the specification to
guarantee that the desired protocol is precisely enforced.  Consider
the following definition of a session-type level conditional where we
assume inductive definition and dependent pattern matching
mechanisms ($\stype$ denotes the \emph{kind} of session types):
%, which are orthogonal to the crux of our development):
\[
  \begin{array}{l}
    \mathtt{if} :: \m{Bool}\rightarrow \stype \rightarrow \stype
    \rightarrow \stype\\
    \mathtt{if}\ \m{true}\, A\, B \ = \ A \quad \quad 
    \mathtt{if}\ \m{false}\, A\, B \ = \ B
  \end{array}
\]

The type-level function above case analyses the boolean and produces
its first session type argument if the value is $\m{true}$ and the
second otherwise.
We may now specify a session type that faithfully implements the
protocol:
\[
  T' \triangleq \forall x{:}\m{Bool}.\mathtt{if}\,x\,(\m{Nat} \wedge \one)\,
  (\m{Bool}\wedge \one)
\]
%It is easy to see that $T' :: \stype$. 
A process $R$ implementing such
a type on channel $z$ is given below:
\[
\begin{array}{lcl}
R  \triangleq  z(x).\ccase{x}{(\m{true} \Rightarrow z\langle 23
  \rangle.\zero, \ 
%    & & \qquad\qquad\qquad\,\,
\m{false} \Rightarrow z\langle \m{true} \rangle.\zero)}
\end{array}
\]
Note that if we flip the two branches of the case analysis \B{in $R$,
  the session is} no longer typable with $T'$, ensuring that the
protocol is implemented faithfully.
%While the example above illustrates a fairly toy protocol, it is easy
%to see how the pattern generalises to more intricate protocols where
%the sequence of communication actions dependents tightly on previously
%received values.

The example above illustrates a simple yet useful data-dependent
protocol.  When we further extend our dependent types with a
\emph{process} monad \cite{DBLP:conf/esop/ToninhoCP13}, where
$\{c \leftarrow P \leftarrow \ov{u_j};\ov{d_i}\}$ is a functional term
denoting a process that may be \emph{spawned} by other processes by
instantiating the names in $\ov{u_j}$ and $\ov{d_i}$,
% $x\leftarrow M \leftarrow \ov{y_j}$ where monadic term $M$ which forms
%  to be evaluated to a process
% $P\{\ov{y_i}/\ov{x_i}\}$,
we can provide more powerful reasoning on
processes, enabling refined specifications through
the use of type indices (i.e. type families) and an ability to
internally specify and verify predicates on process behaviours.
We also show that {\em all} functional types and terms can be
faithfully embedded in the process layer using the dependently-typed
sessions and process monads.

\paragraph{\bf Contributions.} \S~\ref{sec:sys} introduces
our dependent type theory, augmenting the example above by showing how
we can reason about process behaviour using type families and
dependently-typed functions (\S~\ref{sec:exreasoning}). We then
establish the soundness of the theory (\S~\ref{sec:meta}).
\S~\ref{sec:embed} develops a faithful embedding of the dependent
function space in the process layer (Theorem~\ref{thm:opcorrcomp}).
\S~\ref{sec:conc} concludes with related work.
This article is a long version of \cite{depstypes} containing omitted
definitions, proofs and additional examples.

%%% Local Variables:
%%% mode: latex
%%% TeX-master: "main"
%%% End:

\section{A Dependent Type Theory of Processes}
\label{sec:sys}
\begin{figure}[t]
$
\begin{array}{llcl}
\mbox{Kinds } & K,K' & ::= &\m{type} \mid \m{stype} \mid \Pi x{:}\tau.K \mid \Pi t{:}K . K'\\
%\mbox{Types}\\
\mbox{Functional} \quad & \tau,\sigma & ::= &\Pi x{:}\tau.\sigma \mid
\lambda x{:} \tau . \sigma \mid \tau \; M \mid
 \{\ov{u_j{:}B_j} ; \ov{d_i{:}A_i} \vdash c{:}A\} 
%\\
% &&& \mid %\Sigma x{:}\tau . \sigma \mid 
\mid \lambda t :: K.\tau \mid \tau\,\sigma \\
\mbox{Sessions} & A,B & ::= & \bang A \mid A \lolli B \mid A
 \tensor B \mid \forall x{:}\tau . A \mid \exists x{:}\tau. A \mid
                              %\oplus\{\ov{l_i : A_i}\} \mid
                              %\with\{\ov{l_i : A_i}\} \mid
 \one 
\\
&&\mid & 
\with\{
\overline{l_i : A_i} \} \mid \oplus\{\overline{l_i : A_i}\} \mid 
 \lambda x{:}\tau . A \mid  A\;M \mid 
%\\
%&&& 
\lambda t {::} K. A \mid A \, B\\
\mbox{Terms } & M,N & ::= & \lambda x {:}\tau. M \mid \{ c \leftarrow P \leftarrow  \ov{u_j} ; \ov{d_i}\} \mid 
M\, N \mid x  \\%\mid (M,N) \mid \\
%&&& \pi_1 M \mid \pi_2 M\\

\mbox{Processes } & P,Q & ::= & 
\ov{c}\langle d \rangle.P  \mid (\nub c)P \mid c(x).P \mid c\langle M
\rangle.P \mid \bang c(x).P
\\
%&&& 
&& \mid & c.\pcase{\ov{l_i \Rightarrow P_i}} \mid c.l ; P \mid
[c\leftrightarrow d] \mid \zero
%\\&&& 
\mid c \leftarrow M \leftarrow  \ov{u_j} ; \ov{d_i}  ; Q
\end{array}
$
\caption{Syntax of Kinds, Types, Terms and Processes\label{fig:syntax}}
\end{figure}

This section introduces our dependent type theory combining
session-typed processes and functions. The theory is a generalisation
of the line of work relating linear logic and session types
\cite{DBLP:conf/concur/CairesP10,Toninho:2011:DST:2003476.2003499,DBLP:conf/esop/ToninhoCP13},
considering type-level functions and dependent kinds
in an intensional type theory with full
\emph{mutual} dependencies between functions and processes.
This generalisation enables
us to express more sophisticated session types (such as those of
\S~\ref{sec:intro}) and also to define and \emph{prove} properties of
processes expressed as type families with proofs as their inhabitants.
%
%While the full
%type theory is presented, 
We focus on the new rules and judgements,
pointing the interested reader to
\cite{Toninho:2011:DST:2003476.2003499,DBLP:journals/mscs/CairesPT16,toninhothesis}
for additional details on the base theory.

\subsection{Syntax}
\label{subsec:syntax}

The calculus is stratified into
two mutually dependent layers of processes and terms, which
we often refer to as the \emph{process} and \emph{functional}
layers, respectively. The syntax of the
theory is given in Fig.~\ref{fig:syntax} (we use $x,y$ for variables
ranging over terms and $t$ for variables ranging over types).

\myparagraph{Types and Kinds.} The process layer
is able to refer to terms of the functional layer via appropriate
(dependently-typed) communication actions and through a \emph{spawn}
construct, allowing for processes encapsulated as functional values to
be executed. Dually, the functional layer can refer to the process
layer via a \emph{contextual} monad \cite{DBLP:conf/esop/ToninhoCP13}
that internalises (open) typed processes as opaque functional
values.
This mutual dependency is also explicit in the type structure on
several axes: process channel usages are typed by a language of
session types, which specifies the communication protocols implemented
on the used channels, extended with two dependent communication
operations $\forall x{:}\tau.A$ and $\exists x{:}\tau.A$, where $\tau$
is a functional type and $A$ is a session type in which $x$ may
occur. Moreover, we also extend the language of session types with
type-level $\lambda$-abstraction over terms $\lambda x{:}\tau.A$
and session types $\lambda t {::} K.A$ (with the corresponding elimination forms
$A\,M$ and $A\,B$). As we show in \S~\ref{sec:intro}, the combination of
these features allows for a new degree of expressiveness, enabling us
to construct session types whose structure depends on previously
communicated values.

The remaining session constructs are standard, following
\cite{DBLP:journals/mscs/CairesPT16}: $\bang A$ denotes a \emph{shared} session of
type $A$ that may be used an arbitrary (finite) number of times;
$A \lolli B$ represents a session offering to input a session of type
$A$ to then offer the session behaviour $B$; $A \tensor B$ is the dual
operator, denoting a session that outputs $A$ and proceeds as $B$;
$\oplus\{\ov{l_i : A_i}\}$ and $\with\{\ov{l_i : A_i}\}$ represent
internal and external labelled choice, respectively;
$\one$ denotes the terminated session.

The functional layer is a $\lambda$-calculus with dependent
functions $\Pi x{:}\tau.\sigma$, type-level $\lambda$-abstractions
over terms and types (and respective type-level applications) and a
\emph{contextual monadic} type $\{\ov{u_j{:}B_j} ; \ov{d_i{:}A_i}
\vdash c{:}A\}$, denoting a
(quoted) process offering session $c{:}A$ by using the \emph{linear}
sessions $\ov{d_i{:}A_i}$ and \emph{shared} sessions
$\ov{u_j{:}B_j}$  \cite{DBLP:conf/esop/ToninhoCP13}. We often write
$\monad{A}$ for $\monad{\cdot;\cdot\vdash c{:}A}$.
The kinding system for our theory contains two base kinds $\type$ and
$\stype$ of functional and session types, respectively.
Type-level $\lambda$-abstractions require dependent kinds $\Pi
x{:}\tau.K$ and $\Pi t {::} K.K'$, respectively. We note that the
functional connectives form a standard dependent type theory \cite{DBLP:journals/jacm/HarperHP93,norell:thesis}.

\myparagraph{Terms and Processes.}
Terms include the standard
$\lambda$-abstractions $\lambda x{:}\tau.M$, applications $M\,N$ and
variables $x$. In order to internalise processes within the functional
layer we make use of a monadic process wrapper, written
$\monad{c \leftarrow P \leftarrow \ov{u_j} ; \ov{d_i}}$. In such a
construct, the channels $c$, $\ov{u_j}$ and $\ov{d_i}$ are bound in
$P$, where $c$ is the session channel being offered and $\ov{u_j}$ and
$\ov{d_i}$ are the session channels (linear and shared, respectively)
being used. We write $\monad{c \leftarrow P \leftarrow \epsilon}$ when
$P$ does not use any ambient channels, which we abbreviate to
$\monad{P}$.

The syntax of processes follows that of
\cite{DBLP:journals/mscs/CairesPT16} extended with the monadic
elimination form $c \leftarrow M \leftarrow  \ov{u_j} ; \ov{d_i}  ;
Q$. Such a process construct denotes a term $M$ that is to be 
evaluated to a monadic value of the form $\monad{c \leftarrow P
  \leftarrow \ov{u_j} ; \ov{d_i}}$ which will then be executed in
parallel with $Q$, sharing with it a session channel $c$ and using the
provided channels $\ov{u_j}$ and $\ov{d_i}$. We write
$c\leftarrow M \leftarrow \epsilon ; Q$ when no channels are provided
for the execution of $M$ and often abbreviate this to $c\leftarrow M ;
Q$. 
The process $\ov{c}\langle d \rangle.P$ denotes the output of the
\emph{fresh} channel $d$ along channel $c$ with continuation $P$,
which binds $d$; $(\nub c)P$ denotes channel hiding, restricting the
scope of $c$ to $P$; $c(x).P$ denotes an input along $c$, bound to $x$
in $P$; $c\langle M \rangle.P$ denotes the output of term $M$ along
$c$ with continuation $P$; $\bang c(x).P$ denotes a replicated input
which spawns copies of $P$; the construct $c.\pcase{\ov{l_i \Rightarrow
    P_i}}$ codifies a process that waits to receive some label $l_j$
along $c$, with continuation $P_j$; dually, $c.l ; P$ denotes a
process that emits a label $l$ along $c$ and continues as $P$;
$[c\leftrightarrow d]$ denotes a forwarder between $c$ and $d$, which
is operationally implemented as renaming; $P\mid Q$ denotes parallel
composition and $\zero$ the null process.
% We often write $\monad{A}$
% and $\monad{P}$ for the type $\monad{\cdot;\cdot\vdash c{:}A}$ and the
% term $\monad{c\leftarrow P \leftarrow \epsilon}$, respectively, where
% $\epsilon$ denotes an empty listing of channels. 

\subsection{A Dependent Typing System}
\label{subsec:typing}
We now introduce our typing system, defined by a series of mutually
inductive judgements, given in Fig.~\ref{fig:typjudg}. We use $\Psi$
to stand for a typing context for dependent $\lambda$-terms (i.e. 
assumptions of the form $x{:}\tau$ or $t :: K$, not subject to exchange), $\Ga$ for a typing
context for \emph{shared} sessions of the form $u{:}A$ (implicitly
subject to weakening and contraction) and $\D$ for a linear context of
sessions $x{:}A$.  The context well-formedness judgments $\Psi \vdash$
and $\Psi ; \D \vdash$  require that types  and kinds (resp. session
types) in $\Psi$ (resp. $\D$) are well-formed.
The judgments $\Psi \vdash K$, 
$\Psi \vdash \tau :: K$ and $\Psi \vdash A :: K$ codify
well-formedness of kinds, functional and session types (with
kind $K$), respectively. Their rules are standard. 

%\myparagraph{Well-formedness.} The judgments $\Psi \vdash K$,
%$\Psi \vdash \tau :: K$ and $\Psi \vdash A :: K$ codify
%well-formedness of kinds, functional types and session types (with
%kind $K$), respectively. We list a few of the key well-formedness
%rules in Fig.~\ref{fig:wf}. Kind well-formedness is standard. The top
%three rules pertain to type-level $\lambda$-abstractions in functional
%types, type-level application and kind conversion. The bottom three
%are the analogue rules for session type well-formedness. The
%$\lambda$-abstraction cases are kinded with $\Pi x{:}\tau.K$, provided
%$\tau$ is well-formed and the underlying abstraction body is of kind
%$K$ under the assumption that $x{:}\tau$. Dually, application is
%kinded with $K\{M/x\}$ provided the type being applied is of kind $\Pi
%x{:}\tau.K$ and the term $M$ is of type $\tau$. The conversion rules
%simply allow for re-kinding of types with (definitionally) equal kinds.

\begin{figure}[t]
  \[
\small
\begin{array}{ll}
  \Psi \vdash  & \mbox{Context $\Psi$ is well-formed.}\\
  \Psi ; \D \vdash & \mbox{Context $\D$ is well-formed, under
                     assumptions in $\Psi$.}\\
\Psi \vdash K & \mbox{$K$ is a kind in context $\Psi$.}\\
\Psi \vdash \tau :: K & \mbox{$\tau$ is a (functional) type of kind
                        $K$ in context $\Psi$.}\\
\Psi \vdash A :: K & \mbox{$A$ is a session type of kind $K$ in context $\Psi$.}\\
\Psi \vdash M : \tau & \mbox{$M$ has type $\tau$ in context $\Psi$.}\\
\Psi ; \Ga ; \D \vdash P :: z{:}A & \mbox{$P$ offers session $z{:}A$
                                   when composed with processes}\\
& \mbox{\B{offering sessions specified in} $\Ga$ and $\D$ in
                                   context $\Psi$.}\\
\Psi \vdash K_1 = K_2 & \mbox{Kinds $K_1$ and $K_2$ are equal.}\\
\Psi \vdash \tau = \sigma :: K & \mbox{Types $\tau$ and $\sigma$ are
                                 equal of kind $K$.}\\
\Psi \vdash A = B :: K & \mbox{Session types $A$ and $B$ are equal of
                         kind $K$.}\\
\Psi \vdash M = N : \tau & \mbox{Terms $M$ and $N$ are equal of type $\tau$.}\\
\Psi \vdash \D = \D' :: \stype & \mbox{Contexts $\D$ and $\D'$ are
                                   equal, under the assumptions in $\Psi$.}\\
  \Psi ; \Ga ; \D \vdash P = Q :: z{:}A \quad 
  & \mbox{Processes $P$ and $Q$ are
                                        equal with typing $z{:}A$.}\\

\end{array}
  \]
\caption{Typing Judgements\label{fig:typjudg}}
\vspace{-3.5ex}
\end{figure}

%\begin{figure}[t]
%  \[
%    \begin{array}{c}
%      \inferrule[$(\m{TWF}\lambda)$]
%{\Psi \vdash \tau :: \type \quad \Psi , x{:} \tau \vdash \sigma :: K}      
%{\Psi \vdash \lambda x{:}\tau .\sigma :: \Pi x{:}\tau . K}
%\quad                                   
%\inferrule[$(\m{TWFApp})$]
%{\Psi \vdash \tau :: \Pi x{:}\sigma . K \quad \Psi \vdash M : \sigma}
%{\Psi \vdash \tau \, M :: K\{M/x\}}
%\quad
%      \inferrule[$(\m{TConv})$]
%      {\Psi \vdash \tau :: K \quad \Psi \vdash K = K'}      
%{\Psi \vdash \tau :: K'}\\[1em]                               
%      \inferrule[$(\m{STWF}\lambda)$]
%  {\Psi \vdash \tau :: \stype \quad \Psi , x{:}\tau \vdash A :: K}      
%{\Psi \vdash \lambda x{:}\tau . A :: \Pi x{:}\tau.K} 
%\quad
%      \inferrule[$(\m{STWFApp})$]
%{\Psi \vdash A :: \Pi x{:}\tau.K \quad \Psi \vdash M : \tau}      
%{\Psi \vdash A \, M :: K\{M/x\}}
%      \quad
%\inferrule[$(\m{STConv})$]
%      {\Psi \vdash A :: K \quad \Psi \vdash K = K'}      
%{\Psi \vdash A :: K'}      
%    \end{array}
%  \]

%\caption{Well-formedness of Types and Session Types (Excerpt
%  -- See Appendix~\ref{app:rules}) \label{fig:wf}}
%\end{figure}

\myparagraph{Typing.}
An excerpt of the typing rules for terms
and processes is given in Fig.~\ref{fig:typlam} and~\ref{fig:typpi},
respectively, noting that typing enforces types to be of
base kind $\type$ (respectively $\stype$). The rules for
dependent functions are standard, including the type conversion rule
which internalises definitional equality of types.  We highlight the
introduction rule for the monadic construct, which requires the
appropriate session types to be well-formed and the process $P$ to
offer $c{:}A$ when provided with the appropriate session contexts.

In the typing rules for processes (Fig.~\ref{fig:typpi}), presented as
a set of right and left rules (the former identifying how to
\emph{offer} a session of a given type and the latter how to use such
a session), we highlight the rules for dependently-typed
communication and monadic elimination (\B{for type-checking purposes
  we annotate constructs with the respective dependent
  type -- this is akin to functional type theories}). To offer a session $c{:}\exists x{:}\tau.A$ we send a term $M$
of type $\tau$ and then offer a session $c{:}A\{M/x\}$; dually, to
use such a session we perform an input along $c$, bound to $x$ in $Q$,
warranting a use of $c$ as a session of (open) type $A$. The rules for
the universal are dual. Offering a session  $c{:}\forall
x{:}\tau.A$ entails receiving on $c$ a term of type $\tau$ and
offering $c{:}A$. Using a session of such a type requires sending
along $c$ a term $M$ of type $\tau$, warranting the use of $c$ as
a session of type $A\{M/x\}$.

The rule for the monadic elimination form requires that the term $M$
be of the appropriate monadic type and that the provided channels
$\ov{u_j}$ and $\ov{y_i}$ adhere to the typing specified in $M$'s
type. Under these conditions, the process $Q$ may then use the session
$c$ as session $A$. The type conversion rules reflect session type
definitional equality in typing.

\begin{figure}[t]
  \[
\small
    \begin{array}{c}
%\inferrule*[left=$(\Pi I)$]
\inferrule[$(\Pi I)$]
{\Psi \vdash \tau :: \type \quad \Psi , x{:}\tau \vdash M : \sigma}
  {\Psi \vdash \lambda x {:} \tau . M : \Pi x{:}\tau.\sigma}
\quad
%\inferrule*[left=$(\Pi E)$]
\inferrule[$(\Pi E)$]
{\Psi \vdash M : \Pi x{:}\tau.\sigma \quad \Psi \vdash N : \tau }
  {\Psi \vdash M\, N : \sigma\{N/x\}}
      \\[1.5em]
\inferrule[$(\{\}I)$]
  {\forall i,j . \Psi \vdash A_i,B_j :: \stype 
   \quad \Psi ; \ov{u_j{:}B_j} ; \ov{d_i{:}A_i} \vdash P :: c{:}A}
  {\Psi \vdash \{c \leftarrow P \leftarrow \ov{u_j} ; \ov{d_i}\} : \{\ov{u_j{:}B_j};\ov{d_i : A_i} \vdash c{:}A\} }    
\quad
%      \inferrule*[left=$(\m{Conv})$]
      \inferrule[$(\m{Conv})$]
{\Psi \vdash M : \tau \quad \Psi \vdash \tau = \sigma :: \type}
  {\Psi \vdash M : \sigma}
    \end{array}
  \]
\caption{Typing for Terms (Excerpt -- See Appendix~\ref{app:typterm})\label{fig:typlam}}
\vspace{-3.5ex}
\end{figure}
\begin{figure}[t]
  \[
\small
    \begin{array}{c}
\inferrule[$(\rgt{\exists})$]
  {\Psi \vdash M {:} \tau \quad 
   \Psi ; \Ga ;  \Delta \seq P :: c {:} A\{M/x\}}
  {\Psi ; \Ga ;\Delta \seq  c\langle M \rangle_{\exists x{:}\tau.A} .P :: c {:}
    \exists x{:}\tau . A}
\quad
\inferrule[$(\lft{\exists})$]
  {\Psi \vdash \tau :: \type \quad \Psi, x{:}\tau \semi \Ga ; \Delta , c{:} A\seq Q :: d {:} D }
{\Psi \semi \Ga ; \Delta , c{:}\exists x{:}\tau. A \seq c(x{:}\tau).Q :: d {:} D}\\[1.5em]

\inferrule[$(\rgt{\forall})$]
  {\Psi \vdash \tau :: \type \quad \Psi, x{:}\tau \semi \Ga ; \Delta \seq P :: c {:} A}
{\Psi ; \Ga ; \Delta \seq c(x{:}\tau).P :: c {:} \forall x{:}\tau . A}
\quad
\inferrule[$(\lft{\forall})$]
  {\Psi \vdash M {:} \tau \quad
   \Psi ; \Ga ;  \Delta , c{:}A\{M/x\} \seq Q :: d {:} D}
  {\Psi ; \Ga ;\Delta , c{:}\forall x{:}\tau . A \seq  c\langle M \rangle_{\forall x{:}\tau.A} .Q :: d {:}
   D}
      \\[1.5em]
\inferrule[($\{\}E$)]
{\D' = \ov{d_i : B_i} \quad \ov{u_j{:}C_j} \subseteq \Ga \quad
\Psi \vdash M : \{\ov{u_j{:}C_j};\ov{d_i{:}B_i} \vdash c{:}A\}
\quad \Psi ; \Ga ; \D ,c{:}A\vdash Q :: z{:}C }
{\Psi ; \Ga ; \D' , \D \vdash 
      c \leftarrow M \leftarrow \ov{u_j};\ov{y_i} ; Q :: z{:}C }\\[1.5em]

\inferrule[($\rgt{\m{Conv}}$)]
{\Psi ; \Ga ; \D \vdash P :: z{:}A \quad \Psi \vdash A = B :: \stype }
{\Psi ; \Ga ; \D \vdash P :: z{:}B}

  \quad

  \inferrule[($\lft{\m{Conv}}$)]
  {\Psi ; \Ga' ; \D' \vdash P :: z{:}A \quad \Psi ; \Ga' ; \D' = \Psi ;
  \Ga ; \D}
      {\Psi ; \Ga ; \D \vdash P :: z{:}A}\\[1.5em]
      \inferrule*[left=$(\m{cut})$]
      {\Psi ; \Ga ; \D \vdash P :: c{:}A \quad
       \Psi ; \Ga ; \D', c{:}A\vdash Q :: d{:}D}
      { \Psi ; \Ga ; \D, \D' \vdash (\nub c)(P \mid Q) :: d{:}D       }
    \end{array}
  \]
  
\caption{Typing for Processes (Excerpt -- See Appendix~\ref{app:typproc})\label{fig:typpi}}
\vspace{-3.5ex}
\end{figure}

\myparagraph{Definitional Equality.} The crux of any dependent type
theory lies in its \emph{definitional equality}. 
Type equality relies on equality of terms which, by including
the monadic construct, necessarily relies on a notion of
\emph{process} equality.

Our presentation of an intensional definitional equality of terms
follows that of \cite{DBLP:journals/tocl/HarperP05}, where we %do not
% rely on untyped reduction but instead
consider an
intrinsically typed relation, including $\beta$ and $\eta$ conversion
(similarly for type equality which includes $\beta$ and $\eta$
principles for the type-level $\lambda$-abstractions).  An excerpt of
the rules for term equality is given in Fig.~\ref{fig:termeq}. The
remaining rules are congruence rules and closure under symmetry,
reflexivity and transitivity. Rule $(\m{TMEq}\beta)$ captures the
$\beta$-reduction, identifying a $\lambda$-abstraction applied to an
argument with the substitution of the argument in the function body
(typed with the appropriately substituted type). We highlight rule
$(\m{TMEq}\{\}\eta)$, which codifies a general $\eta$-like principle
for arbitrary terms of monadic type: We form a monadic term that
applies the monadic elimination form to $M$, forwarding the result
along the appropriate channel, which becomes a term equivalent to
$M$.

Definitional equality of processes is summarised in
Fig.~\ref{fig:proceq}. We rely on process reduction defined
below. %in Fig.~\ref{fig:procred}. 
Definitional
equality of processes \B{consists of the usual congruence rules,}
(typed) reductions and the commutting conversions of
linear logic and $\eta$-like principles, which allows for forwarding
actions to be equated with the primitive syntactic forwarding
construct. Commutting conversions amount to sound observational
equivalences between processes \cite{DBLP:conf/esop/PerezCPT12}, given
that session composition requires name restriction (embodied by the
$(\m{cut})$ rule): In rule $(\m{PEqCC}\forall)$, either process
can only be interacted with via channel $c$ and so postponing actions
of $P$ to after the input on $c$ (when reading the equality from left
to right) cannot impact the process' observable behaviours. 
While $P$ can in general interact with sessions in $\D$ (or with $Q$),
these interactions are unobservable due to hiding in the
$(\m{cut})$ rule.

\begin{figure}[t]
\[
\small
  \begin{array}{c}
\inferrule[$(\m{TMEq}\beta)$]
{\Psi \vdash \tau :: \type \quad \Psi ,x{:}\tau\vdash M : \sigma \quad
 \Psi \vdash N : \tau}
    {\Psi \vdash (\lambda x{:} \tau . M)\,N = M\{N/x\} : \sigma\{N/x\}}
    \quad
\inferrule[$(\m{TMEq}\eta)$]
  {\Psi \vdash M : \Pi x{:}\tau.\sigma \quad x \not\in fv(M)}
  {\Psi \vdash \lambda x{:}\tau. M\, x = M : \Pi x{:}\tau.\sigma }
    \\[1.5em]
 \inferrule[$(\m{TMEq}\{\}\eta)$]
  {\Psi \vdash M : \monad{\ov{u_j{:}B_j};\ov{d_i{:}A_i} \vdash c{:}A}}
  {\Psi \vdash \monad{c \leftarrow (y\leftarrow M ;\ov{u_j};\ov{d_i} ; [y\leftrightarrow c])
  \leftarrow\ov{u_j};\ov{d_i} } = M :  \monad{\ov{u_j{:}B_j};\ov{d_i{:}A_i} \vdash c{:}A} }
  \end{array}
\]
\vspace{-2ex}
\caption{Definitional Equality of Terms (Excerpt -- See
  Appendix~\ref{app:defeqterm})\label{fig:termeq}}
\vspace{-1ex}
\end{figure}
\begin{figure}[t]
\[
\small
  \begin{array}{c}
     \inferrule*[left=$(\m{PEqRed})$]
    {\Psi ; \Ga ; \D \vdash P :: z{:}A \quad
     P \tra{} Q \quad \Psi ; \Ga ; \D \vdash Q :: z{:}A}
    {\Psi ; \Ga ; \D \vdash P = Q :: z{:}A}\\[1em]

    \inferrule*[left=$(\m{PEq}\forall\eta)$]
    { }
    {\Psi ; \Ga ; d{:}\forall x{:}\tau.A \vdash  c(x).d\langle x \rangle.[d\leftrightarrow c] = [d\leftrightarrow c] :: c{:}\forall x{:}\tau.A }\\[1em]

    \inferrule*[left=$(\m{PEqCC}\forall)$]
    {\Psi ; \Ga ; \D \vdash P :: d{:}B \quad
     \Psi , x{:}\tau ; \Ga ; \D' , d{:}B \vdash Q :: c{:} A  }
    {\Psi ; \Ga ; \D  , \D' \vdash (\nub d)(P \mid c(x).Q) =
    c(x).(\nub d)(P \mid Q) :: c{:}\forall x{:}\tau.A}

  \end{array}
\]
\vspace{-2ex}
\caption{Definitional Equality of Processes (Excerpt -- See
  Appendix~\ref{app:defeqproc})\label{fig:proceq}}
\vspace{-3ex}
\end{figure}

\vspace{-3mm}
\myparagraph{Operational Semantics.}
The operational semantics for the $\lambda$-calculus is standard,
noting that no reduction can take place inside monadic terms. 
The operational (reduction) semantics for 
processes is presented 
below 
%in Fig.~\ref{fig:procred} 
where we omit closure under
structural congruence and the standard congruence rules 
\cite{DBLP:conf/concur/CairesP10,Toninho:2011:DST:2003476.2003499,DBLP:conf/esop/ToninhoCP13}.
The last rule defines 
spawning a process in a monadic term. 
%for parallel
%composition, the $\nu$-binder and the congruence rules for term reduction.
%\begin{figure}[t]
  \[
\small
    \begin{array}{ll}
      c\langle M \rangle.P \mid c(x).Q \tra{} P \mid Q\{M/x\} &
      \ov{c}\langle x\rangle.P \mid c(x).Q \tra{} (\nub x)(P \mid Q)\\[0.5em]
      \bang c(x).P \mid \ov{c}\langle x \rangle.Q \tra{}   \bang c(x).P \mid (\nub x)(P \mid Q)\quad
      & c.\m{case}\{\ov{l_i \Rightarrow P_i}\} \mid c.l_j;Q \tra{} P_j \mid Q\,\,\,\,(l_j \in \ov{l_i})\\[0.5em]
     (\nub c)(P \mid [c\leftrightarrow d]) \tra{} P\{d/c\} &
                                                              \hspace{-1cm}c\leftarrow
                                                              \monad{c\leftarrow
                                                              P
                                                              \leftarrow
                                                              \ov{u_j};\ov{d_i}}
                                                              \leftarrow \ov{u_j};\ov{d_i}
                                                              ; Q \tra{} (\nub c)(P \mid Q)
    \end{array}
  \]
%\caption{Process Reduction\label{fig:procred}}
%\end{figure}

\subsection{Example -- Reasoning about Processes using Dependent Types}
\label{sec:exreasoning}

%A common use case of dependent types is the use of 
The use of type indices
(i.e. type families) in dependently typed frameworks adds information to types
to produce more refined specifications.
Our framework enables us to do this at the level of session types.

Consider a session type that ``counts down'' on a natural number (we
assume inductive definitions and dependent pattern matching in the
style of \cite{norell:thesis}):
\[
  \begin{array}{lcl}
    \m{countDown} & :: & \Pi x{:}\m{Nat}.\stype \\
    \m{countDown}\,(\m{succ}(n)) &  = & 
 \exists y{:}\m{Nat}.\m{countDown}(n)\\
    \m{countDown}\,\,\,\m{z} & = &
          \one                     
  \end{array}
\]
The type family $\m{countDown}(n)$ denotes a session type that emits
exactly $n$ numbers and then terminates. We can now write a
(dependently-typed) function that produces processes with the
appropriate type, given a starting value: %(we write $\monad{A}$ as shorthand for the type
% $\monad{\cdot;\cdot\vdash c{:}A}$)

% We define an indexed session type that counts down on a given natural,
% sending that number  ``at each step'':
% \[
%   \begin{array}{lcl}
%     \m{countDown} & :: & \Pi x{:}\m{Nat}.\stype \\
%     \m{countDown}\,(\m{succ}(n)) &  = & 
%  \exists y{:}\m{Nat}(\m{succ}(n)).\m{countDown}(n)\\
%     \m{countDown}\,\,\,\m{z} & = &
%           \one                     
%   \end{array}
% \]
% This requires a notion of inductive type and also a notion of
% singleton types. The former can be extremely
% simple/syntactical \emph{a la} Coq -- smaller argument style of
% thing. Singleton types (for these primitive values) are not so complicated either.

% Now lets write a dependently-typed function that produces processes of the
% appropriate type (we write $\monad{A}$ as shorthand for the type
% $\monad{\cdot;\cdot\vdash c{:}A}$), given a starting value:
\[
  \begin{array}{lcl}
    \m{counter} & : & \Pi x{:}\m{Nat}.\monad{\m{countDown}(x)}\\
    \m{counter}\,\,\,(\m{succ}(n)) & = & \monad{c \leftarrow c\langle \m{succ}(n) \rangle.
                \, d \leftarrow \m{counter}(n) ;
                 [d\leftrightarrow c]}\\
    \m{counter}\,\,\,\m{z}     & = & \monad{c\leftarrow \zero}
  \end{array}
\]

Note how the type of $\m{counter}$, through the type family
$\m{countDown}$, allows us to specify exactly the number of times a
value is sent. This is in sharp contrast with existing recursive (or
inductive/coinductive
\cite{DBLP:conf/icfp/LindleyM16,DBLP:conf/tgc/ToninhoCP14}) session
types, where one may only specify the general iterative nature of the
behaviour (e.g. ``send a number and then recurse or terminate'').

% By extending the framework with singleton types \cite{}, or by
% considering a notion of \emph{identity types} \cite{}, we can refine
% the specification further and enforce that the number that is sent at
% each step matches with the type index:
% \[
%   \begin{array}{lcl}
%     \m{countDown}' & :: & \Pi x{:}\m{Nat}.\stype \\
%     \m{countDown}'\,(\m{succ}(n)) &  = & 
%  \exists y{:}\m{Nat}(\m{succ}(n)).\m{countDown}'(n)\\
%     \m{countDown}'\,\,\,\m{z} & = &
%           \one                     
%   \end{array}
% \]

% A variant of this is one that does not force outputs to match the
% type index:
% \[
%   \begin{array}{lcl}
%     \m{countDown}' & :: & \Pi x{:}\m{Nat}.\stype \\
%     \m{countDown}'\,(\m{succ}(n)) &  = & 
%  \exists y{:}\m{Nat}.\m{countDown}'(n)\\
%     \m{countDown}'\,\,\,\m{z} & = &
%           \one                     
%   \end{array}
% \]
% \[
%   \begin{array}{lcl}
%     \m{counter}' & : & \Pi x{:}\m{Nat}.\monad{\m{countDown}'(x)}\\
%     \m{counter}'\,\,\,(\m{succ}(n)) & = & \monad{c \leftarrow c\langle \m{succ}(n) \rangle.\\
%                 && \qquad\,\,\, d \leftarrow \m{counter}'(n) ;\\
%                 && \qquad\,\,\, [d\leftrightarrow c]}\\
%     \m{counter}'\,\,\,\m{z}     & = & \monad{c\leftarrow \zero}
%   \end{array}
% \]

The example above relies on session type indexing in order to provide
additional static guarantees about processes (and the functions that
generate them). An alternative way 
is to consider ``simply-typed'' programs
and then \emph{prove} that they satisfy the desired properties, using
the language itself.
%
%Our framework enables us to perform such reasoning. 
Consider a
simply-typed version of the counter above described as an
inductive session type:
\[
\begin{array}{lcl}
  \m{simpleCounterT} & :: & \stype\\
  \m{simpleCounterT} & = &  \oplus\{\m{dec} : \m{Nat}\wedge \m{simpleCounterT} ,
                        \m{done} : \one\}
\end{array}
\]
There are many processes that correctly implement such a
type, given that the type merely dictates that the session outputs a
natural number and recurses (modulo the $\m{dec}$ and $\m{done}$
messages to signal which branch of the internal choice is taken).
A function that produces processes implementing such a session, mirroring those generated
by the $\m{counter}$ function above, is:
\[
  \begin{array}{lcl}
    \m{simpleCounter} & : & \m{Nat}\rightarrow \monad{\m{simpleCounterT}}\\
    \m{simpleCounter}\,\,\,(\m{succ}(n)) & = & \monad{c\leftarrow c.\m{dec};
                                               (\nub d)(d\langle  \m{succ}(n) \rangle.\zero \mid
                                               d(x).
                                               c\langle x\rangle.\\
    && \,\,\, d\leftarrow \m{simpleCounter}(n); [d\leftrightarrow c]     } \\
    \m{simpleCounter} \quad \m{z} & = & \monad{c\leftarrow c.\m{done};\zero} \\                           
                          
  \end{array}
\]
The process generated by $\m{simpleCounter}$, after emiting the
$\m{dec}$ label, spawns a process in parallel that sends the
appropriate number, which is received by the parallel thread and then
sent along the session $c$. 
%While in this toy example the spawning of
%a parallel thread is mostly artificial, 
Despite its simplicity, 
this example embodies a general pattern
where a computation is spawned in parallel (itself potentially
spawning many other threads) and the main thread then waits for
the result before proceeding.

While such a process is typable in most session typing frameworks, our
theory enables us to \emph{prove} that the counter
implementation above indeed counts down from a given number by
defining an appropriate (inductive) type family, indexed by \emph{monadic} values
(i.e. processes):
\[
  \begin{array}{lcl}
    \m{corrCount} & ::  & \Pi x{:}\m{Nat}.\Pi y{:}\monad{\m{simpleCounterT}}.\type\\
    \m{corr}_z & : & \m{corrCount}\,\m{z}\,\monad{c\leftarrow c.\m{done};\zero}\\
    \m{corr}_n & : & \Pi n {:}\m{Nat}.\Pi P{:}\monad{\m{simpleCounterT}}.
                     \m{corrCount}\,n\,P \rightarrow\\
                  & & \m{corrCount}\,(\m{succ}(n))\,\monad{c\leftarrow c.\m{dec};
                      c\langle \m{succ}(n) \rangle.d\leftarrow P ; [d\leftrightarrow c]}    
  \end{array}
\]
The type family $\m{corrCount}$, indexed by a natural number and a
monadic value implementing the session type $\m{simpleCounter}$, is
defined via two constructors: $\m{corr}_z$, which specifies that a correct
$0$ counter emits the $\m{done}$ label and terminates; and
$\m{corr}_n$, which given a monadic value $P$ that is a correct
$n$-counter, defines that a correct $(n+1)$-counter emits $n+1$ and then
proceeds as $P$ (modulo the label emission bookkeeping).

The proof of correctness of the $\m{simpleCounter}$ function above is
no more than a function of type $\Pi n{:}\m{Nat}.\m{corrCount}\,n$
$(\m{simpleCounter}(n))$, defined below:
\[
  \begin{array}{lcl}
    \m{prf} & : & \Pi n{:}\m{Nat}.\m{corrCount}\,n \,(\m{simpleCounter}(n))\\
    \m{prf}\quad\m{z} & = & \m{corr}_z\\
    \m{prf}\quad(\m{succ}(n)) & = & \m{corr}_n\,n\,(\m{simpleCounter}(n))\,(\m{prf}\,n)
  \end{array}
\]
\noindent Note that in this scenario, the processes that index the
$\m{corrCount}$ type family are not syntactically equal to those
generated by $\m{simpleCounter}$, but rather \emph{definitionally} equal.

Typically, the processes that index such correctness
specifications tend to be distilled versions of the actual
implementations, which often perform some additional internal
computation or communication steps. Since our notion of
definitional equality of processes includes reduction (and also
commuting conversions which account for type-preserving shuffling of
internal communication actions \cite{toninhothesis}), the type conversion mechanism allows
us to use the techniques described above to generally reason about
specification conformance.

We may also consider a variant of the example above which does
not force outputs to match precisely with the type index:
\[
  \begin{array}{lcl}
    \m{countDown}' & :: & \Pi x{:}\m{Nat}.\stype \\
    \m{countDown}'\,(\m{succ}(n)) &  = & 
 \exists y{:}\m{Nat}.\m{countDown}'(n)\\
    \m{countDown}'\,\,\,\m{z} & = &
          \one                     
  \end{array}
\]
The type $\m{countDown}'\, n$ will still require $n$ outputs to be
performed, but unlike with $\m{countDown}$ we do not enforce a
relation between the iteration and the number being sent.
An implementation of such a type is given below, using fundamentally the
same code as for $\m{counter}$:
\[
  \begin{array}{lcl}
    \m{counter}' & : & \Pi x{:}\m{Nat}.\monad{\m{countDown}'(x)}\\
    \m{counter}'\,\,\,(\m{succ}(n)) & = & \monad{c \leftarrow c\langle \m{succ}(n) \rangle.\\
                && \qquad\,\,\, d \leftarrow \m{counter}'(n) ;\\
                && \qquad\,\,\, [d\leftrightarrow c]}\\
    \m{counter}'\,\,\,\m{z}     & = & \monad{c\leftarrow \zero}
  \end{array}
\]

We may then use an heterogeneous equality
(a special case of the so-called \emph{John Major equality}
\cite{DBLP:conf/types/McBride00}) of the form
\[
  \begin{array}{lcl}
    \m{JMEq} & :: & \Pi A {:}\m{stype} . \Pi B {:}\m{stype}.
                       \Pi x{:}\monad{A}.\Pi y{:}\monad{B}.\type\\
    \m{JMEqRefl} & : & \lambda A {:}\m{stype}.\lambda x{:}\monad{A}.
                        \m{JMEq}\,A\,A\,x\,x
  \end{array}
\]
to inductively show that the processes produced by $\m{counter}$ and
$\m{counter}'$ are indeed the same.
\[
  \begin{array}{l}
    \m{eqs}  :  \Pi n{:}\m{Nat}.\m{JMEq}\,(\m{countDown}(n))\,(\m{countDown}'(n))\,
                  (\m{counter}(n))\,(\m{counter}'(n))\\
    \m{eqs}\,z  =  \m{JMEqRefl}\,\one\,\monad{c\leftarrow \zero}\\
    \m{eqs}\,(\m{succ}(n))  =  \m{case}\,(\m{eqs}\,n)\,\m{of}\,\{\_ \Rightarrow
    \m{JMEqRefl}\,(\m{countDown}(\m{succ}(n)))\\
    \qquad\qquad\qquad\qquad\qquad
    \qquad\qquad\qquad\qquad\quad\,\,\,(\m{counter}(\m{succ}(n))))\}\\
  \end{array}
\]
We note that the example above makes extensive use of dependent
pattern matching, using some implicit assumptions on its behaviour
that have not been formalised in this paper and are left for future work.

%%% Local Variables:
%%% mode: latex
%%% TeX-master: "main"
%%% End:

%%% Local Variables:
%%% mode: latex
%%% TeX-master: "main"
%%% End:

\subsection{Type Soundness of the Framework}
\label{sec:meta}
The main goal of this section is to present type soundness of our
framework through a subject reduction result. We also show that our
theory guarantees progress for terms and processes.
The
development requires a series of auxiliary results (detailed in
Appendix~\ref{app:soundness}) pertaining to the functional and process
layers which are ultimately needed to produce the inversion properties
necessary to establish subject reduction.  We note that strong
normalisation results for linear-logic based session processes are
known in the literature
\cite{DBLP:conf/esop/CairesPPT13,DBLP:conf/tgc/ToninhoCP14,toninhothesis},
even in the presence of impredicative polymorphism, restricted
corecursion and higher-order data. Such results are directly
applicable to our work using appropriate semantics preserving type erasures.

In the remainder we
often write $\Psi \vdash \mathcal{J}$ to stand for a well-formedness,
typing or definitional equality judgment of the appropriate form.
Similarly for $\Psi ; \Ga ; \D \vdash
\mathcal{J}$.
We begin with the substitution property, which naturally holds for
both layers, noting that the dependently typed nature of the framework
requires substitution in both contexts, terms and in types.

\begin{restatable}[Substitution]{lemma}{substlem}
  \label{lem:trans}
  Let $\Psi \vdash M :\tau$:
  \begin{enumerate}
  \item If $\Psi , x{:}\tau , \Psi' \vdash
    \mathcal{J}$ then $\Psi , \Psi'\{M/x\} \vdash \mathcal{J}\{M/x\}$;
  \item If $\Psi , x{:}\tau , \Psi' ; \Ga ;
    \D \vdash \mathcal{J}$ then $\Psi , \Psi'\{M/x\} ; \Ga \{M/x\} ; \D \{M/x\}\vdash \mathcal{J}\{M/x\}$
  \end{enumerate}
\end{restatable}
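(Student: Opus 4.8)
The plan is to proceed by a single simultaneous induction on the derivation of the judgement $\mathcal{J}$, proving statements (1) and (2) together and for every judgement form of Fig.~\ref{fig:typjudg} at once (context well-formedness, kind and type formation, term and process typing, and all the definitional-equality judgements). The induction must be mutual for two independent reasons: first, the judgements are themselves mutually inductively defined (type formation refers to term typing, term typing refers to process typing via the monad, and all of these refer to definitional equality through the conversion rules); and second, statements (1) and (2) call one another, since the monadic rules cross the boundary between the functional and process layers. Before starting I would record two auxiliary facts used pervasively: weakening (if a judgement holds in $\Psi$ it still holds after inserting well-formed assumptions, in particular $\Psi \vdash M : \tau$ implies $\Psi, \Psi'\subst{M}{x} \vdash M : \tau$), proved by a prior induction that does \emph{not} depend on substitution, and the substitution-commutation identity $N\subst{L}{y}\subst{M}{x} = N\subst{M}{x}\subst{L\subst{M}{x}}{y}$ for $x \neq y$ and $x \notin \fn{L}$, extended uniformly to kinds, types and processes.

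The bulk of the cases are routine. For every congruence or structural rule, and for closure of equality under reflexivity, symmetry and transitivity, the substitution is simply pushed through each premise by the induction hypothesis and the conclusion is reassembled; here one uses that contexts are ordered and not subject to exchange, so an assumption $y{:}\sigma$ occurring to the \emph{left} of $x$ cannot mention $x$ and is therefore untouched by the substitution. The only genuinely generative base case is the variable-lookup axiom: if $\mathcal{J}$ derives $x{:}\tau$, then $\mathcal{J}\subst{M}{x}$ demands $\Psi, \Psi'\subst{M}{x} \vdash M : \tau$, which follows from the hypothesis $\Psi \vdash M : \tau$ (noting $\tau\subst{M}{x} = \tau$, since $\tau$ is well-formed in $\Psi$ and hence precedes $x$) together with weakening.

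The interesting cases fall into three groups. (i) Rules that bind a fresh variable, e.g.\ $(\Pi I)$, $(\rgt\forall)$ and $(\lft\exists)$, extend the context suffix; applying the induction hypothesis with $\Psi'$ replaced by the larger suffix $\Psi', y{:}\sigma$ yields exactly the substituted premise, after which the rule is re-applied. (ii) Rules whose conclusion itself performs a substitution, such as $(\Pi E)$ and the dependent communication rules $(\rgt\exists)$/$(\lft\forall)$ (whose offered or used type has the form $A\subst{N}{x}$), require the commutation identity above to rewrite the doubly-substituted type into the shape produced by re-applying the rule to the substituted premises. (iii) The cross-layer monadic rules are where the two parts interlock: in $(\{\}I)$ the process-typing premise is substituted using part (2) and the resulting conclusion is a term-typing judgement handled by part (1), while in $(\{\}E)$ the functional premise $\Psi \vdash M : \monad{\dots}$ is substituted using part (1) inside a process-typing derivation governed by part (2). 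The conversion rules $(\m{Conv})$, $(\rgt{\m{Conv}})$ and $(\lft{\m{Conv}})$ need no special treatment: since the definitional-equality judgements are among the $\mathcal{J}$'s being proved, the induction hypothesis already supplies that substitution preserves equality.

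I expect the main obstacle to be organisational rather than deep: getting the mutual induction to close, i.e.\ checking that every appeal to an induction hypothesis---especially the inter-layer appeals through the monad and the appeals to the equality judgements from the conversion rules---is on a strict sub-derivation, and that the auxiliary weakening and commutation lemmas are genuinely available without circularity. Beyond this, the only delicate bookkeeping is tracking the substitution as it is pushed simultaneously into $\Ga$, $\D$, the types annotating the dependent communication constructs, and the session types embedded inside monadic values, so that the commutation identity is invoked at precisely the points where a rule re-substitutes.
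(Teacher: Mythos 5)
Your proposal is correct and follows essentially the same route as the paper: a simultaneous induction on the derivation of $\mathcal{J}$ across all judgement forms, with weakening discharging the variable case and the substitution-commutation identity handling the rules (such as $(\Pi E)$ and the dependent communication rules) whose conclusions themselves perform a substitution. The paper's proof is terser --- it shows only a handful of illustrative cases and leaves the commutation identity and the mutual-induction bookkeeping implicit --- but the underlying argument is the one you describe.
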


\noindent Combining substitution with a form of functionality for typing (i.e.
that substitution of equal terms in a well-typed term produces equal
terms) and for equality (i.e. that substitution of equal terms in a
definitional equality proof produces equal terms), we can establish
validity for typing and equality, which is a form of internal
soundness of the type theory stating that judgments are consistent
across the different levels of the theory.

\begin{lemma}[Validity for Typing]
%  ~
%\begin{enumerate}
%\item 
{\em (1)} 
If $\Psi \vdash \tau :: K$ or $\Psi \vdash A :: K$ then $\Psi \vdash
K$; 
%\item 
{\em (2)} 
If $\Psi \vdash M : \tau$ then $\Psi \vdash \tau :: \type$; and 
{\em (3)} 
%\item 
If $\Psi ; \Ga ; \D \vdash P :: z{:}A$ then $\Psi \vdash A :: \stype$.
%  \end{enumerate}
\end{lemma}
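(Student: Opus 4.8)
The plan is to prove all three parts simultaneously by induction on the derivation of the relevant judgment, strengthening the induction hypothesis to include a \emph{validity for equality} statement — namely that $\Psi \vdash \tau = \sigma :: K$ implies both $\Psi \vdash \tau :: K$ and $\Psi \vdash \sigma :: K$ (and analogously for session types and kinds) — since the conversion rules force the two statements to be mutually dependent. This is exactly the combination of substitution and functionality anticipated in the text preceding the statement. The context well-formedness judgments $\Psi \vdash$ and $\Psi ; \D \vdash$ are likewise threaded through, so that whenever a variable is looked up from $\Psi$ its declared type (or kind) is already known to be well-formed. The base cases — variables $x$, the base kinds $\type$ and $\stype$, and the unit session $\one$ — are then immediate from context well-formedness and the corresponding formation rules.

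For the introduction and formation rules the result is obtained by reassembling a formation derivation from the premises. In the function case $(\Pi I)$, the premises already supply $\Psi \vdash \tau :: \type$ and, by the induction hypothesis on the body, $\Psi, x{:}\tau \vdash \sigma :: \type$, which are exactly the premises of the kinding rule for $\Pi x{:}\tau.\sigma$. The monadic introduction rule $(\{\}I)$ is the easiest case, as it carries the well-formedness of all the component session types $\ov{A_i}, \ov{B_j}$ as explicit premises, so the monadic type is immediately well-kinded; the session formation rules for $\tensor$, $\lolli$, $\oplus$, $\with$ and $\bang$ follow the same pattern.

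For the elimination rules the essential move is inversion followed by the Substitution Lemma (Lemma~\ref{lem:trans}). In $(\Pi E)$, the induction hypothesis gives $\Psi \vdash \Pi x{:}\tau.\sigma :: \type$; inverting this formation derivation yields $\Psi, x{:}\tau \vdash \sigma :: \type$, and substituting the argument $N$ (available with $\Psi \vdash N : \tau$ from the second premise) via Lemma~\ref{lem:trans} produces $\Psi \vdash \sigma\{N/x\} :: \type$, the required classifier. The monadic elimination $(\{\}E)$ is handled analogously, the induction hypothesis on $Q$ yielding its offered type directly. The conversion rules $(\m{Conv})$, $(\rgt{\m{Conv}})$ and $(\lft{\m{Conv}})$ are precisely where the strengthened equality hypothesis is consumed: from $\Psi \vdash \tau = \sigma :: \type$ we read off $\Psi \vdash \sigma :: \type$, and similarly for the session and context conversions.

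The main obstacle lies in the dependent communication rules, in particular $(\rgt{\exists})$ and $(\lft{\forall})$, where the session type appearing in the conclusion contains a binder not present as such in the premises. For $(\rgt{\exists})$ the induction hypothesis on $P$ supplies only $\Psi \vdash A\{M/x\} :: \stype$ for the substituted instance, whereas well-formedness of $\exists x{:}\tau.A$ requires $\Psi, x{:}\tau \vdash A :: \stype$; this gap must be bridged using the type annotation $\exists x{:}\tau.A$ decorating the process, whose well-formedness is maintained as an invariant of typeable processes, together with inversion on its formation. Dually, for left rules such as $(\lft{\exists})$ the offered type $D$ is well-formed by the induction hypothesis only under the extended context $\Psi, x{:}\tau$, while the conclusion lives under $\Psi$; closing this requires a strengthening argument exploiting that the received variable $x$ cannot occur free in the offered type $D$. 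Discharging these two points — propagation of the annotations' well-formedness and strengthening for the offered type — is the crux of the proof; the remaining process rules ($(\m{cut})$, identity/forwarding, and the replication rules) follow routinely by the induction hypothesis and the session formation rules.
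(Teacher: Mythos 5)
Your proposal is correct and follows essentially the same route as the paper, which establishes these three claims as items of the Subderivation Properties lemma (Lemma~\ref{lem:subderiv}) by induction on the given derivation, with the equality judgments folded into the same induction so that the conversion rules go through. You are in fact more explicit than the paper's one-line proof about the two real subtleties — the $(\rgt{\exists})$ case, where well-formedness of $\exists x{:}\tau.A$ must come from the type annotation rather than the premises, and the strengthening step needed for left rules whose premise lives under $\Psi, x{:}\tau$ — both of which the paper silently absorbs into ``straightforward by induction.''
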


\begin{restatable}[Validity for Equality]{lemma}{valeqlem}
\begin{enumerate}
\item If $\Psi \vdash M = N : \tau$ then $\Psi \vdash M : \tau$, $\Psi
  \vdash N : \tau$ and $\Psi \vdash \tau :: \type$
\item If $\Psi \vdash \tau = \sigma :: K$ then $\Psi \vdash \tau ::
  K$, $\Psi \vdash \sigma :: K$ and $\Psi \vdash K$
\item If $\Psi \vdash A = B :: K$ then $\Psi \vdash A :: K$, $\Psi
  \vdash B :: K$ and $\Psi \vdash K$
\item If $\Psi \vdash K = K'$ then $\Psi \vdash K$ and $\Psi \vdash
  K'$
\item If $\Psi ; \Ga ; \D \vdash P = Q :: z{:}A$ then $\Psi ; \Ga ; \D
  \vdash P :: z{:}A$, $\Psi ; \Ga ; \D  \vdash Q:: z{:}A$ and $\Psi
  \vdash A :: \stype$
\end{enumerate}
\end{restatable}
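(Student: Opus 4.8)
The plan is to proceed by \emph{simultaneous} induction on the derivations of the five definitional equality judgments. Since term equality $\Psi \vdash M = N : \tau$ refers to type equality, which refers back to term equality, which (through the monadic constructs) refers to process equality, these judgment forms are mutually defined and must be validated together. For each rule I reconstruct the required typing (or well-formedness) derivations for both sides of the equation and for the classifying type or kind, using only the premises of the rule, the induction hypotheses, the Substitution Lemma (Lemma~\ref{lem:trans}), Validity for Typing, and the functionality properties mentioned above.

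First I would dispatch the structural rules. Reflexivity yields the well-typedness of the common object immediately; symmetry simply swaps the two typing conclusions obtained from the IH; and for transitivity of, say, $M = N$ and $N = P$, the first IH gives $\Psi \vdash M : \tau$ and $\Psi \vdash \tau :: \type$ while the second gives $\Psi \vdash P : \tau$, so $\Psi \vdash M : \tau$, $\Psi \vdash P : \tau$, $\Psi \vdash \tau :: \type$ as required (and analogously for the other layers, with $\Psi \vdash A :: \stype$ read off from Validity for Typing in the process case). The congruence rules are handled by re-applying the corresponding introduction/elimination rule to the typings delivered by the IH. The only subtlety arises in the dependent elimination forms, e.g.\ congruence for application, where the two sides would be typed at $\sigma\{N/x\}$ and $\sigma\{N'/x\}$ respectively; there I invoke functionality for typing to obtain $\Psi \vdash \sigma\{N/x\} = \sigma\{N'/x\} :: \type$ and then apply $(\m{Conv})$ (resp.\ $(\rgt{\m{Conv}})$/$(\lft{\m{Conv}})$) so that both sides land at a common classifier.

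Then I would treat the computation and $\eta$ rules, which constitute the genuinely new work. For $(\m{TMEq}\beta)$, the left-hand side is typed by composing $(\Pi I)$ and $(\Pi E)$, the right-hand side $M\{N/x\}$ by the Substitution Lemma, and $\sigma\{N/x\} :: \type$ by Validity for Typing together with substitution. For $(\m{TMEq}\eta)$ the expanded term is typed at $M$'s type directly via $(\Pi I)$/$(\Pi E)$. For the monadic rule $(\m{TMEq}\{\}\eta)$ I reconstruct a typing derivation for the monadic term: by $(\{\}I)$ it suffices to type the body $y\leftarrow M \leftarrow \ov{u_j};\ov{d_i} ; [y\fwd c]$ offering $c{:}A$ under $\ov{u_j{:}B_j};\ov{d_i{:}A_i}$, which follows from $(\{\}E)$ applied to the premise typing of $M$ and the forwarding (identity) rule typing $[y\fwd c]$ at $y{:}A \vdash c{:}A$. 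For the process rules, $(\m{PEqRed})$ is immediate since both typings occur as premises (with $A :: \stype$ from Validity for Typing); $(\m{PEq}\forall\eta)$ and $(\m{PEqCC}\forall)$ require reconstructing typing derivations for both processes using the right/left rules for $\forall$ together with $(\m{cut})$, again reading off $A :: \stype$ from Validity for Typing.

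The main obstacle will be the dependent congruence cases, where the classifying type genuinely changes under substitution of provably-equal-but-syntactically-distinct subterms: the argument there is not purely structural but relies on the functionality properties (substitution of equal terms yields equal terms and types) to produce the type equality needed before conversion can be applied, so these functionality lemmas must be available, ideally proved in the same mutual induction or established beforehand from substitution and Validity for Typing. A secondary point of care is the monadic $\eta$ case, where the reconstructed derivation must match \emph{exactly} the linear and shared interfaces $\ov{d_i{:}A_i}$ and $\ov{u_j{:}B_j}$ recorded in the monadic type, so that the forwarder $[y\fwd c]$ is well-typed at the offered session $c{:}A$.
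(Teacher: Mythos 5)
Your proposal matches the paper's proof: a simultaneous induction on the five equality judgments, with the structural and congruence cases closed by re-applying the introduction/elimination rules plus conversion, the $\beta$/$\eta$ and monadic-$\eta$ cases by the substitution lemma and validity for typing, and the dependent application congruences by functionality of typing, which the paper indeed establishes beforehand from substitution rather than inside the same induction. The one detail you gloss over is that the \emph{binding} congruences (e.g. $(\m{TEq}\Pi)$, where the two domains $\tau$ and $\tau'$ differ) additionally need a context-conversion lemma to retype the body under $x{:}\tau'$; the paper derives that lemma from weakening, substitution and conversion before this proof, so nothing in your plan breaks.
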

With these results we establish the appropriate inversion and
injectivity properties which then enable us to show unicity of types
(and kinds).

\begin{restatable}[Unicity of Types and Kinds]{theorem}{unicitythm}
  \begin{enumerate}
\item If $\Psi \vdash M : \tau$ and $\Psi \vdash M : \tau'$ then $\Psi
  \vdash \tau = \tau' :: \type$
\item If $\Psi \vdash \tau :: K$ and $\Psi \vdash \tau :: K'$ then
  $\Psi \vdash K = K'$
\item If $\Psi; \Ga ; \D \vdash P :: z{:}A$ and $\Psi ; \Ga ; \D
  \vdash P :: z{:}A'$ then $\Psi \vdash A = A' :: \stype$
\item If $\Psi \vdash A :: K$ and $\Psi \vdash A :: K'$ then $\Psi
  \vdash K = K'$
  \end{enumerate}
\end{restatable}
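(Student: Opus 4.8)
The plan is to prove all four statements simultaneously by mutual induction on the typing/kinding derivations, since the four judgment forms are mutually defined (terms mention session types through the monadic type, and session types mention terms through type-level application). For each statement, I would proceed by case analysis on the last rule of the first derivation $\Psi \vdash M : \tau$ (resp. $\Psi \vdash \tau :: K$, etc.), using inversion on the second derivation to identify which rule could have produced the same subject. The core difficulty is that the conversion rules $(\m{Conv})$, $(\rgt{\m{Conv}})$, $(\lft{\m{Conv}})$ allow a given term/process to be typed by many definitionally-equal types, so the two derivations need not end in the same structural rule.

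The standard technique to handle this, which I would follow, is to first establish \emph{inversion lemmas} that account for the conversion rules: for each syntactic form of $M$ (or $\tau$, or $A$), an inversion lemma that says ``if $M$ has this form and $\Psi \vdash M : \tau$, then $\tau$ is definitionally equal to the canonical type dictated by $M$'s outermost constructor, with equated premises.'' These inversion lemmas are precisely what the paragraph preceding the statement promises (``With these results we establish the appropriate inversion and injectivity properties''), and they in turn rest on Validity for Typing and Validity for Equality, plus injectivity of the type constructors (e.g. that $\Psi \vdash \Pi x{:}\tau.\sigma = \Pi x{:}\tau'.\sigma' :: \type$ implies $\Psi \vdash \tau = \tau' :: \type$ and $\Psi, x{:}\tau \vdash \sigma = \sigma' :: \type$). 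I would assume these injectivity facts as part of the ``inversion and injectivity properties'' already granted.

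Given the inversion lemmas, each case becomes mechanical. For instance, in statement (1) with $M = \lambda x{:}\tau_0.M_0$: inversion on both $\Psi \vdash M : \tau$ and $\Psi \vdash M : \tau'$ yields $\Psi \vdash \tau = \Pi x{:}\tau_0.\sigma :: \type$ and $\Psi \vdash \tau' = \Pi x{:}\tau_0.\sigma' :: \type$ where the subderivations give $\Psi, x{:}\tau_0 \vdash M_0 : \sigma$ and $\Psi, x{:}\tau_0 \vdash M_0 : \sigma'$; the induction hypothesis gives $\Psi, x{:}\tau_0 \vdash \sigma = \sigma' :: \type$, and congruence of $\Pi$ together with transitivity and symmetry of type equality closes the case. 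The case $M = M_1\,M_2$ uses statement~(1) on the function position to equate the two $\Pi$-types, then injectivity to equate the codomains, then the substitution instances. The monadic-value case $M = \monad{c \leftarrow P \leftarrow \ov{u_j};\ov{d_i}}$ appeals to statement~(3) on the offered process. The session-type statements (3) and (4) are entirely analogous, using the process typing rules and the kinding rules respectively, with type-level application $A\,M$ and $A\,B$ handled like term application.

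The main obstacle I anticipate is the interplay between the conversion rules and the mutual dependency: to push the induction through cleanly, the inversion lemmas must already be robust against arbitrary stacks of conversion rules, and the process case of statement~(3) must be in scope when proving the monadic case of statement~(1). This is why the simultaneous mutual induction is essential — one cannot stratify the proof by first doing terms and then processes. Provided the injectivity of $\Pi$, $\lolli$, $\tensor$, $\forall$, $\exists$, and the monadic constructor for definitional equality is available (which follows from Validity for Equality together with the congruence structure of the equality relation), every case reduces to an application of an induction hypothesis followed by a congruence step, so beyond this bookkeeping the argument is routine.
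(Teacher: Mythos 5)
Your plan matches the paper's proof: the paper argues by (mutual) induction on the structure of the given term/type/process, using conversion-robust inversion lemmas, injectivity of products, and functionality/substitution to handle the application and type-level application cases, with the monadic case delegating to the process statement exactly as you describe. One small caveat: injectivity of the type constructors does not simply follow from Validity for Equality plus congruence (the paper derives it from a separate equality-inversion theorem proved by induction on equality derivations, needed to rule out $\beta$/$\eta$ steps equating a $\Pi$-type with something of a different shape), but since you explicitly take the inversion and injectivity properties as granted by the preceding lemmas, this does not affect the argument.
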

All the results above, combined with the process-level properties
established in
\cite{depsesstr,toninhothesis,DBLP:journals/mscs/CairesPT16} enable us
to show the following:

\begin{restatable}[Subject Reduction -- Terms]{theorem}{srlamterms}
\label{thm:srlammain}
If $\Psi \vdash M : \tau$ and $M\tra{} M'$ then $\Psi \vdash M' : \tau$
\end{restatable}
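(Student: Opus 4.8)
The plan is to prove subject reduction for terms by induction on the derivation of the reduction $M \tra{} M'$ (equivalently, by case analysis on the reduction rules of the functional layer's operational semantics), using inversion on the typing derivation $\Psi \vdash M : \tau$ in each case. The functional layer is essentially a standard dependently-typed $\lambda$-calculus, so the core reductions are $\beta$-reduction of applications and congruence (evaluation inside applications), the key subtlety being that no reduction occurs inside monadic terms $\monad{c \leftarrow P \leftarrow \ov{u_j};\ov{d_i}}$, so the monadic introduction form is inert and contributes no interesting case.

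\medskip
\noindent\textbf{Key steps.} First I would treat the base case of $\beta$-reduction, where $M = (\lambda x{:}\tau'.M_1)\,N \tra{} M_1\{N/x\}$. Here I invert the typing derivation: since $\Psi \vdash (\lambda x{:}\tau'.M_1)\,N : \tau$, by inversion on $(\Pi E)$ (modulo uses of the conversion rule $(\m{Conv})$, which must be permuted out using Unicity of Types, Theorem on unicity above) I obtain $\Psi \vdash \lambda x{:}\tau'.M_1 : \Pi x{:}\sigma.\rho$ and $\Psi \vdash N : \sigma$ with $\tau = \rho\{N/x\}$ up to definitional equality; inverting $(\Pi I)$ then gives $\Psi, x{:}\sigma \vdash M_1 : \rho$. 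Applying the Substitution Lemma (Lemma~\ref{lem:trans}, part 1) with $\Psi \vdash N : \sigma$ yields $\Psi \vdash M_1\{N/x\} : \rho\{N/x\}$, which is $\Psi \vdash M' : \tau$ after a final appeal to $(\m{Conv})$ using the equality $\tau = \rho\{N/x\}$ recovered from inversion. Second, the congruence case (reduction in the function position of an application, $M_1\,N \tra{} M_1'\,N$ from $M_1 \tra{} M_1'$) follows by applying the induction hypothesis to the premise typing $M_1$ and reassembling with $(\Pi E)$. Uses of $(\m{Conv})$ are handled uniformly: if the last rule is conversion, apply the IH to the subderivation and re-apply $(\m{Conv})$ with the same type equality, which is well-formed by Validity for Equality.

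\medskip
\noindent\textbf{Main obstacle.} The crux is the inversion step, which in a dependent type theory with an intrinsically-typed definitional equality (including $\beta$, $\eta$, and conversion, and crucially a notion of process equality entangled with term equality via the monad) is nontrivial: a typing derivation ending in $(\Pi E)$ may be preceded by arbitrarily many $(\m{Conv})$ steps, so I cannot read off the domain and codomain types syntactically. The required \emph{injectivity of $\Pi$} — that $\Psi \vdash \Pi x{:}\tau_1.\sigma_1 = \Pi x{:}\tau_2.\sigma_2 :: \type$ implies $\Psi \vdash \tau_1 = \tau_2 :: \type$ and $\Psi,x{:}\tau_1 \vdash \sigma_1 = \sigma_2 :: \type$ — together with the inversion and unicity machinery, is exactly what the excerpt flags as established prior to subject reduction (the inversion and injectivity properties feeding into Unicity of Types and Kinds). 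I would therefore lean on those results as the technical backbone, so that once inversion delivers the premises in the shape above, the remainder is a routine combination of substitution and conversion. Since monadic terms do not reduce and the only term-level redex is $\beta$, no process-level reasoning is needed here; the interaction with the process layer is confined to ensuring the inversion/injectivity lemmas already account for monadic types, which they do by construction.
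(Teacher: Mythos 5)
Your overall strategy (direct induction on $M \tra{} M'$ with inversion, injectivity of $\Pi$, substitution and conversion) differs from the paper's, which first proves the stronger statement that reduction is contained in definitional equality --- Theorem~\ref{thm:sreq}: if $\Psi \vdash M : \tau$ and $M \tra{} M'$ then $\Psi \vdash M = M' : \tau$ --- and then obtains subject reduction as an immediate corollary of Validity for Equality. Your $\beta$-case is essentially the content of the validity clause for $(\m{TMEq}\beta)$ unfolded inline, and it goes through with the inversion/injectivity machinery you cite.

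However, there is a concrete gap in the congruence cases. You only work out reduction in the \emph{function} position, but the operational semantics (as used in the paper's own case analysis) also closes applications under reduction of the \emph{argument}: from $N \tra{} N'$ infer $M\,N \tra{} M\,N'$. In a dependent theory this is precisely the case where plain type preservation is not enough: inversion gives $\Psi \vdash M : \Pi x{:}\tau_0.\sigma_0$, $\Psi \vdash N : \tau_0$ and $\Psi \vdash \tau = \sigma_0\{N/x\} :: \type$, and your induction hypothesis yields only $\Psi \vdash N' : \tau_0$, hence $\Psi \vdash M\,N' : \sigma_0\{N'/x\}$. To land back in $\tau$ you must show $\Psi \vdash \sigma_0\{N/x\} = \sigma_0\{N'/x\} :: \type$, and by Functionality this requires the definitional equality $\Psi \vdash N = N' : \tau_0$ --- which a preservation-only induction hypothesis does not provide, and which is not an axiom of the definitional equality (there is no term-level analogue of $(\m{PEqRed})$; only $\beta$/$\eta$ are primitive). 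So your induction must be strengthened to prove exactly the equality statement of Theorem~\ref{thm:sreq}, at which point typability of $M'$ follows from Validity for Equality; this is the decomposition the paper adopts, and it is forced rather than optional.
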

\vspace{-1.5ex}
\begin{restatable}[Subject Reduction -- Processes]{theorem}{srlamproc}
\label{thm:srprocs}
If $\Psi ; \Ga ; \D \vdash P :: z{:}A$ and $P \tra{} P'$ then $\exists
Q$ such that $P' \equiv Q$ and $\Psi ; \Ga ; \D \vdash Q :: z{:}A$
\end{restatable}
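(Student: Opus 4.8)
The plan is to proceed by case analysis on the reduction $P \tra{} P'$, following the standard structure of subject reduction proofs for session-typed process calculi, while carefully threading the dependent typing information through each case. The statement is deliberately weaker than syntactic type preservation: we only claim the existence of some $Q$ with $P' \equiv Q$ and $\Psi ; \Ga ; \D \vdash Q :: z{:}A$, which signals that reduction may produce a process that is typable only up to structural congruence. This matches the fact that the reduction rules are taken modulo structural congruence and the standard congruence closure, so we should expect to need $\equiv$ to realign the reduct with a typable form. First I would establish (or invoke, as it is standard from \cite{DBLP:journals/mscs/CairesPT16}) that typing is preserved under structural congruence, i.e. if $\Psi ; \Ga ; \D \vdash P :: z{:}A$ and $P \equiv P'$ then $\Psi ; \Ga ; \D \vdash P' :: z{:}A$, so that the congruence cases reduce to the principal-reduction cases.

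The core of the argument is the set of principal cut-reduction cases, one per connective, corresponding to each communication rule in the operational semantics. For each, the typing derivation must contain a $(\m{cut})$ whose two premises are a right rule and a matching left rule for the active connective; I would use the inversion properties for typing (obtained from Validity for Typing and the injectivity/inversion machinery mentioned just before Unicity of Types) to decompose the derivation into these premises, then reassemble a typing derivation for the reduct. The genuinely new cases relative to the non-dependent setting are the dependent communication actions and the monadic spawn. For the existential/universal reductions $c\langle M\rangle.P \mid c(x).Q \tra{} P \mid Q\{M/x\}$, inversion on $(\rgt{\exists})$ yields $\Psi \vdash M : \tau$ and a derivation of $P$ at $c{:}A\{M/x\}$, while inversion on $(\lft{\exists})$ yields $Q$ typed with $x{:}\tau$ in context; applying the Substitution Lemma (Lemma~\ref{lem:trans}, part~2) with $\Psi \vdash M : \tau$ substitutes $M$ for $x$ throughout $Q$'s derivation, after which a $(\m{cut})$ on the shared channel $c{:}A\{M/x\}$ rebuilds the typing of the reduct. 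The universal case is symmetric. For the monadic spawn $c\leftarrow \monad{c\leftarrow P\leftarrow \ov{u_j};\ov{d_i}} \leftarrow \ov{u_j};\ov{d_i} ; Q \tra{} (\nub c)(P \mid Q)$, I would invert the $(\{\}E)$ rule and then invert the typing of the monadic value via $(\{\}I)$ to recover a derivation of $P :: c{:}A$ under the specified contexts; composing this with the derivation of $Q$ via $(\m{cut})$ gives the result.

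The main obstacle I anticipate is \emph{not} the individual reduction cases but ensuring that inversion is actually available in the presence of the type-conversion rules $(\rgt{\m{Conv}})$ and $(\lft{\m{Conv}})$. Because a derivation of, say, $c\langle M\rangle.P :: c{:}C$ need not end in $(\rgt{\exists})$ — it may end in a chain of conversions that only definitionally equates $C$ with some $\exists x{:}\tau.A$ — naive inversion on the last rule fails. The fix is to prove an inversion lemma that first collapses conversion steps, using injectivity of the session type constructors for definitional equality (session-type analogue of the injectivity facts underlying Unicity of Types, Theorem~\ref{thm:srprocs}'s companion); this lets us conclude that if $C$ is definitionally equal to an existential then $C$ must itself be an existential $\exists x{:}\tau'.A'$ with $\tau =\tau'$ and $A = A'$ definitionally, and then re-apply $(\rgt{\m{Conv}})$ to the reassembled reduct to recover the original type $C$. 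This interplay between definitional equality and the syntactic shape of the active connective is where the dependent structure does real work, and it is precisely why the surrounding lemmas (Validity, injectivity, Unicity) are developed first. The remaining communication cases ($\tensor/\lolli$, choice, the replicated input, and the forwarder reduction $(\nub c)(P\mid[c\fwd d]) \tra{} P\{d/c\}$) follow the same inversion-then-reassemble pattern and require no substitution in types, so I would treat them uniformly and briefly.
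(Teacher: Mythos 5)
Your overall strategy coincides with the paper's: the paper's proof is a one-line deferral to the standard cut-reduction lemmas of \cite{depsesstr,toninhothesis,DBLP:journals/mscs/CairesPT16}, with the explicit remark that these lemmas ``now crucially rely on the inversion lemmas and validity'' --- which is exactly the point you identify as the real work (inversion must first collapse chains of $(\rgt{\m{Conv}})$/$(\lft{\m{Conv}})$ via injectivity and validity before the principal $(\m{cut})$ case can be decomposed). Your treatment of the dependent communication cases via the Substitution Lemma and of the spawn case via inversion of $(\{\}E)$ and $(\{\}I)$ is the intended argument, and the role you assign to $\equiv$ is also the right one.

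There is one case missing from your enumeration, and it is the ingredient the paper lists \emph{first}: the reduction semantics is closed under congruence rules, so a process can reduce as $c\leftarrow M \leftarrow \ov{u_j};\ov{d_i};Q \tra{} c\leftarrow M' \leftarrow \ov{u_j};\ov{d_i};Q$ whenever $M \tra{} M'$. This is not a communication/cut case and is not handled by your ``inversion-then-reassemble'' pattern for connectives: to retype the reduct one needs $\Psi \vdash M' : \{\ov{u_j{:}C_j};\ov{d_i{:}B_i}\vdash c{:}A\}$, which is exactly term-level subject reduction (Theorem~\ref{thm:srlammain}, itself obtained from Theorem~\ref{thm:sreq}, i.e.\ that $M\tra{}M'$ implies $\Psi\vdash M=M'$ at the given type, together with validity for equality). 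Without invoking that result the case analysis is incomplete; with it, the case closes immediately by re-applying $(\{\}E)$. Everything else in your proposal stands.
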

\vspace{-1.5ex}
\begin{restatable}[Progress -- Terms]{theorem}{proglamthm}
\label{thm:proglam}
If $\Psi \vdash M : \tau$ then either $M$ is a value or $M \tra{} M'$
\end{restatable}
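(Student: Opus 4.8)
The plan is to prove this by induction on the derivation of $\Psi \vdash M : \tau$, showing that a well-typed term is either a value or can take a reduction step. The key structural observation is that the functional layer is a standard dependently-typed $\lambda$-calculus, so the proof follows the usual pattern for progress in such systems, with the only novelty being the monadic term constructor.

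First I would identify the values of the functional layer. By inspection of the term syntax in Figure~\ref{fig:syntax}, the values are the $\lambda$-abstractions $\lambda x{:}\tau.M$ and the monadic terms $\monad{c \leftarrow P \leftarrow \ov{u_j};\ov{d_i}}$ (recalling that, per the operational semantics, no reduction takes place inside monadic terms, so they are inert at the functional level regardless of the shape of $P$). The remaining term formers are variables $x$ and applications $M\,N$. Since the statement concerns closed terms typed in an arbitrary $\Psi$ but reduction is the standard call-by-value/head reduction of the $\lambda$-calculus, I would proceed case by case on the last rule applied.

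The interesting cases are as follows. For $(\Pi I)$, the term is a $\lambda$-abstraction, hence a value. For $(\{\}I)$, the term is a monadic value, hence a value. For the conversion rule $(\m{Conv})$, the term $M$ is unchanged and only its type is rewritten via $\Psi \vdash \tau = \sigma :: \type$, so the induction hypothesis applies directly to the premise $\Psi \vdash M : \tau$ and yields the conclusion immediately. The only genuinely non-trivial case is $(\Pi E)$, where $M = M_1\,M_2$ with $\Psi \vdash M_1 : \Pi x{:}\tau'.\sigma$ and $\Psi \vdash M_2 : \tau'$. Applying the induction hypothesis to $M_1$, either $M_1$ reduces (and then $M_1\,M_2$ reduces by the congruence rule for the application head), or $M_1$ is a value. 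In the latter case I must show $M_1$ is in fact a $\lambda$-abstraction so that the $\beta$-reduction rule fires.

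The main obstacle is precisely this last sub-case: establishing a \emph{canonical forms} property stating that a value of type $\Pi x{:}\tau'.\sigma$ must be a $\lambda$-abstraction (and cannot be a monadic value). This rules out the only other possible value shape and requires inversion on the typing of values at function type, which in turn depends on the injectivity and inversion properties for type equality already promised before Unicity of Types (Theorem~\ref{unicitythm}) -- in particular that $\Pi$-types and monadic types are distinct up to definitional equality, so no conversion chain can equate them. Granting that canonical forms lemma, $M_1 = \lambda x{:}\tau''.N$, and then $M_1\,M_2 \tra{} N\{M_2/x\}$ by $(\m{TMEq}\beta)$-style reduction, completing the case. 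I expect the bulk of the real work to live in that canonical-forms argument, with everything else being routine case analysis.
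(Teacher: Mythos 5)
Your proposal is correct and follows essentially the same route as the paper, which proves this theorem ``by induction on typing, using the standard canonical forms-based reasoning and noting that monadic terms are values.'' You have simply spelled out the details the paper leaves implicit, correctly locating the real work in the canonical-forms lemma for $\Pi$-types and its reliance on the equality-inversion properties to separate $\Pi$-types from monadic types under conversion.
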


As common in logical-based session type theories, typing
enforces a strong notion of \emph{global} progress which states that
closed processes that are waiting to perform communication actions
cannot get stuck (this relies on a notion of \emph{live} process,
defined as $\m{live}(P)$ iff $P \equiv (\nub \tilde{n})(\pi.Q \mid R)$
for some process $R$, sequence of names $\tilde{n}$ and a
non-replicated guarded process $\pi.Q$). We note that the restricted
typing for $P$ is without loss of generality, due to the $(\m{cut})$ rule.

\begin{theorem}[Progress -- Processes]
If $\Psi;\cdot;\cdot \vdash P :: c{:}\one$ and $\m{live}(P)$ then
$\exists Q$ such that $P \tra{} Q$
\end{theorem}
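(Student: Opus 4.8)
The plan is to prove the statement by relating the global-progress result of the underlying linear-logic session calculus (established in \cite{depsesstr,toninhothesis,DBLP:journals/mscs/CairesPT16}) to our extended system, the only genuinely new obstacle being the monadic elimination construct $c \leftarrow M \leftarrow \ov{u_j};\ov{d_i};Q$, which introduces functional computation into the process layer. First I would proceed by induction on the typing derivation of $\Psi;\cdot;\cdot \vdash P :: c{:}\one$, using inversion (which is available via Unicity of Types and the validity lemmas) to analyse the shape of $P$. Since $\m{live}(P)$, we may assume $P \equiv (\nub \tilde{n})(\pi.Q \mid R)$ with $\pi.Q$ a non-replicated guarded process; the argument then splits on whether the guarding prefix $\pi$ acts on a free name, on a hidden name in $\tilde{n}$, or is a monadic elimination.

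For the purely session-typed prefixes, I would reuse the classical progress argument essentially unchanged: because the top-level session is $c{:}\one$ and all the hidden names $\tilde n$ are introduced by $(\m{cut})$, any prefix acting on a hidden channel must have a matching co-action available in $R$ by typing (a dual offer along the same restricted channel exists), and the corresponding reduction rule from the operational semantics fires. The new dependent communication prefixes $c\langle M\rangle_{\exists x:\tau.A}.P$ and $c(x:\tau).P$ (and the $\forall$ duals) behave exactly like ordinary output/input for the purpose of reducibility — the carried term $M$ is simply substituted — so they are handled by the same matched-action reasoning and require no evaluation of $M$ before the communication step. The forwarder prefix $[c\leftrightarrow d]$ is dispatched by the forwarding reduction $(\nub c)(P\mid[c\leftrightarrow d]) \tra{} P\{d/c\}$ when its channel is hidden.

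The main obstacle is the monadic elimination prefix $\pi = (c \leftarrow M \leftarrow \ov{u_j};\ov{d_i})$. Here progress does not follow from session duality; instead I would appeal to the functional-layer results, namely Subject Reduction — Terms (Theorem~\ref{thm:srlammain}) and Progress — Terms (Theorem~\ref{thm:proglam}). By inversion on the $(\{\}E)$ rule, $M$ has monadic type $\monad{\ov{u_j{:}C_j};\ov{d_i{:}B_i}\vdash c{:}A}$. By Progress — Terms, either $M$ is already a value or $M \tra{} M'$; in the latter case $P$ itself reduces by the congruence rule that lifts $\lambda$-reduction through the monadic elimination context, discharging the goal. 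If $M$ is a value of monadic type, then by the canonical-forms property obtained from inversion on term typing it must be of the shape $\monad{c \leftarrow P' \leftarrow \ov{u_j};\ov{d_i}}$, and the spawn reduction $c\leftarrow \monad{c\leftarrow P'\leftarrow \ov{u_j};\ov{d_i}}\leftarrow \ov{u_j};\ov{d_i};Q \tra{} (\nub c)(P'\mid Q)$ fires. The delicate point I expect to work hardest on is obtaining this canonical-forms statement for monadic values: it rests on inversion for the monadic type together with the fact, noted in the operational semantics, that no reduction occurs inside monadic terms, so a closed normal term of monadic type is necessarily a syntactic monadic box. Finally I would observe that the restriction to the type $c{:}\one$ with empty $\Ga$ and $\D$ is without loss of generality, exactly as remarked before the theorem, because any live process can be closed off under $(\m{cut})$ against a $\one$-offering continuation without affecting reducibility.
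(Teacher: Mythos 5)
Your proposal is correct and follows the approach the paper intends: the paper gives no explicit proof of this theorem, deferring the purely session-typed cases to the standard global-progress argument of the cited linear-logic session-type works, and the only genuinely new case is the monadic elimination prefix, which you correctly discharge via Progress--Terms (Theorem~\ref{thm:proglam}), the congruence rule allowing reduction of $M$ under the elimination form, and canonical forms for monadic values followed by the spawn reduction. One caveat worth making explicit: your canonical-forms step requires a closed term of monadic type to be a syntactic monadic box, so the argument silently assumes $\Psi$ contains no term variables of monadic type --- this is the ``closed process'' reading the paper alludes to just before the theorem, and it should be surfaced as a hypothesis if $\Psi$ is allowed to be non-empty.
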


%%% Local Variables:
%%% mode: latex
%%% TeX-master: "main"
%%% End:

\section{Embedding the Functional Layer in the Process Layer}
\label{sec:embed}
Having introduced our type theory and showcased some of its informal
expressiveness in terms of the ability to specify and
\emph{statically} verify true data dependent protocols, as well as the
ability to prove properties of processes, we now develop a formal
expressiveness result for our theory, showing that the process level
type constructs are able to encode the dependently-typed
functional layer, faithfully preserving type dependencies.

Specifically, we show that (1) the type-level constructs in the
functional layer can be represented by those in the process layer
combined with the contextual monad type, and (2) all term level
constructs can be represented by session-typed processes that exchange
monadic values. Thus, we show that both $\lambda$-abstraction and
application can be eliminated while still preserving non-trivial type
dependencies. 
Crucially, we note that the monadic construct \emph{cannot}
be fully eliminated due to the cross-layer nature of session type
dependencies: In the process layer, simply-kinded dependent types
(i.e. types with kind $\stype$) are of the form $\forall x{:}\tau.A$
where $\tau$ is of kind $\type$ and $A$ of kind $\stype$ (where $x$ may
occur). Operationally, such a session denotes an input of some term $M$ of
type $\tau$ with a continuation of type $A\{M/x\}$. Thus, to
faithfully encode type dependencies we cannot represent such a type
with a non-dependently typed input (e.g. a type of the form $A \lolli
B$).

\subsection{The Embedding}
\myparagraph{A first attempt.}
Given the observation above, a seemingly reasonable option would be to attempt
an encoding that maintains monadic objects solely at the level of
type indices and then exploits Girard's encoding
\cite{DBLP:journals/tcs/Girard87} of function types
$\tau \rightarrow \sigma$ as $\bang\lb\tau\rb \rightarrow
\lb\sigma\rb$, which is adequate for session-typed processes
\cite{DBLP:conf/fossacs/ToninhoCP12}. Thus a candidate encoding for
the type $\Pi x{:}\tau.\sigma$ would be $\forall x{:}\{\lb\tau\rb\}.
                                               \bang \lb\tau\rb \lolli
                                               \lb \sigma\rb$,
where $\lb{-}\rb$ denotes our encoding on types. If we then consider
the encoding at the level of terms, typing dictates the following (we
write $\lb M \rb_z$ for the process encoding of $M : \tau$, where $z$
is the session channel along which one may observe the ``result'' of
the encoding, typed with $\lb \tau\rb$):%; and write $\monad{P}$ for the
%term $\monad{c\leftarrow P \leftarrow \epsilon}$):
\[
  \begin{array}{lcl}
     \lb \lambda x{:}\tau . M \rb_z & \triangleq & z(x).z(x').\lb M\rb_z \\
 \lb M\, N\rb_z & \triangleq & 
(\nub x)(\lb M \rb_x \mid x\langle
                              \{\lb N\rb_y\}\rangle.\ov{x}\langle x'\rangle.(\bang x'(y).\lb N\rb_y \mid [x\leftrightarrow z])\\
  \end{array}
\]

However, this candidate encoding breaks down once we consider
definitional equality. Specifically, compositionality (i.e. the
relationship between $\lb M\{N/x\}\rb_z$ and the encoding of $N$
substituted in that of $M$) requires us to relate $\lb M\{N/x\}\rb_z$
with
$(\nub x)(\lb M \rb_z\{\monad{\lb N\rb_y}/x\} \mid \bang x'(y).\lb
N\rb_y)$, which relies on reasoning up-to \emph{observational
  equivalence} of processes, a much stronger relation than our notion
of definitional equality. Therefore it is \emph{fundamentally}
impossible for such an encoding to preserve our definitional equality,
and thus it cannot preserve typing in the general case.

\myparagraph{A faithful embedding.}  We now develop our embedding of
the functional layer into the process layer 
which is compatible with definitional equality. 
%combined with the monadic type. 
Our target calculus is reminiscent of a higher-order (in
the sense of higher-order processes \cite{sangiorgipi}) session
calculus \cite{Mostrous07twosession}. Our encoding $\lb{-}\rb$ is
inductively defined on kinds, types, session types, terms and
processes. 
%Through a slight abuse of notation, we write $\monad{A}$
%and $\monad{P}$ for the type $\monad{\cdot;\cdot\vdash c{:}A}$ and the
%term $\monad{c\leftarrow P \leftarrow \epsilon}$, respectively, where
%$\epsilon$ denotes an empty listing of channels. 
As usual in process encodings of
the $\lambda$-calculus, the encoding of a term $M$ is indexed by a
result channel $z$, written $\lb M \rb_z$, where the behaviour of $M$
may be observed. 

\begin{figure}[t]
\small

$\begin{array}{lcllcl}
\text{\bf Kind:}\\
\  \lb \type \rb & \triangleq & \stype & \lb \stype \rb & \triangleq & \stype\\
\  \lb \Pi x{:}\tau.K\rb & \triangleq & \Pi x{:}\monad{\lb \tau\rb}.\lb
  K\rb 
\quad\quad&
  \lb \Pi t :: K_1 . K_2 \rb & \triangleq & \Pi t {::} \lb K_1 \rb.\lb
  K_2\rb\\
\text{\bf Functional:}\\
\   \lb \Pi x{:}\tau.\sigma \rb & \triangleq & \forall x{:}\monad{\lb \tau \rb}.\lb \sigma \rb &
  \lb \monad{\ov{u_j {:}B_j} ; \ov{d_i {:}B_i} \vdash c{:}A} \rb & \triangleq &                  \ov{\bang \lb B_j\rb } \lolli \ov{\lb B_i \rb} \lolli \lb A \rb\\
\   \lb \lambda x{:}\tau.\sigma \rb & \triangleq & \lambda x{:} \monad{\lb \tau\rb}.\lb \sigma\rb
                                      & \lb \tau\, M\rb & \triangleq & \lb \tau\rb \,\monad{\lb M \rb_c}\\
\   \lb \lambda t {::} K . \tau \rb & \triangleq & \lambda t {::} \lb K \rb . \lb \tau \rb
                                      & \lb \tau \, \sigma \rb & \triangleq & \lb \tau \rb \, \lb \sigma \rb\\
%\end{array}
%\]
%\[
%\begin{array}{lcllcl}
%  \lb A \lolli B \rb & \triangleq & \lb A \rb \lolli \lb B \rb \qquad&  & &\\
%   \mbox{(Homomorphic for other types)}\\
\text{\bf Session:}\\
\  \lb \forall x {:}\tau. A \rb & \triangleq & \forall x{:}\monad{\lb \tau\rb} .\lb A \rb
  & \lb \exists x{:}\tau . A \rb & \triangleq & \exists x{:}\monad{\lb \tau \rb}.\lb A \rb\\
\  \lb \lambda x{:}\tau . A \rb & \triangleq & \lambda x{:}\monad{\lb \tau \rb}.\lb A \rb &
  \lb A \, M \rb & \triangleq & \lb A \rb \, \monad{\lb M \rb_c}                           
\end{array}
$
\\[1mm]
$
\begin{array}{l} 
\text{\bf Terms:}\\
\ \lb \lambda x{:}\tau . M \rb_z  \triangleq 
 z(x{:}\monad{\lb\tau\rb}).\lb M\rb_z  
\quad\quad
 \lb M\, N\rb_z  \triangleq  
(\nub x)(\lb M \rb_x \mid x\langle
                              \{\lb N\rb_y\}\rangle.[x\leftrightarrow z])\\
% \lb (M , N)\rb_z & \triangleq & 
% z\langle \{\lb M\rb_y\} \rangle.\ov{z}\langle x' \rangle.
% (\bang x'(y).\lb M\rb_y \mid \bang z(z').\lb N\rb_{z'}) \\
% \lb\pi_1 M \rb_z & \triangleq &
% (\nub y)(\lb M\rb_y \mid y(x).y(w).\ov{w}\langle u\rangle.[u\leftrightarrow z])\\
% \lb \pi_2 M \rb_z & \triangleq & 
% (\nub y)(\lb M\rb_y \mid y(x).y(w). \ov{y}\langle y'\rangle.
% [y'\leftrightarrow z]) \\

\ \lb x \rb_z  \triangleq  y \leftarrow x ; [y \leftrightarrow z]\quad\quad
\lb \{ z \leftarrow P \leftarrow \ov{u_j};\ov{d_i}\}\rb_z  \triangleq 
z(u_0).\dots.z(u_j).z(d_0). \dots . z(d_n).\lb P\rb
\end{array}
$
\\[1mm]
$
\begin{array}{lcl}
{\text{\bf Processes:}}\\
\ \lb(\nub x)(P \mid Q)\rb & \triangleq & (\nub x)(\lb P \rb \mid \lb Q \rb) 
\quad \lb \zero \rb  \triangleq  \zero\quad 
\lb \ov{x}\langle y \rangle.(P \mid Q) \rb  \triangleq  \ov{x}\langle
y \rangle.(\lb P \rb \mid \lb Q \rb)\\ 
\ \lb x\langle M \rangle.P \rb  & \triangleq  & x\langle \{\lb M\rb_y
\}\rangle.\lb P \rb \quad 
\lb x(y).P \rb  \triangleq  x(y).\lb P \rb\\
\ \lb c \leftarrow M \leftarrow \ov{u_j};\ov{y_i} ; Q \rb  & \triangleq &   
                                                                        (\nub c)(\lb M \rb_c \mid \ov{c}\langle v_1\rangle.(\ov{u_1}\langle a_1\rangle.[a_1\leftrightarrow v_1] \mid
\dots \mid        \\                                      
                    &&                                                    \ov{c}\langle d_1 \rangle.([y_1\leftrightarrow
      d_1] \mid \dots \mid \ov{c}\langle d_n\rangle.([y_n
                       \leftrightarrow d_n] \mid \lb Q\rb) \dots )
\end{array}
$
\caption{An embedding of dependent functions into processes\label{fig:enc}}
\end{figure}

The embedding is presented in Fig.~\ref{fig:enc}, noting that the
encoding extends straightforwardly to typing contexts, where
functional contexts $\Psi, x{:}\tau$ are mapped to
$\monad{\lb\Psi\rb},x{:}\monad{\lb\tau\rb}$.
The mapping of
base kinds is straightforward. Dependent kinds $\Pi x{:}\tau.K$ rely
on the monad for well-formedness and are encoded as (session) kinds
of the form $\Pi x{:}\monad{\lb\tau\rb}.\lb K \rb$. The higher-kinded
types in the functional layer are translated to the corresponding
type-level constructs of the process layer where all objects that must
be $\type$-kinded rely on the monad to satisfy this constraint. For
instance, $\lambda x{:}\tau.\sigma$ is mapped to the session-type
abstraction $\lambda x{:}\monad{\lb \tau\rb}.\lb \sigma\rb$ and the
type-level application $\tau\,M$ is translated to $\lb \tau\rb \,
\monad{\lb M\rb_c}$. Given the observation above on embedding the
dependent function type $\Pi x{:}\tau.\sigma$, we translate it
directly to $\forall x{:}\monad{\lb\tau\rb}.\lb\sigma\rb$, that is,
functions from $\tau$ to $\sigma$ are mapped to sessions that input
\emph{processes} implementing $\lb\tau\rb$ and then behave as $\lb\sigma\rb$
accordingly. The encoding for monadic types simply realises the
contextual nature of the monad by performing a sequence of inputs of
the appropriate types (with the shared sessions being of $\bang$
type).

The mutually dependent nature of the framework requires us to extend
the mapping to the process layer. Session types are mapped
homomorphically 
(e.g.~$\lb A \lolli B \rb  \triangleq  \lb A \rb \lolli \lb B \rb$) 
with the exception of dependent inputs and outputs
which rely on the monad, similarly for type-level functions
and application.

The encoding of $\lambda$-terms is guided by the embedding for types:
the abstraction $\lambda x{:}\tau.M$ is mapped to an input of a term
of type $\monad{\lb\tau\rb}$ with continuation $\lb M\rb_z$;
application $M\,N$ is mapped to the composition of the encoding of $M$
on a fresh name $x$ with the corresponding output of $\monad{\lb N
  \rb_y}$, which is then forwarded to the result channel $z$; monadic
expressions are translated to the appropriate sequence of inputs, as
dictated by the translation of the monadic type; and, the translation
of variables makes use of the monadic elimination form (since
the encoding enforces variables to always be of monadic type) combined
with forwarding to the appropriate result channel.

The mapping for processes is mostly homomorphic, using the monad
constructor as needed. The only significant exception is the encoding
for monadic elimination which must provide the encoded
monadic term $\lb M\rb_c$ with the necessary channels. Since the
session calculus does not support communication of free names this is
achieved by a sequence of outputs of fresh names combined with
forwarding of the appropriate channel. To account for replicated
sessions we must first trigger the replication via an output which is
then forwarded accordingly.

%\todo[inline]{Remark on $\Sigma$ types and why they don't work?}

We can illustrate our encoding via a simple example of an encoded
function (we omit type annotations for conciseness):
\[
  \begin{array}{l}
    \lb (\lambda x.x)\,(\lambda x.\lambda y.y)\rb_z  =  (\nub c)(\lb \lambda x.x \rb_c \mid
       c\langle\monad{ \lb \lambda x.\lambda y.y \rb_w } \rangle.[c\leftrightarrow z]) =\\
    
       \qquad(\nub c)(c(x).y\leftarrow x;[y\leftrightarrow c] \mid
    c\langle\monad{w(x).w(y).d\leftarrow y;[d\leftrightarrow w]}\rangle.[c\leftrightarrow z])\\

  \tra{}^+ z(x).z(y).d\leftarrow y;[d\leftrightarrow z]
\ = \ \lb \lambda x.\lambda y.y\rb_z
  \end{array}
\]
%We note that the reduction sequence above results in a process that is
%identical to $\lb \lambda x.\lambda y.y\rb_z$. As expected given the
%results of \cite{DBLP:conf/fossacs/ToninhoCP12}, our encoding
%satisfies a very precise operational correspondence with the original
%$\lambda$-terms.

\subsection{Properties of the Embedding}
We now state the key properties satisfied by our embedding, ultimately
resulting in type preservation and operational correspondence.
For conciseness, in the statements below we list only the cases for terms and
processes, omitting those for types and kinds (see Appendix~\ref{app:enc}).
The key property that is needed is a notion of compositionality, which
unlike in the sketch above no longer falls outside of definitional
equality.

\begin{restatable}[Compositionality]{lemma}{enccompos}
  \label{lem:comp}
  \begin{enumerate}
% \item $\Psi \vdash \lb K\{M/x\}\rb$ iff $\Psi \vdash \lb K \rb
%   \{\monad{\lb M \rb_c}/x\}$
% \item $\Psi \vdash \lb K \{\tau / t \}\rb$ iff $\Psi \vdash \lb K\rb
%   \{\lb \tau \rb / t\}$
% \item $\Psi \vdash \lb K\{A/x\}\rb$ iff $\Psi \vdash \lb K\rb\{\lb
%   A \rb /x\}$
% \item $\Psi \vdash \lb \tau\{M/x\}\rb :: \lb K\{M/x\}\rb$ iff 
% $\Psi \vdash \lb \tau\rb\{\monad{\lb M \rb_c}/x\} :: \lb K \rb
% \{\monad{\lb M \rb_c}/x\}$
% \item $\Psi \vdash \lb A\{M/x\}\rb :: \lb K\{M/x\}\rb$ iff 
% $\Psi \vdash \lb A\rb\{\monad{\lb M \rb_c}/x\} :: \lb K \rb
% \{\monad{\lb M \rb_c}/x\}$
\item {\small$\Psi ; \Ga ; \D  \vdash \lb M \{N/x\}\rb_z = \lb M\rb_z\{\monad{\lb N \rb_y}/x\} :: z{:}\lb
    A\{N/x\}\rb$}
  \item {\small $\Psi ; \Ga ; \D \vdash \lb P\{M/x\}\rb :: z{:}\lb A
    \{M/x\}\rb$
    iff
    $\Psi ; \Ga ; \D \vdash \lb P \rb\{\monad{\lb M \rb_c}/x\}::
     z{:}\lb A \rb\{\monad{\lb M \rb_c}/x\}$}
\end{enumerate}
\end{restatable}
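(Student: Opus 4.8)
The plan is to prove both parts by a single simultaneous induction, since the two layers are mutually defined: a term $M$ may contain a monadic value wrapping a process, and a process $P$ may mention terms in its outputs, its monadic eliminations, and its dependent-communication type annotations. Before starting I would dispatch the analogous compositionality statement at the level of (session) types and kinds — the cases omitted from this lemma and deferred to Appendix~\ref{app:enc} — namely that $\lb A\{M/x\}\rb = \lb A\rb\{\monad{\lb M\rb_c}/x\} :: \lb K\rb$, and similarly for functional types. This follows by a routine induction on type structure whose only base case is the type variable, and it is needed both to make the right-hand types of the two judgments in part~2 agree and to justify the type annotations carried by the dependent-communication constructs.

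For part~1 the induction is homomorphic in every case but the variable. For $\lambda$-abstraction, application, and the monadic/process constructors, substitution commutes with $\lb-\rb$, and the congruence rules of process equality together with the induction hypotheses (invoking part~2 at any embedded process) close each case, after $\alpha$-renaming so that the fresh result channels introduced by the encoding stay disjoint from the substituted material. The crux is the variable case $M = x$: here $\lb x\{N/x\}\rb_z = \lb N\rb_z$, whereas $\lb x\rb_z\{\monad{\lb N\rb_y}/x\} = y \leftarrow \monad{\lb N\rb_y} ; [y\fwd z]$. I would discharge this with the spawn reduction $c\leftarrow\monad{c\leftarrow P\leftarrow\ov{u_j};\ov{d_i}}\leftarrow\ov{u_j};\ov{d_i};Q \tra{} (\nub c)(P\mid Q)$ instantiated with empty ambient channels, obtaining $(\nub y)(\lb N\rb_y\mid[y\fwd z])$, followed by the forwarder reduction $(\nub y)(\lb N\rb_y\mid[y\fwd z]) \tra{} \lb N\rb_y\{z/y\} = \lb N\rb_z$; rule $(\m{PEqRed})$ then yields the required definitional equality, provided both endpoints are well typed, which follows from the typing of $N$, the monadic introduction rule $(\{\}I)$, and type-level compositionality.

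For part~2 I would induct on $P$. Because the process encoding is homomorphic, $\lb P\{M/x\}\rb$ and $\lb P\rb\{\monad{\lb M\rb_c}/x\}$ coincide except at the embedded term subterms — in $x\langle M'\rangle.P'$, in the monadic elimination, and in the dependent-communication annotations — and at each such position part~1 and type-level compositionality tell us the two are \emph{definitionally} equal rather than merely syntactically so. The ``iff'' on typing then follows by rebuilding the derivation constructor-by-constructor: assuming one judgment, I invert it, adjust the subcomponents using the substitution lemma (Lemma~\ref{lem:trans}) and the induction hypotheses, reassemble with the same typing rule, and reconcile the right-hand types $\lb A\{M/x\}\rb$ and $\lb A\rb\{\monad{\lb M\rb_c}/x\}$ through type-level compositionality and the conversion rule $(\rgt{\m{Conv}})$. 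Validity for equality guarantees that the definitionally equal partner is itself well typed, which supplies the converse direction.

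I expect the main obstacle to be the interaction between the two layers in the mutual induction together with the bookkeeping of fresh channels. One must check that the reductions used in the variable case of part~1 remain available and type-preserving inside the arbitrary contexts generated by the process encoding, and that no capture arises when $\monad{\lb M\rb_c}$ is pushed under the binders introduced by $\lb-\rb$. The reliance on definitional equality rather than syntactic identity is precisely what the ``first attempt'' failed to secure, so the delicate point is to ensure that every appeal to $(\m{PEqRed})$ is between well-typed processes, so that the underlying equality judgment is even well formed.
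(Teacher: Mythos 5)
Your proposal matches the paper's proof in both structure and substance: a mutual induction on the structure of terms and processes (preceded by the analogous compositionality statements for kinds, functional types and session types), with every case homomorphic except the variable case $M=x$, which the paper also discharges by reducing $y\leftarrow \monad{\lb N\rb_c};[y\leftrightarrow z]$ via the spawn and forwarder rules to $\lb N\rb_z$ and closing with $(\m{PEqRed})$. Your additional remarks about $\alpha$-renaming, capture avoidance, and the well-typedness side conditions of $(\m{PEqRed})$ are sound refinements of the same argument rather than a departure from it.
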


% We can show compatibility with reduction straightforwardly.

% \todo[inline]{Care is needed for this ``untyped'' version. In order
%   for (1) and (2) to hold, since we cannot reduce under the monad, we
%   must enforce call-by-name semantics.}

% \begin{theorem}[Operational Correspondence]
%   ~
% Assume $\Psi ; \Ga ; \D \vdash P :: z{:}A$ and $\Psi \vdash M : \tau$:
%   \begin{enumerate}
% \item  If $P \tra{} P'$ then
%   $\lb P \rb \tra{}^+ \lb P'\rb$ % with $\Psi ; \Ga ; \D \vdash Q = \lb P'\rb :: z{:}A$
% \item  If $M \tra{} M'$ then $\lb M \rb_z
%   \tra{} \lb M'\rb_z$ %with $\monad{\lb \Psi \rb} ; \cdot ; \cdot \vdash N = \lb
%   % M '\rb_z ::z{:}\lb \tau\rb$

% \end{enumerate}

% \end{theorem}

Given the dependently typed nature of the framework, establishing the
key properties of the encoding must be done simultaneously (relying on some auxiliary results -- see
Appendix~\ref{app:enc}).

\begin{restatable}[Preservation of Equality]{theorem}{encpreseq}
  \begin{enumerate}
% \item If $\Psi \vdash K_1 = K_2$ then $\monad{\lb \Psi \rb} \vdash \lb
%   K_1 \rb = \lb K_2\rb$
% \item If $\Psi \vdash \tau_1 = \tau_2 :: K$ then $\monad{\lb \Psi \rb}
%   \vdash \lb \tau_1\rb = \lb \tau_2 \rb :: \lb K \rb$
% \item If $\Psi \vdash A = B :: K$ then $\monad{\lb \Psi \rb} \vdash
%   \lb A \rb = \lb B \rb :: \lb K \rb$
\item If $\Psi \vdash M = N : \tau$ then $\monad{\lb \Psi \rb} ; \cdot
  ; \cdot \vdash \lb M \rb_z = \lb N \rb_z :: z{:}\lb \tau\rb$
\item If $\Psi ; \Ga ; \D \vdash P = Q :: z{:}A$ then $\monad{\lb \Psi
    \rb} ; \lb \Ga \rb ; \lb \D \rb\vdash \lb P \rb = \lb Q\rb ::
  z{:}\lb A \rb$
\end{enumerate}
\end{restatable}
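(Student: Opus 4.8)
The plan is to prove both statements of the Preservation of Equality theorem simultaneously by mutual induction on the derivations of the definitional equality judgments $\Psi \vdash M = N : \tau$ and $\Psi ; \Ga ; \D \vdash P = Q :: z{:}A$. The reason for simultaneity is structural: term equality depends on process equality (through monadic terms and the $\eta$-principle $(\m{TMEq}\{\}\eta)$), while process equality depends back on term equality (through the communication rules for $\forall$ and $\exists$, which carry terms $M$, and through reductions that substitute terms). This mutual dependence mirrors the mutually inductive definition of the judgments themselves, so the induction must be carried out over the combined derivation. I would also need to assume throughout the companion statements of the simultaneous induction package referred to in the excerpt — in particular type preservation and the encoding's action on kinds and types — since those feed into the conversion cases.

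First I would dispatch the structural cases. The congruence, symmetry, reflexivity, and transitivity rules are immediate: since the encoding $\lb{-}\rb$ is defined homomorphically on the relevant syntactic constructors, each such rule maps to the corresponding structural rule for equality in the target, and the induction hypotheses supply the premises directly. The conversion rules $(\m{Conv})$, $(\rgt{\m{Conv}})$ and $(\lft{\m{Conv}})$ require knowing that the encoding preserves type and context equality, which is exactly the companion clauses of the simultaneous statement. Next I would treat the $\beta$-rules: rule $(\m{TMEq}\beta)$ relates $(\lambda x{:}\tau.M)\,N$ with $M\{N/x\}$, and after applying the encoding the left-hand side reduces (under $\tra{}$, hence under $(\m{PEqRed})$) to a process that, by Compositionality (Lemma~\ref{lem:comp}), is definitionally equal to $\lb M\{N/x\}\rb_z$. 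This is where Compositionality does the essential work, converting the semantic substitution lemma into a bona fide chain of definitional equalities rather than observational ones — precisely the obstacle that doomed the first-attempt encoding. The process reductions folded into $(\m{PEqRed})$ are handled by the operational-correspondence direction, which shows each source reduction is tracked by a nonempty sequence of target reductions (as in the worked $(\lambda x.x)(\lambda x.\lambda y.y)$ example), and is then absorbed by $(\m{PEqRed})$ in the target.

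The genuinely delicate cases are the $\eta$-principles and the commuting conversions. For $(\m{TMEq}\{\}\eta)$ I would unfold the encoding of the monadic $\eta$-expansion and show, using the encoding of the monadic elimination form together with forwarding, that it reduces to — or is definitionally equal to — $\lb M\rb_z$; this again leans on Compositionality and on the target-level forwarding $\eta$-laws such as $(\m{PEq}\forall\eta)$. For $(\m{PEqCC}\forall)$ and the other commuting conversions, the strategy is to observe that the encoding sends each source commuting conversion to the corresponding target commuting conversion, because name restriction, parallel composition, and input/output are all mapped homomorphically; the identity of the bound channel $c$ carrying $\forall x{:}\tau.A$ is preserved, so the hiding argument that justifies the conversion in the source carries over verbatim in the target.

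The main obstacle I anticipate is the $\eta$-rule case $(\m{TMEq}\{\}\eta)$ together with the handling of the encoding of monadic elimination: the target process there is a sequence of fresh-name outputs and forwarders (to circumvent the lack of free-name passing), so establishing that this administrative scaffolding reduces back, up to definitional equality, to the expected forwarding behaviour requires careful bookkeeping of the commuting conversions and forwarding laws, and must be shown to stay strictly within definitional equality rather than slipping into observational equivalence. Verifying that Compositionality (clause~2) interacts correctly with this scaffolding under substitution — i.e. that pushing $\monad{\lb M\rb_c}$ through the encoded elimination form agrees with encoding the substituted process — is the technical crux on which the whole mutual induction turns.
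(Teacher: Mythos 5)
Your proposal matches the paper's proof in all essentials: a simultaneous induction on the equality derivations (the paper folds kinds, types, session types, terms and processes into one mutually inductive lemma), with homomorphic cases immediate, the $\beta$-cases discharged by reducing the encoded redex and invoking Compositionality plus $(\m{PEqRed})$ and conversion, $(\m{PEqRed})$ itself discharged via operational correspondence and type preservation, and the $\eta$-principles and commuting conversions (notably $(\m{TMEq}\{\}\eta)$) handled by the target-level $\eta$-laws and commuting conversions for $\lolli$, $\bang$ and $\forall$. You also correctly identify the reliance on the companion statements (preservation of typing, compositionality, operational correspondence) that the paper establishes as part of the same simultaneous development, so no further comparison is needed.
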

\vspace{-3mm}
\begin{restatable}[Preservation of Typing]{theorem}{encprestyp}
  \label{lem:pwf}
  \begin{enumerate}
% \item If $\Psi \vdash$ then $\lb \Psi\rb\vdash$ and $\{\lb \Psi \rb\} \vdash$.
% \item If $\Psi \vdash K$ then $\{\lb \Psi \rb\} \vdash \lb K\rb$
% \item If $\Psi \vdash \tau :: K$ then $\{\lb \Psi \rb\} \vdash
%   \lb\tau\rb :: \lb K\rb$
% \item If $\Psi \vdash A :: K$ then $\{\lb \Psi \rb\} \vdash \lb A \rb
%   :: \lb K\rb$
\item If $\Psi \vdash M : \tau$ then $\{\lb \Psi \rb\} ; \cdot
  ; \cdot \vdash \lb M \rb_z :: z{:}\lb \tau\rb$
\item If $\Psi ; \Ga ; \D \vdash P :: z{:}A$ then
 $\{\lb\Psi\rb\} ; \lb\Ga\rb  ; \lb\D\rb \vdash \lb P\rb :: z{:}\lb
 A \rb$
\end{enumerate}
\end{restatable}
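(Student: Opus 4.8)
The plan is to prove both parts by a single simultaneous induction on the typing derivation, run jointly with Preservation of Equality and with the (omitted) well-formedness statements for kinds, types and session types. The simultaneity is forced by the conversion rules: $(\m{Conv})$, $(\rgt{\m{Conv}})$ and $(\lft{\m{Conv}})$ discharge typing obligations using definitional equality, so a term or process typing derivation can invoke an equality derivation and vice versa. I would therefore state a single induction hypothesis bundling preservation of well-formedness, of typing, and of equality across all judgement forms, and case on the last rule applied, using that the encoding on contexts sends $\Psi,x{:}\tau$ to $\monad{\lb\Psi\rb},x{:}\monad{\lb\tau\rb}$.

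For part (1) I case on the last rule of $\Psi \vdash M : \tau$. Rule $(\Pi I)$ is immediate: since $\lb \Pi x{:}\tau.\sigma\rb = \forall x{:}\monad{\lb\tau\rb}.\lb\sigma\rb$ and $\lb\lambda x{:}\tau.M\rb_z = z(x{:}\monad{\lb\tau\rb}).\lb M\rb_z$, the inductive hypothesis on $M$ feeds directly into $(\rgt{\forall})$, with the context extension matching exactly. For $(\Pi E)$ the encoding $(\nub x)(\lb M\rb_x \mid x\langle\monad{\lb N\rb_y}\rangle.[x\fwd z])$ is typed by a $(\cut)$ on $x$ against a $(\lft{\forall})$ (supplying $\monad{\lb N\rb_y}:\monad{\lb\tau\rb}$) followed by the forwarder typed via $(\id)$; here the offered type must be $\lb\sigma\{N/x\}\rb$, which I obtain from the type-level half of Lemma~\ref{lem:comp}. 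The variable and $(\{\}I)$ cases follow from their encodings and the process-level hypothesis, and the $(\m{Conv})$ case is closed by appealing to Preservation of Equality at the type level and then $(\rgt{\m{Conv}})$.

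For part (2) the homomorphic session constructs are routine, each encoding rule mapping onto its image rule; the dependent communication rules $(\rgt{\exists})$, $(\lft{\exists})$, $(\rgt{\forall})$, $(\lft{\forall})$ additionally require Lemma~\ref{lem:comp} to rewrite the substituted continuation type $\lb A\{M/x\}\rb$ into $\lb A\rb\subst{\monad{\lb M\rb_c}}{x}$. The main obstacle is the monadic elimination rule $(\{\}E)$: its encoding cuts $\lb M\rb_c$, typed at the encoded monadic type $\ov{\bang\lb B_j\rb}\lolli\ov{\lb B_i\rb}\lolli\lb A\rb$, against a chain of fresh-name outputs $\ov{c}\langle v_1\rangle,\dots,\ov{c}\langle d_n\rangle$ and forwarders that feed in the ambient shared and linear channels before handing control to $\lb Q\rb$. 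Typing this composite demands an auxiliary inductive argument that peels the encoded monadic type one connective at a time, discharging each output against a $(\lft{\lolli})$ (or its shared counterpart) and each forwarder via $(\id)$, while tracking that the linear context is partitioned exactly as the $\ov{d_i{:}B_i}$ prescribe and the shared names $\ov{u_j}$ are drawn from $\Ga$. Coordinating this case with the second, `iff'-shaped part of Lemma~\ref{lem:comp} on substituted types, and with the jointly-running Preservation of Equality invoked in every conversion case, is where the bookkeeping is most delicate.
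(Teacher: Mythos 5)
Your proposal matches the paper's proof: a mutual induction on the typing derivation, run jointly with Preservation of Equality (and the well-formedness clauses for kinds and types), closing the $(\Pi I)$ case with $(\rgt\forall)$, the $(\Pi E)$ case with $(\cut)$, $(\lft\forall)$ and $(\id)$ followed by Compositionality to rewrite $\lb\sigma\rb\{\monad{\lb N\rb_y}/x\}$ into $\lb\sigma\{N/x\}\rb$, and the conversion cases via Preservation of Equality. Your more detailed treatment of the $(\{\}E)$ process case simply fills in a case the paper dismisses as routine, using the expected peeling of the encoded monadic type against $(\lft\lolli)$ and the shared/copy rules.
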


\begin{theorem}[Operational Correspondence]
  \label{thm:opcorrcomp}
  If $\Psi ; \Ga ; \D \vdash P :: z{:}A$ and $\Psi \vdash M : \tau$ then:
  \begin{enumerate}
  \item (a) If $P \tra{} P'$ then
  $\lb P \rb \tra{}^+ Q$ with $\monad{\lb\Psi\rb} ; \lb\Ga\rb ;
  \lb\D\rb \vdash Q = \lb P'\rb :: z{:}\lb A\rb$ and
  (b) if $\lb P\rb \tra{} P'$ then $P \tra{}^+ Q$ with $\monad{\lb\Psi\rb} ; \lb\Ga\rb ;
    \lb\D\rb \vdash P' = \lb Q\rb :: z{:}\lb A\rb$
  \item (a) If $M \tra{} M'$ then $\lb M \rb_z
  \tra{}^+ N$ with $\monad{\lb \Psi \rb} ; \cdot ; \cdot \vdash N = \lb
  M '\rb_z ::z{:}\lb \tau\rb$ and
  (b) if $\lb M \rb_z \tra{} P$ then $M \tra{} N$ with $\monad{\lb\Psi\rb} ;\cdot;\cdot\vdash
      \lb N \rb_z = P ::z{:}\lb\tau\rb $
  \end{enumerate}
\end{theorem}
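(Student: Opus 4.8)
The plan is to prove all four statements simultaneously by induction: the (a) directions (source-to-target soundness) by induction on the derivations of $M \tra{} M'$ and $P \tra{} P'$, and the (b) directions (target-to-source completeness) by case analysis on the redexes available in $\lb M \rb_z$ and $\lb P \rb$. Since the two layers are mutually dependent (terms embed monadic processes and processes embed terms), these amount to a single mutual induction. Throughout I rely on Compositionality (Lemma~\ref{lem:comp}) to commute the encoding past substitution up to definitional equality, on Preservation of Typing (Theorem~\ref{lem:pwf}) and Subject Reduction (Theorem~\ref{thm:srprocs}) to keep every intermediate process well typed, and crucially on the fact that process definitional equality subsumes reduction (rule $(\m{PEqRed})$), commuting conversions, and the monadic $\eta$-principle $(\m{TMEq}\{\}\eta)$.

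For the term soundness direction 2(a), the base case is $\beta$-reduction $(\lambda x{:}\tau.M)\,N \tra{} M\{N/x\}$. Unfolding the encoding, $\lb (\lambda x{:}\tau.M)\,N\rb_z = (\nub c)(c(x).\lb M\rb_c \mid c\langle\monad{\lb N\rb_y}\rangle.[c\leftrightarrow z])$, which performs the value communication on $c$, yielding $(\nub c)(\lb M\rb_c\{\monad{\lb N\rb_y}/x\} \mid [c\leftrightarrow z])$, and then a forwarder reduction to $\lb M\rb_z\{\monad{\lb N\rb_y}/x\}$. By Compositionality this reduct is definitionally equal to $\lb M\{N/x\}\rb_z$ (any residual spawn step, as in the worked example where the body is a variable, is folded into that equality since it is taken up to reduction); this discharges the case. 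The congruence cases follow because the encoding is compositional and definitional equality is a congruence closed under name restriction (the $(\m{cut})$ structure).

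For the term completeness direction 2(b), I observe that in $\lb M\rb_z$ (before any reduction) the only monadic eliminations arise from $\lb x\rb_z = y\leftarrow x;[y\leftrightarrow z]$ and thus eliminate \emph{variables}, never monadic values, so they are not redexes; the only fireable redex is the output on the fresh channel in $\lb M_1\,M_2\rb_z$, and this output can synchronise only when $\lb M_1\rb_c$ begins with an input on $c$, i.e.\ exactly when $M_1$ is an abstraction --- precisely when $M_1\,M_2$ is a $\beta$-redex. Hence any target step witnesses a source $\beta$-redex; taking $N$ to be the corresponding $\beta$-reduct, the possibly incomplete remainder of the simulation is absorbed by $(\m{PEqRed})$, giving $P = \lb N\rb_z$ up to definitional equality.

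The process directions 1(a) and 1(b) proceed analogously. The communication and choice reductions are homomorphic: a source synchronisation maps to the identical synchronisation in the encoding, with the value-passing case $c\langle M\rangle.P \mid c(x).Q \tra{} P \mid Q\{M/x\}$ closed by Compositionality together with $(\m{TMEq}\{\}\eta)$ to identify $\lb Q\rb\{\monad{\lb M\rb_c}/x\}$ with $\lb Q\{M/x\}\rb$. The genuinely laborious case --- and the main obstacle --- is the monadic elimination/spawn: the encoding $\lb c\leftarrow M\leftarrow\ov{u_j};\ov{y_i};Q\rb$ expands into the composition of $\lb M\rb_c$ with a scripted sequence of fresh-name outputs and forwardings that feed the ambient channels into $\lb M\rb_c$. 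Verifying that, when the source performs $c\leftarrow\monad{c\leftarrow P\leftarrow\ov{u_j};\ov{d_i}}\leftarrow\ov{u_j};\ov{d_i};Q \tra{} (\nub c)(P\mid Q)$, this bookkeeping reduces to something definitionally equal to $\lb(\nub c)(P\mid Q)\rb$ requires carefully tracking each handshake and forwarder-elimination and appealing to $(\m{TMEq}\{\}\eta)$ and $(\m{PEqRed})$. For completeness 1(b), the same analysis shows each target redex is traceable to a unique source communication or spawn, with administrative steps again mediated by definitional equality. I expect this spawn/monadic-elimination bookkeeping, together with reconciling the multi-step target simulations against single source steps up to typed definitional equality, to be the crux of the proof.
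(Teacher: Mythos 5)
Your proposal matches the paper's proof: the paper likewise argues by induction on the reduction relation, discharging the $\beta$-case by unfolding $\lb (\lambda x{:}\tau.M)\,N\rb_z$ to $(\nub y)(y(x).\lb M\rb_y \mid y\langle\monad{\lb N\rb_c}\rangle.[y\leftrightarrow z])$, reducing twice to $\lb M\rb_z\{\monad{\lb N\rb_c}/x\}$, and closing with Compositionality and type preservation; the value-passing and spawn cases are handled exactly as you describe, with $(\m{PEqRed})$ and $\m{PEqCut}$ absorbing the administrative steps. The only divergence is one of coverage rather than method: the paper's appendix proof in fact establishes only the completeness halves (the (a) directions), whereas you additionally sketch the soundness halves (b) via a redex analysis of the encoding's image; that analysis is plausible and consistent with the encoding, but it is extra material the paper does not carry out.
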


In Theorem~\ref{thm:opcorrcomp}, (a) is commonly
referred to as operational completeness, with (b) establishing
soundness. As exemplified above, our encoding satisfies a very
precise operational correspondence with the original $\lambda$-terms.

% \begin{restatable}[Operational Correspondence --
%   Completeness]{theorem}{encopcorrcomp}
% \label{thm:opcorrcomp}
%   \begin{enumerate}
% \item Let $\Psi ; \Ga ; \D \vdash P :: z{:}A$. If $P \tra{} P'$ then
%   $\lb P \rb \tra{}^+ Q$ with $\monad{\lb\Psi\rb} ; \lb\Ga\rb ; \lb\D\rb \vdash Q = \lb P'\rb :: z{:}A$
% \item Let $\Psi \vdash M : \tau$. If $M \tra{} M'$ then $\lb M \rb_z
%   \tra{} N$ with $\monad{\lb \Psi \rb} ; \cdot ; \cdot \vdash N = \lb
%   M '\rb_z ::z{:}\lb \tau\rb$
% \end{enumerate}
% \end{restatable}

% \begin{restatable}[Operational Correspondence -- Soundness]{theorem}{encopcorrsound}
%  \begin{enumerate}
%  \item Let $\Psi ; \Ga ; \D \vdash P :: z{:}A$.
%       If $\lb P\rb \tra{} P'$ then $P \tra{}^+ Q$ with $\monad{\lb\Psi\rb} ; \lb\Ga\rb ;
%       \lb\D\rb \vdash P' = \lb Q\rb :: z{:}A$
%     \item Let $\Psi \vdash M : \tau$.
%       If $\lb M \rb_z \tra{} M'$ then $M \tra{} N$ with $\monad{\lb\Psi\rb} ;\cdot;\cdot\vdash
%       \lb N \rb_z = M' ::z{:}\lb\tau\rb $
% \end{enumerate}
% \end{restatable}

%%% Local Variables:
%%% mode: latex
%%% TeX-master: "main"
%%% End:

%\section{Discussion and Concluding Remarks}
\section{Related and Future Work} 
\label{sec:conc}
%We have developed a type theory that fully combines a
%dependently-typed $\lambda$-calculus with a session calculus with
%dependently typed communication using a contextual monadic type.  The
%mutually dependent nature of the framework enables us to not only define
%\emph{truly} dependent session types, where subsequent communication actions
%depend on previously received values, but also to prove properties of
%linearly-typed processes and reason precisely about their behaviour.

%For future work, we wish to study algorithmic type-checking techniques
%for our framework (our more restricted variants of it), as well as
%include inductive and coinductive types in the formalisation. For the
%former, we plan the leverage on the plethora of works on algorithmic
%type-checking for dependent type theories
%\cite{DBLP:journals/iandc/CervesatoP02,DBLP:journals/jacm/HarperHP93,DBLP:conf/types/WatkinsCPW03,DBLP:journals/tocl/HarperP05,norell:thesis}. For
%the latter, notions of inductive (and coinductive) session types have
%already been studied in the literature
%\cite{toninhothesis,DBLP:conf/tgc/ToninhoCP14,DBLP:conf/icfp/LindleyM16}
%which can provide insights for an integration with our type theory.
%We also wish to develop an implementation of our framework,
%potentially as a DSL in a proof assistant such as Agda
%\cite{norell:thesis} or Coq \cite{Coq:manual}.
%\subsection{Related Work}

\vspace{-2mm}
\myparagraph{Enriching Session Types via Type Structure.}
%Since their inception
%\cite{DBLP:conf/parle/TakeuchiHK94,DBLP:conf/esop/HondaVK98}, many
%works have proposed extensions to session typing theories in order to
%increase their expressiveness in various ways. 
%
Exploiting the linear logical foundations of session types,
\cite{Toninho:2011:DST:2003476.2003499} considers a form of value
dependencies where session types can state properties of exchanged
data values, while the work \cite{DBLP:conf/esop/ToninhoCP13}
introduces the contextual monad in a simply-typed setting. Our
development not only subsumes these two works, but goes beyond simple
value dependencies by extending to a richer type structure and
integrating dependencies with the contextual monad. Recently,
\cite{DBLP:journals/pacmpl/BalzerP17} considers a non-conservative
extension of linear logic-based session types with sharing, allowing
true non-determinism.  Their work includes dependent quantifications
with shared channels, but their type syntax does {\em not} include
free type variables, so the actual type dependencies do not arise (see
\cite[37:8]{DBLP:journals/pacmpl/BalzerP17}). Thus none of the examples in
this paper can be represented in
\cite{DBLP:journals/pacmpl/BalzerP17}.  The work
\cite{DBLP:journals/pacmpl/IgarashiTVW17} studies gradual session
types. %combined with dynamic types and casts.
To the best of our
knowledge, the main example in \cite[\S~2]{DBLP:journals/pacmpl/IgarashiTVW17}
is \emph{statically} representable in our framework as in the example of
\S~\ref{sec:intro}, where protocol actions depend on values that are
communicated (or passed as function arguments).
%Due to the fundamentally different approaches, our work
%provides static guarantees of correctness, without the runtime checks
%needed in gradual typing approaches.

In the context of multiparty session types, the theory of multiparty indexed
session types is studied in \cite{DYBH12}, and implemented
in a protocol description language \cite{NY15}.
%in order to guarantee
%safety in various parallel and distributed programs,
%e.g.~\cite{NCY2015,NYH2012,LMMNSVY2015}.
The main aim of these works
is to use indexed  types to represent an arbitrary number of
session {\em participants}. The work \cite{TY17} extends
\cite{Toninho:2011:DST:2003476.2003499} to multiparty sessions in
order to treat value dependency across multiple
participants. Extending our framework to multiparty
\cite{DBLP:conf/popl/HondaYC08} or non-logic based session types
\cite{honda.vasconcelos.kubo:language-primitives} is an interesting
future topic.

\myparagraph{Combining Linear and Dependent Types.}  
Many works 
have
studied the various challenges of integrating linearity in dependent
functional type theories. 
% (such as
%\cite{DBLP:conf/popl/GaboardiHHNP13} for differential privacy or
%\cite{DBLP:conf/icfp/SwamyCFSBY11} for certified distributed
%programming).
We focus on the most closely
related works. 
The work 
\cite{DBLP:journals/iandc/CervesatoP02} introduced the Linear Logical
Framework (LLF), integrating linearity with the LF
\cite{DBLP:journals/jacm/HarperHP93} type theory, which was later
extended to the Concurrent Logical Framework
(CLF) \cite{DBLP:conf/types/WatkinsCPW03}, accounting for further
linear connectives.
% Their work
% considers the negative fragment of MALL and a dependent function type,
% restricting function arguments to not contain free \emph{linear}
% variables. LLF was then extended to  
% which also considers the positive fragment of MALL. 
%Their line of work predates the connection
%of linear logic and session types \cite{DBLP:conf/concur/CairesP10} so
%they do not consider session types nor primitive concurrent processes
%in general. 
Their theory is representable in our
framework through the contextual monad (encompassing full
intuitionistic linear logic), depending on
linearly-typed processes that can express dependently typed functions
(\S~\ref{sec:embed}).

The work of \cite{DBLP:conf/popl/KrishnaswamiPB15} integrates
linearity with type dependencies by extending LNL
\cite{DBLP:conf/csl/Benton94}. Their work is aimed at reasoning about
imperative programs using a form of Hoare triples, requiring
features that we do not study in this work such has proof irrelevance
and computationally irrelevant quantification. Formally, their type
theory is extensional which introduces significant technical
differences from our intensional type theory, such as a realisability
model in the style of NuPRL \cite{HARPER199271} to establish consistency.

Recently, \cite{DBLP:conf/esop/GeorgesMOP17} proposed an extension of
LLF with first-class contexts (which may contain both linear and
unrestricted hypotheses). While the contextual aspects of their theory
are reminiscent of our contextual monad, their framework differs
significantly from ours, since it is designed to enable higher-order
abstract syntax (commonplace in the LF family of type theories),
focusing on a type system for canonical LF objects with a
meta-language that includes contexts and context manipulation. They do
not consider additives since their integration with first-class
contexts can break canonicity.  

While none of the above works considers processes as primitive,
their techniques should be useful for, e.g.  developing
algorithmic type-checking and integrating inductive and coinductive
session types based on
\cite{toninhothesis,DBLP:conf/tgc/ToninhoCP14,DBLP:conf/icfp/LindleyM16}.

% \begin{itemize}
% \item Conor's I Got Plenty o’ Nuttin’ paper from Phil's festschrift \cite{DBLP:conf/birthday/McBride16}
% \end{itemize}

\myparagraph{Dependent Types and Higher-Order $\pi$-calculus.}

The work \cite{YH02} studies a form of dependent types where the type
of processes takes the form of a mapping $\Delta$ from channels $x$ to
channel types $T$ representing an interface of process $P$.  The
dependency is specified as $\Pi(x{:}T)\Delta$, representing a
channel abstraction of the environment. This notion is extended to an
existential channel dependency type $\Sigma(x{:}T)\Delta$ to address
fresh name creation \cite{Y04,HRY05}. Combining our process monad
with dependent types can be regarded as an ``interface'' which
describes explicit channel usages for processes. The main differences
are (1) our dependent types are more general, treating full dependent
families including terms and processes in types, while
\cite{YH02,Y04,HRY05} study only channel dependency to environments
(i.e. neither terms nor processes appear in types, only channels); and
(2) our calculus emits only fresh names, not needing to
handle the complex scoping mechanism treated in \cite{Y04,HRY05}. In
this sense, the process monad provides an elegant framework to
handle higher-order computations and assign non-trivial types to
processes.\\
\B{{\bf Acknowledgements.} 
The authors would like to thank the anonymous reviews for their
comments and suggestions. This work is partially supported by EPSRC EP/K034413/1,
EP/K011715/1, EP/L00058X/1, EP/N027833/1 and EP/N028201/1.}

%\paragraph{\bf Dependent types in session types}

%%% Local Variables:
%%% mode: latex
%%% TeX-master: "main"
%%% End:

\bibliographystyle{splncs03}

\newpage
\appendix

\section{Appendix -- Dependently-typed Calculus}

\subsection{Complete Rules for Dependently-Typed System}
\label{app:rules}

We recall the meaning of the several judgments of our type theory:
\[
\begin{array}{ll}
\Psi \vdash  & \mbox{Context $\Psi$ is well-formed.}\\
\Psi \vdash K & \mbox{$K$ is a kind in context $\Psi$.}\\
\Psi \vdash \tau :: K & \mbox{$\tau$ is a (functional) type of kind
                        $K$ in context $\Psi$.}\\
\Psi \vdash A :: K & \mbox{$A$ is a session type of kind $K$ in context $\Psi$.}\\
\Psi \vdash M : \tau & \mbox{$M$ has type $\tau$ in context $\Psi$.}\\
\Psi ; \Ga ; \D \vdash P :: z{:}A & \mbox{$P$ offers session $z{:}A$
                                   when composed with processes
                                   according}\\
& \mbox{to $\Ga$ and $\D$ in
                                   context $\Psi$.}\\
\Psi \vdash K_1 = K_2 & \mbox{Kinds $K_1$ and $K_2$ are equal.}\\
\Psi \vdash \tau = \sigma :: K & \mbox{Types $\tau$ and $\sigma$ are
                                 equal of kind $K$.}\\
\Psi \vdash A = B :: K & \mbox{Session types $A$ and $B$ are equal of
                         kind $K$.}\\
\Psi \vdash M = N : \tau & \mbox{Terms $M$ and $N$ are equal of type $\tau$.}\\
\Psi ; \Ga ; \D \vdash P = Q :: z{:}A & \mbox{Processes $P$ and $Q$ are
                                       equal with typing $z{:}A$.}\\
\end{array}
\]

\subsubsection{Well-formed Contexts}

We write $\cdot$ for the empty context. We write $\Psi , x{:}\tau$ for
the extension of context $\Psi$ with the binding $x{:}\tau$. We assume
that $x$ does not occur in $\Psi$. We use a similar notation for the
session typing contexts $\D$ and $\Ga$.

\[
\begin{array}{c}
\infer[]{\cdot \vdash}{ }
\quad
\infer[]{\Psi , x{:}\tau \vdash}
  {\Psi \vdash & \Psi \vdash \tau :: \type}
                 \quad
\infer[]{\Psi , t {::} K \vdash}{\Psi \vdash & \Psi \vdash K}
\quad
\infer[]{\Psi ; \D , x {:} A \vdash}
{\Psi \vdash  & \Psi ; \D \vdash &  \Psi \vdash A :: \stype}\\[0.5em]
\infer[]{\Psi ; \Ga , x {:} A \vdash}
  {\Psi \vdash  & \Psi ; \Ga \vdash &  \Psi \vdash A :: \stype}
\end{array}                                      
\]

\subsubsection{Well-formed Kinds}

\[
\begin{array}{c}
\infer[]{\Psi \vdash \type}{\Psi \vdash}
\quad
\infer[]{\Psi \vdash \stype}{\Psi \vdash}
\quad
\infer[]
{\Psi \vdash \Pi x {:}\tau . K }
  {\Psi , x{:}\tau \vdash K & \Psi \vdash \tau :: \type}
                              \quad
\infer[]
{\Psi \vdash \hPi x {:}\tau . K }
                              {\Psi , x{:}\tau \vdash K & \Psi \vdash \tau :: \stype}\\[0.5em]
  \infer[]
  {\Psi \vdash \Pi t{::}K.K'}
  {\Psi \vdash K \quad \Psi , t {::} K \vdash K'}
\end{array}
\]

\subsection{Well-formed (Functional) Types}

\[
\begin{array}{c}
\infer[]
{\Psi \vdash \Pi x{:}\tau.\sigma :: \type}
{\Psi \vdash \tau :: \type & \Psi, x{:}\tau \vdash \sigma :: \type}
\quad
\infer[]
{\Psi \vdash \lambda x{:}\tau .\sigma :: \Pi x{:}\tau . K}
{\Psi \vdash \tau :: \type & \Psi , x{:} \tau \vdash \sigma :: K}\\[0.5em]
\infer[]
{\Psi \vdash \tau \, M :: K\{M/x\}}
  {\Psi \vdash \tau :: \Pi x{:}\sigma . K \quad \Psi \vdash M : \sigma}
\quad
\infer[]
{\Psi \vdash \{\ov{u_j{:}B_j};\ov{d_i{:}A_i} \vdash c{:}A\} :: \type}
  {\forall i,j .\Psi \vdash A_i :: \stype & \Psi \vdash B_j :: \stype  & \Psi \vdash A :: \stype}\\[0.5em]

  \infer[]
  {\Psi \vdash \lambda t{::}K.\sigma :: \Pi t {::}K.K'}
  {\Psi \vdash K \quad\Psi,t{::}K \vdash \sigma :: K'}
 \quad
  \infer[]{\Psi \vdash \tau\,\sigma :: K'\{\sigma/t\}}
  {\Psi \vdash \tau :: \Pi t {::}K.K' \quad
  \Psi \vdash \sigma :: K}\\[0.5em]

  \infer[]{\Psi \vdash t :: K}{t {::} K \in \Psi \quad \Psi \vdash}
\end{array} 
\]

\subsection{Well-formed Session Types}

\[
\begin{array}{c}
\infer[]
{\Psi \vdash \one :: \stype}{\Psi\vdash }
\quad
\infer[]
{\Psi \vdash \bang A :: \stype}
{\Psi \vdash A :: \stype}
\quad
\infer[]
{\Psi \vdash A \lolli B :: \stype}
{\Psi \vdash A :: \stype & \Psi \vdash B :: \stype}\\[0.5em]
\infer[]
{\Psi \vdash A \tensor B :: \stype}
  {\Psi \vdash A :: \stype & \Psi \vdash B :: \stype}
\quad                             
\infer[]
{\Psi \vdash \forall x{:}\tau . A :: \stype}
{\Psi \vdash \tau :: \type & \Psi , x{:}\tau \vdash A :: \stype}\\[0.5em]
\quad
\infer[]
{\Psi \vdash \exists x{:}\tau . A :: \stype}
{\Psi \vdash \tau :: \type & \Psi , x{:}\tau \vdash A :: \stype}
\quad
\infer[]
{\Psi \vdash \with\{\ov{l_i : A_i}\} :: \stype}
{\forall i . \Psi \vdash A_i :: \stype}\\[0.5em]
\infer[]
{\Psi \vdash \oplus\{\ov{l_i : A_i}\} :: \stype}
{\forall i . \Psi \vdash A_i :: \stype}
\quad
\infer[]
{\Psi \vdash \lambda x{:}\tau . A :: \hPi x{:}\tau.K} 
  {\Psi \vdash \tau :: \type \quad \Psi , x{:}\tau \vdash A :: K}\\[0.5em]
\infer[]
{\Psi \vdash A \, M :: K\{M/x\}}
  {\Psi \vdash A :: \hPi x{:}\tau.K & \Psi \vdash M : \tau}
\quad                                     
\infer[]
{\Psi \vdash A :: K'}
  {\Psi \vdash A :: K & \Psi \vdash K = K'}\\[0.5em]

  \infer[]{\Psi \vdash \lambda t {::}K . A :: \Pi t{::}K.K'}
  {\Psi , t{::}K \vdash A :: K'}
  \quad
  \infer[]{\Psi \vdash A\,B :: K'\{B/x\}}
  {\Psi \vdash A ::  \Pi t{::}K.K' \quad \Psi \vdash B :: K}
  \quad
  \infer[]{\Psi \vdash t :: K}{\Psi \vdash \quad t{::}K \in \Psi}
\end{array}
\]

\subsection{Typing for $\lambda$-Terms}
\label{app:typterm}

\[
\begin{array}{c}
\inferrule[$(\Pi I)$]
{\Psi \vdash \tau :: \type \quad \Psi , x{:}\tau \vdash M : \sigma}
  {\Psi \vdash \lambda x {:} \tau . M : \Pi x{:}\tau.\sigma}
\quad
\inferrule[$(\Pi E)$]
{\Psi \vdash M : \Pi x{:}\tau.\sigma \quad \Psi \vdash N : \tau }
  {\Psi \vdash M\, N : \sigma\{N/x\}}
\\[1em]
\inferrule[$(\m{var})$]
{\Psi \vdash \quad x{:}\tau \in \Psi}
  {\Psi \vdash x {:}\tau}
  
\quad                 
\inferrule[$(\{\}I)$]
  {\forall i,j . \Psi \vdash A_i,B_j :: \stype 
   \quad \Psi ; \ov{u_j{:}B_j} ; \ov{d_i{:}A_i} \vdash P :: c{:}A}
  {\Psi \vdash \{c \leftarrow P \leftarrow \ov{u_j} ; \ov{d_i}\} : \{\ov{u_j{:}B_j};\ov{d_i : A_i} \vdash c{:}A\} }
  \\[1em]
\inferrule[$(\m{Conv})$]
{\Psi \vdash M : \tau \quad \Psi \vdash \tau = \sigma :: \type}
  {\Psi \vdash M : \sigma}
\end{array}
\]

\subsection{Typing for Processes}
\label{app:typproc}

 \[
\begin{array}{c}
\inferrule[$(\rgt{\exists})$]
  {\Psi \vdash M {:} \tau \quad 
   \Psi ; \Ga ;  \Delta \seq P :: c {:} A\{M/x\}}
  {\Psi ; \Ga ;\Delta \seq  c\langle M \rangle_{\exists x{:}\tau.A} .P :: c {:}
    \exists x{:}\tau . A}
\quad
\inferrule[$(\lft{\exists})$]
  {\Psi \vdash \tau :: \type \quad \Psi, x{:}\tau \semi \Ga ; \Delta , c{:} A\seq Q :: d {:} D }
{\Psi \semi \Ga ; \Delta , c{:}\exists x{:}\tau. A \seq c(x{:}\tau).Q :: d {:} D}\\[0.5em]

\inferrule[$(\rgt{\forall})$]
  {\Psi \vdash \tau :: \type \quad \Psi, x{:}\tau \semi \Ga ; \Delta \seq P :: c {:} A}
{\Psi ; \Ga ; \Delta \seq c(x{:}\tau).P :: c {:} \forall x{:}\tau . A}
\quad
\inferrule[$(\lft{\forall})$]
  {\Psi \vdash M {:} \tau \quad
   \Psi ; \Ga ;  \Delta , c{:}A\{M/x\} \seq Q :: d {:} D}
  {\Psi ; \Ga ;\Delta , c{:}\forall x{:}\tau . A \seq  c\langle M \rangle_{\forall x{:}\tau.A} .Q :: d {:}
   D}
\\[0.5em]
\inferrule[$(\m{id})$]
 { \Psi ;\Ga \vdash \quad [\Psi \vdash A :: \stype] }
  {\Psi ; \Ga ; d {:} A \seq [d\leftrightarrow c] :: c {:} A}
\quad
\inferrule[$(\rgt{\one})$]
  { \Psi ; \Ga \vdash  }
{\Psi ; \Ga; \cdot \seq \mathbf{0} :: c {:} \one}
\quad
\inferrule[$(\lft{\one})$]
  {\Psi ; \Ga ; \Delta \seq P :: d {:} D}
  {\Psi ; \Ga;\Delta, c{:}\one \seq P :: d {:}
    D}\\[1.5em]
\inferrule[$(\rgt\bang)$]
{ \Psi ; \Ga ; \cdot \seq P :: x {:} A}
{\Psi; \Ga ; \cdot \seq \bang c(x).P ::
    c {:} \bang A}
\quad
\inferrule[$(\lft\bang)$]
{\Psi ; \Ga , u{:}A ; \D \seq P :: d {:} D}
{\Psi ; \Ga ; \D , c{:}\bang A \seq P\{c/u\} :: d {:} D}
\quad
\inferrule[(copy)]
{\Psi ; \Ga , u{:}A ; \D , x{:}A \vdash P :: d{:}D}
{\Psi ; \Ga , u{:}A ; \D \vdash (\nub x)u\langle x \rangle.P :: d{:}D}
\\[1.5em]
\inferrule[$(\rgt\tensor)$]
{\Psi ; \Ga ; \D_1 \vdash P_1 :: x{:}A \quad \Psi ; \Ga ; \D
  \vdash P_2 :: c{:}B}
{\Psi ; \Ga ; \D_1 , \D_2 \vdash (\nub x)c\langle x \rangle.(P_1 \mid P_2) ::
  c{:}A\tensor B}
\quad
\inferrule[$(\lft\tensor)$]
{\Psi ; \Ga ; \D , x{:}A , c{:}B \vdash Q :: d{:}D }
{\Psi ; \Ga ; \D , c{:}A\tensor B \vdash c(x).Q :: d{:}D }\\[1.5em]
\inferrule[$(\rgt\lolli)$]
{\Psi ; \Ga ; \D , x{:}A \vdash P :: c{:}B}
{\Psi ; \Ga ; \D \vdash c(x).P :: c{:}A\lolli B}
\quad
\inferrule[$(\lft\lolli)$]
{\Psi ; \Ga ; \D_1 \vdash Q_1 :: x{:}A \quad \Psi ; \Ga ; \D_2 , c{:}B
  \vdash Q_2 :: d{:}D}
{\Psi ; \Ga ; \D_1 , \D_2 , c{:}A\lolli B\vdash (\nub x)c\langle x \rangle.(Q_1
  \mid Q_2) :: d{:}D}
\\[1.5em]
\inferrule[$(\rgt\with)$]
{\Psi ; \Ga ; \D \vdash P_1 :: c{:}A_1 \; \dots \; \Psi ; \Ga ; \D
  \vdash P_n :: c{:}A_n }
{\Psi ; \Ga ; \D \vdash c.\mathsf{case}(\overline{l_j \Rightarrow
    P_j}) :: c{:}\with\{\overline{l_j {:}A_j}\}}
\quad
\inferrule[$(\lft\with)$]
{\Psi ; \Ga ; \D , c{:}A_i \vdash Q :: d{:}D}
{\Psi ; \Ga ; \D , c{:}\with\{\overline{l_j {:}A_j}\} \vdash c.l_i
; Q :: d{:}D}
\\[1.5em]
\inferrule[$(\rgt\oplus)$]
{\Psi ; \Ga ; \D \vdash P :: c{:}A_i}
{\Psi ; \Ga ; \D \vdash c.l_i ; P :: c{:}\oplus\{\overline{l_j
    {:}A_j}\}}
\quad
\inferrule[$(\lft\oplus)$]
{\Psi ; \Ga ;\D , c{:}A_1 \vdash Q_1 :: d{:}D \; \dots \; \Psi ; \Ga
  ; \D , c{:}A_n \vdash Q_n :: d{:}D }
{\Psi ; \Ga ; \D , c{:}\oplus\{\overline{l_j {:}A_j}\} \vdash
  c.\mathsf{case}(\overline{l_j \Rightarrow Q_j}) :: d{:}D}
\\[1.5em]
 \inferrule[(cut)]
 {\Psi ; \Ga ; \D_1 \vdash P :: x{:}A \quad \Psi ; \Ga ; \D_2 , x{:}A
   \vdash Q :: d{:}D}
 {\Psi ; \Ga ; \D_1 , \D_2 \vdash (\nub x)(P \mid Q) :: d{:}D}
 \quad
 \inferrule[(cut${}^\bang$)]
 {\Psi ; \Ga ; \cdot \vdash P :: x{:}A \quad \Psi ; \Ga , u{:}A ; \D
 \vdash Q :: d{:}D}
 {\Psi ; \Ga ; \D \vdash (\nub u)(\bang u(x).P \mid Q) :: d{:}D}\\[0.5em]
\inferrule[($\{\}E$)]
{\D' = \ov{d_i : B_i} \quad \ov{u_j{:}C_j} \subseteq \Ga \quad
\Psi \vdash M : \{\ov{u_j{:}C_j};\ov{d_i{:}B_i} \vdash c{:}A\}
\quad \Psi ; \Ga ; \D ,x{:}A\vdash Q :: z{:}C }
{\Psi ; \Ga ; \D' , \D \vdash 
x \leftarrow M \leftarrow \ov{u_j};\ov{y_i} ; Q :: z{:}C }\\[1em]
\inferrule[($\rgt{\m{Conv}}$)]
{\Psi ; \Ga ; \D \vdash P :: z{:}A \quad \Psi \vdash A = B :: \stype }
{\Psi ; \Ga ; \D \vdash P :: z{:}B}

  \quad

  \inferrule[($\lft{\m{Conv}}$)]
  {\Psi ; \Ga' ; \D' \vdash P :: z{:}A \quad \Psi ; \Ga' ; \D' = \Psi ;
  \Ga ; \D}
  {\Psi ; \Ga ; \D \vdash P :: z{:}A}
  
\end{array}
\]

\subsection{Definitional Equality for Kinds}

\[
\begin{array}{c}
\inferrule[$(\m{KEqR})$]
{\Psi \vdash K }
  {\Psi \vdash K = K}

\quad
\inferrule[$(\m{KEqS})$]
{\Psi \vdash K_1 = K_2 \quad \Psi \vdash K_2 = K_3}
  {\Psi \vdash K_1 = K_3}

\quad
\inferrule[$(\m{KEqT})$]
{\Psi \vdash K_2 = K_1}{\Psi \vdash K_1 = K_2}
\\[0.5em]

\inferrule[$(\m{KEq}\Pi)$]
{\Psi \vdash \tau = \sigma :: \type \quad
 \Psi , x{:}\tau \vdash K_1 = K_2}
  {\Psi \vdash \Pi x{:}\tau . K_1 = \Pi x{:}\sigma . K_2}
  \quad
  \inferrule[$(\m{KEqK}\Pi)$]
  {\Psi \vdash K_1 = K_3 \quad \Psi , t {::}K_1 \vdash K_2 = K_4}
  {\Psi \vdash \Pi t :: K_1 . K_2 = \Pi t :: K_3 . K_4}
\end{array}
\]

\subsection{Definitional Equality for (Functional) Types}

\[
\begin{array}{c}
\inferrule[$(\m{TEqR})$]
{\Psi \vdash \tau :: \type}
{\Psi \vdash \tau = \tau :: \type}
  \quad
\inferrule[$(\m{TEqT})$]
{\Psi \vdash \tau_1 = \tau_2 :: \type \quad \Psi \vdash \tau_2 = \tau_3 :: \type}
  {\Psi \vdash \tau_1 = \tau_3 :: \type}\\[0.5em]
\inferrule[$(\m{TEqS})$]
{\Psi \vdash \sigma = \tau :: \type}
  {\Psi \vdash \tau = \sigma :: \type}
\quad
\inferrule[$(\m{TEq}\Pi)$]
{\Psi \vdash \tau = \tau' :: \type \quad
 \Psi , x{:}\tau \vdash \sigma = \sigma' ::\type}
{\Psi \vdash \Pi x{:}\tau . \sigma = \Pi x{:}\tau' . \sigma' :: \type}
  \\[0.5em]
\inferrule[$(\m{TEq}\lambda)$]
{\Psi \vdash \tau = \tau' :: \type \quad \Psi , x{:}\tau \vdash \sigma = \sigma' :: K }
  {\Psi \vdash \lambda x{:}\tau.\sigma = \lambda x{:}\tau'.\sigma' :: \Pi x{:}\tau.K }
\quad
\inferrule[$(\m{TEqApp})$]
  {\Psi \vdash \tau = \sigma :: \Pi x{:}\tau'.K
  \quad
 \Psi \vdash M = N : \tau'}
  {\Psi \vdash \tau\,M = \sigma\,N :: K\{M/x\}}
\\[1em]
\inferrule[$(\m{TEq}\beta)$]
{\Psi ,x{:}\tau \vdash \sigma :: K \quad \Psi \vdash M : \tau}
  {\Psi \vdash (\lambda x{:}\tau.\sigma)\,M = \sigma\{M/x\} :: K\{M/x\}}
\quad
\inferrule[$(\m{TEq}\eta)$]
{\Psi \vdash \sigma :: \Pi x{:}\tau.K \quad x\not\in fv(\sigma)}
  {\Psi \vdash \lambda x{:}\tau.\sigma\,x = \sigma :: \Pi x{:}\tau.K  }
\\[1em]
\inferrule[$(\m{TEq}\{\})$]
{\forall i,j.\quad \Psi \vdash A_i = B_i :: \stype \quad \Psi \vdash C_j = D_j :: \stype \quad
 \Psi \vdash A = B :: \stype}
  {\Psi \vdash \{ \ov{u_j{:}C_j};\ov{d_i{:}A_i} \vdash c{:}A \}
  = \{ \ov{u_j{:}D_j};\ov{d_i{:}B_i} \vdash c{:}B \} :: \type}\\[1em]
\inferrule[$(\m{TEqT}\lambda)$]
{\Psi \vdash K_1 = K_2 \quad \Psi , t{::}K_1 \vdash \tau = \sigma :: K_3 }
  {\Psi \vdash \lambda t{::}K_1.\tau = \lambda t{::}K_2.\sigma :: \Pi x{:}K_1.K_3 }
\quad
  \inferrule[$(\m{TEqTApp})$]
  {\Psi \vdash \tau_1 = \sigma_1 :: \Pi t{::}K_1.K_2
  \quad
 \Psi \vdash \tau_2 = \sigma_2 : K_1}
  {\Psi \vdash \tau_1\,\tau_2 = \sigma_1\,\sigma_2 :: K_2\{\tau_2/t\}}
\\[1em]
\inferrule[$(\m{TEqT}\beta)$]
{\Psi ,t{::}K \vdash \tau :: K' \quad \Psi \vdash \sigma :: K}
  {\Psi \vdash (\lambda t{::}K.\tau)\,\sigma = \tau\{\sigma/t\} :: K'\{\sigma/t\}}
\quad  
  \inferrule[$(\m{TEqConv})$]
  {\Psi \vdash \tau = \sigma :: K \quad \Psi \vdash K = K'}
  {\Psi \vdash \tau = \sigma :: K'}
\end{array}
\]

\subsection{Definitional Equality for Session Types}

\[
  \begin{array}{c}
    \inferrule[$(\m{STEqR})$]
    {\Psi \vdash A :: \stype}
    {\Psi \vdash A = A :: \stype}
    \quad
    \inferrule[$(\m{STEqS})$]
    {\Psi \vdash B = A :: \stype}
    {\Psi \vdash A = B :: \stype}\\[1em]
    \inferrule[$(\m{STEqT})$]
    {\Psi \vdash A = B :: \stype \quad \Psi \vdash B = C :: \stype}
    {\Psi \vdash A = C :: s\type    }
 \quad
\inferrule[$(\m{STEq}\bang)$]
    {\Psi \vdash A = B :: \stype }
    {\Psi \vdash \bang A = \bang B :: \stype }
\\[1em]
\inferrule[$(\m{STEq}\!\lolli)$]
{\Psi \vdash A = C :: \stype \quad \Psi \vdash B = D :: \stype}
    {\Psi \vdash A \lolli B = C \lolli D :: \stype}
\quad
\inferrule[$(\m{STEq}\tensor)$]
    {\Psi \vdash A = C :: \stype
    \quad
 \Psi \vdash B = D :: \stype}{\Psi \vdash A \tensor B = C \tensor D :: \stype}    \\[1em]
\inferrule[$(\m{STEq}\forall)$]
{\Psi \vdash \tau = \tau' :: \type \quad
 \Psi , x{:}\tau \vdash A = B :: \stype }
    {\Psi \vdash \forall x{:}\tau. A = \forall x{:}\tau'.B :: \stype}
\quad                                   
\inferrule[$(\m{STEq}\exists)$]
{\Psi \vdash \tau = \tau' :: \type \quad
 \Psi , x{:}\tau \vdash A = B :: \stype }
{\Psi \vdash \exists x{:}\tau. A = \exists x{:}\tau'.B :: \stype}
\\[1em]
\inferrule[$(\m{STEq}\with)$]
    {\forall i . \Psi \vdash A_i = B_i :: \stype}
    {\Psi \vdash \with \{\ov{l_i {:} A_i}\} = \with \{\ov{l_i {:} B_i}\} :: \stype }
\quad
\inferrule[$(\m{STEq}\oplus)$]
{\forall i . \Psi \vdash A_i = B_i :: \stype}
    {\Psi \vdash \oplus \{\ov{l_i {:} A_i}\} = \oplus \{\ov{l_i {:} B_i}\} :: \stype }
\\[1em]
\inferrule[$(\m{STEq}\lambda)$]
{\Psi \vdash \tau = \tau' :: \type \quad
 \Psi , x{:}\tau \vdash A = B :: K}
    {\Psi \vdash \lambda x{:} \tau. A = \lambda x{:}\tau' . B :: \Pi x{:}\tau.K}
\quad
\inferrule[$(\m{STEqApp})$]
{\Psi \vdash A = B :: \Pi x{:}\tau . K \quad
 \Psi \vdash M = N : \tau}
{\Psi \vdash A \, M = B \, N :: K\{M/x\} }
\\[1em]
\inferrule[$(\m{STEq}\beta)$]
{\Psi , x{:}\tau \vdash A :: K \quad \Psi \vdash M : \tau}
    {\Psi \vdash (\lambda x{:}\tau.A)\,M = A\{M/x\} :: K\{M/x\}}
\quad
\inferrule[$(\m{STEq}\eta)$]
{\Psi \vdash A :: \Pi x{:}\tau.K \quad x \not\in fv(A)}
    {\Psi \vdash \lambda x{:}\tau.A\,x = A :: \Pi x{:}\tau.K}\\[1em]
\inferrule[$(\m{STEqT}\lambda)$]
{\Psi \vdash K_1 = K_2 \quad \Psi , t{::}K_1 \vdash A = B :: K_3 }
  {\Psi \vdash \lambda t{::}K_1.A = \lambda t{::}K_2.B :: \Pi x{:}K_1.K_3 }
\quad
  \inferrule[$(\m{STEqTApp})$]
  {\Psi \vdash A = C :: \Pi t{::}K_1.K_2
  \quad
 \Psi \vdash B = D : K_1}
  {\Psi \vdash A\,B = C\,D :: K_2\{B/t\}}
\\[1em]
\inferrule[$(\m{STEqT}\beta)$]
{\Psi ,t{::}K \vdash A :: K' \quad \Psi \vdash B :: K}
  {\Psi \vdash (\lambda t{::}K.A)\,B = A\{B/t\} :: K'\{B/t\}}
    \quad

    \inferrule[$(\m{STEqConv})$]
    {\Psi \vdash A = B :: K \quad \Psi \vdash K = K'}
    {\Psi \vdash A = B :: K'}
\end{array}
\]

\subsection{Definitional Equality for $\lambda$-Terms}
\label{app:defeqterm}

\[
\begin{array}{c}
\inferrule[$(\m{TMEqR})$]
{\Psi \vdash M : \tau}
  {\Psi \vdash M = M : \tau}
\quad
\inferrule[$(\m{TMEqS})$]
{\Psi \vdash N = M : \tau}
{\Psi \vdash M = N : \tau}
\quad
  \inferrule[$(\m{TMEqT})$]
  {\Psi \vdash M = N' : \tau \quad
  \Psi \vdash N' = N : \tau}
  {\Psi \vdash M = N : \tau}\\[1em]

  \inferrule[$(\m{TMEqVar})$]
  {\Psi \vdash \quad x{:}\tau \in \Psi}
  {\Psi \vdash x = x : \tau}
  \quad
  \inferrule[$(\m{TMEq}\lambda)$]
  {\begin{array}{c}\Psi \vdash \lambda x{:} \tau . M : \Pi x{:}\tau.\sigma \quad
 \Psi \vdash \lambda x{:} \tau' . N : \Pi x{:}\tau'.\sigma' \\
 \Psi \vdash \Pi x{:}\tau.\sigma = \Pi x{:}\tau'.\sigma' :: \type \quad
 \Psi , x{:}\tau \vdash M = N : \sigma\end{array}}
{\Psi \vdash \lambda x {:} \tau . M = \lambda x{:} \tau' . N : \Pi x{:}\tau.\sigma}
\\[1em]
\inferrule[$(\m{TMEqApp})$]
{\Psi \vdash M = M' : \Pi x{:}\tau.\sigma \quad
 \Psi \vdash N = N' : \tau}
  {\Psi \vdash M\,N = M'\, N' : \sigma\{N/x\}}
\quad
\inferrule[$(\m{TMEq}\beta)$]
{[\Psi \vdash \tau :: \type] \quad \Psi ,x{:}\tau\vdash M : \sigma \quad
 \Psi \vdash N : \tau}
{\Psi \vdash (\lambda x{:} \tau . M)\,N = M\{N/x\} : \sigma\{N/x\}}\\[1em]
\inferrule[$(\m{TMEq}\eta)$]
  {\Psi \vdash M : \Pi x{:}\tau.\sigma \quad x \not\in fv(M)}
  {\Psi \vdash \lambda x{:}\tau. M\, x = M : \Pi x{:}\tau.\sigma }
\\[1em]
  \inferrule[$(\m{TMEq}\{\}\eta)$]
  {\Psi \vdash M : \monad{\ov{u_j{:}B_j};\ov{d_i{:}A_i} \vdash c{:}A}}
  {\Psi \vdash \monad{c \leftarrow (y\leftarrow M ;\ov{u_j};\ov{d_i} ; [y\leftrightarrow c])
  \leftarrow\ov{u_j};\ov{d_i} } = M :  \monad{\ov{u_j{:}B_j};\ov{d_i{:}A_i} \vdash c{:}A} }
  \\[1em]
\inferrule[$(\m{TMEq}\{\})$]
{[\forall i,j . \Psi \vdash B_j :: \stype \quad \Psi \vdash A_i :: \stype]\quad \Psi ; \ov{u_j{:}B_j} ; \ov{d_i{:}A_i} \vdash P = Q :: c{:}A }
  {\Psi \vdash \{c \leftarrow P \leftarrow \ov{u_j};\ov{d_i} \} = 
  \{c \leftarrow Q \leftarrow \ov{u_j};\ov{d_i}\} : \{\ov{u_j{:}B_j};
  \ov{d_i{:}A_i} \vdash c{:}A\} }\\[1em]
  \inferrule[$(\m{TMEqConv})$]
  {\Psi \vdash M = N : \tau \quad \Psi \vdash \tau = \sigma :: \type}
  {\Psi \vdash M = N : \sigma}
\end{array}
\]

\subsection{Definitional Equality for Processes}
\label{app:defeqproc}

\[
  \begin{array}{c}
    \inferrule[$(\m{PEqRefl})$]
    {\Psi ; \Ga ; \D \vdash P :: z{:}A}
    {\Psi ; \Ga ; \D \vdash P = P :: z{:}A}
    \quad
    \inferrule[$(\m{PEqS})$]
{\Psi ; \Ga ; \D \vdash Q = P :: z{:}A}
    {\Psi ; \Ga ; \D \vdash P = Q :: z{:}A}\\[1em]
    \inferrule[$(\m{PEqT})$]
    {\Psi ; \Ga ; \D \vdash P = Q :: z{:}A \quad
    \Psi ; \Ga ; \D \vdash Q = R :: z{:}A }    
{\Psi ; \Ga ; \D \vdash P = R :: z{:}A}
    \\[1em]
    \inferrule[]
    {\Psi ; \Ga ; \D \vdash P :: z{:}A \quad
     P \tra{} Q \quad \Psi ; \Ga ; \D \vdash Q :: z{:}A}
    {\Psi ; \Ga ; \D \vdash P = Q :: z{:}A}\\[1em]

    % \inferrule[$(\m{PEq}\exists\beta)$]
    % {\Psi \vdash M : \tau \quad \Psi \vdash \exists x{:}\tau.A :: \stype \quad
    % \Psi ; \Ga ; \D \vdash P :: x{:}A\{M/x\} \quad
    %  \Psi , y{:}\tau ; \Ga ; \D' ,x{:}A \vdash Q :: z{:}C}
    % {\Psi ; \Ga ; \D,\D' \vdash (\nub x)(x\langle M\rangle.P \mid x(y{:}\tau).Q) =
    %  (\nub x)(P \mid Q\{M/y\}) :: z{:}C }\\[1em]
     \inferrule*[left=$(\m{PEq}\forall\eta)$]
    { }
    {\Psi ; \Ga ; d{:}\forall x{:}\tau.A \vdash  c(x).d\langle x \rangle.[d\leftrightarrow c] = [d\leftrightarrow c] :: c{:}\forall x{:}\tau.A }\\[1em]

    \inferrule*[left=$(\m{PEqCC}\forall)$]
    {\Psi ; \Ga ; \D \vdash P :: d{:}B \quad
     \Psi , x{:}\tau ; \Ga ; \D' , d{:}B \vdash Q :: c{:} A  }
    {\Psi ; \Ga ; \D  , \D' \vdash (\nub d)(P \mid c(x).Q) =
    c(x).(\nub d)(P \mid Q) :: c{:}\forall x{:}\tau.A}
   \\[1em]

    \inferrule[$(\m{PEq}\rgt\forall)$]
    {\begin{array}{c}
       \Psi ; \Ga ; \D \vdash z(x{:}\tau).P :: z{:}\forall x{:}\tau.A
       \quad \Psi ; \Ga ; \D \vdash z(x{:}\tau').Q :: z{:}\forall x{:}\tau'.B\\
       \Psi \vdash \forall x{:}\tau.A = \forall x{:}\tau'.B :: \stype \quad \Psi , x{:}\tau ; \Ga ; \D \vdash
       P = Q :: z{:}A
     \end{array}}
    {\Psi ; \Ga ; \D \vdash z(x{:}\tau).P = z(x{:}\tau').Q ::
    z{:}\forall x{:}\tau.A    }\\[1em]
     \mbox{(Other congruence, $\eta$ and CC rules)}
  \end{array}
\]

\section{Type Soundness}
\label{app:soundness}

We use $\Psi \vdash \mathcal{J}$ to signify any of the
judgments $\Psi \vdash K$, $\Psi \vdash A :: K$, $\Psi \vdash \tau ::
K$ and respective definitional equality judgments. We use $\Psi ; \Ga
; \D \vdash \mathcal{J}$ in a similar fashion.

\begin{lemma}[Subderivation Properties]\label{lem:subderiv}
~
  \begin{enumerate}
\item Every derivation of $\Psi \vdash \mathcal{J}$ has a proof of $\Psi \vdash$
  as a sub-proof.
\item Every derivation of $\Psi , x{:}\tau \vdash$ has a proof of $\Psi
  \vdash \tau :: \type$ as a sub-proof.
\item Every derivation of $\Psi , t {::} K \vdash$ has a proof of
  $\Psi \vdash K$ as a sub-proof.
\item Every derivation of $\Psi , x{:} K \vdash$ has a proof of $\Psi
  \vdash K$ as a sub-proof.
\item If $\Psi \vdash \tau :: K$ or $\Psi \vdash A :: K$ then $\Psi \vdash K$
\item If $\Psi \vdash M : \tau$ then $\Psi \vdash \tau :: \type$
\item If $\Psi ; \Ga ; \D \vdash P :: z{:}A$ then $\Psi \vdash A :: \stype$

  \end{enumerate}
\end{lemma}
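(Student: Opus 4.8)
The plan is to separate the purely structural items 1--4 from the validity items 5--7. Item~1 I would prove by induction on the derivation of $\Psi \vdash \mathcal{J}$ (and likewise on $\Psi;\Ga;\D \vdash \mathcal{J}$): every rule either carries a judgment over the same context $\Psi$ among its premises, so the induction hypothesis returns $\Psi \vdash$, or --- at the leaves such as $(\m{var})$, $(\m{id})$ and $(\rgt{\one})$ --- lists a context well-formedness premise ($\Psi \vdash$ or $\Psi;\Ga \vdash$, the latter itself built from $\Psi \vdash$) explicitly. Items 2--4 are then immediate by inversion: each context-extension judgment is produced by a \emph{single} formation rule whose premises already contain the well-formedness of the added classifier ($\Psi \vdash \tau :: \type$ for a term binding, and $\Psi \vdash K$ for a kind binding), which is exactly the required subproof.

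For items 5--7 I would run a simultaneous induction on the derivations of $\Psi \vdash \tau :: K$, $\Psi \vdash A :: K$, $\Psi \vdash M : \tau$ and $\Psi;\Ga;\D \vdash P :: z{:}A$. The formation and introduction rules are routine: apply the induction hypothesis to the premises and reassemble the conclusion with the matching kind/type-formation rule --- e.g.\ $(\Pi I)$ gives $\Psi, x{:}\tau \vdash \sigma :: \type$ and hence $\Psi \vdash \Pi x{:}\tau.\sigma :: \type$; $(\{\}I)$ supplies the component session kinds directly so the monadic type has kind $\type$; and the right process rules $(\rgt{\forall})$, $(\rgt{\exists})$, etc.\ produce session types that are well-kinded by their own formation rules. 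The substantive cases are the elimination/application rules $\tau\,M$, $A\,M$, $(\Pi E)$, the type-level applications and $(\{\}E)$, where the classifier of the conclusion carries a substitution: there I would use the induction hypothesis to obtain well-formedness of the $\Pi$-classifier, invert it to expose the body (for $(\Pi E)$: from $\Psi \vdash \Pi x{:}\tau.\sigma :: \type$ extract $\Psi, x{:}\tau \vdash \sigma :: \type$), and then close the case with the Substitution Lemma~\ref{lem:trans} to obtain $\Psi \vdash \sigma\{N/x\} :: \type$.

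Two subtleties are the crux. First, the inversions above hold only \emph{up to conversion}, since e.g.\ $\Psi \vdash \Pi x{:}\tau.\sigma :: \type$ could in principle end in a kind-conversion step; I would first prove the small lemma that the base kinds $\type$ and $\stype$ are definitionally equal only to themselves (by induction on kind equality, observing that no congruence rule relates a base kind to a $\Pi$-kind), which pins the kind down and licenses the structural inversion. Second, and more delicate, the conversion rules $(\m{Conv})$, $(\rgt{\m{Conv}})$ and their kind/type analogues require that definitional equality relates well-formed objects (for $(\m{Conv})$: $\Psi \vdash \tau = \sigma :: \type$ must entail $\Psi \vdash \sigma :: \type$) --- precisely validity for equality, whose own congruence cases (e.g.\ $(\m{KEq}\Pi)$) in turn invoke typing validity. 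I therefore expect the honest argument to fold items 5--7 into one grand simultaneous induction together with validity for equality (Lemma~\ref{valeqlem}) and a context-conversion lemma, with the Substitution Lemma~\ref{lem:trans} as the external engine driving the application cases; this mutual entanglement of typing- and equality-validity across the conversion rules is the main obstacle.
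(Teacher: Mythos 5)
Your proposal is correct and follows the same basic strategy as the paper's proof --- induction on the given derivation --- but you are considerably more explicit about where that induction actually bites. The paper's own proof of this lemma disposes of every case with ``straightforward by induction,'' which is accurate for items 1--4 and for the formation/introduction cases of items 5--7, but it glosses over exactly the two points you isolate: the elimination/application cases (where the classifier of the conclusion is a substitution instance, so one must invert the $\Pi$-classifier and then invoke the Substitution Lemma), and the conversion cases $(\m{Conv})$, $(\rgt{\m{Conv}})$ and the kind-conversion rule for session types (where the induction hypothesis yields well-formedness of the \emph{old} classifier and one needs validity for equality to transfer it to the new one). The paper in effect concedes this elsewhere: in the main text items 5--7 reappear as ``Validity for Typing,'' said to follow by combining substitution with functionality, and the appendix proof of Validity for Equality repeatedly cites the subderivation lemma --- so the mutual entanglement you predict between typing-validity and equality-validity is really present, and your plan to run one simultaneous induction over these judgments (with substitution established externally, which is unproblematic since the substitution proof does not itself appeal to subderivation properties) is the honest way to discharge it; your auxiliary observation that base kinds are definitionally equal only to themselves corresponds to the paper's later Equality Inversion theorem. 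In short: same route, but your version makes explicit the lemma ordering and the mutual induction that the paper's one-line-per-case proof leaves implicit.
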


\begin{proof}
  By induction on the given derivation.

  \begin{description}
  \item[Case:] Kind well-formedness

    Straightforward by induction.

  \item[Case:] Functional type well-formedness

    Straightforward by induction.

  \item[Case:] Session type well-formedness

    Straightforward by induction.

  \item[Case:] Typing for terms

    Straightforward by induction. Base-case is immediate.

  \item[Case:] Typing for processes

    Straightforward by induction. Base-cases are immediate.

  \item[Case:] Kind equivalence

    Straightforward, base case is reflexivity (from i.h. for well-formedness).

  \item[Case:] Type Equivalence

    As above.

  \item[Case:] Session type equivalence

    As above.
    
  \end{description}

\end{proof}

\begin{lemma}[Weakening]\label{lem:weaken}
~
If $ \Psi \vdash$ and $\Psi' \vdash$ and $\Psi \subseteq \Psi'$
  then:
\begin{enumerate}
\item $\Psi \vdash \mathcal{J}$ implies $\Psi' \vdash \mathcal{J}$
\item $\Psi ; \Ga ; \D \vdash \mathcal{J}$ implies $\Psi' ; \Ga ; \D \vdash \mathcal{J}$
\end{enumerate}
\end{lemma}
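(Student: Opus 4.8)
The plan is to proceed by simultaneous induction on the derivation of the hypothesised judgment, treating \emph{all} the judgment forms abbreviated by $\mathcal{J}$ (kind, functional-type, session-type, term and process well-formedness, typing, and the several definitional equalities) in one mutual induction, since they are generated by mutually inductive rules. Parts (1) and (2) must be proved together: for (1) the context $\Psi$ is enlarged to $\Psi'$ while $\Ga$ and $\D$ play no role, and for (2) only $\Psi$ is enlarged and $\Ga,\D$ remain fixed; crucially, the introduction rule $(\{\}I)$ passes from a term judgment (part 1) to a process judgment (part 2) without touching $\Psi$, so neither part can be closed in isolation.

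For each rule my strategy is to reapply the very same rule in the larger context $\Psi'$. Two uniform observations make this succeed. First, every side condition of the form $x{:}\tau \in \Psi$ or $t{::}K \in \Psi$ (the variable and type-variable lookup rules) survives because $\Psi \subseteq \Psi'$. Second, every side condition requiring context well-formedness $\Psi \vdash$ is discharged immediately by the hypothesis $\Psi' \vdash$. The induction hypotheses then supply the $\Psi'$-weakened versions of all premises, and reassembling them under the original rule yields the desired conclusion over $\Psi'$.

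The only rules needing care are those extending the context with a fresh binding, e.g. $(\Pi I)$, $(\rgt\forall)$, $(\lft\exists)$, the $\Pi$-formation rules for kinds and types, and the congruence rules for binders. There the relevant premise has the shape $\Psi, x{:}\tau \vdash \mathcal{J}'$ (or with $t{::}K$), and I must invoke the induction hypothesis at the enlarged pair $\Psi, x{:}\tau \subseteq \Psi', x{:}\tau$. To do so legitimately I first weaken the companion premise $\Psi \vdash \tau :: \type$ — available either directly as a premise via the induction hypothesis, or extracted as a subderivation using Lemma~\ref{lem:subderiv} — to obtain $\Psi' \vdash \tau :: \type$, whence $\Psi', x{:}\tau \vdash$ follows by the context-formation rule; this is exactly the well-formedness witness the induction hypothesis for the larger context demands. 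Since $\Psi \subseteq \Psi'$, the bound variable $x$ may already occur in $\Psi' \setminus \Psi$, so I would $\alpha$-rename $x$ to be fresh for $\Psi'$ before extending, under the usual Barendregt convention.

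I expect the main obstacle to be bureaucratic rather than conceptual: keeping the mutual induction honest across the large family of judgment forms, and in the binder cases correctly threading the well-formedness of the extended target context $\Psi', x{:}\tau$ (resp. $\Psi', t{::}K$) through the induction hypothesis together with the freshness renaming. No genuinely new idea is required beyond this bookkeeping, because none of the typing or equality rules ever inspects the portion of $\Psi$ that is being adjoined.
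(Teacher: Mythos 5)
Your proposal is correct and matches the paper's proof, which is simply ``straightforward induction on the given derivation''; you have filled in the standard details (mutual induction across the judgment forms, membership preserved by $\Psi \subseteq \Psi'$, and $\alpha$-renaming plus re-establishing well-formedness of the extended context in the binder cases) that the paper leaves implicit.
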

\begin{proof}
Straightforward induction on the given derivation.
\end{proof}

\substlem*

 \begin{proof}
   By induction on the second given derivation. We show only some
   illustrative cases.

   \begin{description}
   \item[Case:] $\m{TypeAppWF}$
     \begin{tabbing}
$\Psi , x{:}\tau , \Psi' \vdash \tau' :: \Pi y{:}\sigma . K$ and $\Psi
, x{:}\tau , \Psi' \vdash M' : \sigma$ \` by inversion\\
$\Psi , \Psi' \vdash \tau'\{M/x\} :: \Pi y{:}\sigma\{M/x\}.K\{M/x\}$
\` by i.h.\\
$\Psi , \Psi' \vdash M'\{M/x\} : \sigma\{M/x\}$ \` by i.h.\\
$\Psi , \Psi' \vdash \tau'\{M/x\}\,M'\{M/x\} : K\{M'/y\}\{M/x\}$ \` by $\m{TypeAppWF}$
     \end{tabbing}
     
   \item[Case:] $\m{KindConv}$

     \begin{tabbing}
       $\Psi , x{:}\tau , \Psi' \vdash \tau :: K$ and
       $\Psi , x{:}\tau , \Psi' \vdash K = K'$ \` by inversion\\
       $\Psi , \Psi' \vdash \tau\{M/x\} :: K\{M/x\}$ \` by i.h.\\
       $\Psi , \Psi' \vdash K\{M/x\} = K'\{M/x\}$ \` by i.h.\\
       $\Psi , \Psi' \vdash \tau\{M/x\} :: K'\{M/x\}$ \` by  $\m{KindConv}$
      
     \end{tabbing}

   \item[Case:] $\m{Var}$

     \begin{tabbing}
       {\bf Subcase:} $x = y$\\
       $\Psi \vdash M : \tau$ \` by assumption\\
       $\Psi,\Psi'\{M/x\} \vdash M : \tau$ \` by weakening\\
       {\bf Subcase:} $x \neq y$\\
       $\Psi , x{:} \tau , \Psi',y{:}\tau' \vdash y{:}\tau'$ \` by
       weakening and $\m{Var}$
       
     \end{tabbing}

\item[Case:] $\m{TEq}\beta$     

  \begin{tabbing}
$\Psi,x{:}\tau,\Psi',y{:}\tau' \vdash \sigma :: K$ and
$\Psi,x{:}\tau,\Psi'\vdash M' : \tau'$ \` by inversion\\
$\Psi,\Psi'\{M/x\},y{:}\tau'\{M/x\} \vdash \sigma\{M/x\} :: K\{M/x\}$
\` by i.h.\\
$\Psi,\Psi'\{M/x\}\vdash M'\{M/x\} : \tau'\{M/x\}$ \` by i.h.\\
$\Psi , \Psi'\{M/x\} \vdash (\lambda y{:}\tau'\{M/x\}
. \sigma\{M/x\})\,M'\{M/x\} = \sigma\{M/x\}\{M'\{M/x\}/y\} ::
K\{M'/x\}\{M\{M/x\}/y\}$
\\ \` by $\m{TEq}\beta$ 
  \end{tabbing}

\item[Case:] $\m{TEq}\eta$

  \begin{tabbing}
$\Psi , x{:}\tau , \Psi' \vdash \sigma :: \Pi y{:}\tau'.K$ and $y\not\in fv(\sigma)$ \` by
inversion\\
$\Psi  , \Psi'\{M/x\} \vdash \sigma\{M/x\} :: \Pi
y{:}\tau'\{M/x\}.K\{M/x\}$ \` by i.h\\
$\Psi , \Psi'\{M/\} \vdash \lambda y{:}\tau'\{M/x\}.(\sigma\{M/x\}\,y)
= \sigma\{M/x\} :: \Pi y{:}\tau'\{M/x\}.K\{M/x\}$ \` by $\m{TEq}\eta$
  \end{tabbing}

\item[Case:] $\m{PEqRed}$

  \begin{tabbing}
$\Psi , x{:}\tau , \Psi' ; \Ga ; \D \vdash P :: z{:}A$, $P\tra{}^* Q$ and $\Psi , x{:}\tau , \Psi' ; \Ga ; \D
\vdash Q :: z{:}A$ \` by inversion\\
$\Psi , \Psi'\{M/x\} ; \Ga\{M/x\} ; \D \{M/x\} \vdash P\{M/x\} ::
z{:}A\{M/x\}$ \` by i.h.\\
$\Psi , \Psi'\{M/x\} ; \Ga\{M/x\} ; \D \{M/x\} \vdash Q\{M/x\} ::
z{:}A\{M/x\}$ \` by i.h.\\
$P\{M/x\} \tra{}^* Q\{M/x\}$ \` by compatibility of reduction with
substitution\\
$\Psi , \Psi'\{M/x\} ; \Ga\{M/x\} ; \D \{M/x\}  \vdash P\{M/x\} =
Q\{M/x\} :: z{:}A\{M/x\}$ \` by  $\m{PEqRed}$
  \end{tabbing}
  
   \end{description}

 \end{proof}

 \begin{lemma}[Type Substitution]
  \begin{enumerate}
  \item If $\Psi \vdash \tau :: K$ and $\Psi , t{::}K , \Psi' \vdash
    \mathcal{J}$ then $\Psi , \Psi'\{\tau/t\} \vdash \mathcal{J}\{\tau/t\}$;
  \item If $\Psi \vdash \tau :: K$ and $\Psi , t{::}K , \Psi' ; \Ga ;
    \D \vdash \mathcal{J}$ then $\Psi , \Psi'\{\tau/t\} ; \Ga
    \{\tau/t\} ; \D \{\tau/t\}
    \vdash \mathcal{J}\{\tau/t\}$
  \item If $\Psi \vdash A :: K$ and $\Psi , t{::}K , \Psi' \vdash
    \mathcal{J}$ then $\Psi , \Psi'\{A/t\} \vdash \mathcal{J}\{A/t\}$;
   \item If $\Psi \vdash A :: K$ and $\Psi , t{::}K , \Psi' ; \Ga ;
    \D \vdash \mathcal{J}$ then $\Psi , \Psi'\{A/t\} ; \Ga
    \{A/t\} ; \D \{A/t\}$
  \end{enumerate}
  
\end{lemma}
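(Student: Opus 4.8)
The plan is to prove all four statements by a single simultaneous induction on the derivation of the second hypothesis, exactly mirroring the proof of Lemma~\ref{lem:trans}, the only difference being that we now substitute a well-kinded type for a type variable rather than a term for a term variable. Parts~1 and~2 (substituting a functional type $\tau :: K$) and parts~3 and~4 (substituting a session type $A :: K$) have essentially identical proofs, parametric in the substituend, so the induction is really run once and the base case splits on the looked-up variable. Crucially, because the functional and process judgments are mutually defined — type well-formedness invokes term typing through $\tau\,M$, term typing invokes process typing through the monad introduction rule $(\{\}I)$, and process typing invokes type well-formedness and equality through the conversion rules — the functional-judgment parts (1,3) and the process-judgment parts (2,4) must be carried through together as one mutual induction.

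First I would dispatch the congruence and structural rules: for each, I apply the induction hypothesis to every premise and reassemble the conclusion with the same rule, obtaining the well-formedness of the target context $\Psi,\Psi'\{\tau/t\}$ by applying the induction hypothesis to the context-well-formedness subderivation guaranteed by Lemma~\ref{lem:subderiv}. This step is routine but relies on the standard substitution-permutation identity $\mathcal{J}\{M/x\}\{\tau/t\} = (\mathcal{J}\{\tau/t\})\{M\{\tau/t\}/x\}$, valid because $x \notin fv(\tau)$ (as $\tau$ is kinded in $\Psi$, where $x$ is out of scope). This identity is exactly what makes the substituted forms of the dependent kinds and types — such as the $K\{M/x\}$ in the application rules and the substitutions produced by the $\beta$ rules — line up with what the reassembled rule expects.

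The genuine base case is the type-variable rule, which splits on whether the looked-up variable is $t$. If it is, then $t\{\tau/t\} = \tau$, and since the binding $t{::}K$ lies in the middle of $\Psi,t{::}K,\Psi'$ its declared kind $K$ is unaffected by the substitution and coincides with the kind of the hypothesis $\Psi \vdash \tau :: K$; the required judgment $\Psi,\Psi'\{\tau/t\} \vdash \tau :: K$ then follows by Weakening (Lemma~\ref{lem:weaken}), the binding of $t$ being erased while the context only grows. If the variable is some $t' \neq t$, the substitution is the identity on $t'$ but may rewrite its kind when $t'$ is declared in $\Psi'$, and the rule reapplies directly. The $\beta$- and $\eta$-equality cases for terms, types and session types are handled as the corresponding cases of Lemma~\ref{lem:trans}, the permutation identity again ensuring that the substituted equation is a valid instance of the same rule; and for the process-reduction case $(\m{PEqRed})$ I additionally use that reduction commutes with type substitution, i.e.\ $P \tra{} Q$ implies $P\{\tau/t\} \tra{} Q\{\tau/t\}$ (which itself reduces to the permutation identity, since type substitution touches only annotations and neither blocks nor creates a redex), so the reduction premise survives and the typing premises close by the induction hypothesis.

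The main obstacle will be the bookkeeping of these commuting substitutions in the dependent rules — keeping straight the interaction between the term-level substitution $\{M/x\}$ occurring inside kinds and types and the outer type-level substitution $\{\tau/t\}$ — together with the discipline of running a genuinely mutual induction so that the process-typing premise of $(\{\}I)$ and the term-typing premises of the (session) type-application rules are available to one another at the right points. Everything else is a mechanical replay of the cases already illustrated for Lemma~\ref{lem:trans}.
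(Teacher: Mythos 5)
Your proposal is correct and matches the paper's approach: the paper in fact states this lemma without proof, leaving it as the evident analogue of the term-substitution Lemma~\ref{lem:trans}, which is proved exactly as you describe — by induction on the second derivation, with the variable lookup as the base case resolved by weakening and the substitution-permutation identity handling the dependent application and $\beta$/$\eta$ cases. Your additional remarks on the mutual induction across the functional and process judgments and on reduction commuting with type substitution in the $(\m{PEqRed})$ case correctly identify the only points requiring care.
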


\begin{lemma}[Context Conversion]\label{lem:ctxtconvpsi}

Let $\Psi ,x{:}\tau \vdash$ and $\Psi \vdash \tau' :: K$. If $\Psi ,
x{:}\tau \vdash \mathcal{J}$ and $\Psi \vdash \tau = \tau' :: K$ then
$\Psi, x{:}\tau' \vdash \mathcal{J}$.

\end{lemma}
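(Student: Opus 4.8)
The plan is to strengthen the statement and prove, by a single induction over the (mutually inductive) typing and equality derivations, the generalized claim that if $\Psi , x{:}\tau , \Psi' \vdash \mathcal{J}$ (and likewise $\Psi , x{:}\tau , \Psi' ; \Ga ; \D \vdash \mathcal{J}$) and $\Psi \vdash \tau = \tau' :: K$, then $\Psi , x{:}\tau' , \Psi' \vdash \mathcal{J}$ (resp.\ with the session contexts). The generalization over an arbitrary suffix $\Psi'$ is forced: context-extending rules such as $(\Pi I)$, $(\rgt\forall)$, $(\lft\exists)$ and $(\m{cut})$ push the modified binding $x{:}\tau$ into the interior of the context as we descend into premises, so the binding being converted is in general no longer the last one. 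Because the monadic rules couple the functional and process layers, the induction must range simultaneously over all judgment forms, including the process judgments $\Psi ; \Ga ; \D \vdash \mathcal{J}$.

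Before the induction I would discharge the well-formedness obligation for the converted context. From $\Psi , x{:}\tau \vdash$ and Subderivation Property~2 (Lemma~\ref{lem:subderiv}) I obtain $\Psi \vdash \tau :: \type$. From $\Psi \vdash \tau = \tau' :: K$, Validity for Equality gives $\Psi \vdash \tau :: K$, $\Psi \vdash \tau' :: K$ and $\Psi \vdash K$; Unicity of Types and Kinds applied to $\Psi \vdash \tau :: \type$ and $\Psi \vdash \tau :: K$ then yields $\Psi \vdash K = \type$, whence $\Psi \vdash \tau' :: \type$ by kind conversion, and symmetrically $\Psi \vdash \tau' = \tau :: \type$ by type-equality conversion. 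This makes $\Psi , x{:}\tau'$ (and, carrying the unchanged $\Psi'$ along, $\Psi , x{:}\tau' , \Psi'$) well-formed, supplying the side conditions needed in the base cases.

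The inductive step is routine for every rule except $(\m{var})$, which is the only place the type assigned to $x$ is actually consulted. In case $(\m{var})$ deriving $y{:}\sigma$: if $y \neq x$ the very same binding occurs in the converted context and I reapply $(\m{var})$ using the new context's well-formedness; if $y = x$ then $\sigma = \tau$, the rule in the converted context produces $x{:}\tau'$, and a single application of $(\m{Conv})$ with the equation $\Psi \vdash \tau' = \tau :: \type$ (weakened into $\Psi , \Psi'$ by Lemma~\ref{lem:weaken}) recovers the original conclusion $x{:}\tau$. For all remaining rules the binding $x{:}\tau$ is treated opaquely: I apply the induction hypothesis to each premise, extending $\Psi'$ exactly as the rule dictates, and reassemble the conclusion with the same rule.

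The main obstacle is not any individual case but the uniform bookkeeping that the suffix $\Psi'$ introduces: one must verify that the bindings of $\Psi'$ and the subsidiary formulas attached to each rule (the well-formed types, kinds, and the type/kind equalities occurring in $(\m{Conv})$, $(\rgt{\m{Conv}})$ and the definitional-equality rules) remain derivable after the conversion, which is precisely where the induction hypothesis on the well-formedness and equality judgments is invoked. The reason this succeeds without delicate reasoning is that replacing the \emph{type} of $x$ by the definitionally equal $\tau'$ never changes $x$ \emph{itself}, so every such subsidiary judgment is preserved verbatim (up to the conversion of $\tau$ in $\Psi'$'s scope), and no argument beyond the variable case is required.
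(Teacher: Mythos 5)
Your proof is correct, but it takes a genuinely different route from the paper's. The paper does not induct on the derivation at all: it observes that in the converted context one has $\Psi , x{:}\tau' \vdash x : \tau'$ by the variable rule, hence $\Psi , x{:}\tau' \vdash x : \tau$ by symmetry of the given equality and $(\m{Conv})$; it then $\alpha$-renames the converted variable to a fresh $x'$ in the original judgment, weakens to $\Psi , x{:}\tau' , x'{:}\tau \vdash \mathcal{J}\{x'/x\}$, and applies the already-established Substitution lemma (Lemma~\ref{lem:trans}) with the term $x$ for $x'$, recovering $\Psi , x{:}\tau' \vdash \mathcal{J}$. This reduces context conversion to three previously proved facts (variable/conversion, weakening, substitution) in half a dozen lines, at the cost of relying on the substitution lemma's own induction. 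Your direct mutual induction with an arbitrary suffix $\Psi'$ is the standard self-contained alternative; it essentially replays the case analysis that the substitution lemma already performed, with the variable case being the only non-routine step (which you handle exactly as the paper handles its conversion step, via symmetry and $(\m{Conv})$ at kind $\type$). One point in your favour: you explicitly discharge the obligation $\Psi \vdash \tau' :: \type$ (needed to form the converted context) and the coercion of the equality from kind $K$ to kind $\type$ via unicity and kind conversion, which the paper's proof leaves implicit. Both arguments are sound; the paper's is shorter and is the reason the lemma is stated \emph{after} substitution and weakening.
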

\begin{proof} Straightforward from the properties above.
  
  \begin{tabbing}
    $\Psi , x{:}\tau' \vdash x{:}\tau'$ \` by variable rule\\
    $\Psi \vdash \tau' = \tau :: K$ \` by symmetry\\
    $\Psi , x{:}\tau' \vdash x{:}\tau$ \` by conversion\\
    $\Psi, x'{:}\tau \vdash \alpha\{x'/x\}$ \` renaming assumption\\
    $\Psi , x{:}\tau', x'{:}\tau \vdash \alpha\{x'/x\}$ \` by
    weakening\\
    $\Psi , x{:}\tau' \vdash \alpha \{x'/x\}\{x/x'\}$ \` by
    substitution\\
    $\Psi , x{:}\tau' \vdash \alpha$ \` by definition
  \end{tabbing}
\end{proof}

\begin{lemma}[Context Conversion --
  Processes]\label{lem:ctxtconvpsiproc}
Let $\Psi , x{:}\tau ; \D \vdash$, $\Psi , x{:}\tau ; \Ga \vdash$ and
$\Psi \vdash \tau :: \type$. If $\Psi , x{:}\tau ; \Ga ; \D \vdash
\mathcal{J}$ and $\Psi \vdash \tau = \tau' :: \type$ then $\Psi , x{:}\tau'
; \Ga ; \D \vdash \mathcal{J}$
\end{lemma}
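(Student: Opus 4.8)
The plan is to mirror the proof of Lemma~\ref{lem:ctxtconvpsi} essentially verbatim, now for the process-level judgment family, reusing the Substitution Lemma (Lemma~\ref{lem:trans}(2)) and Weakening (Lemma~\ref{lem:weaken}(2)). The key observation is the same as in the functional case: converting the type of $x$ in the context never requires touching the derivation of $\mathcal{J}$ structurally. Instead, I re-derive the variable $x$ at the new type $\tau'$ and then substitute it back in for a freshly renamed copy of $x$.

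First I would establish $\Psi, x{:}\tau' \vdash x : \tau$. By the variable rule $\Psi, x{:}\tau' \vdash x : \tau'$; by symmetry of the hypothesis $\Psi \vdash \tau = \tau' :: \type$ we get $\Psi \vdash \tau' = \tau :: \type$; and by the term conversion rule $(\m{Conv})$ we obtain $\Psi, x{:}\tau' \vdash x : \tau$. Note that $\Psi, x{:}\tau' \vdash$ is well-formed, since Validity for Equality gives $\Psi \vdash \tau' :: \type$ from the given equality, and $\Psi \vdash \tau :: \type$ is assumed, so the weakenings below are legitimate.

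Next, since $x$ is a term variable that may occur free in the session contexts $\Ga, \D$ and in $\mathcal{J}$ through value-dependent session types (e.g.\ $\forall y{:}\sigma.A$ or $\exists y{:}\sigma.A$), I would $\alpha$-rename $x$ to a fresh $x'$ in the given derivation, obtaining $\Psi, x'{:}\tau; \Ga\{x'/x\}; \D\{x'/x\} \vdash \mathcal{J}\{x'/x\}$. Weakening by Lemma~\ref{lem:weaken}(2) inserts the binding $x{:}\tau'$, giving $\Psi, x{:}\tau', x'{:}\tau; \Ga\{x'/x\}; \D\{x'/x\} \vdash \mathcal{J}\{x'/x\}$. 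Finally I apply Substitution (Lemma~\ref{lem:trans}(2)) with the term $M = x$, which types as $\tau$ under $\Psi, x{:}\tau'$ by the first step, substituting it for $x'$. Since $x'$ was chosen fresh, the composite $\{x'/x\}$ followed by $\{x/x'\}$ acts as the identity on $\Ga$, $\D$ and $\mathcal{J}$, so the result is exactly $\Psi, x{:}\tau'; \Ga; \D \vdash \mathcal{J}$, as required.

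The derivation is uniform in $\mathcal{J}$, because both Lemma~\ref{lem:trans}(2) and Lemma~\ref{lem:weaken}(2) are stated for the abstract process-level judgment; the same argument therefore covers context well-formedness, process typing and process equality simultaneously. The only point demanding care---and the main obstacle---is verifying that the two substitutions cancel: this relies on $x'$ being fresh for $\Psi$, $\Ga$, $\D$ and $\mathcal{J}$, so that no capture occurs and no residual occurrences of $x'$ remain. As in the functional case, this freshness side-condition of the renaming step does the real work, while everything else is a direct appeal to the already-established substitution, weakening, and conversion principles.
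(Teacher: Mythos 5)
Your proposal is correct and follows essentially the same route as the paper: the paper proves this lemma by noting it is ``straightforward by Lemma~\ref{lem:ctxtconvpsi}'', whose proof is exactly your sequence of variable rule, symmetry, conversion, renaming to a fresh $x'$, weakening, and substitution with the two renamings cancelling. Your replay of that argument at the process level, invoking Lemma~\ref{lem:trans}(2) and Lemma~\ref{lem:weaken}(2), is precisely what the paper intends.
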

\begin{proof}
Straightforward by Lemma~\ref{lem:ctxtconvpsi}.
\end{proof}

\begin{lemma}[Context Conversion -- Types]
Let $\Psi , t{::}K \vdash$ and $\Psi \vdash K'$. If $\Psi ,
t{::}K\vdash \mathcal{J}$ and $\Psi \vdash K = K'$ then $\Psi ,
t{::}K' \vdash \mathcal{J}$
\end{lemma}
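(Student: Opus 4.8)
The plan is to mirror the proof of the term-level Context Conversion (Lemma~\ref{lem:ctxtconvpsi}) essentially verbatim, replacing term variables by type variables, type equality by kind equality, and the (term) Substitution lemma by the Type Substitution lemma. The overall shape is the same: re-type the converted variable at its \emph{old} kind inside the new context, rename the bound variable occurring in $\mathcal{J}$ to a fresh one, weaken, and then substitute the converted variable back in, exploiting that a renaming followed by its inverse re-substitution is the identity.

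Concretely, first I would note that in the target context the type-variable rule gives $\Psi, t{::}K' \vdash t :: K'$. From the hypothesis $\Psi \vdash K = K'$, symmetry of kind equality yields $\Psi \vdash K' = K$, and hence (after weakening this equality into $\Psi, t{::}K'$ by Lemma~\ref{lem:weaken}) the kind-conversion rule for types re-types the variable at its original kind, giving $\Psi, t{::}K' \vdash t :: K$. This is the analogue of the ``$\Psi, x{:}\tau' \vdash x{:}\tau$'' step in Lemma~\ref{lem:ctxtconvpsi}.

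Next, by the usual renaming ($\alpha$-conversion) convention I may assume a fresh type variable $t'$, so that the hypothesis $\Psi, t{::}K \vdash \mathcal{J}$ becomes $\Psi, t'{::}K \vdash \mathcal{J}\{t'/t\}$. Weakening again (Lemma~\ref{lem:weaken}) gives $\Psi, t{::}K', t'{::}K \vdash \mathcal{J}\{t'/t\}$. I then apply the Type Substitution lemma with base context $\Psi, t{::}K'$, substituting the type $t$ --- which the previous step showed to be well-kinded at $K$ in that very context --- for the variable $t'{::}K$; this yields $\Psi, t{::}K' \vdash \mathcal{J}\{t'/t\}\{t/t'\}$. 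Since $t'$ is fresh, the composite $\{t'/t\}\{t/t'\}$ acts as the identity on $\mathcal{J}$, so the conclusion is exactly $\Psi, t{::}K' \vdash \mathcal{J}$, as required.

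The only delicate point --- and the main ``obstacle'' --- is the invocation of the Type Substitution lemma in the last step: its base context must be taken to be $\Psi, t{::}K'$ rather than $\Psi$, and it is applied with the substituting type being the variable $t$ itself, which is precisely why the kind-conversion re-typing $\Psi, t{::}K' \vdash t :: K$ must be carried out beforehand. Everything else is a direct transcription of the term-level argument, so the proof is routine given the substitution and weakening lemmas already established; if one also wants $\mathcal{J}$ to range over process judgments $\Psi; \Ga; \D \vdash \cdot$, the same argument applies using the process clause of the Type Substitution lemma.
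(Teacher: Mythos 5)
Your proposal is correct and follows essentially the same route as the paper: the paper proves this lemma by declaring it ``identical to'' the term-level Context Conversion lemma, whose proof is exactly the re-type-the-variable-via-conversion, rename, weaken, and substitute-back argument you transcribe. Your explicit identification of the instantiation of the Type Substitution lemma (base context $\Psi, t{::}K'$, substituting the variable $t$ itself at kind $K$) is precisely the step the paper's term-level proof performs implicitly.
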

\begin{proof}
Identical to Lemma~\ref{lem:ctxtconvpsi}
\end{proof}

\begin{lemma}[Functionality of Typing]\label{lem:functionalitytyping}
~

 Assume $\Psi \vdash M = N : \tau$, $\Psi \vdash M : \tau$ and
  $\Psi \vdash N : \tau$:
  \begin{enumerate}
    \item If $\Psi , x{:}\tau , \Psi' \vdash M' : \tau'$ then $\Psi ,
      \Psi'\{M/x\} \vdash M'\{M/x\} = M'\{N/x\} : \tau'\{M/x\}$
      
    \item If $\Psi , x{:}\tau , \Psi' \vdash \tau' :: K$ then
      $\Psi, \Psi'\{M/x\} \vdash \tau'\{M/x\} = \tau'\{N/x\} ::
      K\{M/x\}$
    \item If $\Psi , x{:}\tau , \Psi' \vdash A :: K$ then
      $\Psi, \Psi'\{M/x\} \vdash A\{M/x\} = A\{N/x\} ::
      K\{M/x\}$
    \item If $\Psi , x{:}\tau , \Psi' ; \Ga ; \D \vdash P :: z{:}A$
      then $\Psi ; \Psi'\{M/x\} ; \Ga\{M/x\} ; \D\{M/x\} \vdash
      P\{M/x\} = P\{N/x\} :: z{:}A\{N/x\}$

    \item If $\Psi , x{:}\tau , \Psi' \vdash K$ then
      $\Psi, \Psi'\{M/x\} \vdash K\{M/x\} = K\{N/x\}$ 
      
  \end{enumerate}
\end{lemma}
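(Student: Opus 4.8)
The plan is to prove all five clauses \emph{simultaneously} by induction on the second given derivation (respectively the derivations of $M' : \tau'$, of $\tau' :: K$, of $A :: K$, of $P :: z{:}A$, and of $K$). Simultaneity is essential here: because types and kinds may depend on terms, and terms may embed processes via the monad (which in turn embed terms), the congruence step for one judgement form must appeal to the functionality statement for another. Throughout, the ambient well-typedness side conditions — that $M'\{M/x\}:\tau'\{M/x\}$ and $M'\{N/x\}:\tau'\{N/x\}$ are themselves derivable, and likewise for the other judgement forms — are supplied by the Substitution Lemma (Lemma~\ref{lem:trans}), applied once substituting $M$ and once substituting $N$ for $x$.

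The critical base case is the variable rule $(\m{var})$. If the derived term is $x$ itself, then $x\{M/x\} = M$ and $x\{N/x\} = N$, while $\tau'\{M/x\} = \tau$ since $x \notin \fn{\tau}$; the required conclusion $\Psi , \Psi'\{M/x\} \vdash M = N : \tau$ is exactly the hypothesis $\Psi \vdash M = N : \tau$ after weakening (Lemma~\ref{lem:weaken}) into the extended context. For any other variable $y \neq x$ the substitution is the identity on $y$, so we conclude by reflexivity $(\m{TMEqR})$, using the Substitution Lemma to discharge the well-typedness premise.

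For the inductive (structural) cases, each typing or formation rule is matched by the corresponding congruence rule of the definitional equality. We apply the induction hypotheses to the premises and reassemble: an application typed by $(\Pi E)$ feeds the hypotheses on the two subterms into $(\m{TMEqApp})$; an abstraction typed by $(\Pi I)$ uses $(\m{TMEq}\lambda)$, with clause~(2) supplying the needed equality of the binder types; the dependent session connectives $\forall x{:}\sigma.A$ and $\exists x{:}\sigma.A$ use $(\m{STEq}\forall)$ and $(\m{STEq}\exists)$; and the process constructors use the corresponding process congruence rules, invoking $(\m{PEqRed})$ where a rule has a computational flavour. The conversion rules $(\m{Conv})$, $(\rgt{\m{Conv}})$ and their type- and kind-level analogues are handled by applying the hypothesis to the converted subderivation and closing under the equality-conversion rules, using functionality for the relevant type or kind equality (again an instance of this same lemma, on a smaller derivation).

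The main obstacle is reconciling the dependent type annotations: the \emph{natural} type of $M'\{N/x\}$ is $\tau'\{N/x\}$, whereas the statement demands that the result be typed by $\tau'\{M/x\}$. These are bridged by $\Psi , \Psi'\{M/x\} \vdash \tau'\{M/x\} = \tau'\{N/x\} :: K\{M/x\}$, which is precisely clause~(2) (and clause~(3) for session types in the process case) applied to the well-formedness derivation of $\tau'$, extracted via the Validity for Typing lemma. Because that well-formedness derivation is structurally no larger than the derivation under analysis, the appeal is to a genuine induction hypothesis and the simultaneous induction stays well-founded. The same device resolves the apparent asymmetry in clause~(4), where the linear context is substituted with $\{M/x\}$ but the offered type is stated as $A\{N/x\}$: clause~(3) gives $A\{M/x\} = A\{N/x\} :: \stype$, and $(\rgt{\m{Conv}})$ then lets us present the result under either annotation. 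Verifying this annotation bookkeeping \emph{uniformly} across all rules — rather than carrying out any single congruence step — is where the real work lies.
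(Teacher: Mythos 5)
Your overall strategy coincides with the paper's: a simultaneous induction on the given typing/well-formedness derivation, with the variable case for $x$ discharged by the hypothesis $\Psi \vdash M = N : \tau$ plus weakening, the case $y \neq x$ by reflexivity, the structural cases by the matching congruence rules of definitional equality, and the ambient well-typedness side conditions by the Substitution Lemma. That is exactly how the paper proceeds.

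Where you differ is in what you identify as the ``main obstacle'' and how you propose to resolve it, and this is the one step I would push back on. The congruence rules of definitional equality are left-biased by design --- e.g.\ $(\m{TMEqApp})$ types its conclusion at $\sigma\{N/x\}$ for the \emph{left} argument, $(\m{TMEq}\lambda)$ at the left $\Pi$-type --- so in the structural cases the induction hypotheses already deliver the result at the $\{M/x\}$-instance of the type and no bridging is required; where a premise equality of binder types is needed (as in $(\m{TMEq}\lambda)$), it comes from clause~(2) applied to a \emph{premise} of the rule being analysed, hence a genuine subderivation. The only real annotation mismatch arises at the conversion rules, and there the paper bridges by applying the Substitution Lemma to the equality premise $\Psi, x{:}\tau, \Psi' \vdash \tau_0 = \tau_0' :: \type$ of the conversion rule itself --- again something already present in the derivation. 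Your alternative --- extract $\Psi, x{:}\tau, \Psi' \vdash \tau' :: \type$ by Validity and then invoke clause~(2) on it --- is not obviously well-founded: Validity does not return a subderivation, and your assertion that the extracted derivation is ``structurally no larger'' is unsupported. As written, the termination of the induction at precisely the step you call the real work is a gap. It is repairable (use the substitution lemma on the conversion premises as the paper does; note also that the paper defers the equality-to-equality version to a separate, later ``Functionality for Equality'' theorem precisely so that it may freely appeal to Validity and to this lemma), and the same caveat applies to your treatment of the $A\{N/x\}$ annotation in clause~(4), which is better read as fixing a typo in the statement than as a step requiring clause~(3) via Validity inside the induction.
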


\begin{proof}

  By induction on the given typing derivation.

  \begin{description}
  \item[Case:] $\Psi , x{:}\tau , \Psi' \vdash x{:}\tau$ by variable rule

    \begin{tabbing}
      $\Psi \vdash M = N : \tau$ \` by assumption\\
      $\Psi , \Psi'\{M/x\} \vdash M = N : \tau$ \` by weakening
    \end{tabbing}

  \item[Case:] $\Psi , x{:}\tau , \Psi' \vdash y{:}\sigma$ with
    $y{:}\sigma \in \Psi$ or $\Psi'$

    \begin{tabbing}
$y : \sigma \in \Psi$ or $y{:}\sigma\{M/x\} \in \Psi'\{M/x\}$  \` by
definition\\
$\Psi , \Psi'\{M/x\} \vdash y = y : \sigma\{M/x\}$ \` by reflexivity
    \end{tabbing}

\item[Case:] $\Psi , x{:}\tau , \Psi' \vdash M_0 \, N_0 :
  \sigma_0\{N_0/y\}$ from $\Pi E$

  \begin{tabbing}
$\Psi , \Psi'\{M/x\} \vdash M_0\{M/x\} = M_0\{N/x\} : \Pi
y{:}\sigma_1\{M/x\}.\sigma_0\{M/x\}$ \` by i.h.\\
$\Psi , \Psi'\{M/x\} \vdash N_0 \{M/x\} = N_0 \{N/x\} :
\sigma_1\{M/x\}$ \` by i.h.\\
$\Psi , \Psi'\{M/x\} \vdash M_0\{M/x\}\,N_0\{M/x\} =
M_0\{N/x\}\,N_0\{N/x\} : (\sigma_0\{M/x\})\{(N_0\{M/x\})/y\}$ \\\` by
$\m{TMEqApp}$ rule\\
  \end{tabbing}

\item[Case:] $\Psi , x{:}\tau , \Psi' \vdash \lambda y{:}\tau_0 . M_0
  : \Pi y{:}\tau_0 . \tau_1$  by $\Pi I$ rule

  \begin{tabbing}
  $\Psi , \Psi'\{M/x\} \vdash \tau_0\{M/x\} = \tau_0\{N/x\} :: \type$
  \` by i.h.\\
  $\Psi , \Psi'\{M/x\} ,y{:}\tau_0\{M/x\} \vdash M_0\{M/x\} =
  M_0\{N/x\} : \tau_1\{M/x\}$ \` by i.h.\\
  $\Psi , \Psi'\{M/x\} \vdash \tau_0\{M/x\} :: \type$ \` by
  substitution lemma\\
  $\Psi , \Psi'\{M/x\} \vdash \tau_0\{M/x\} = \tau_0\{M/x\} :: \type$
  \` by reflexivity\\
  $\Psi , \Psi'\{M/x\} \vdash \tau_0\{N/x\} = \tau_0\{M/x\} :: \type$
  \` by symmetry\\
  $\Psi , \Psi'\{M/x\} \vdash \lambda y{:}\tau_0\{M/x\}.M_0\{M/x\} =
  \lambda y{:}\tau_0\{N/x\} . M_0\{N/x\} : \Pi
  y{:}\tau_0\{M/x\}.\tau_1\{M/x\}$ \\\` by $\m{TMEq}\lambda$ rule
  \end{tabbing}

  \item[Case:] $\Psi , x{:}\tau , \Psi' \vdash \monad{ c \leftarrow P
    \leftarrow \ov{u_j};\ov{d_i}} : \monad{\Ga ; \D \vdash c {:}A}$ by $\{\}I$

  \begin{tabbing}
$\Psi , \Psi'\{M/x\} ; \Ga\{M/x\} ; \D\{M/x\} \vdash P\{M/x\} =
P\{N/x\} :: c{:}A\{M/x\}$ \` by i.h.\\
$\Psi , \Psi'\{M/x\} \vdash \ov{A_j\{M/x\}} :: \stype$ \` by
substitution lemma\\
$\Psi , \Psi'\{M/x\} \vdash \ov{B_i\{M/x\}} :: \stype$ \` by
substitution lemma\\
Conclude by $\m{TMEq}\{\}$ rule
\end{tabbing}

\item[Case:] $\Psi , x{:}\tau , \Psi' \vdash M_0 :\tau_0$ by
  conversion rule

  \begin{tabbing}
$\Psi , \Psi'\{M/x\} \vdash M_0 \{M/x\} = M_0\{N/x\} : \tau_0'\{M/x\}$
\` by i.h.\\
$\Psi , \Psi'\{M/x\} \vdash \tau_0\{M/x\} = \tau_0'\{M/x\} :: \type$
\` by substitution lemma\\
$\Psi , \Psi'\{M/x\} \vdash M_0\{M/x\} = M_0\{N/x\} : \tau_0\{M/x\}$ \`
by conversion rule
  \end{tabbing}

\item[Case:] $\Psi , x{:}\tau , \Psi' \vdash \Pi y{:}\tau_0.\tau_1 ::
  \type$ by $\Pi$ formation rule

  \begin{tabbing}
$\Psi , \Psi'\{ M /x \} \vdash \tau_0 \{M/x\} :: \type$ \` by
substitution\\
$\Psi , \Psi'\{M/x\},y{:}\tau_0\{M/x\} \vdash \tau_1\{M/x\} =
\tau_1\{N/x\} :: \type$ \` by i.h.\\
$\Psi , \Psi'\{M/x\} \vdash \tau_0\{M/x\} = \tau_0\{N/x\} :: \type$ \`
by i.h.\\
$\Psi , \Psi'\{M/x\} \vdash \Pi y{:}\tau_0\{M/x\}.\tau_1\{M/x\} = \Pi
y{:}\tau_0\{N/x\}.\tau_1\{N/x\} :: \type$ \\\` by $\Pi$ formation rule
  \end{tabbing}

\item[Case:] $\Psi , x{:}\tau , \Psi' \vdash \lambda y{:}\tau_0.\sigma
  :: \Pi y{:}\tau_0.K_0$ by $\lambda$ formation rule

\begin{tabbing}
$\Psi , \Psi'\{M/x\} \vdash \tau_0 \{M/x\} = \tau_0 \{N/x\} :: \type$
\` by i.h.\\
$\Psi , \Psi'\{M/x\} , y{:}\tau_0\{M/x\} \vdash \sigma\{M/x\} =
\sigma\{N/x\} :: K_0\{M/x\}$ \` by i.h.\\
$\Psi , \Psi'\{M/x\} \vdash \lambda y{:}\tau_0\{M/x\}.\sigma\{M/x\} =
\lambda y{:}\tau_0\{N/x\}.\sigma\{N/x\} :: \Pi
y{:}\tau_0\{M/x\}.K_0\{M/x\}$ \\ \` by $\lambda$ formation rule
\end{tabbing}

\item[Case:] $\Psi , x{:}\tau,\Psi' \vdash \tau_0 \, M_0 ::
  K_0\{M/y\}$ by type application formation rule
  
\begin{tabbing}
$\Psi , \Psi' \{M/x\} \vdash \tau_0\{M/x\} = \tau_0\{N/x\} :: \Pi
y{:}\tau_1\{M/x\}.K_0\{M/x\}$ \` by i.h.\\
$\Psi , \Psi'\{M/x\} \vdash M_0\{M/x\} = M_0 \{N/x\} :\tau_1\{M/x\}$
\` by i.h.\\
$\Psi , \Psi'\{M/x\} \vdash \tau_0\{M/x\}\, M_0\{M/x\} =
\tau_0\{M/x\}\, M_0\{M/x\}
:: K_0\{M_0/y\}\{M/x\}$ \\\` by type app. formation rule and def. of substitution
\end{tabbing}

\item[Case:] $\{\}$ formation rule

\begin{tabbing}
Straightforward by i.h.
\end{tabbing}

\item[Case:] $\Psi , x{:}\tau , \Psi' \vdash \tau_0 :: K_0$ by
  conversion rule

\begin{tabbing}

$\Psi , \Psi'\{M/x\} \vdash \tau_0 \{M/x\} = \tau_0\{N/x\} ::
K_1\{M/x\}$ \` by i.h.\\
$\Psi , \Psi'\{M/x\} \vdash K_1 \{M/x\} = K_0 \{M/x\}$ \` by
substitution lemma\\
$\Psi ,\Psi'\{M/x\} \vdash \tau_0 \{M/x\} = \tau_0\{N/x\} ::
K_0\{M/x\}$ \` by conversion

\end{tabbing}

\item[Case:] $\Psi , x{:}\tau , \Psi' ; \Ga ; \D \vdash c\langle
  M_0\rangle.P_0 :: c{:}\exists y{:}\tau_0.A_0$ by $\rgt\exists$

\begin{tabbing}
$\Psi , \Psi'\{M/x\} \vdash M_0\{M/x\} = M_0 \{N/x\} : \tau_0\{M/x\}$
\` by i.h.\\
$\Psi , \Psi' \{M/x\} ; \Ga\{M/x\} ; \D\{M/x\} \vdash P_0\{M/x\} =
P_0\{N/x\} :: c{:}A_0\{M_0/y\}\{M/x\}$\\ \` by i.h.\\
Conclude by $\m{PEq}\rgt\exists$
\end{tabbing}

\item[Case:] $\Psi , x{:}\tau ,\Psi' ; \Ga ; \D , y{:}\exists w
  {:}\sigma.A \vdash y(w{:}\sigma).P_0 :: z{:}C$ by $\lft\exists$
\begin{tabbing}
$\Psi , \Psi'\{M/x\} , w{:}\sigma\{M/x\} ; \Ga\{M/x\} ; \D\{M/x\} ,
y{:}A\{M/x\} \vdash P_0\{M/x\} = P_0\{N/x\} :: z{:}C\{M/x\}$ \\\` by
i.h.\\
$\Psi , \Psi'\{M/x\} \vdash \sigma\{M/x\} :: \type$ \` by substitution
lemma\\
Conclude by $\m{PEq}\lft\exists$
\end{tabbing}

\item[Case:] $\Psi , x{:}\tau , \Psi' ; \Ga ; \D \vdash P :: z{:}B$ by $\rgt{\m{Conv}}$

\begin{tabbing}
$\Psi , \Psi'\{M/x\} ; \Ga \{M/x\} ; \D\{M/x\} \vdash P\{M/x\} =
P\{N/x\} :: z{:}A\{M/x\}$ \` by i.h.\\
$\Psi , \Psi'\{M/x\} \vdash A\{M/x\} = B\{M/x\} :: \stype$ \` by
substitution lemma\\
$\Psi , \Psi'\{M/x\} ; \Ga\{M/x\} ; \D\{M/x\} \vdash P \{M/x\} =
P\{N/x\} :: z{:}B\{M/x\}$\\ \` by conversion
\end{tabbing}

Remaining cases follow similar patterns, relying on the inductive
hypothesis and the substitution lemmata.
  \end{description}

We omit the analogue functionality property for type substitution.

\end{proof}

\begin{lemma}[Inversion for Products]
\begin{enumerate}
\item If $\Psi \vdash \Pi x{:}\tau.\sigma :: K$ then $\Psi \vdash \tau
  :: \type$ and $\Psi ,x{:}\tau \vdash \sigma ::\type$
\item If $\Psi \vdash \Pi x{:}\tau.K$ then $\Psi \vdash \tau :: \type$
  and $\Psi , x{:}\tau \vdash K$
\end{enumerate}
\end{lemma}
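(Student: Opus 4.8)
The plan is to prove both parts by induction on the given derivation, which is really a generation (inversion) argument: inspect the last rule applied and observe that the head constructor $\Pi$ essentially pins down which formation rule could have produced the judgment. The feature that makes this tractable is that in both parts the data to be extracted never mentions the classifying kind $K$ itself, so the one rule that could disturb a naive case analysis---kind conversion---will be absorbed by the induction hypothesis.

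For part 1, I would first observe that among the functional type-formation rules, the only one whose conclusion has the syntactic shape $\Pi x{:}\tau.\sigma$ is the $\Pi$-formation rule (the rules for $\lambda$-abstraction, type application $\tau\,M$ and $\tau\,\sigma$, the monadic type, and type variables all produce a different head form). That rule forces $K = \type$ and has precisely $\Psi \vdash \tau :: \type$ and $\Psi , x{:}\tau \vdash \sigma :: \type$ as its premises, which are exactly the two required conclusions. If the theory carries a kind-conversion rule for functional types (as it does for session types, deriving $\Psi \vdash \tau :: K'$ from $\Psi \vdash \tau :: K$ and $\Psi \vdash K = K'$), then that case follows immediately from the induction hypothesis applied to its premise $\Psi \vdash \Pi x{:}\tau.\sigma :: K$, since the extracted judgments are independent of $K$ and the change of classifying kind is therefore irrelevant.

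For part 2 the subject is a kind rather than a type, so I would invert on the kind-formation rules instead. The only rules whose conclusion has the surface form $\Pi x{:}\tau.K$ are the two dependent-kind formation rules, which coincide syntactically because $\hPi$ abbreviates $\Pi$: one has premises $\Psi \vdash \tau :: \type$ together with $\Psi , x{:}\tau \vdash K$, the other $\Psi \vdash \tau :: \stype$ together with $\Psi , x{:}\tau \vdash K$. Inversion simply reads off the premises of whichever rule was actually used, yielding $\Psi , x{:}\tau \vdash K$ in every case and the appropriate domain-kinding judgment; the statement as written records the functional-domain variant, in which that judgment is $\Psi \vdash \tau :: \type$. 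Because kinds themselves are not re-classified (there is no conversion step acting on a kind subject), no conversion case arises here and a single case analysis on the last rule suffices.

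The main obstacle is bookkeeping about the case analysis rather than any real computation: I must confirm that the head symbol $\Pi$ determines the applicable formation rule up to conversion, and, in part 2, acknowledge that the two dependent-kind rules share the surface form $\Pi x{:}\tau.K$ and are separated only by the kind of the domain $\tau$. I would be careful here \emph{not} to appeal to unicity of kinds, since that result is established later using these very inversion properties; the argument needs only to extract the premises of the rule that was in fact applied. Once head-form uniqueness is observed, the only genuine inductive content is threading the induction hypothesis through any chain of conversions in part 1, and the rest is immediate.
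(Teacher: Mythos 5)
Your proposal matches the paper's own argument, which disposes of part~1 ``straightforwardly by induction on the given derivation'' and part~2 ``immediate[ly] by inversion''; your case analysis on the head constructor, with the kind-conversion case absorbed by the induction hypothesis, is exactly the intended content of that one-line proof. The extra care you take about the two dependent-kind formation rules sharing the surface form $\Pi x{:}\tau.K$ is a reasonable observation but does not change the route.
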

\begin{proof}
$(1)$ follows straightforwardly by induction on the given
derivation. $(2)$ is immediate by inversion.
\end{proof}

\begin{lemma}[Inversion for $\forall\exists$]
\begin{enumerate}
\item If $\Psi \vdash \forall x{:}\tau.A :: K$ then $\Psi \vdash \tau
  :: \type$ and $\Psi , x{:}\tau \vdash A ::\stype$
\item If $\Psi \vdash \exists x{:}\tau.A : K$ then $\Psi \vdash \tau
  :: \type$ and $\Psi , x{:}\tau \vdash A :: \stype$
\end{enumerate}
\end{lemma}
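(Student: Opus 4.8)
The plan is to proceed by induction on the derivation of $\Psi \vdash \forall x{:}\tau.A :: K$ (respectively $\Psi \vdash \exists x{:}\tau.A :: K$), exactly mirroring the proof of the preceding Inversion for Products lemma. The key observation is that the syntactic form $\forall x{:}\tau.A$ can appear as the subject of a kinding judgment only via two rules: the $\forall$-formation rule, whose premises are precisely $\Psi \vdash \tau :: \type$ and $\Psi, x{:}\tau \vdash A :: \stype$ and whose conclusion has kind $\stype$; and the kind-conversion rule, which merely rekinds an already-derived judgment. The same holds for $\exists$ via its formation rule.

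In the base case, where the derivation ends with the $\forall$-formation (resp.\ $\exists$-formation) rule, the two desired conclusions are literally the premises of that rule, so they follow by direct inversion. In the inductive case, where the derivation ends with kind conversion, we have a strictly shorter subderivation of $\Psi \vdash \forall x{:}\tau.A :: K'$ together with $\Psi \vdash K' = K$; applying the induction hypothesis to that subderivation yields $\Psi \vdash \tau :: \type$ and $\Psi, x{:}\tau \vdash A :: \stype$, neither of which mentions $K$, so they carry over unchanged. The reasoning for the existential is identical, substituting the $\exists$-formation rule for the universal one.

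I expect no real obstacle here. Since both $\forall x{:}\tau.A$ and $\exists x{:}\tau.A$ are fully applied constructors, no type-level application, abstraction, or other formation rule can produce them at this judgment, so the only case split required is between the relevant formation rule and kind conversion. The reason an induction is used rather than a single inversion step is solely to absorb a possible chain of kind-conversion steps; once these are stripped away, the formation rule exposes the two premises directly. As in the Products lemma, this makes part $(1)$ \emph{straightforward by induction} and part $(2)$ \emph{immediate by inversion}.
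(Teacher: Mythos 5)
Your proposal is correct and matches the paper's proof, which simply states that the result follows ``straightforwardly by induction on the given derivation'': the induction absorbs chains of kind-conversion steps, and the formation rules for $\forall$ and $\exists$ expose the two premises directly. No further comment is needed.
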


\begin{proof}
Straightforwardly by induction on the given
derivation.
\end{proof}

\valeqlem*

% \begin{lemma}[Validity for Equality]
% \begin{enumerate}
% \item If $\Psi \vdash M = N : \tau$ then $\Psi \vdash M : \tau$, $\Psi
%   \vdash N : \tau$ and $\Psi \vdash \tau :: \type$
% \item If $\Psi \vdash \tau = \sigma :: K$ then $\Psi \vdash \tau ::
%   K$, $\Psi \vdash \sigma :: K$ and $\Psi \vdash K$
% \item If $\Psi \vdash A = B :: K$ then $\Psi \vdash A :: K$, $\Psi
%   \vdash B :: K$ and $\Psi \vdash K$
% \item If $\Psi \vdash K = K'$ then $\Psi \vdash K$ and $\Psi \vdash
%   K'$
% \item If $\Psi ; \Ga ; \D \vdash P = Q :: z{:}A$ then $\Psi ; \Ga ; \D
%   \vdash P :: z{:}A$, $\Psi ; \Ga ; \D  \vdash Q:: z{:}A$ and $\Psi
%   \vdash A :: \stype$
% \end{enumerate}
% \end{lemma}

\begin{proof}
By simultaneous induction on the given derivation.

\begin{description}

\item[Case:] $\m{TMEqR}$

\begin{tabbing}
$\Psi \vdash M : \tau$ \` by inversion\\
$\Psi \vdash \tau :: \type$ \` by subderivation lemma
\end{tabbing}

\item[Case:] $\m{TMEqS}$ and $\m{TMEqT}$

  \begin{tabbing}
Immediate by i.h.
\end{tabbing}

\item[Case:] $\m{TMEq}\lambda$

  \begin{tabbing}
    $\Psi \vdash \lambda x{:}\tau.M : \Pi x{:}\tau.\sigma$,
    $\Psi \vdash \lambda x{:}\tau'.N : \Pi x{:}\tau'.\sigma'$,
    $\Psi \vdash \Pi x{:}\tau.\sigma = \Pi x{:}\tau'.\sigma' :: \type$
    \\
    and $\Psi , x{:}\tau \vdash M = N : \sigma$ \` by inversion\\
    $\Psi , x{:}\tau \vdash M : \sigma$, $\Psi , x{:}\tau \vdash N :
    \sigma$ and $\Psi , x{:}\tau \vdash \sigma :: \type$ \` by i.h.\\
    $\Psi \vdash \Pi x{:}\tau.\sigma :: \type$, $\Psi \vdash \Pi
    x{:}\tau'.\sigma' :: \type$ and $\Psi \vdash \type$ \` by i.h.\\
    $\Psi \vdash \lambda x{:}\tau.M : \Pi x{:}\tau.\sigma$ \` by $(\Pi
    I)$\\
    $\Psi \vdash \lambda x{:}\tau'.N : \Pi x{:}\tau'.\sigma'$ \` by
    ($\Pi I)$\\
    $\Psi \vdash \lambda x{:}\tau'.N : \Pi x{:}\tau.\sigma$ \` by
    conversion (and symmetry)
  \end{tabbing}

\item[Case:] $\m{TMEqApp}$

  \begin{tabbing}
$\Psi \vdash M = M' : \Pi x{:}\tau . \sigma$ and $\Psi \vdash N = N' :
  \tau$ \` by inversion\\
$\Psi \vdash M : \Pi x{:}\tau . \sigma$, $\Psi \vdash M' : \Pi
x{:}\tau.\sigma$ and $\Psi \vdash \Pi x{:}\tau.\sigma :: \type$ \` by
i.h.\\
$\Psi \vdash N : \tau$, $\Psi \vdash N' : \tau$ and $\Psi \vdash \tau
:: \type$ \` by i.h.\\
$\Psi , x{:}\tau \vdash \sigma :: \type$ \` by inversion for
products\\
$\Psi \vdash \sigma\{N/x\} :: \type$ \` by substitution\\
$\Psi \vdash M \, N : \sigma\{N/x\}$ \` by $(\Pi E)$\\
$\Psi \vdash M' \, N' : \sigma \{N'/x\}$ \` by $(\Pi E)$\\
$\Psi \vdash \sigma\{N/x\} = \sigma\{N'/x\} :: \type$ \` by
functionality\\
$\Psi \vdash M' \, N' : \sigma\{N/x\}$ \` by conversion (and symmetry)
  \end{tabbing} 

\item[Case:] $\m{TMEq}\beta$

  \begin{tabbing}
$\Psi \vdash \lambda x{:}\tau.M : \Pi x{:}\tau.\sigma$, $\Psi
\vdash N : \tau$ and $\Psi , {:}\tau \vdash M :\sigma$ \` by inversion\\
$\Psi \vdash (\lambda x{:}\tau. M)\,N : \sigma\{N/x\}$ \` by $(\Pi
E)$\\
$\Psi \vdash \Pi x{:}\tau.\sigma :: \type$ \` by subderivation lemma\\
$\Psi \vdash \tau :: \type$ and $\Psi , x{:}\tau \vdash \sigma ::
\type$ \` by inversion for products\\
$\Psi \vdash \sigma\{N/x\} :: \type$ \` by substitution\\
$\Psi \vdash M\{N/x\} : \sigma\{N/x\}$ \` by substitution
  \end{tabbing}

\item[Case:] $\m{TMEq}\eta$

  \begin{tabbing}
    $\Psi \vdash M : \Pi x{:}\tau.\sigma$ \` by inversion\\
    $\Psi \vdash \lambda x{:}\tau.(M\,x) : \Pi x{:}\tau,\sigma$ \` by
      $(\Pi E)$, $(\m{var})$ and $(\Pi I)$\\
    $\Psi \vdash \Pi x{:}\tau.\sigma :: \type$ \` by subderivation lemma
  \end{tabbing}

\item[Case:] $\m{TMEq}\{\}$

  \begin{tabbing}
    $\Psi ; \Ga ; \D \vdash P = Q :: c{:}A$ \` by inversion\\
    $\Psi ; \Ga ; \D \vdash P :: c{:}A$, $\Psi ; \Ga ; \D \vdash Q ::
    c{:}A$ and $\Psi \vdash A :: \stype$ \` by i.h.\\
    $\Psi \vdash \monad{c \leftarrow P \leftarrow \dots} : \{\Ga ; \D
    \vdash c{:}A\}$ \` by $\{\}I$\\
    $\Psi \vdash \monad{c \leftarrow Q \leftarrow \dots} : \{\Ga ; \D
    \vdash c{:}A\}$ \` by $\{\}I$\\
    $\Psi ; \Ga \vdash$ and $\Psi ; \D\vdash$ \` by subderivation
    lemma\\
    $\Psi \vdash  \{\Ga ; \D \vdash c{:}A\}$ \` by $\{\}$ well-formedness
  \end{tabbing}

\item[Case:] $\m{TMEq}\{\}\eta$

  \begin{tabbing}
    $\Psi \vdash M : \{\Ga ; \D \vdash c{:}A\}$ \` by inversion\\
    $\Psi \vdash \{\Ga ; \D \vdash c{:}A\}$ \` by subderivation
    lemma\\
    Typing follows straightforwardly
  \end{tabbing}

\item[Case:] $\m{TEqR}$

  \begin{tabbing}
Straightforward by subderivation lemma.
\end{tabbing}

\item[Case:] $\m{TEqS}$ and $\m{TEqT}$

  \begin{tabbing}
Straightforward by i.h.
  \end{tabbing}

\item[Case:] $\m{TEq}\Pi$

  \begin{tabbing}
$\Psi \vdash \tau = \tau' :: \type$ and $\Psi , x{:}\tau \vdash \sigma
= \sigma' :: \type$ \` by inversion\\
$\Psi \vdash \tau :: \type$, $\Psi \vdash \tau' :: \type$ and $\Psi
\vdash \type$ \` by i.h.\\
$\Psi , x{:}\tau \vdash \sigma :: \type$, $\Psi , x{:}\tau \vdash
\sigma' :: \type$ and $\Psi , x{:}\tau \vdash \type$ \` by i.h.\\
$\Psi \vdash \Pi x{:}\tau.\sigma :: \type$ \` $\Pi$ rule\\
$\Psi , x{:}\tau'\vdash \sigma' :: \type$ \` by context conversion\\
$\Psi \vdash \Pi x{:}\tau.\sigma' :: \type$ \` by $\Pi$ rule\\
  \end{tabbing}

\item[Case:] $\m{TEq}\lambda$

  \begin{tabbing}
$\Psi \vdash \tau = \tau' :: \type$ and $\Psi , x{:}\tau \vdash \sigma
= \sigma' :: K$ \` by inversion\\
$\Psi \vdash \tau :: \type$, $\Psi \vdash \tau' :: \type$ and $\Psi
\vdash \type$ \` by i.h.\\
$\Psi , x{:}\tau \vdash \sigma :: K$, $\Psi ,x{:}\tau \vdash \sigma'
:: K$ and $\Psi , x{:}\tau \vdash K$  \` by i.h.\\
$\Psi \vdash \lambda x{:}\tau.\sigma :: \Pi x{:}\tau.K$ \` by
$\lambda$ rule\\
$\Psi , x{:}\tau' \vdash \sigma' :: K$ \` by context conversion\\
$\Psi \lambda x{:}\tau'.\sigma' :: \Pi x{:}\tau'.K$ \` by $\lambda$
rule\\
$\Psi \vdash \lambda x{:}\tau'.\sigma' :: \Pi x{:}\tau.K$  \` by
conversion\\
$\Psi \vdash \Pi x{:}\tau . K$ \` by $\Pi$ well-formedness rule
\end{tabbing}

\item[Case:] $\m{TEqApp}$

  \begin{tabbing}
$\Psi \vdash \tau = \sigma :: \Pi x{:}\tau'.K$ and $\Psi \vdash M = N
: \tau'$ \` by inversion\\
$\Psi \vdash \tau :: \Pi x{:}\tau'.K$, $\Psi \vdash \sigma :: \Pi
x{:}\tau'.K$ and $\Psi \vdash \Pi x{:}\tau'.K$ \` by i.h.\\
$\Psi \vdash M : \tau'$, $\Psi \vdash N : \tau'$ and $\Psi \vdash
\tau' :: \type$ \` by i.h.\\
$\Psi \vdash \tau\, M : K\{M/x\}$ \` by app. wf rule\\
$\Psi \vdash \sigma\, N : K\{N/x\}$ \` by app. wf rule\\
$\Psi , x{:}\tau' \vdash K$ \` by inversion for products\\
$\Psi \vdash K\{M/x\} = K\{N/x\}$ \` by functionality\\
$\Psi \vdash \sigma\, N : K\{M/x\}$ \` by conversion\\
$\Psi \vdash K \{M/x\}$ \` by substitution
  \end{tabbing}

\item[Case:] $\m{TEq}\beta$

  \begin{tabbing}
$\Psi , x{:}\tau \vdash \sigma :: K$ and $\Psi \vdash M : \tau$ \` by
inversion\\
$\Psi \vdash \lambda x{:}\tau . \sigma :: \Pi x{:}\tau.K$ \` by $\Pi$
rule\\
$\Psi \vdash (\lambda x{:}\tau.\sigma)\, M :: K\{M/x\}$ \` by
app. rule\\
$\Psi \vdash \sigma\{M/x\} :: K\{M/x\}$ \` by substitution\\
$\Psi , x{:}\tau \vdash K$ \` by subderivation lemma\\
$\Psi \vdash K \{M/x\}$ \` by substitution
\end{tabbing}

\item[Case:] $\m{TEq}\eta$

  \begin{tabbing}
    $\Psi \vdash \sigma :: \Pi x{:}\tau.K$ \` by inversion\\
    $\Psi \vdash \lambda x{:}\tau.(\sigma\,x) :: \Pi x{:}\tau.K$ \` by
    wf rules\\
    $\Psi \vdash \Pi x{:}\tau.K$ \` by subderivation lemma
  \end{tabbing}

\item[Case:] $\m{TEq}\{\}$

  \begin{tabbing}
Straightforward by i.h.
\end{tabbing}

\item[Case:] (3) is identical to (2), appealing to inversion for
  $\forall\exists$ as needed.

\item[Case:] $\m{PEqRefl}$

\begin{tabbing}
Immediate + subderivation lemma.
\end{tabbing}

\item[Case:] $\m{PEqT}$ and $\m{PEqS}$

  \begin{tabbing}
i.h.
  \end{tabbing}

\item[Case:] $\m{PEqRed}$

  \begin{tabbing}
$\Psi ; \Ga ; \D \vdash P :: z{:}A$, $P\tra{} Q$ and $\Psi ; \Ga ; \D
\vdash Q :: z{:}A$ \` by inversion\\
$\Psi \vdash A :: \stype$ \` by subderivation lemma
\end{tabbing}

\item[Case:] $\m{PEq}\rgt\forall$

  \begin{tabbing}
Straightforward by i.h.
  \end{tabbing}

\item[Case:] $\m{PEq}\lft\forall$

  \begin{tabbing}
    $\Psi \vdash M_0 = M_1 : \tau$ and
    $\Psi ; \Ga ; \D , x{:}A\{M_0/y\} \vdash P_0 = Q_0 :: z{:}C$ \` by
    inversion\\
    $\Psi ; \Ga ; \D , x{:}A\{M_0/y\} \vdash P_0 :: z{:}C$,
        $\Psi ; \Ga ; \D , x{:}A\{M_0/y\} \vdash Q_0 :: z{:}C$\\ and
        $\Psi \vdash C :: \stype$ \` by i.h.\\
    $\Psi \vdash M_0 : \tau$, $\Psi \vdash M_1 : \tau$ and $\Psi
    \vdash \tau :: \type$ \` by i.h.\\
    $\Psi ; \Ga ; \D, x{:}\forall y{:}\tau.A \vdash x \langle M_0
    \rangle.P_0 :: z{:}C$ \` by $\lft\forall$\\
    $\Psi ;  \D, x{:}\forall y{:}\tau.A \vdash$ \` by subderivation
    lemma\\
    $\Psi \vdash \forall y{:}\tau.A :: \stype$ \` by definition\\
    $\Psi , y{:}\tau \vdash A :: \stype$ \` by inversion for
    $\forall\exists$\\
    $\Psi \vdash A\{M_0/y\} = A\{M_1/y\} :: \stype$ \` by
    functionality\\
    $\Psi \vdash A\{M_1/y\} :: \stype$ \` by substitution\\
    $\Psi ; \Ga ; \D , x{:}A\{M_1/y\} \vdash Q_0 :: z{:}C$ \` by
    context conversion rule\\
    $\Psi ; \Ga ; \D , x{:}\forall y{:}\tau.A \vdash x\langle M_1
    \rangle . Q_0 :: z{:}C$ \` by $\lft\forall$
  \end{tabbing}

\item[Case:] $\rgt{\m{PEqConv}}$

  \begin{tabbing}
$\Psi ; \Ga ; \D \vdash P = Q :: z{:}A$ and $\Psi \vdash A = B ::
\stype$ \` by inversion\\
$\Psi ; \Ga ; \D \vdash P :: z{:}A$, $\Psi ; \Ga ; \D \vdash Q ::
z{:}A$,\\
$\Psi \vdash A :: \stype$ and $\Psi \vdash B :: \stype$ \` by i.h.\\
$\Psi ; \Ga ; \D \vdash P :: z{:}B$ \` by $\rgt{\m{PEqConv}}$\\
$\Psi ; \Ga ; \D \vdash Q :: z{:}B$ \` by $\rgt{\m{PEqConv}}$\\
  \end{tabbing}

Remaining cases are identical.  
  
\end{description}
\end{proof}

\begin{theorem}[Functionality for Equality]
  Assume $\Psi \vdash M = N : \tau$:

  \begin{enumerate}
\item If $\Psi , x{:}\tau \vdash M_0 = M_1 : \sigma$ then $\Psi \vdash
M_0\{M/x\} = M_1\{N/x\} : \sigma\{M/x\}$
\item If $\Psi , x{:}\tau \vdash \sigma_1 = \sigma_2 :: K$ then $\Psi
  \vdash \sigma_1 \{M/x\} = \sigma_2 \{N/x\} :: K\{M/x\}$
\item If $\Psi ,x{:}\tau \vdash A = B :: K$ then $\Psi \vdash A
  \{M/x\} = B\{N/x\} :: K\{M/x\}$
\item If $\Psi , x{:}\tau \vdash K_1 = K_2$ then $\Psi \vdash
  K_1\{M/x\} = K_2\{N/x\}$
\item If $\Psi , x{:}\tau ; \Ga ; \D \vdash P = Q :: z{:}A$ then $\Psi
  ; \Ga\{M/x\} ; \D\{M/x\} \vdash P\{M/x\} = Q\{N/x\} :: z{:}A\{M/x\}$
  \end{enumerate}
\end{theorem}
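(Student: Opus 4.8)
The plan is to avoid a fresh induction and instead \emph{derive} each clause by composing three results already established---the Substitution Lemma (Lemma~\ref{lem:trans}), Validity for Equality, and Functionality of Typing (Lemma~\ref{lem:functionalitytyping})---glued together by transitivity of the appropriate definitional equality. The governing observation is that the two sides of each goal differ in two \emph{independent} ways: the inner objects differ (e.g.\ $M_0$ vs.\ $M_1$) and the substituted terms differ ($M$ vs.\ $N$). I would therefore route through the intermediate object $M_1\{M/x\}$ and treat each difference on its own.

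First I would extract the auxiliary typing facts. Applying Validity for Equality to the global hypothesis $\Psi \vdash M = N : \tau$ yields $\Psi \vdash M : \tau$ and $\Psi \vdash N : \tau$ (and $\Psi \vdash \tau :: \type$), which are precisely the side conditions demanded by Functionality of Typing and the Substitution Lemma. Applying Validity for Equality to the inner hypothesis (e.g.\ $\Psi , x{:}\tau \vdash M_0 = M_1 : \sigma$ in clause~1) yields well-typedness in the extended context, $\Psi , x{:}\tau \vdash M_1 : \sigma$.

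The two halves then proceed as follows, illustrated on clause~1. For the inner difference, substituting $M$ for $x$ in the equality $M_0 = M_1$ via the Substitution Lemma (which covers definitional-equality judgments, not only typing) gives $\Psi \vdash M_0\{M/x\} = M_1\{M/x\} : \sigma\{M/x\}$. For the outer difference, Functionality of Typing applied to $\Psi , x{:}\tau \vdash M_1 : \sigma$ with the equal terms $M,N$ gives $\Psi \vdash M_1\{M/x\} = M_1\{N/x\} : \sigma\{M/x\}$. Since both equalities live at the same type $\sigma\{M/x\}$, transitivity delivers $\Psi \vdash M_0\{M/x\} = M_1\{N/x\} : \sigma\{M/x\}$. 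Clauses~2--4 (type, session-type, and kind equality) follow the identical two-step-then-transitivity template, invoking the matching clauses of Validity for Equality, the Substitution Lemma, and Functionality of Typing; here the two output types $K\{M/x\}$ already coincide, so no conversion is needed.

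The one point requiring extra care---and hence the main obstacle---is the process clause~5, where the two intermediate equalities do \emph{not} immediately share a type. The Substitution Lemma yields $P\{M/x\} = Q\{M/x\} :: z{:}A\{M/x\}$, whereas Functionality of Typing (clause~4, applied to $\Psi , x{:}\tau ; \Ga ; \D \vdash Q :: z{:}A$ obtained from Validity for Equality) yields $Q\{M/x\} = Q\{N/x\} :: z{:}A\{N/x\}$, at the \emph{other} type $A\{N/x\}$. To reconcile them I would first establish $\Psi \vdash A\{M/x\} = A\{N/x\} :: \stype$ by Functionality of Typing on the session type $A$ (using $\Psi , x{:}\tau \vdash A :: \stype$ from Validity for Equality), and then apply the conversion rule for process equality ($\rgt{\m{PEqConv}}$, modulo symmetry of session-type equality) to retype the second equality at $z{:}A\{M/x\}$. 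With both equalities now at the common type $z{:}A\{M/x\}$, transitivity closes the goal. This type-bookkeeping reconciliation is the only genuinely nontrivial step; everything else is mechanical reuse of the prior lemmas.
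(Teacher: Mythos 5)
Your proposal is correct and follows essentially the same route as the paper's proof: validity extracts the needed typing facts, the Substitution Lemma (applied to the equality judgment) handles the inner difference, Functionality of Typing handles the outer difference, and transitivity glues the two halves together through the intermediate object $M_1\{M/x\}$ (resp.\ $Q\{M/x\}$). Your extra conversion step in clause~5 is a point the paper silently elides (its Functionality of Typing lemma states the process conclusion at $z{:}A\{N/x\}$ yet is used at $z{:}A\{M/x\}$), so your bookkeeping there is if anything more careful than the original.
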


\begin{proof}
  {\bf (1)}

  \begin{tabbing}
    $\Psi , x{:}\tau \vdash M_0 = M_1 : \sigma$ \` assumption\\
    $\Psi \vdash M = N : \tau$ \` assumption\\
    $\Psi \vdash M : \tau$, $\Psi \vdash N : \tau$ and $\Psi \vdash
    \tau :: \type$ \` by validity\\
    $\Psi , x{:}\tau \vdash M_0 : \sigma$,     $\Psi , x{:}\tau \vdash
    M_1 : \sigma$ and $\Psi , x{:}\tau \vdash \sigma :: \type$ \` by
    validity\\
    $\Psi \vdash M_0\{M/x\} = M_1\{M/x\} : \sigma \{M/x\}$ \` by
    substitution\\
    $\Psi \vdash M_1 \{M/x\} = M_1\{N/x\} : \sigma\{M/x\}$ \` by
    functionality\\
    $\Psi \vdash M_0 \{M/x\} = M_1\{N/x\} : \sigma\{M/x\}$ \` by transitivity
  \end{tabbing}

  {\bf (2)}

  \begin{tabbing}
    $\Psi , x{:} : \tau ; \Ga ; \D \vdash P = Q :: z{:}A$ \` assumption\\
    $\Psi \vdash M = N : \tau$ \` assumption\\
    $\Psi \vdash M : \tau$, $\Psi \vdash N : \tau$ and $\Psi \vdash
    \tau :: \type$ \` by validity\\
    $\Psi , x{:}\tau ; \Ga ; \D \vdash P :: z{:}A$, $\Psi , x{:}\tau ;
    \Ga ; \D \vdash Q :: z{:}A$\\
    and $\Psi , x{:}\tau \vdash A :: \stype$ \` by validity\\
    $\Psi ; \Ga \{M/x\} ; \D\{M/x\} \vdash P\{M/x\} = Q \{M/x\} ::
    z{:}A\{M/x\}$ \` by substitutition\\
    $\Psi ; \Ga \{M/x\} ; \D\{M/x\} \vdash Q\{M/x\} = Q\{N/x\} ::
    z{:}A\{M/x\}$ \` by functionality\\
    $\Psi ; \Ga \{M/x\} ; \D\{M/x\} \vdash P\{M/x\} = Q\{N/x\} ::
    z{:}A\{M/x\}$ \` by transitivity
  \end{tabbing}

 Remaining cases are identical, appealing to validity, substitution
 and functionality of typing.
\end{proof}

We omit the analogue functionality property for type substitution.

\begin{lemma}[Inversion]

  \begin{enumerate}
\item If $\Psi \vdash x {:} \tau$ then $x{:}\sigma \in \Psi$ with
  $\Psi \vdash \tau = \sigma :: \type$
\item If $\Psi \vdash M_1 \, M_2 : \sigma$ then $\Psi \vdash M_1 : \Pi
  x{:}\tau_1.\tau_2$, $\Psi \vdash M_2 : \tau_1$ and $\Psi \vdash
  \sigma\{M_2/x\} = \tau_2 :: \type$
\item If $\Psi \vdash \lambda x{:}\tau.M : \sigma$ then $\Psi \vdash
  \sigma = \Pi x{:}\tau.\sigma' :: \type$, $\Psi \vdash \tau :: \type$
  and $\Psi , x{:}\tau \vdash M : \sigma'$
\item If $\Psi \vdash \Pi x{:}\tau_1.\tau_2 :: K$ then $\Psi \vdash K
  = \type$, $\Psi \vdash \tau_1 :: \type$ and $\Psi , x{:}\tau_1
  \vdash \tau_2 :: \type$
\item If $\Psi \vdash \lambda x{:}\tau.\sigma :: K$ then $\Psi \vdash
  K = \Pi x{:}\tau.K'$, $\Psi \vdash \tau :: \type$ and $\Psi,
  x{:}\tau \vdash \sigma :: K'$
\item If $\Psi \vdash \tau\, M :: K$ then $\Psi \vdash \tau :: \Pi
x{:}\tau_0.K_1$, $\Psi \vdash M : \tau_0$ and $\Psi \vdash K =
K_1\{M/x\}$
\item If $\Psi \vdash \lambda t{::}K .\tau :: K'$ then $\Psi \vdash
  K' = \Pi t{::}K.K''$, $\Psi \vdash K$ and $\Psi,
  t{::}K \vdash \tau :: K''$
\item If $\Psi \vdash \tau\, \sigma :: K$ then $\Psi \vdash \tau :: \Pi
t{::}K_0.K_1$, $\Psi \vdash \sigma : K_0$ and $\Psi \vdash K =
K_1\{\sigma/t\}$

\item If $\Psi \vdash \Pi x{:}\tau.K$ then $\Psi \vdash \tau :: \type$
  and $\Psi , x{:}\tau \vdash K$

\item If $\Psi \vdash \Pi t{::}K_1.K_2$ then $\Psi \vdash K_1$
  and $\Psi , t{::}K_1 \vdash K_2$

\item If $\Psi \vdash \monad{\Ga ; \D \vdash c{:}A} :: K$ then $\Psi
  \vdash K = \type$, $\Psi \vdash \Ga :: \stype$, $\Psi \vdash \D ::
  \stype$ and $\Psi \vdash A :: \stype$
\item If $\Psi ; \Ga ; \D \vdash z(x{:}\tau).P :: z{:}A$ then $\Psi
  \vdash A = \forall x{:}\tau.A'$ and $\Psi \vdash \tau :: \stype$ and
  $\Psi , x{:}\tau ; \Ga ; \D \vdash P :: z{:}A'$
\item If $\Psi ; \Ga ; \D , x{:}A \vdash x\langle M \rangle_{\forall x{:}\tau.A'}.P ::
  z{:}C$ then $\Psi \vdash A = \forall y{:}\tau.A' :: \stype$, $\Psi
  \vdash \tau :: \type$, $\Psi \vdash M : \tau$ and $\Psi ; \Ga ; \D ,
  x{:}A'\{M/y\} \vdash P :: z{:}C$
\item If $\Psi ; \Ga ; \D \vdash z\langle M \rangle_{\exists x{:}\tau.A'}.P :: z{:}A$ then
  $\Psi \vdash A = \exists x{:}\tau.A' :: \stype$, $\Psi \vdash \tau
  :: \type$ and $\Psi , y{:}\tau ; \Ga ; \D , x{:}A' \vdash P ::
  z{:}C$
\item If $\Psi \vdash \forall x{:}\tau . A :: K$ then $\Psi \vdash K =
  \stype$, $\Psi \vdash \tau :: \type$, $\Psi , x{:}\tau \vdash A ::
  \stype$
\item If $\Psi \vdash \exists x{:}\tau . A :: K$ then $\Psi \vdash K =
  \stype$, $\Psi \vdash \tau :: \type$, $\Psi , x{:}\tau \vdash A ::
  \stype$
\item If $\Psi \vdash \lambda x{:}\tau.A :: K$ then $\Psi \vdash K =
  \hPi x{:}\tau.K'$, $\Psi \vdash \tau :: \type$ and $\Psi , x{:}\tau
  \vdash A :: K'$
\item If $\Psi \vdash A\,M :: K$ then $\Psi \vdash A :: \hPi
  x{:}\tau_0.K'$, $\Psi \vdash M : \tau_0$ and $\Psi \vdash K =
  K'\{M/x\}$
\item If $\Psi \vdash \lambda t{::}K.A :: K'$ then $\Psi \vdash K' =
  \Pi t{::}K.K''$, $\Psi \vdash K$ and $\Psi , t{::}K
  \vdash A :: K''$
\item If $\Psi \vdash A\, B :: K$ then $\Psi \vdash A :: \Pi
  t{:}K_0.K_1$, $\Psi \vdash B :: K_0$ and $\Psi \vdash K = K_1\{B/t\}$
  
\end{enumerate}
\end{lemma}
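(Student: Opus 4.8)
The plan is to prove each clause by induction on the height of the given typing (resp.\ well-formedness) derivation, treating the clauses essentially independently. The key observation is that, for each fixed syntactic shape occurring in a hypothesis, only two families of rules can sit at the root of its derivation: the unique \emph{structural} rule introducing that shape (e.g.\ $(\Pi I)$ for $\lambda x{:}\tau.M$, $(\Pi E)$ for $M_1\,M_2$, the $\Pi$-formation rule for $\Pi x{:}\tau.\sigma$, $(\rgt\forall)$ for an offer $z(x{:}\tau).P$, and so on), or a \emph{conversion} rule ($(\m{Conv})$ and the type/kind conversion rules at the functional level, and $(\rgt{\m{Conv}})$/$(\lft{\m{Conv}})$ at the process level). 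In the structural case the conclusion is read off directly, with the required definitional equality supplied by reflexivity, so that e.g.\ $\Psi \vdash K = \type$ or $\Psi \vdash A = \forall x{:}\tau.A'$ holds on the nose. In the conversion case the premise is a derivation of the \emph{same} judgement about the \emph{same} subject, so the induction hypothesis applies and yields the desired decomposition up to an equality $\tau_0 = \sigma$ (resp.\ $K_0 = K$, $A_0 = A$); composing this with the equality witnessed by the conversion rule, via symmetry and transitivity of definitional equality, re-establishes the conclusion for the converted type, kind, or session type.

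For the composition step to be legal I would invoke \emph{Validity for Typing} and \emph{Validity for Equality} throughout: these guarantee that whenever two equalities are chained by transitivity both sides are well-formed at the common kind, and they also discharge the well-formedness side-conditions appearing in several clauses (for instance $\Psi \vdash \tau :: \type$ and $\Psi, x{:}\tau \vdash A' :: \stype$ in the $\forall/\exists$ clauses, obtained by the earlier inversion-for-$\forall\exists$ and inversion-for-products lemmas applied to the well-formedness of $\forall x{:}\tau.A'$). Note that inversion does \emph{not} require injectivity of the type constructors --- it is the other way around --- so no circularity arises. The functional-type and kind clauses (4--11, 15--21) follow this template verbatim, and the kind clauses 9--10 are in fact immediate, since kind well-formedness has no conversion rule and each $\Pi$-kind shape is produced by a single formation rule.

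The genuine difficulty lies in the process clauses (12--14), for two reasons. First, the input/output process syntax is \emph{overloaded}: a form such as $z(x).P$ is the right rule for both $\forall$ and $\lolli$ (and the left rule for $\exists$ and $\tensor$), so I must use the dependent annotation $x{:}\tau$ with $\tau$ a functional type to single out the $(\rgt\forall)$ reading and thereby pin down which structural rule is admissible; the annotations on the outputs $c\langle M\rangle_{\forall x{:}\tau.A}$ play the same disambiguating role in clauses 13--14. Second, and more delicately, the process conversion rules act on \emph{contexts} as well as on the offered type: $(\rgt{\m{Conv}})$ changes $A$ up to session-type equality, handled exactly as above by transitivity plus validity, but $(\lft{\m{Conv}})$ replaces $\Ga;\D$ by a definitionally equal $\Ga';\D'$ while leaving the subject fixed. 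To peel this rule off I would apply the induction hypothesis to the premise typed in $\Ga';\D'$, obtaining the continuation typing $\Psi, x{:}\tau ; \Ga' ; \D' \vdash P :: z{:}A'$, and then transport it back along the context equality $\Psi;\Ga';\D' = \Psi;\Ga;\D$ using the Context Conversion lemmas. The main bookkeeping obstacle is ensuring that the decomposition survives this transport and that all the equalities produced --- on $\tau$, on $A'$, and on the contexts --- remain coherent; once the conversion rules are correctly absorbed, each clause collapses onto its corresponding structural premise.
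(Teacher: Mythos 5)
Your proposal is correct and matches the paper's (one-line) proof, which is exactly "induction on the given derivation, most cases requiring validity": the structural rule gives the decomposition with reflexivity supplying the equalities, and the conversion rules are absorbed by the induction hypothesis plus symmetry/transitivity, with Validity discharging the well-formedness premises needed to chain equalities. Your elaboration of the process cases is also sound, though note that the transport across $(\lft{\m{Conv}})$ is most directly achieved by re-applying $(\lft{\m{Conv}})$ itself to the continuation's typing, rather than via the $\Psi$-context conversion lemmas, which concern the functional context.
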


\begin{proof}
By induction on the given derivation. Most cases require validity.
\end{proof}

\begin{theorem}[Equality Inversion]
  \begin{enumerate}
\item If $\Psi \vdash \tau = \Pi x{:}\tau_0.\tau_1 :: \type$ then
  $\Psi \vdash \tau = \Pi x{:}\sigma_0.\sigma_1 :: \type$ with $\Psi \vdash \sigma_0 =
  \tau_0 :: \type$ and $\Psi , x{:}\sigma_0 \vdash \sigma_1 = \tau_1
  :: \type$
\item If $\Psi \vdash K = \type$ then $K = \type$
\item If $\Psi \vdash K = \Pi x{:}\tau_0.K'$ then $\Psi \vdash K = \Pi
  x{:}\sigma_0.K''$ with $\Psi \vdash \sigma_0 = \tau_0 :: \type$ and
  $\Psi ,x{:}\sigma_0 \vdash K'' = K'$
\item If $\Psi \vdash K = \Pi t {::}K_1.K_2$ then $\Psi \vdash K =
  \Pi t{::}K_1'.K_2'$ with $\Psi \vdash K_1' = K_1$ and $\Psi ,
  t{::}K_1' \vdash K_2' = K_2$
  
\item $\Psi \vdash A = \forall x{:}\tau_0.A_0 :: \stype$ then $\Psi
  \vdash A =
  \forall x{:}\sigma_0.B_0 :: \stype$ with $\Psi \vdash \sigma_0 = \tau_0 :: \type$ and
  $\Psi , x{:}\sigma_0 \vdash B_0 = A_0 :: \stype$
\item $\Psi \vdash A = \exists x{:}\tau_0.A_0 :: \stype$ then $\Psi
  \vdash A =
  \exists x{:}\sigma_0.B_0 :: \stype$ with $\Psi \vdash \sigma_0 = \tau_0 :: \type$ and
  $\Psi , x{:}\sigma_0 \vdash B_0 = A_0 :: \stype$
\item $\Psi \vdash \tau = \lambda x{:}\tau_0 . \sigma :: K$ then $\Psi
  \vdash \tau
  = \lambda x{:} \tau_1 . \sigma' :: \Pi x{:}\tau_1.K_0$ with $\Psi \vdash \tau_1 = \tau_0
  :: \type$ and $\Psi , x{:}\tau_1 \vdash \sigma' = \sigma :: K_0$,
  for some $K_0$
\item $\Psi \vdash \tau = \tau_0\,M :: K$ then
  $\Psi \vdash \tau = \tau_1\, N :: K$ with $\Psi \vdash \tau_1 = \tau_0 :: \Pi
  x{:}\sigma . K_0$, $\Psi \vdash N = M : \sigma$ and $K = K_0\{N/x\}$

\item $\Psi \vdash \tau = \lambda t{::}K . \sigma :: K'$ then $\Psi
  \vdash \tau
  = \lambda t{::} K_0 . \sigma' :: \Pi t{::}K_0.K''$ with $\Psi \vdash
  K_0 = K
 $ and $\Psi , t{::}K_0 \vdash \sigma' = \sigma :: K''$,
  for some $K''$
\item $\Psi \vdash \tau = \tau_0\,\sigma_0 :: K$ then
  $\Psi \vdash \tau = \tau_1\, \sigma_1 :: K$ with $\Psi \vdash \tau_1 = \tau_0 :: \Pi
  t{::}K_1 . K_0$, $\Psi \vdash \sigma_1 = \sigma_0 :: K_1$ and $K = K_0\{\sigma_1/t\}$
  
\item $\Psi \vdash A = \lambda x{:}\tau_0 . A_0 :: K$ then $\Psi
  \vdash A
  = \lambda x{:} \tau_1 . A_0' :: \Pi x{:}\tau_1.K_0$ with $\Psi \vdash \tau_1 = \tau_0
  :: \type$ and $\Psi , x{:}\tau_1 \vdash A_0' = A_0 :: K_0$,
  for some $K_0$
\item $\Psi \vdash A = A_0\,M :: K$ then
  $\Psi \vdash A = A_1\, N$ with $\Psi \vdash A_1 = A_0 :: \Pi
  x{:}\sigma . K_0$, $\Psi \vdash N = M : \sigma$ and $K = K_0\{N/x\}$

\item $\Psi \vdash B = \lambda t{::}K . A :: K'$ then $\Psi
  \vdash B
  = \lambda t{::} K_0 . A' :: \Pi t{::}K_0.K''$ with $\Psi \vdash K_0 
 = K$ and $\Psi , t{::}K_0 \vdash A' = A :: K''$,
  for some $K''$
\item $\Psi \vdash A = A_0\,B_0 :: K$ then
  $\Psi \vdash A = A_1\, B_1 :: K$ with $\Psi \vdash A_1 = A_0 :: \Pi
  t{::}K_1 . K_0$, $\Psi \vdash B_1 = B_0 :: K_1$ and $K = K_0\{B_1/t\}$
\end{enumerate}
\end{theorem}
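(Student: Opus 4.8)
The plan is to prove all parts of the statement simultaneously by induction on the derivation of the given definitional equality, tracking in each case the shape of the constructor exhibited on the right-hand side (a $\Pi$-type, a $\forall$- or $\exists$-session type, a type- or session-level $\lambda$, an application, or the base kind $\type$). For the congruence rules the conclusion is immediate: a derivation of, e.g., $\Psi \vdash \tau = \Pi x{:}\tau_0.\tau_1 :: \type$ ending in $(\m{TEq}\Pi)$ already presents $\tau$ as some $\Pi x{:}\sigma_0.\sigma_1$ together with $\Psi \vdash \sigma_0 = \tau_0 :: \type$ and $\Psi, x{:}\sigma_0 \vdash \sigma_1 = \tau_1 :: \type$; reflexivity is trivial; and the base-kind case (part 2) holds because $\type$ and $\stype$ are rigid. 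Validity for Equality supplies the well-formedness side conditions, and the Inversion lemmas (for products and for $\forall\exists$) let us read off the component judgments at the correct kinds. The conversion rules $(\m{TEqConv})$ and $(\m{STEqConv})$ are discharged by appealing to the kind-equality parts (2)--(4) of the very statement being proved: a kind equal to $\type$, $\stype$, or a $\Pi$-kind is again of that form, so the head-constructor analysis is stable under a change of classifying kind. This is exactly why the parts must be established by a single mutual induction.

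The computational rules $(\m{TEq}\beta)$, $(\m{STEq}\beta)$, their type-level analogues, and the $\eta$ rules are the next-most delicate. Here the left endpoint is an application or an abstraction while the contractum may already be constructor-headed, so I would use the Substitution Lemma (Lemma~\ref{lem:trans}) together with the Inversion lemma to show that the $\beta$-reduct inherits the claimed head constructor at the substituted kind, and that $\eta$-expansion preserves it. These steps are essentially bookkeeping once the substitution of the argument into the body is tracked at the level of kinds.

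The main obstacle is transitivity (and, dually, symmetry): a derivation ending in $(\m{TEqT})$ factors $\Psi \vdash \tau = \tau' :: \type$ through an intermediate type $\rho$ that need \emph{not} be constructor-headed, so a naive induction cannot propagate ``is a $\Pi$-type with such components'' from $\tau'$ back to $\tau$. The resolution I would adopt is to show that declarative equality coincides with joinability under a confluent reduction on types and kinds: one defines a parallel reduction covering the $\beta$/$\eta$ and type-level reductions, proves it Church--Rosser, and observes that a $\Pi$-type (and likewise a $\forall$- or $\exists$-type) can only reduce to, and be reached from, types with the same head constructor, whose components are themselves joinable. Constructor injectivity then becomes an invariant of the common reduct, and symmetry and transitivity are subsumed because joinability is manifestly symmetric and transitive. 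Equivalently, and closer to the cited development of \cite{DBLP:journals/tocl/HarperP05}, one introduces a syntax-directed \emph{algorithmic} equality (weak-head normalise, then compare structurally) and proves its completeness with respect to the declarative system via a Kripke logical relation, whence injectivity is immediate from syntax-directedness. The genuinely new difficulty relative to the purely functional setting is that term equality ultimately bottoms out in \emph{process} equality, which by $(\m{PEqRed})$ and the commuting-conversion rules includes both reduction and type-preserving reshuffling of communication; the confluence (respectively logical-relation) argument must therefore be arranged so that these process-level equations are absorbed into the common reduct (respectively into the relation at session-typed channels), and this is where the bulk of the technical effort lies.
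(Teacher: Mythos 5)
Your proposal is correct in outline but takes a genuinely different --- and considerably heavier --- route than the paper. The paper proves the theorem by a direct induction on the given equality derivation, with no detour through confluence or algorithmic equality: the congruence cases are immediate, as you note; in the $(\m{TEqT})$ case it applies the induction hypothesis to the premise $\Psi \vdash \tau' = \Pi x{:}\tau_0.\tau_1 :: \type$ to expose the intermediate type $\tau'$ as a $\Pi$-type with components $\tau_0',\tau_1'$, applies the induction hypothesis again to $\Psi \vdash \tau = \tau'$, and chains the resulting component equalities by transitivity and context conversion; in the $(\m{TEq}\beta)$ case it observes that a redex $(\lambda y{:}\tau.\sigma)\,M$ judged equal to a $\Pi$-type forces $\sigma$ itself to be $\Pi$-headed (by the definition of substitution), and concludes using validity, the substitution lemma and reflexivity. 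That said, your diagnosis of transitivity as the delicate point is accurate and worth recording: the paper's $(\m{TEqT})$ case silently reads the induction hypothesis as returning a \emph{syntactic} decomposition of the intermediate type, whereas the theorem as stated only promises a provable equality to some $\Pi$-type, and it is precisely this elision that the Harper--Pfenning-style development you sketch (weak-head normalisation plus a Kripke logical relation, or joinability under a confluent parallel reduction) is designed to close. Your approach buys robustness --- including a principled treatment of the process-level equations generated by $(\m{PEqRed})$ and the commuting conversions, which a naive confluence argument would not absorb --- at the cost of a substantial metatheoretic development that the paper does not carry out; the paper's approach buys a short, elementary proof at the cost of the strengthening you identified.
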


\begin{proof}
  By induction on the given equality derivations.

    {\bf (1)}
  \begin{description}

  \item[Case:] $\m{TEqT}$

    \begin{tabbing}
$\Psi \vdash \tau = \tau' :: \type$ and $\Psi \vdash \tau' = \Pi
x{:}\tau_0.\tau_1 :: \type$ \` assumption\\
 $\tau' = \Pi x{:}\tau'_0.\tau'_1$ with $\Psi \vdash \tau'_0 =
  \tau_0 :: \type$ and $\Psi , x{:}\tau'_0 \vdash \tau'_1 = \tau_1
  :: \type$ \` by i.h.\\
$\tau = \Pi x{:}\sigma_0.\sigma_1$ with $\Psi \vdash \sigma_0 = \tau_0' ::
\type$ and $\Psi , x{:}\sigma_0 \vdash \sigma_1 = \tau'_1 :: \type$ \`
by i.h.\\
$\Psi \vdash \sigma_0 = \tau_0 :: \type$ \` by transivitity\\
$\Psi , x{:}\sigma_0 \vdash \tau_1' = \tau_1 :: \type$ \` by context
conversion\\
$\Psi , x{:}\sigma_0 \vdash \sigma_1 = \tau_1 :: \type$ \` by
transitivity
\end{tabbing}

\item[Case:] $\m{TEq}\beta$

  \begin{tabbing}
$\Psi , y{:}\tau \vdash \sigma :: \type$ and $\Psi \vdash
M : \tau$, $K\{M/y\} = \type$ and\\
$\Pi x{:}\tau_0.\tau_1 = \sigma\{M/y\}$ \` by inversion\\
$\Psi \vdash (\lambda y{:}\tau.\sigma)\, M = \Pi x{:}\tau_0.\tau_1 :: \type$ \`
assumption\\
$\sigma = \Pi x{:}\tau.\sigma$ \` by definition of substitution\\
$\Psi , y{:}\tau \vdash \Pi x{:}\tau . \sigma :: \type$ \` by def.\\
$\Psi \vdash (\lambda y{:}\tau.\sigma)\, M = \Pi
x{:}\tau.\sigma\{M/y\} :: \type$ \` by rule\\
$\Psi \vdash \tau :: \type$ \` by validity\\
$\Psi \vdash \tau = \tau :: \type$ \` by reflexivity\\
$\Psi \vdash \sigma\{M/y\} :: \type$ \` by substitution\\
$\Psi \vdash \sigma\{M/y\} = \sigma\{M/y\} :: \type$ \` by reflexivity
  \end{tabbing}

The other cases follow similar patterns.
  
  \end{description}
  
\end{proof}

\begin{lemma}[Injectivity of Products]
  \begin{enumerate}
\item If $\Psi \vdash \Pi x{:}\tau.\sigma = \Pi x{:}\tau'.\sigma' ::
  \type$ then $\Psi \vdash \tau = \tau' :: \type$ and $\Psi , x{:}\tau
  \vdash \sigma = \sigma' :: \type$
\item If $\Psi \vdash \Pi x{:}\tau_1 . K_1 = \Pi x{:}\tau_2 . K_2$
  then $\Psi \vdash \tau_1 = \tau_2 :: \type$ and $\Psi , x{:}\tau_1
  \vdash K_1 = K_2$
\item If $\Psi \vdash \forall x{:}\tau_1.A_1 = \forall x{:}\tau_2
  . A_2 :: \stype$ then $\Psi \vdash \tau_1 = \tau_2 :: \type$ and
  $\Psi , x{:}\tau_1 \vdash A_1 = A_2 :: \stype$
  \end{enumerate}
\end{lemma}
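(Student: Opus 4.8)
The plan is to derive all three statements as corollaries of the Equality Inversion theorem, treating them uniformly: parts~(1), (2) and~(3) correspond respectively to parts~(1), (3) and~(5) of that theorem (the extraction results for $\Pi$-types at kind $\type$, for dependent kinds $\Pi x{:}\tau.K$, and for universal sessions $\forall x{:}\tau.A$). The guiding observation is that Equality Inversion, applied to an equation one of whose sides is already a binder, returns head components that are determined by \emph{that side alone}: the target binder appears only in the component equalities, not in the extracted witnesses.

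First, for part~(1), I would read the hypothesis $\Psi \vdash \Pi x{:}\tau.\sigma = \Pi x{:}\tau'.\sigma' :: \type$ as asserting that the left-hand type is equal to the product $\Pi x{:}\tau'.\sigma'$, and apply Equality Inversion~(1) with the arbitrary type instantiated to $\Pi x{:}\tau.\sigma$ and the target product to $\Pi x{:}\tau'.\sigma'$ (so $\tau_0 := \tau'$, $\tau_1 := \sigma'$). This yields witnesses $\sigma_0,\sigma_1$ with $\Psi \vdash \Pi x{:}\tau.\sigma = \Pi x{:}\sigma_0.\sigma_1 :: \type$, $\Psi \vdash \sigma_0 = \tau' :: \type$ and $\Psi, x{:}\sigma_0 \vdash \sigma_1 = \sigma' :: \type$. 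Since the left-hand side is already a $\Pi$-type, hence in weak-head-normal form, the components $\sigma_0,\sigma_1$ produced by the extraction are $\tau$ and $\sigma$ (up to definitional equality); concretely, the inductive extraction bottoms out at the reflexivity/$\m{TEq}\Pi$ case on a product head and copies its own domain and codomain. Substituting, $\Psi \vdash \tau = \tau' :: \type$ follows directly, and $\Psi, x{:}\tau \vdash \sigma = \sigma' :: \type$ follows after rewriting the context binding along $\Psi \vdash \sigma_0 = \tau :: \type$ using the Context Conversion lemma (Lemma~\ref{lem:ctxtconvpsi}). The well-formedness side conditions needed to form the reflexivity steps ($\Psi \vdash \tau :: \type$ and $\Psi, x{:}\tau \vdash \sigma :: \type$) are supplied by Inversion for Products together with Validity.

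Parts~(2) and~(3) are verbatim analogues. For~(2) I would invoke Equality Inversion~(3) on $\Psi \vdash \Pi x{:}\tau_1.K_1 = \Pi x{:}\tau_2.K_2$, extracting a dependent kind $\Pi x{:}\sigma_0.K''$ equal to the left side with $\Psi \vdash \sigma_0 = \tau_2 :: \type$ and $\Psi, x{:}\sigma_0 \vdash K'' = K_2$; as above the left kind is already head-normal, so $\sigma_0, K''$ are $\tau_1, K_1$, and Context Conversion yields the claim (the well-formedness conditions now coming from Inversion for Products~(2)). For~(3) I would use Equality Inversion~(5) on $\Psi \vdash \forall x{:}\tau_1.A_1 = \forall x{:}\tau_2.A_2 :: \stype$ in exactly the same way, obtaining $\Psi \vdash \tau_1 = \tau_2 :: \type$ and $\Psi, x{:}\tau_1 \vdash A_1 = A_2 :: \stype$, with the necessary inversion supplied by Inversion for $\forall\exists$ and the context mismatch again discharged by Lemma~\ref{lem:ctxtconvpsi} (the conclusion is a session-type equality judgment of the form $\Psi \vdash \mathcal{J}$, so the functional Context Conversion lemma suffices).

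The main obstacle is precisely the step I glossed as ``the extracted witnesses coincide with the left-hand components.'' Its justification lies not in the \emph{statement} of Equality Inversion but in its proof: the extraction computes a weak-head-normal form depending only on the left type, so when that type is already a binder the witnesses are its own immediate subterms (modulo definitional equality). If one wishes to avoid leaning on the proof internals, the alternative is to prove injectivity by induction on the equality derivation, handling reflexivity, symmetry, and the binder-congruence rule directly, and using Equality Inversion in the transitivity case to put the intermediate type into binder form; the content then reappears as the need to identify the two products extracted from the common intermediate, which is the same canonicity fact. Either way, all remaining steps are routine uses of symmetry, transitivity, Context Conversion and Validity.
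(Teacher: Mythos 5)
Your proposal takes exactly the paper's route: the paper's entire proof of this lemma is the single line ``By equality inversion,'' and you correctly reduce all three parts to the corresponding clauses of the Equality Inversion theorem (with Context Conversion and Validity discharging the bookkeeping). Your additional observation --- that the \emph{statement} of Equality Inversion does not by itself identify the extracted witnesses $\sigma_0,\sigma_1$ with the components of an already-head-normal left-hand side, so one must either appeal to the proof internals (where the conclusion is in fact a syntactic identity for a binder-headed type) or redo the induction directly --- is precisely the step the paper's one-line proof elides, and your treatment of it is sound.
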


\begin{proof}
  By equality inversion.
  
\end{proof}

\unicitythm*

% \begin{theorem}[Unicity of Types and Kinds]

%   \begin{enumerate}
% \item If $\Psi \vdash M : \tau$ and $\Psi \vdash M : \tau'$ then $\Psi
%   \vdash \tau = \tau' :: \type$
% \item If $\Psi \vdash \tau :: K$ and $\Psi \vdash \tau :: K'$ then
%   $\Psi \vdash K = K'$
% \item If $\Psi; \Ga ; \D \vdash P :: z{:}A$ and $\Psi ; \Ga ; \D
%   \vdash P :: z{:}A'$ then $\Psi \vdash A = A' :: \stype$
% \item If $\Psi \vdash A :: K$ and $\Psi \vdash A :: K'$ then $\Psi
%   \vdash K = K'$
%   \end{enumerate}
 
% \end{theorem}

\begin{proof}
By induction on the structure of the given term/type/process.

\begin{description}
\item[Case:] $M$ is $\lambda x{:}\tau_0.M'$

  \begin{tabbing}
$\Psi , x{:}\tau_0 \vdash M' : \sigma$, $\Psi , x{:}\tau_0 \vdash
M' : \sigma'$ with $\Psi \vdash  \tau = \Pi x{:}\tau_0.\sigma :: \type$, \\
$\Psi \vdash \tau_0 :: \type$ and $\Psi \vdash \tau' = \Pi x{:}\tau_0.\sigma' :: \type$ \` by inversion\\
$\Psi , x{:}\tau_0 \vdash \sigma = \sigma' :: \type$ \` by i.h.\\
$\Psi \vdash \Pi x{:}\tau_0.\sigma = \Pi x{:}\tau_0.\sigma' :: \type$
\` by $\m{TEq}\Pi$ rule
\end{tabbing}

\item[Case:] $M$ is $M'\,N'$
\begin{tabbing}
$\Psi \vdash M' N' : \tau$ and $\Psi \vdash M' N' : \tau'$ \`
assumption\\
$\Psi \vdash M' : \Pi x{:}\tau_0.\sigma_0$ and $\Psi \vdash M' : \Pi
x{:}\tau_0'.\sigma_0'$ with $\Psi \vdash \tau = \sigma_0\{N'/x\} :: \type$,\\
$\Psi \vdash N' : \tau_0$, $\Psi \vdash N' : \tau_0'$ and $\Psi \vdash
\tau' = \sigma_0'\{N'/x\} :: \type$ \` by inversion\\
$\Psi \vdash \Pi x{:}\tau_0.\sigma_0 = \Pi x{:}\tau_0' . \sigma_0' ::
\type$ \` by i.h.\\
$\Psi \vdash \tau_0 = \tau_0' :: \type$ \` by i.h.\\
$\Psi , x{:}\tau_0 \vdash \sigma_0 = \sigma_0' :: \type$ \` by
injectivity\\
$\Psi \vdash \sigma_0\{N'/x\} = \sigma_0'\{N'/x\} :: \type$ \` by functionality
\end{tabbing}

\item[Case:] $M$ is $\{ c \leftarrow P \leftarrow \ov{u_j};\ov{d_i}\}$
\begin{tabbing}
  $\Psi \vdash M : \{\Ga ; \D \vdash c{:}A\}$ and $\Psi \vdash M : \{
  \Ga ; \D \vdash c{:}A'\}$ \` assumption\\
 $\Psi ; \Ga ; \D \vdash P :: c{:}A$ and  $\Psi ; \Ga ; \D \vdash P ::
 c{:}A'$ \` by inversion\\
 $\Psi \vdash A = A' :: \stype$ \` by i.h.\\
 Conclude by reflexivity and $\m{TEq}\{\}$
\end{tabbing}

\item[Case:] $M$ is $x$

Direct by inversion lemma.

{\bf (2)}

\item[Case:] $\tau$ is $\Pi x{:}\tau_0.\sigma_0$

  \begin{tabbing}
$\Psi \vdash \Pi x{:}\tau_0.\sigma_0 :: K$ and $\Psi \vdash \Pi
x{:}\tau_0.\sigma_0 :: K'$ \` assumption\\
$\Psi , x{:}\tau_0 \vdash \sigma_0 :: \type$ and $\Psi , x{:}\tau_0
\vdash \sigma_0 ::\type$ and $K = K' = \type$
\\\` by inversion lemma\\
  \end{tabbing}

\item[Case:] $\tau$ is $\lambda x{:}\tau_0.\sigma_0$

  \begin{tabbing}
$\Psi \vdash \lambda x{:}\tau_0.\sigma_0 :: K$ and $\Psi \vdash
\lambda x{:}\tau_0.\sigma_0 :: K'$ \` assumption\\
$\Psi , x{:}\tau_0 \vdash \sigma_0 :: K_0$, $\Psi , x{:}\tau_0 \vdash
\sigma_0 :: K_0'$, $\Psi \vdash \tau_0 :: \type$\\ with $K = \Pi x{:}\tau_0.K_0$ and $K' = \Pi
x{:}\tau_0.K_0'$ \` by inversion lemma\\
$\Psi , x{:}\tau_0 \vdash K_0 = K_0'$ \` by i.h.\\
$\Psi \vdash \Pi x{:}\tau_0 . K_0 = \Pi x{:}\tau_0 . K_0'$ \` by rule
  \end{tabbing}

\item[Case:] $\tau$ is $\tau_0\,M$

  \begin{tabbing}
$\Psi \vdash \tau_0 \, M :: K$ and $\Psi \vdash \tau_0 \, M :: K'$ \`
assumption\\
$\Psi \vdash \tau_0 :: \Pi x{:}\sigma.K_0$ and $\Psi \vdash \tau_0 ::
\Pi x{:}\sigma'.K_0'$, $\Psi \vdash M : \sigma$ and $\Psi \vdash M : \sigma'$ \\
with $K = K_0\{M/x\}$ and $K' = K_0'\{M/x\}$ \` by inversion lemma\\
$\Psi \vdash \Pi x{:}\sigma.K_0 = \Pi x{:}\sigma'.K_0'$ \` by i.h.\\
$\Psi \vdash \sigma = \sigma' :: \type$ \` by i.h.\\
$\Psi , x{:}\sigma \vdash K_0 = K_0'$ \` by injectivity\\
$\Psi \vdash K_0\{M/x\} = K_0'\{M/x\}$ \` by substitution
  \end{tabbing}

\item[Case:] $\tau$ is $\monad{\Ga ; \D \vdash c{:}A}$

  Straightforward by i.h.

  {\bf (3)}
  
\item[Case:] $P$ is $z(x).P_0$

  \begin{tabbing}
    $\Psi ; \Ga ; \D \vdash z(x{:}\tau_0).P_0 :: z{:}A$ and
    $\Psi ; \Ga ; \D \vdash z(x{:}\tau_0).P_0 :: z{:}A'$ \`
    assumption\\
    $A = \forall x{:}\tau_0.A_0$, $A' = \forall x{:}\tau_0.A_0'$,
    $\Psi , x{:}\tau_0 ; \Ga ; \D \vdash P_0 :: z{:}A_0$,\\ 
    $\Psi , x{:}\tau_0 ; \Ga ; \D \vdash P_0 :: z{:}A_0'$,
    and $\Psi \vdash \tau_0 :: \type$
    \` by
    inversion lemma\\
    $\Psi , x{:}\tau_0 \vdash A_0 = A_0'$ \` by i.h.\\
    $\Psi \vdash \forall x{:}\tau_0 A_0 = \forall x{:}\tau_0 A_0'$ \`
    by rule
  \end{tabbing}

\item[Case:] $P$ is $x\langle M \rangle_{\forall x{:}\tau_0.A_0}.P_0$

  \begin{tabbing}
    $\Psi ; \Ga ; \D , x{:}A \vdash x\langle M \rangle_{\forall x{:}\tau_0.A_0}.P_0 :: z{:}C$
    and\\     $\Psi ; \Ga ; \D , x{:}A \vdash x\langle M \rangle.P_0 ::
    z{:}C$ \` assumption\\
    $A = \forall x{:}\tau_0.A_0$, 
    $\Psi \vdash M : \tau_0$, 
    $\Psi ; \Ga ; \D , x{:} A_0\{M/x\} \vdash P_0 :: z{:}C$ \\and
    $\Psi ; \Ga ; \D , x{:} A_0\{M/x\} \vdash P_0 :: z{:}C'$ \` by
    inversion lemma\\
    $\Psi \vdash C = C' :: \stype$ \` by i.h.
  \end{tabbing}

\item[Case:] $P$ is $z\langle M \rangle_{\exists x{:}\tau_0.A_0}.P_0$

  \begin{tabbing}
    $\Psi ; \Ga ; \D \vdash z\langle M \rangle.P_0 :: z{:}A$ and
    $\Psi ; \Ga ; \D \vdash z\langle M \rangle.P_0 :: z{:}A'$ \`
    assumption\\
    $A = \exists x{:}\tau_0.A_0$, $A' = \exists x{:}\tau_0.A_0$,
    $\Psi \vdash M : \tau_0$, \\
    $\Psi ; \Ga ; \D \vdash P_0 :: z{:}A_0\{M/x\}$ and
    $\Psi ; \Ga ; \D \vdash P_0 :: z{:}A_0\{M/x\}$ \\\` by inversion
    lemma
   
  \end{tabbing}

Remaining cases follow similarly.

\end{description}

\end{proof}

\begin{theorem}\label{thm:sreq}
If $\Psi \vdash M : \tau$ and $M \tra{} M'$ then $\Psi \vdash M = M' : \tau$
\end{theorem}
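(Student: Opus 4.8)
The plan is to proceed by induction on the derivation of the term reduction $M \tra{} M'$. Since no reduction can occur inside monadic terms, the only rules in play are the $\beta$-rule $(\lambda x{:}\tau_0.M_0)\,N \tra{} M_0\{N/x\}$ and the standard congruence rule(s) closing reduction under application. In each case the goal is to construct a definitional equality $\Psi \vdash M = M' : \tau$ from the term-equality rules of Appendix~\ref{app:defeqterm}, reconciling the type at which the equality is naturally derived with the declared type $\tau$ via the conversion rule $(\m{TMEqConv})$.

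For the congruence case, suppose $M\,N \tra{} M'\,N$ because $M \tra{} M'$. By the Inversion lemma (application case) applied to $\Psi \vdash M\,N : \tau$, there are $\tau_1,\tau_2$ with $\Psi \vdash M : \Pi x{:}\tau_1.\tau_2$, $\Psi \vdash N : \tau_1$ and $\Psi \vdash \tau = \tau_2\{N/x\} :: \type$. The induction hypothesis applied to $M \tra{} M'$ at type $\Pi x{:}\tau_1.\tau_2$ yields $\Psi \vdash M = M' : \Pi x{:}\tau_1.\tau_2$; combining this with reflexivity $\Psi \vdash N = N : \tau_1$ through the congruence rule $(\m{TMEqApp})$ gives $\Psi \vdash M\,N = M'\,N : \tau_2\{N/x\}$, and one application of $(\m{TMEqConv})$ (using symmetry on the type equality above) retypes this at $\tau$. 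The symmetric congruence in the argument position, if present in the operational semantics, is handled identically, now appealing to the induction hypothesis on $N$ and reflexivity on $M$.

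For the base ($\beta$) case, $M = (\lambda x{:}\tau_0.M_0)\,N$ reduces to $M_0\{N/x\}$. Inversion (application case) on $\Psi \vdash (\lambda x{:}\tau_0.M_0)\,N : \tau$ gives $\Psi \vdash \lambda x{:}\tau_0.M_0 : \Pi y{:}\tau_1.\tau_2$, $\Psi \vdash N : \tau_1$ and $\Psi \vdash \tau = \tau_2\{N/y\} :: \type$; inversion on the abstraction then supplies $\Psi \vdash \tau_0 :: \type$, $\Psi , x{:}\tau_0 \vdash M_0 : \sigma'$ and $\Psi \vdash \Pi y{:}\tau_1.\tau_2 = \Pi x{:}\tau_0.\sigma' :: \type$. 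Injectivity of Products aligns the domains and codomains of these product types, after which the rule $(\m{TMEq}\beta)$ yields $\Psi \vdash (\lambda x{:}\tau_0.M_0)\,N = M_0\{N/x\} : \sigma'\{N/x\}$, and a final conversion step retypes the equality at $\tau$.

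The main obstacle I anticipate lies entirely in the type bookkeeping rather than in the structure of the argument: each equality rule delivers its conclusion at a type computed from the redex (e.g. $\tau_2\{N/x\}$ or $\sigma'\{N/x\}$), whereas the statement demands type $\tau$, so every case must close by threading the inversion-derived type equalities through $(\m{TMEqConv})$, and the $\beta$-case additionally requires Injectivity of Products to identify $\tau_1$ with $\tau_0$ and the two codomains before $(\m{TMEq}\beta)$ can be applied cleanly. All of the supporting machinery — the Inversion lemma, Injectivity of Products, Validity for Equality, and the conversion rule — is available from the preceding development.
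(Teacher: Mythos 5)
Your proposal is correct and follows essentially the same route as the paper: induction on the reduction relation, with the inversion lemma feeding the congruence rule $(\m{TMEqApp})$ in the contextual cases and $(\m{TMEq}\beta)$ in the redex case. You are in fact slightly more careful than the paper's own write-up about re-aligning the resulting type with $\tau$ via $(\m{TMEqConv})$ and about invoking Injectivity of Products in the $\beta$-case, but these are bookkeeping refinements of the same argument, not a different proof.
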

\begin{proof}
By induction on $\tra{}$ relation.

\begin{description}
\item[Case:] 
\[
\inferrule[]
{M \tra{} M'}
{M\, N \tra{} M'\,N}
\]
\begin{tabbing}
  $\Psi \vdash M : \Pi x{:}\tau_0.\sigma_0$,
  $\Psi \vdash \tau_0 :: \type$, $\Psi \vdash N : \tau_0$ and $\Psi
  \vdash M\,N : \sigma_0\{N/x\}$\\ \` by inversion lemma\\
  $\Psi \vdash M = M' : \Pi x{:}\tau_0.\sigma_0$ \` by i.h.\\
  $\Psi \vdash \Pi x{:}\tau_0.\sigma_0 :: \type$ \` by validity\\
  $\Psi \vdash N = N : \tau_0$ \` by reflexivity\\
  $\Psi \vdash M\,N = M'\, N : \sigma_0\{N/x\}$ \` by $\m{TMEq}\Pi$
\end{tabbing}

\item[Case:]
  \[
\inferrule[]
{N \tra{} N'}
{M\, N \tra{} M\,N'}
  \]
  \begin{tabbing}
  $\Psi \vdash M : \Pi x{:}\tau_0.\sigma_0$,
  $\Psi \vdash \tau_0 :: \type$, $\Psi \vdash N : \tau_0$ and $\Psi
  \vdash M\,N : \sigma_0\{N/x\}$\\ \` by inversion lemma\\
  $\Psi \vdash N = N' : \tau_0$ \` by i.h.\\
  $\Psi \vdash M = M : \Pi x{:}\tau_0.\sigma_0$ \` by reflexivity\\
  $\Psi \vdash M\,N = M\,N' :  \sigma_0\{N/x\}$ \` by $\m{TMEq}\Pi$
  \end{tabbing}

  \[
    \inferrule[]
    { }
    { (\lambda x{:}\tau_0.M_0)\,N_0 \tra{} M_0\{N_0/x\}  }
  \]

  \begin{tabbing}
    $\Psi \vdash \lambda x{:}\tau_0.M_0 : \Pi x{:}\tau_0.\sigma_0$,
    $\Psi \vdash \tau_0 :: \type$, $\Psi \vdash N_0 : \tau_0$,\\
    $\Psi \vdash (\lambda x{:}\tau_0.M_0)\,N_0 : \sigma_0\{N_0/x\}$   \` by
    inversion lemma\\
    $\Psi , x{:}\tau_0 \vdash M_0 : \sigma_0$ \` by inversion lemma\\
    $\Psi \vdash (\lambda x{:}\tau_0.M_0)\,N_0 = M_0\{N_0/x\} :
    \sigma_0\{N/x\}$ \` by $\m{TMEq}\beta$
  \end{tabbing}
  
\end{description}
\end{proof}

\srlamterms*
\begin{proof}
Immediate from Theorem~\ref{thm:sreq} and validity for equality.
\end{proof}

\srlamproc*
% \begin{theorem}
% If $\Psi ; \Ga ; \D \vdash P :: z{:}A$ and $P \tra{} P'$ then $\exists
% Q$ such that $P' \equiv Q$ and $\Psi ; \Ga ; \D \vdash Q :: z{:}A$
% \end{theorem}
\begin{proof}
  The proof follows by Theorem~\ref{thm:sreq} and
  a series of lemmas that relate typed processes
and their reducts under a cut (which now crucially rely on the
inversion lemmas and validity). See \cite{depsesstr,toninhothesis,DBLP:journals/mscs/CairesPT16}.
\end{proof}

\proglamthm*
\begin{proof}
By induction on typing, using the standard canonical forms-based
reasoning and noting that monadic terms are values.
\end{proof}

\section{Appendix -- Embedding}
\label{app:enc}

\begin{lemma}[Compositionality]
~\label{lem:comp}
\begin{enumerate}
\item $\Psi \vdash \lb K\{M/x\}\rb$ iff $\Psi \vdash \lb K \rb
  \{\monad{\lb M \rb_c}/x\}$
\item $\Psi \vdash \lb K_1 \{\tau / t \}\rb$ iff $\Psi \vdash \lb K_1\rb
  \{\lb \tau \rb / t\}$
\item $\Psi \vdash \lb K_1\{A/x\}\rb$ iff $\Psi \vdash \lb K_1\rb\{\lb
  A \rb /x\}$
\item $\Psi \vdash \lb \tau\{M/x\}\rb :: \lb K\{M/x\}\rb$ iff 
$\Psi \vdash \lb \tau\rb\{\monad{\lb M \rb_c}/x\} :: \lb K \rb
\{\monad{\lb M \rb_c}/x\}$
\item $\Psi \vdash \lb A\{M/x\}\rb :: \lb K\{M/x\}\rb$ iff 
$\Psi \vdash \lb A\rb\{\monad{\lb M \rb_c}/x\} :: \lb K \rb
\{\monad{\lb M \rb_c}/x\}$
\item $\Psi ; \Ga ; \D  \vdash \lb M \{N/x\}\rb_z = \lb M\rb_z\{\monad{\lb N \rb_y}/x\} :: z{:}\lb
    A\{N/x\}\rb$ %iff
%    $\Psi ; \Ga ; \D \vdash \lb M\rb_z\{\monad{\lb N \rb_y}/x\} ::
%    z{:}\lb A\rb\{\monad{\lb N \rb_y}/x\}$
  \item $\Psi ; \Ga ; \D \vdash \lb P\{M/x\}\rb :: z{:}\lb A
    \{M/x\}\rb$
     iff $\Psi ; \Ga ; \D \vdash \lb P \rb\{\monad{\lb M \rb_c}/x\}::
     z{:}\lb A \rb\{\monad{\lb M \rb_c}/x\}$
\end{enumerate}

\end{lemma}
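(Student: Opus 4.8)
The plan is to prove all seven statements \emph{simultaneously}, by mutual induction on the structure of the encoded object (a kind $K$, functional type $\tau$, session type $A$, term $M$, or process $P$). A simultaneous induction is essential here because the layers are mutually dependent: the type-level items (1)--(5) appeal to the term case (6), which appeals to the process case (7) (through monadic terms), which in turn appeals back to (6) (through embedded outputs and monadic eliminations). The guiding observation is that $\lb-\rb$ is defined \emph{compositionally} and is almost homomorphic with respect to substitution; the \emph{only} point where $\lb-\rb$ and substitution fail to commute \emph{syntactically} is at a term variable, and this single discrepancy is precisely what forces the statements mentioning terms and processes to be phrased up to \emph{definitional} equality rather than syntactic identity.

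The crux is therefore the variable case of (6). Here $\lb x\{M/x\}\rb_z = \lb M\rb_z$, whereas $\lb x\rb_z\{\monad{\lb M\rb_c}/x\}$ unfolds, using $\lb x\rb_z \triangleq y\leftarrow x;[y\leftrightarrow z]$, to the monadic elimination $y\leftarrow \monad{\lb M\rb_c};[y\leftrightarrow z]$. Renaming the channel offered by the monadic value to $y$ and applying the spawn reduction of the operational semantics yields $(\nub y)(\lb M\rb_y\mid[y\leftrightarrow z])$, and one further forwarder reduction gives $(\nub y)(\lb M\rb_y\mid[y\leftrightarrow z])\tra{}\lb M\rb_y\{z/y\} = \lb M\rb_z$. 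Since definitional equality of processes contains reduction (rule $(\m{PEqRed})$), these two processes are definitionally equal, establishing the case. A variable $w\neq x$ is untouched by both operations and is immediate.

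For every other construct the encoding commutes with substitution on the nose, so the remaining cases of (6) and (7) follow by pushing the substitution inward, invoking the induction hypothesis, and closing up under the congruence rules for process equality. The only subtlety is that whenever a term $N$ is embedded inside a type, session type, or process, the encoding wraps it as $\monad{\lb N\rb_c}$; substitution then produces $\monad{\lb N\rb_c\{\monad{\lb M\rb_c}/x\}}$, which by the term-level induction hypothesis (6) and the monadic congruence rule $(\m{TMEq}\{\})$ is definitionally equal to $\monad{\lb N\{M/x\}\rb_c}$. This is exactly how the reduction-induced term equality is transported up into the type, session-type, and kind layers.

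Finally, the well-formedness and kinding biconditionals (1)--(5), and the typing biconditional (7), are \emph{corollaries} of the equalities just established: the two sides of each statement are definitionally equal kinds, types, or session types (respectively, definitionally equal processes offering definitionally equal session types), so by Validity for Equality together with the conversion rules ($(\rgt{\m{Conv}})$ for processes, and the analogous kind and type conversion rules) well-formedness, kinding, and typing transfer in both directions. I expect the main obstacle to be the variable case above: one must get the channel bookkeeping right---the monad offers on $c$ while the elimination binds a fresh $y$ forwarded to $z$---and then reliably propagate this reduction-induced equality through the monadic wrapper and across all the mutually recursive categories, which is what makes the simultaneous induction (rather than separate inductions per layer) indispensable.
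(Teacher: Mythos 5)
Your proposal matches the paper's proof: both proceed by the same mutual structural induction over kinds, types, session types, terms and processes, with the sole non-homomorphic case being the term variable, resolved exactly as you describe via the spawn and forwarder reductions and rule $(\m{PEqRed})$, and with embedded terms handled through the monadic wrapper and the congruence rule $(\m{TMEq}\{\})$. The paper packages the definitional-equality statements as a companion lemma (``Compositionality -- Reflection in Equality'') proved in the same induction rather than deriving the ``iff'' clauses from them afterwards via validity and conversion, but that is only a difference in presentation, not in substance.
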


\begin{proof}
By mutual induction on the structure of the given kind/type/session
type/etc.

\begin{description}

\item[Case:] $K = \m{type}$ or $K = \m{stype}$
\begin{tabbing}
Trivial.
\end{tabbing}
\noindent {\bf (1)}
\item[Case:] $K = \Pi y{:}\tau.K'$

\begin{tabbing}
{\bf Subcase: $\Rightarrow$} \\
$\lb \Pi y{:}\tau.K'\{M/x\} \rb = 
 \lb \Pi y{:}\tau\{M/x\}.K'\{M/x\}\rb = \Pi y{:}\{\lb \tau
 \{M/x\}\rb\}.\lb K'\{M/x\}\rb$ \` by definition\\
$\Pi y{:}\{\lb\tau\rb\{\{\lb M\rb_c\}/x\}\}.\lb K' \rb\{\{\lb
M\rb_c\}/x\}$ \` by i.h.(3) and i.h.(1)\\
$= (\Pi y{:}\{\lb\tau\rb\}.\lb K'\rb) \{\{\lb
M\rb_c\}/x\}$ \` by definition, satisfying $\Rightarrow$\\
{\bf Subcase: $\Leftarrow$}\\
$\lb \Pi y{:}\tau.K'\rb \{\{\lb M\rb_c\}/x\} =
 (\Pi y{:} \{\lb \tau \rb\}.\lb K'\rb)\{\{\lb M\rb_c\}/x\} =$\\
$ \Pi y{:}\{\lb \tau \rb\}\{\{\lb M\rb_c\}/x\} .\lb K'\rb \{\{\lb M\rb_c\}/x\}$
\` by definition\\
$\Pi y{:}\{\lb\tau\{M/x\}\rb\}.\lb K'\{M/x\}\rb$ \` by i.h.(3) and
i.h.(1)\\
$= \lb\Pi y{:}\tau.K' \{M/x\}\rb$ \` by definition, satisfying $\Leftarrow$
\end{tabbing}

\item[Case:] $K = \Pi t{:}K_1 . K_2$

\begin{tabbing}
Same argument as above, appealing to i.h.(1)
\end{tabbing}

\noindent {\bf (2)}

As above, appealing to i.h.(2)

\noindent {\bf (3)}

As above, appealing to i.h.(3)

\noindent {\bf (4)}

\item[Case:] $\tau = \Pi y{:}\tau'.\sigma$

\begin{tabbing}
{\bf Subcase: $\Rightarrow$}\\
$\lb \Pi y{:}\tau'.\sigma\{M/x\}\rb = 
 \lb \Pi y{:}\tau'\{M/x\}.\sigma\{M/x\}\rb =
  \forall y{:}\{\lb\tau'\{M/x\}\rb\}.
  \lb\sigma\{M/x\}\rb  $ 
\` by definition\\
$\forall y{:}\{\lb\tau'\rb\{\{\lb M \rb_c\}/x\}\}.
  \lb\sigma\rb\{\{\lb M
 \rb_c\}/x\}$ \` by i.h.(3)\\
$= (\forall y{:}\{\lb\tau'\rb\}.
\lb\sigma\rb)\{\{\lb M \rb_c\}/x\}$ \` by definition, satisfying
$\Rightarrow$\\
{\bf Subcase: $\Leftarrow$}\\
$\lb\Pi y{:}\tau'.\sigma\rb\{\{\lb M \rb_c\}/x\} =
(\forall y{:}\{\lb\tau'\rb\}.
\lb\sigma\rb)\{\{\lb M \rb_c\}/x\} =
\forall y{:}\{\lb\tau'\rb\{\{\lb M \rb_c\}/x\}\}.
\lb\sigma\rb\{\{\lb M \rb_c\}/x\}$\\ \` by definition\\
$\forall y{:}\{\lb\tau'\{M/x\}\rb\}.
\lb\sigma\{M/x\}\rb$ \` by i.h.(3)\\
$ = \lb\Pi y{:}(\tau'\{M/x\}).(\sigma\{M/x\})\rb
  = \lb \Pi y{:}\tau'.\sigma \{M/x\}\rb$ \` by definition, satisfying $\Leftarrow$
\end{tabbing}

\item[Case:] $\tau = \lambda y{:}\tau'.\sigma$

\begin{tabbing}
As above.
\end{tabbing}

\item[Case:] $\tau = \tau'\,M'$

\begin{tabbing}
$\lb\tau'\,M' \{M/x\}\rb = \lb \tau'\{M/x\}\,M'\{M/x\}\rb
= \lb\tau'\{M/x\}\rb\,\{\lb M'\{M/x\}\rb_c\}$ \` by definition\\
$\lb\tau'\rb\{\{\lb M\rb_d\}/x\} \, 
\{\lb M'\rb_c\{\monad{\lb M \rb_d}/x\} \}$
\` by i.h.\\
$\lb \tau'\,M'\rb\{\monad{\lb M\rb_d}/x\} = 
( \lb \tau'\rb\,\monad{\lb M'\rb_c}) \{\monad{\lb M\rb_d}/x\} = $
$\lb\tau'\rb\{\{\lb M\rb_d\}/x\} \, 
\{\lb M'\rb_c\{\monad{\lb M \rb_d}/x\} \}$\\
\` by definition
\end{tabbing}

\item[Case:] $\tau = \lambda y :: K.\tau'$
  \begin{tabbing}
$\lb\lambda y :: K . \tau'\{M/x\}\rb = \lambda y :: L\{M/x\}
. \tau'\{M/x\}\rb = \lambda y :: \lb K \{M/x\}\rb . \lb
\tau'\{M/x\}\rb$ \` by definition\\
$\lb\lambda y :: K . \tau'\rb\{\{M\}_c/x\} = \lambda y :: \lb K \rb\{\{M\}_c/x\}
. \lb \tau'\rb\{\{M\}_c/x\}$ \` by definition\\
$= \lambda y :: \lb K \{M/x\}\rb . \lb
\tau'\{M/x\}\rb$ \` by i.h.
\end{tabbing}

\item[Case:] $\tau = \tau'\,\sigma$

Straightforward by i.h. as above.

\noindent{\bf (5)}

\item[Case:] $A = \one$

Trivial.

\item[Case:] $A = A_1 \lolli A_2$

  \begin{tabbing}

$\lb A_1 \lolli A_2 \{M/x\}\rb = \lb A_1 \{M/x\}\rb \lolli \lb A_2
\{M/x\}\rb$ \` by definition\\
$\lb A_1 \rb\{\monad{\lb M \rb_c}/x\} \lolli \lb A_2 \rb\{\monad{\lb M
  \rb_c}/x\}$ \` by i.h.\\
$\lb A_1 \lolli A_2\rb\{\monad{\lb M \rb_c}/x\} =
\lb A_1 \rb\{\monad{\lb M \rb_c}/x\} \lolli \lb A_2 \rb\{\monad{\lb M
  \rb_c}/x\}$ \` by definition
  \end{tabbing}

\item[Case:] $A = A_1 \tensor A_2$

Identical to $\lolli$ case.
  
\item[Case:] $A = \with\{\ov{l_i {:} A_i}\}$

See above.
  
\item[Case:] $A = \oplus\{\ov{l_i {:} A_i}\}$

See above.
  
\item[Case:] $A = \forall x{:}\tau.A_0$ 

  \begin{tabbing}
$\lb\forall x{:}\tau . A_0 \{M/x\}\rb = \forall x{:}\monad{\lb \tau
  \{M/x\}\rb} . \lb A_0 \{M/x\}\rb$ \` by definition\\
$\forall x{:}\monad{\lb\tau \rb\{\monad{\lb M\rb_c}/x\}} . \lb A_0 \rb
\{\monad{\lb M\rb_c}/x\}$ \` by i.h.\\
$\lb \forall x{:}\tau.A_0\rb\{\monad{\lb M\rb_c}/x\} = \forall
x{:}\monad{\lb \tau\rb} . \lb A_0\rb \{\monad{\lb M\rb_c}/x\} =$\\ 
$\forall x{:}\monad{\lb\tau \rb\{\monad{\lb M\rb_c}/x\}} . \lb A_0 \rb
\{\monad{\lb M\rb_c}/x\}$ \` by definition
  \end{tabbing}
  
\item[Case:] $A = \exists x{:}\tau.A_0$ 

As above.
  
\item[Case:] $A = \lambda x{:}\tau.A_0$

As above.

\item[Case:] $A = A_0\,M'$

  \begin{tabbing}
$\lb A_0\,M'\{M/x\}\rb = \lb A_0\{M/x\}\rb \, \monad{\lb
  M'\{M/x\}\rb_c}$ \` by definition\\
$(\lb A_0\rb \{\monad{\lb M\rb_c}/x\})\,\monad{\lb M'\rb_d\{\monad{\lb
    M\rb_c}/x\}}$ \` by i.h.\\
$\lb A_0\,M'\rb\{\monad{\lb M\rb_c}/x\} =
( \lb A_0 \rb \, \monad{\lb M'\rb_d})\{\monad{\lb M\rb_c}/x\}$ \\
$= (\lb A_0\rb \{\monad{\lb M\rb_c}/x\})\,\monad{\lb M'\rb_d\{\monad{\lb
    M\rb_c}/x\}}$ \` by definition
  \end{tabbing}

\item[Case:] $A = A_0\,A_1$

Straightforward by i.h.

\item[Case:] $A = \lambda t :: K . A_0$

Straightforward by i.h.

\noindent{\bf (6)} $\Psi ; \Ga ; \D  \vdash \lb M \{N/x\}\rb_z =  \lb M\rb_z\{\monad{\lb N \rb_y}/x\} :: z{:}\lb
A\{N/x\}\rb$
%iff
%    $\Psi ; \Ga ; \D \vdash \lb M\rb_z\{\monad{\lb N \rb_y}/x\} ::
%    z{:}\lb A\rb\{\monad{\lb N \rb_y}/x\}$
  
\item[Case:] $M = \lambda y{:}\tau.M_0$

  \begin{tabbing}
    $\lb \lambda y{:}\tau.M_0 \{N/x\}\rb_z :: z{:}\lb \Pi y{:}\tau.\sigma\{N/x\}\rb$\\
    $\lb \lambda y{:}\tau.M_0 \{N/x\}\rb_z =
    \lb \lambda y{:}\tau\{N/x\}.M_0\{N/x\}\rb_z =
    z(y).\lb M_0\{N/x\}\rb_z :: z{:}\forall y{:}\monad{\lb \tau\{N/x\}\rb}.\lb\sigma\{N/x\}\rb$\\
    \` by definition\\
    $z(y).\lb M_0\{N/x\}\rb_z = z(y).\lb M_0\rb_z\{\monad{\lb N \rb_c}/x\} :: z{:}\forall
    y{:}\monad{\lb\tau\rb\{\monad{\lb N\rb_c}/x\}}.\lb \sigma\rb\{\monad{\lb N\rb_c}/x\}$ \` by i.h.\\
    $\lb \lambda y{:}\tau.M_0 \rb_z\{\monad{\lb N \rb_c}/x\} =
    z(y).\lb M_0\rb_z\{\monad{\lb N \rb_c}/x\}$ \` by definition
  \end{tabbing}

\item[Case:] $M = M_1\,M_2$

  \begin{tabbing}
    $\lb M_1\,M_2\{N/x\}\rb_z :: z{:} \lb
    (\sigma\{M_2/y\})\{N/x\}\rb$\\
    $= (\nub y)(\lb M_1\{N/x\}\rb_y \mid
    y\langle \monad{ \lb M_2\{N/x\} \rb_y}\rangle.
    [y\leftrightarrow z])$ \` by definition\\
    $= (\nub y)(\lb M_1\rb_y \{\monad{\lb N\rb_c}/x\} \mid
    y\langle \monad{ \lb M_2\rb_y \{\monad{\lb N\rb_c}/x\}}\rangle.
    [y\leftrightarrow z])$ \` by i.h.\\
    $\lb M_1 \, M_2\rb_z \{\monad{\lb N\rb_c}/x\} =
    (\nub y)(\lb M_1\rb_y \mid y\langle \monad{ \lb
      M_2\rb_y}\rangle.[y\leftrightarrow z] )
    \{\monad{\lb N\rb_c}/x\}$\\
    $=(\nub y)(\lb M_1\rb_y \{\monad{\lb N\rb_c}/x\} \mid
    y\langle \monad{ \lb M_2\rb_y \{\monad{\lb N\rb_c}/x\}}\rangle.
    [y\leftrightarrow z])$    \` by definition\\
  \end{tabbing}
 
\item[Case:] $M = \monad{z \leftarrow P \leftarrow \ov{u_j};\ov{d_i}}$

\begin{tabbing}
$\lb M\{N/x\}\rb_z = 
z(u_0).\dots.z(u_j).z(d_0). \dots . z(d_n).\lb P \{N/x\}\rb$ \` by
definition\\
$z(u_0).\dots.z(u_j).z(d_0). \dots . z(d_n).\lb P\rb\{\monad{\lb
  N\rb_c}/x\}$ \` by i.h.\\
$\lb M \rb_z\{\monad{\lb N\rb_c}/x\} = 
z(u_0).\dots.z(u_j).z(d_0). \dots . z(d_n).\lb P \rb\{\monad{\lb
  N\rb_c}/x\}$
\` by definition
\end{tabbing}

\item[Case:] $M = y$ with $y \neq x$

\begin{tabbing}
$\lb y \{N/x\}\rb_z = w \leftarrow y ; [w \leftrightarrow z]$ \` by
definition\\
$\lb y \rb_z\{\monad{\lb N \rb_c}/x\} = w \leftarrow y ; [w
\leftrightarrow z]$ \` by definition
\end{tabbing}

\item[Case:] $M = y$ with $y = x$

\begin{tabbing}
$\lb x \{N/x\}\rb_z = \lb N \rb_z$ \` by
definition\\
$\lb x \rb_z\{\monad{\lb N \rb_c}/x\} = w \leftarrow \monad{\lb N \rb_c} ; [w
\leftrightarrow z]$ \` by definition\\
$w \leftarrow \monad{\lb N \rb_c} ; [w
\leftrightarrow z] \tra{}^+ \lb N \rb_z$ \` by reduction semantics\\
$w \leftarrow \monad{\lb N \rb_c} ; [w
\leftrightarrow z] = \lb N \rb_z$ \` by $\m{PEqRed}$
\end{tabbing}

{\bf (7)} 
$\Psi ; \Ga ; \D \vdash \lb P\{M/x\}\rb :: z{:}\lb A
    \{M/x\}\rb$
     iff $\Psi ; \Ga ; \D \vdash \lb P \rb\{\monad{\lb M \rb_c}/x\}::
     z{:}\lb A \rb\{\monad{\lb M \rb_c}/x\}$

\item[Case:] $P = (\nub y)(P_1 \mid P_2)$

\begin{tabbing}
$\lb (\nub y)(P_1 \mid P_2) \{M/x\}\rb =
 (\nub y)(\lb P_1\{M/x\}\rb \mid \lb P_2\{M/x\}\rb)$ \` by
 definition\\
$(\nub y)(\lb P_1\rb\{\monad{\lb M \rb_c}/x\} \mid \lb
P_2\rb\{\monad{\lb M \rb_c}/x\})$ \` by i.h.\\
$\lb (\nub y)(P_1 \mid P_2) \rb\{\monad{\lb M \rb_c}/x\} =
  (\nub y)(\lb P_1\rb\{\monad{\lb M \rb_c}/x\} \mid \lb
P_2\rb\{\monad{\lb M \rb_c}/x\})$ \` by definition
\end{tabbing}

\item[Case:] $P = z\langle M_0 \rangle.P_0$ by $\rgt\exists$

\begin{tabbing}
$\lb z\langle M_0\rangle.P_0\{M/x\}\rb =
  z\langle \monad{\lb M_0\{M/x\}\rb_d}\rangle.\lb P_0\{M/x\}\rb$ \`
 by definition\\
$z\langle \monad{\lb M_0\rb_d\{\monad{\lb M \rb_c}/x\}}\rangle.\lb
P_0\rb\{\monad{\lb M \rb_c}/x\}$ \` by i.h.\\
$\lb z\langle M_0\rangle.P_0\rb\{\monad{\lb M \rb_c}/x\} =
z\langle \monad{\lb M_0\rb_d\{\monad{\lb M \rb_c}/x\}}\rangle.\lb
P_0\rb\{\monad{\lb M \rb_c}/x\}$ \` by definition\\
\end{tabbing}

\item[Case:] $P = x \leftarrow M_0 \leftarrow \ov{u_j};\ov{y_i};P_0$

\begin{tabbing}
$\lb P\{M/x\}\rb = (\nub c)(\lb M_0\{M/x\} \rb_c \mid \ov{c}\langle v_1\rangle.(\ov{u_1}\langle a_1\rangle.[a_1\leftrightarrow v_1] \mid
\dots \mid$\\
$            \ov{c}\langle d_1 \rangle.([y_1\leftrightarrow
      d_1] \mid \dots \mid \ov{c}\langle d_n\rangle.([y_n
                       \leftrightarrow d_n] \mid \lb P_0\{M/x\}\rb)
                       \dots )$ \` by definition\\
$(\nub c)(\lb M_0\rb_c\{\monad{\lb M \rb_c}/x\} \mid \ov{c}\langle v_1\rangle.(\ov{u_1}\langle a_1\rangle.[a_1\leftrightarrow v_1] \mid
\dots \mid$\\
$            \ov{c}\langle d_1 \rangle.([y_1\leftrightarrow
      d_1] \mid \dots \mid \ov{c}\langle d_n\rangle.([y_n
                       \leftrightarrow d_n] \mid \lb P_0\rb\{\monad{\lb M \rb_c}/x\})
                       \dots )$ \` by i.h.\\
$= \lb P\rb\{\monad{\lb M \rb_c}/x\}$ \` by definition                       
\end{tabbing}
Remaining process cases are straightforward by i.h.
\end{description}
\end{proof}

\begin{lemma}[Compositionality -- Reflection in Equality]
~
\begin{enumerate}
\item $\Psi \vdash \lb K\{M/x\}\rb = \lb K \rb
  \{\monad{\lb M \rb_c}/x\}$
\item $\Psi \vdash \lb K_1 \{\tau / t \}\rb = \lb K_1\rb
  \{\lb \tau \rb / t\}$
\item $\Psi \vdash \lb K_1\{A/x\}\rb = \lb K_1\rb\{\lb
  A \rb /x\}$
\item $\Psi \vdash \lb \tau\{M/x\}\rb  = \lb \tau\rb\{\monad{\lb M \rb_c}/x\} :: \lb K
\{M/x\}\rb$ % :: \lb K\{M/x\}\rb
\item $\Psi \vdash \lb A\{M/x\}\rb  = \lb A\rb\{\monad{\lb M \rb_c}/x\} :: \lb K
\{M/x\}\rb$ % :: \lb K\{M/x\}\rb
\item $\Psi ; \Ga ; \D  \vdash \lb M \{N/x\}\rb_z = \lb M\rb_z\{\monad{\lb N \rb_y}/x\} :: z{:}\lb
    A\{N/x\}\rb$ %iff
%    $\Psi ; \Ga ; \D \vdash \lb M\rb_z\{\monad{\lb N \rb_y}/x\} ::
%    z{:}\lb A\rb\{\monad{\lb N \rb_y}/x\}$
  \item $\Psi ; \Ga ; \D \vdash \lb P\{M/x\}\rb =
      \lb P \rb\{\monad{\lb M \rb_c}/x\}::
      z{:}\lb A \{M/x\}
     \rb$ % :: z{:}\lb A \{M/x\}\rb$
\end{enumerate}
\end{lemma}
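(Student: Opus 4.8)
The plan is to prove all seven statements by a single \emph{simultaneous} induction on the size of the given kind, functional type, session type, term or process, following exactly the case analysis in the proof of Lemma~\ref{lem:comp}. The guiding observation is that the encoding $\lb-\rb$ commutes with substitution \emph{up to definitional equality}: in almost every case the two sides $\lb\cdot\{M/x\}\rb$ and $\lb\cdot\rb\{\monad{\lb M\rb_c}/x\}$ carry the same top-level constructor applied to encoded (and substituted) immediate subcomponents, so the equality is obtained by the matching congruence rule fed with the inductive hypotheses. Treating the seven parts at once is precisely what removes the apparent circularity in Lemma~\ref{lem:comp}: part (6) for a monadic term $\monad{z\leftarrow P\leftarrow\ov{u_j};\ov{d_i}}$ appeals to part (7) for $P$, while part (7) for an output or a monadic elimination appeals to part (6) for the communicated or spawned term, and in each such call the argument is a strict sub-object, so the induction on size is well founded.

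For the \emph{structural} constructors — the $\Pi$-kinds, $\forall$, $\exists$, $\lolli$, $\tensor$, $\oplus$, $\with$, all the $\lambda$-abstractions, and the homomorphic process formers — I would rewrite the immediate subcomponents using the inductive hypothesis, use reflexivity ($\m{KEqR}$, $\m{TEqR}$, $\m{STEqR}$, $\m{PEqRefl}$) on any component untouched by the substitution, and assemble the result with the corresponding congruence rule ($\m{KEq}\Pi$, $\m{STEq}\forall$, $\m{TEq}\Pi$, the process congruences, and so on). Whenever a subcomponent sits under the monad, as in $\forall x{:}\monad{\lb\tau\rb}.\lb A\rb$, I would first promote the type equality for $\tau$ to a $\type$-level equality of the monadic annotations through $\m{TEq}\{\}$. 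For the \emph{application/communication} constructors $\tau\,M'$, $A\,M'$, $x\langle M'\rangle.P'$ and the monadic elimination $c\leftarrow M'\leftarrow\ov{u_j};\ov{y_i};Q$, the only discrepancy between the two sides is localised in the encoded argument $\monad{\lb M'\rb_c}$: here I would invoke part (6) to obtain $\lb M'\{M/x\}\rb_c = \lb M'\rb_c\{\monad{\lb M\rb_c}/x\}$, lift it to an equality of monadic \emph{terms} with $\m{TMEq}\{\}$, and close the case with $\m{TEqApp}$ / $\m{STEqApp}$ (respectively the process output and monadic-elimination congruences).

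The single place where definitional equality genuinely exceeds reflexivity and congruence is the base case $M=x$ of part (6). There $\lb x\{N/x\}\rb_z = \lb N\rb_z$ by definition, whereas $\lb x\rb_z\{\monad{\lb N\rb_c}/x\} = (y\leftarrow \monad{\lb N\rb_c}; [y\leftrightarrow z])$, which spawns $\lb N\rb_c$ and then forwards; by the spawn and forwarder reduction rules this term satisfies $\lb x\rb_z\{\monad{\lb N\rb_c}/x\}\tra{}^+ \lb N\rb_z$, so the equality follows from $\m{PEqRed}$. This one reduction is the only non-congruence step in the entire argument, and the same phenomenon is captured at the level of monadic terms by $\m{TMEq}\{\}\eta$; it is exactly what the naive first-attempt encoding of \S~\ref{sec:embed} could not absorb within definitional equality and which this encoding is designed to accommodate.

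The main obstacle I anticipate is bookkeeping of the kind and type indices through the mutual recursion rather than any single hard step. Each equality rule I invoke carries well-formedness premises — e.g. $\m{STEqApp}$ needs its two arguments to agree on a common product kind after substitution — and discharging these requires interleaving the kind-level parts (1)--(3) with the inversion and product-injectivity results and with validity for equality, just as in the iff-form. I would therefore carry, alongside each equality, the well-formedness facts supplied by Lemma~\ref{lem:comp} and the substitution lemma (Lemma~\ref{lem:trans}), using validity to recover the kind at which a derived type equality is stated and, where the conclusion must be phrased at $\lb A\{M/x\}\rb$ rather than at $\lb A\rb\{\monad{\lb M\rb_c}/x\}$, closing the gap with part (5) itself via the conversion rules.
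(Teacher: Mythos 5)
Your proposal matches the paper's own proof: the paper likewise proceeds by mutual structural induction following the case analysis of the iff-form of compositionality, assembling each case from the inductive hypotheses via the congruence rules of definitional equality (e.g.\ $\m{STEq}\forall$, $\m{STEq}\lambda$, $\m{STEqApp}$), lifting term-level equalities under the monad with the $\{\}$ equality rules, and discharging the one genuinely non-congruence case --- the variable case of part (6) --- by the spawn/forwarder reductions and $\m{PEqRed}$. Your remarks on the well-foundedness of the mutual recursion between parts (6) and (7) and on the kind/index bookkeeping via compositionality, validity and conversion are consistent with how the paper handles those steps.
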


\begin{proof}
{\bf (1-3)} is identical to corresponding statements in Lemma~\ref{lem:comp}.

{\bf (4)} $\Psi \vdash \lb \tau\{M/x\}\rb  = \lb \tau\rb\{\monad{\lb M
  \rb_c}/x\} :: \lb K \{M/x\}\rb$ 
\begin{description}

\item[Case:]  $\tau = \Pi y{:}\tau'.\sigma$
\begin{tabbing}
$\lb \Pi y{:}\tau'.\sigma\{M/x\}\rb = 
 \lb \Pi y{:}\tau'\{M/x\}.\sigma\{M/x\}\rb =
  \forall y{:}\{\lb\tau'\{M/x\}\rb\}.
  \lb\sigma\{M/x\}\rb  $ \` by definition\\
$\lb\Pi y{:}\tau'.\sigma\rb\{\{\lb M \rb_c\}/x\} =
(\forall y{:}\{\lb\tau'\rb\}.
\lb\sigma\rb)\{\{\lb M \rb_c\}/x\} =
\forall y{:}\{\lb\tau'\rb\{\{\lb M \rb_c\}/x\}\}.
\lb\sigma\rb\{\{\lb M \rb_c\}/x\}$\\ \` by definition\\
$\Psi \vdash \{\lb\tau'\{M/x\}\rb\} = \{\lb\tau'\rb\{\{\lb M
\rb_c\}/x\}\} :: \type$ \` by i.h. and $\m{TEq}\{\}$\\
$\Psi , y{:}\{\lb\tau'\{M/x\}\rb\} \vdash \lb\sigma\{M/x\}\rb =
\lb\sigma\rb\{\{\lb M \rb_c\}/x\} :: \stype$ \` by i.h.\\
$\Psi \vdash \forall y{:}\{\lb\tau'\{M/x\}\rb\}.
  \lb\sigma\{M/x\}\rb = \forall y{:}\{\lb\tau'\rb\{\{\lb M \rb_c\}/x\}\}.
\lb\sigma\rb\{\{\lb M \rb_c\}/x\} :: \stype$ \` by $\m{STEq}\forall$
\end{tabbing}

\item[Case:] $\tau = \lambda y{:}\tau'.\sigma$

  \begin{tabbing}
    $\lb \lambda y{:}\tau'.\sigma\{M/x\}\rb =
    \lambda y{:}\monad{\lb \tau'\{M/x\}\rb}.\lb\sigma\{M/x\}\rb$ \` by
    definition\\
    $\lb \lambda y{:}\tau'.\sigma\rb\{\monad{\lb M \rb_c}/x\} =
    \lambda y{:}\monad{\lb \tau'\rb\}\{\monad{\lb M
        \rb_c}/x\}}.\lb\sigma\rb\{\monad{\lb M \rb_c}/x\}$ \` by
    definition\\
$\Psi \vdash \monad{\lb
      \tau'\{M/x\}\rb} = \monad{\lb \tau'\rb\}\{\monad{\lb M
        \rb_c}/x\}} :: \type$ \` by i.h. and $\m{TEq}\{\}$\\
$\Psi ,y{:}\monad{\lb
      \tau'\{M/x\}\rb} \vdash \lb\sigma\{M/x\}\rb =
      \lb\sigma\rb\{\monad{\lb M \rb_c}/x\} :: \lb K\{M/x\}\rb$ \` by i.h.\\

    $\Psi \vdash \lambda y{:}\monad{\lb
      \tau'\{M/x\}\rb}.\lb\sigma\{M/x\}\rb = 
  \lambda y{:}\monad{\lb \tau'\rb\}\{\monad{\lb M
        \rb_c}/x\}}.\lb\sigma\rb\{\monad{\lb M \rb_c}/x\} ::  \Pi
    x{:}\monad{\lb\tau'\{M/x\}\rb}.\lb K\{M/x\}\rb$
\\ \` by $\m{STEq}\lambda$
  \end{tabbing}

\item[Case:] $\tau = \tau'\, M'$

  \begin{tabbing}
$\lb \tau'\, M' \{M/x\} \rb = \lb \tau'\{M/x\}\rb \, \monad{\lb
  M'\{M/x\}\rb_d} $ \` by definition\\
$\lb \tau'\, M'\rb\{\monad{\lb M \rb_c}/x\} =
(\lb \tau'\rb\{\monad{\lb M \rb_c}/x\})\,\monad{\lb
  M'\rb_d\{\monad{\lb M \rb_c}/x\}}$ \` by definition\\
$\Psi \vdash \lb \tau'\{M/x\}\rb = \lb \tau'\rb\{\monad{\lb M
  \rb_c}/x\} :: \Pi y{:}\monad{\lb \tau''\{M/x\}\rb}.\lb K \{M/x\}\rb$
\` by i.h\\
$\Psi \vdash \monad{\lb M'\{M/x\}\rb_d} = \monad{\lb
  M'\rb_d\{\monad{\lb M \rb_c}/x\}} : \monad{\lb \tau''\{M/x\}\rb}$ \` by
i.h. and $\m{TEq}\{\}$\\
$\Psi \vdash \lb \tau'\{M/x\}\rb \, \monad{\lb
  M'\{M/x\}\rb_d} =$\\
$\qquad (\lb \tau'\rb\{\monad{\lb M \rb_c}/x\})\,\monad{\lb
  M'\rb_d\{\monad{\lb M \rb_c}/x\}} :: \lb K \{M/x\}\rb\{\monad{\lb
  M'\{M/x\}\rb_d}/y\}$ \` by $\m{STEqApp}$
  \end{tabbing}

\item[Case:] $\tau = \lambda t :: K'.\tau'$

  \begin{tabbing}
$\lb \lambda t :: K'.\tau'\{M/x\}\rb = \lambda t :: \lb K'\{M/x\}\rb
. \lb \tau'\{M/x\}\rb$ \` by definition\\
$\lb \lambda t :: K' . \tau'\rb\{\monad{\lb M \rb_c}/x\} =
 \lambda t :: \lb K'\rb\{\monad{\lb M \rb_c}/x\} . \lb
 \tau'\rb\{\monad{\lb M \rb_c}/x\}$ \` by definition\\
$\Psi \vdash \lb K'\{M/x\}\rb = \lb K'\rb\{\monad{\lb M \rb_c}/x\}$ \`
by i.h.\\
$\Psi , t :: \lb K'\{M/x\}\rb \vdash  \lb \tau'\{M/x\}\rb = \lb
 \tau'\rb\{\monad{\lb M \rb_c}/x\} :: \lb K''\{M/x\}\rb$ \` by i.h.\\
$\Psi \vdash \lambda t :: \lb K'\{M/x\}\rb
. \lb \tau'\{M/x\}\rb =$\\
$\qquad\lambda t :: \lb K'\rb\{\monad{\lb M \rb_c}/x\} . \lb
 \tau'\rb\{\monad{\lb M \rb_c}/x\} ::  \Pi t :: \lb K'\{M/x\}\rb.\lb
 K''\{M/x\}\rb$
 \` by $\m{STEqT}\lambda$
  \end{tabbing}

\item[Case:] $\tau = \tau'\,\sigma$

  \begin{tabbing}
$\lb \tau'\,\sigma\{M/x\}\rb =
\lb\tau'\{M/x\}\rb\,\lb\sigma\{M/x\}\rb$ \` by definition\\
$\lb \tau' \, \sigma\rb\{\monad{\lb M \rb_c}/x\} =
\lb \tau'\rb\{\monad{\lb M \rb_c}/x\} \,\lb \sigma\rb \{\monad{\lb M
  \rb_c}/x\}$ \` by definition\\
$\Psi \vdash \lb\tau'\{M/x\}\rb = \lb \tau'\rb\{\monad{\lb M
  \rb_c}/x\} ::  \Pi t : \lb K\{M/x\}\rb.\lb K'\{M/x\}\rb$ \` by i.h.\\
$\Psi \vdash \lb\sigma\{M/x\}\rb = \lb \sigma\rb \{\monad{\lb M
  \rb_c}/x\} ::  \lb K\{M/x\}\rb$ \` by i.h.
  \end{tabbing}

Remaining cases are identical.
\end{description}
\end{proof}

%\encpreseq*
\begin{lemma}[Preservation of Equality]~
  
  \begin{enumerate}
\item If $\Psi \vdash K_1 = K_2$ then $\monad{\lb \Psi \rb} \vdash \lb
  K_1 \rb = \lb K_2\rb$
\item If $\Psi \vdash \tau_1 = \tau_2 :: K$ then $\monad{\lb \Psi \rb}
  \vdash \lb \tau_1\rb = \lb \tau_2 \rb :: \lb K \rb$
\item If $\Psi \vdash A = B :: K$ then $\monad{\lb \Psi \rb} \vdash
  \lb A \rb = \lb B \rb :: \lb K \rb$
\item If $\Psi \vdash M = N : \tau$ then $\monad{\lb \Psi \rb} ; \cdot
  ; \cdot \vdash \lb M \rb_z = \lb N \rb_z :: z{:}\lb \tau\rb$
\item If $\Psi ; \Ga ; \D \vdash P = Q :: z{:}A$ then $\monad{\lb \Psi
    \rb} ; \lb \Ga \rb ; \lb \D \rb\vdash \lb P \rb = \lb Q\rb ::
  z{:}\lb A \rb$
\end{enumerate}
\end{lemma}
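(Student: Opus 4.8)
The plan is to proceed by simultaneous induction on the derivation of the given definitional equality judgment, treating all five forms (kinds, functional types, session types, terms, processes) at once, since they are defined by mutually inductive rules. Because several target equality rules carry well-formedness or typing side conditions, I would run this induction jointly with Preservation of Typing (as the paper notes these must be established simultaneously), using the latter to discharge the side conditions needed to instantiate each image rule. The rules partition into four groups---equivalence (reflexivity, symmetry, transitivity), congruence, $\beta$, and $\eta$---together with the conversion rules and the process-reduction rule $(\m{PEqRed})$, and I would dispatch each group uniformly.

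The equivalence and congruence cases are routine. For reflexivity, symmetry, and transitivity I simply apply the corresponding rule in the target after invoking the induction hypothesis on the premises; the base facts for reflexivity follow from Preservation of Typing. For the congruence rules I apply the induction hypothesis to each premise and reassemble using the matching congruence rule of the target theory. This works because the encoding is homomorphic on essentially every construct, wrapping $\type$-kinded positions in the monad uniformly: e.g. $(\m{TMEqApp})$ maps to the process congruences for $(\nub\cdot)$, parallel composition, output, and forwarding after using $(\m{TMEq}\{\})$ to turn equality of encoded arguments into equality of the wrapped monadic terms $\monad{\lb N\rb_y} = \monad{\lb N'\rb_y}$, while $(\m{TMEq}\lambda)$ maps to congruence under the input prefix. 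The conversion rules are handled by the induction hypothesis on the term/process premise together with the (already-available, by simultaneous induction) type-equality part of the lemma to transport along $\lb\tau\rb = \lb\sigma\rb$.

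The $\beta$ cases are where the two compositionality lemmas do the real work. For the type-level rules $(\m{TEq}\beta)$ and $(\m{STEq}\beta)$, unfolding the encoding turns $\lb(\lambda x{:}\tau.\sigma)\,M\rb$ into $(\lambda x{:}\monad{\lb\tau\rb}.\lb\sigma\rb)\,\monad{\lb M\rb_c}$; applying the target $\beta$ rule yields $\lb\sigma\rb\{\monad{\lb M\rb_c}/x\}$, which the reflection-in-equality form of compositionality identifies with $\lb\sigma\{M/x\}\rb$, so transitivity closes the case. The term $\beta$ rule $(\m{TMEq}\beta)$ is the crucial one: here $\lb(\lambda x{:}\tau.M)\,N\rb_z = (\nub w)(w(x).\lb M\rb_w \mid w\langle\monad{\lb N\rb_y}\rangle.[w\leftrightarrow z])$ reduces, by one communication step and one forwarder step, to $\lb M\rb_z\{\monad{\lb N\rb_y}/x\}$; I would equate the redex with this reduct using $(\m{PEqRed})$ and then apply compositionality (part (6), Lemma~\ref{lem:comp}) to rewrite it as $\lb M\{N/x\}\rb_z$.

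The main obstacle lies in the $\eta$ rules and in $(\m{PEqRed})$. The monadic $\eta$ rule $(\m{TMEq}\{\}\eta)$ forces one to unfold the encoding of the monadic elimination form---a sequence of fresh-name outputs composed with forwarders---and to show, via process reduction together with the forwarding and commuting-conversion equalities, that it collapses back to $\lb M\rb_z$; the functional rules $(\m{TMEq}\eta)$, $(\m{TEq}\eta)$, $(\m{STEq}\eta)$ and the process rules $(\m{PEq}\forall\eta)$ and $(\m{PEqCC}\forall)$ similarly require the target's process-level $\eta$ and commuting-conversion principles. Most delicate is the $(\m{PEqRed})$ case, which asks that a source reduction be matched in the image: for the homomorphic prefixes this is immediate (source communications map to target communications, with substitutions reconciled by compositionality), but for the monadic spawn reduction $c\leftarrow\monad{c\leftarrow P\leftarrow\ov{u_j};\ov{d_i}}\leftarrow\ov{u_j};\ov{d_i};Q \tra{} (\nub c)(P\mid Q)$ one must show the elaborate output/forwarding encoding synchronizes, over several steps, with the input sequence of the encoded monadic value and reduces to $(\nub c)(\lb P\rb\mid\lb Q\rb)$---essentially the operational-correspondence argument---after which $(\m{PEqRed})$ in the target yields the equality. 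I expect this monadic-spawn reduction, and the bookkeeping of fresh channels and forwarders it entails, to be the hardest part of the proof.
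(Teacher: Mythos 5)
Your proposal matches the paper's proof in both structure and substance: induction on the equality derivation run jointly with Preservation of Typing, congruence and equivalence cases dispatched by the induction hypothesis and the corresponding target rules, $\beta$ cases closed by the target $\beta$/reduction rules followed by compositionality and conversion, $\eta$ cases (including $(\m{TMEq}\{\}\eta)$) discharged via the process-level $\eta$ and commuting-conversion equalities together with $(\m{PEqRed})$, and the $(\m{PEqRed})$ case reduced to operational correspondence. You also correctly identify the monadic $\eta$ and spawn-reduction cases as the ones carrying the real bookkeeping burden, which is exactly where the paper's proof does its heaviest lifting.
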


\begin{proof}
  By induction on the given judgment.

  \begin{description}

  \item[Case:] $\m{KEqR}$, $\m{KEqS}$, $\m{KEqT}$ and $\m{KEq}\Pi_2$

  Immediate by i.h.

\item[Case:] $\m{KEq}\Pi_1$

  \begin{tabbing}
$\Psi \vdash \tau = \sigma :: \type$ and $\Psi , x{:}\tau \vdash K_1 =
K_2$ \` by inversion\\
$\monad{\lb\Psi\rb} \vdash \lb \tau \rb = \lb \sigma\rb :: \stype$ \`
by i.h.\\
$\monad{\lb\Psi\rb} \vdash \monad{\lb \tau\rb} = \monad{\lb \sigma\rb}
:: \type$ \` by $\m{TEq}\monad{}$
$\monad{\lb\Psi\rb} , x{:}\monad{\lb \tau\rb} \vdash \lb K_1 \rb = \lb
K_2 \rb$ \` by i.h.\\
$\monad{\lb\Psi\rb} \vdash \Pi x{:}\monad{\lb \tau\rb}.\lb K_1\rb =
 \Pi x{:}\monad{\lb \sigma\rb}.\lb K_2\rb$ \` by $\m{KEq}\Pi_1$
\end{tabbing}

\noindent {\bf (2)}

\item[Case:] $\m{TEqR}$, $\m{TEqT}$, $\m{TEqS}$

  Immediate by i.h.

\item[Case:] $\m{TEq}\Pi$

  \begin{tabbing}
$\Psi \vdash \tau = \tau' :: \type$ and $\Psi , x{:}\tau \vdash \sigma
= \sigma' :: \type$ \` by inversion\\
$\monad{\lb \Psi \rb} \vdash \lb \tau\rb = \lb \tau'\rb :: \stype$ \`
by i.h.
$\monad{\lb \Psi \rb} \vdash \monad{\lb \tau\rb} = \monad{\lb
  \tau'\rb} :: \type$ \` by  $\m{TEq}\monad{}$\\
$\monad{\lb\Psi\rb} , x{:}\monad{\lb \tau\rb} \vdash \lb \sigma \rb =
\lb \sigma' \rb :: \stype$ \` by i.h.\\
$\monad{\lb\Psi\rb} \vdash \forall x{:}\monad{\lb \tau\rb}.\lb \sigma
\rb = \forall x{:}\monad{\lb \tau'\rb}.\lb \sigma'
\rb :: \stype$ \` by $\m{STEq}\forall$
\end{tabbing}

\item[Case:] $\m{TEq}\lambda$

  \begin{tabbing}
$\Psi \vdash \tau = \tau' :: \type$ and $\Psi , x{:}\tau \vdash \sigma
= \sigma' :: K$ \` by inversion\\
$\monad{\lb\Psi\rb} \vdash \lb \tau \rb = \lb \tau'\rb :: \stype$ \`
by i.h.\\
$\monad{\lb \Psi \rb} \vdash \monad{\lb \tau\rb} = \monad{\lb
  \tau'\rb} :: \type$ \` by  $\m{TEq}\monad{}$\\
$\monad{\lb\Psi\rb}, x{:}\monad{\lb \tau\rb} \vdash \lb \sigma \rb =
\lb \sigma' \rb :: \lb K \rb$ \` by i.h.\\
$\monad{\lb\Psi\rb} \vdash \lambda x{:}\monad{\lb \tau\rb} . \lb
\sigma \rb = \lambda x{:}\monad{\lb \tau'\rb}.\lb \sigma'\rb :: \Pi
x{:}\monad{\lb \tau\rb}.\lb K\rb$ \` by $\m{STEq}\lambda$
  \end{tabbing}

\item[Case:] $\m{TEq}T\lambda$

  \begin{tabbing}
$\Psi \vdash K = K'$ and $\Psi , t :: K \vdash \tau = \sigma :: K''$
\` by inversion\\
$\monad{\lb \Psi \rb} \vdash \lb K \rb = \lb K'\rb$ \` by i.h.\\
$\monad{\lb \Psi \rb} , t :: \lb K \rb \vdash \lb \tau \rb = \lb
\sigma \rb :: \lb K'' \rb$ \` by i.h.\\
$\monad{\lb \Psi \rb} \vdash \lambda t :: \lb K \rb . \lb \tau\rb =
\lambda t :: \lb K' \rb .\lb \sigma\rb :: \Pi t :: \lb K\rb . \lb
K''\rb$ \` by $\m{STEq}T\lambda$
  \end{tabbing}

\item[Case:] $\m{TEqApp}$

  \begin{tabbing}
$\Psi \vdash \tau = \sigma :: \Pi x{:} \tau'.K$ and $\Psi \vdash M = N
: \tau'$ \` by inversion\\
$\monad{\lb \Psi \rb} \vdash \lb \tau \rb = \lb \sigma \rb :: \Pi x{:}
\monad{\lb \tau'\rb}.\lb K\rb$ \` by i.h.\\
$\monad{\lb \Psi \rb} ; \cdot ; \cdot \vdash \lb M \rb_z = \lb N \rb_z :: z{:}\lb
\tau'\rb$ \` by i.h.\\
$\monad{\lb \Psi \rb} \vdash \monad{\lb M \rb_z} = \monad{\lb N \rb_z}
: \monad{\lb \tau'\rb}$ \` by $\m{TEq}\monad{}$\\
  $\monad{\lb \Psi \rb} \vdash \lb \tau \rb \, \monad{\lb M \rb_z} =
    \lb \sigma \rb\, \monad{\lb N \rb_z} :: \lb K\rb\{\monad{\lb M
      \rb_z}/x\}$ \` by $\m{STEqApp}$\\
  $\monad{\lb \Psi \rb} \vdash \lb \tau \rb \, \monad{\lb M \rb_z} =
    \lb \sigma \rb\, \monad{\lb N \rb_z} :: \lb K \{M/x\}\rb$ \` by
    compositionality and conversion
  \end{tabbing}

\item[Case:] $\m{TEqTApp}$

  \begin{tabbing}
$\Psi \vdash \tau = \tau' :: \Pi t :: K_1 . K_2$ and $\Psi \vdash
\sigma = \sigma ' :: K_1$ \` by inversion\\
$\monad{\lb \Psi \rb} \vdash \lb \tau\rb = \lb \tau' \rb :: \Pi t ::
\lb K_1 \rb . \lb K_2\rb$ \` by i.h.\\
$\monad{\lb \Psi \rb} \vdash \lb \sigma \rb = \lb \sigma' \rb :: \lb
K_1\rb$ \` by i.h.\\
$\monad{\lb \Psi \rb} \vdash \lb \tau \rb \, \lb \sigma \rb = \lb
\tau' \rb \, \lb \sigma'\rb :: \lb K_2\rb\{\lb \sigma\rb / t\}$ \` by
$\m{STEqTApp}$\\
$\monad{\lb \Psi \rb} \vdash \lb \tau \rb \, \lb \sigma \rb = \lb
\tau' \rb \, \lb \sigma'\rb :: \lb K_2\{ \sigma / t\}\rb$ \` by
compositionality and conversion 
  \end{tabbing}

\item[Case:] $\m{TEq}\beta$

  \begin{tabbing}
$\Psi , x{:}\tau \vdash \sigma :: K$ and $\Psi \vdash M : \tau$ \` by
inversion\\
$\monad{\lb \Psi \rb} , x{:} \monad{\lb \tau\rb} \vdash \lb \sigma\rb
:: \lb K \rb$ \` by type preservation of the encoding\\
$\monad{\lb \Psi \rb} ; \cdot ; \cdot \vdash \lb M \rb_c :: c{:}\lb
\tau\rb$ \` by type preservation of the encoding\\
$\monad{\lb \Psi \rb} \vdash \monad{\lb M \rb_c} : \monad{\lb\tau\rb}$
\` by $\{\}I$\\
$\monad{\lb \Psi \rb} \vdash (\lambda x{:}\monad{\lb\tau\rb}.\lb
\sigma\rb)\,\monad{\lb M \rb_c} = \lb \sigma\rb\{ \monad{\lb M \rb_c} /x\} :: \lb K \rb\{
\monad{\lb M \rb_c}/x\}$ \` by $\m{STEq}\beta$\\
$\monad{\lb \Psi \rb} \vdash (\lambda x{:}\monad{\lb\tau\rb}.\lb
\sigma\rb)\,\monad{\lb M \rb_c} = \lb \sigma \{M/x\}\rb
 :: \lb K \{M/x\} \rb$ \\\` by
compositionality and conversion 
\end{tabbing}

\item[Case:] $\m{TEq}T\beta$

  \begin{tabbing}
$\Psi \vdash \sigma :: K$ and $\Psi , t :: K \vdash \tau :: K'$ \` by
inversion\\
$\monad{\lb \Psi \rb} \vdash \lb \sigma \rb :: \lb K \rb$ \` by type
preservation of the encoding\\
$\monad{\lb \Psi \rb} , t :: \lb K \rb \vdash \lb \tau \rb :: \lb
K'\rb$ \` by type preservation of the encoding\\
$\monad{\lb \Psi \rb} \vdash (\lambda t :: \lb K \rb.\lb
\tau\rb)\,\lb\sigma\rb = \lb \tau \rb \{ \lb \sigma \rb / t\} :: \lb
K'\rb\{\lb \sigma \rb / t\}$ \` by $\m{STEq}T\beta$\\
$\monad{\lb \Psi \rb} \vdash (\lambda t :: \lb K \rb.\lb
\tau\rb)\,\lb\sigma\rb = \lb \tau  \{ \sigma  / t\}\rb :: \lb
K'\{ \sigma  / t\}\rb$ \`  by
compositionality and conversion
  \end{tabbing}
  
\item[Case:] $\m{TEq}\eta$

\begin{tabbing}
$\Psi \vdash \sigma :: \Pi x{:}\tau.K$ and $x\not\in fv(\sigma)$ \` by
inversion\\
$\monad{\lb\Psi\rb} \vdash \lb \sigma \rb :: \Pi x{:}\monad{\lb
    \tau\rb}.\lb K \rb$ \` by type preservation of the encoding\\
$\monad{\lb\Psi\rb} \vdash \lambda x{:}\monad{\lb\tau\rb}.\lb
\sigma\rb\,x = \lb \sigma\rb :: \Pi x{:}\monad{\lb\tau\rb}.\lb K \rb$
\` by $\m{STEq}\eta$\\
$\monad{\lb \Psi \rb}\vdash \lambda x{:}\monad{\lb \tau\rb}.\lb
\sigma\rb\,\monad{c\leftarrow (y \leftarrow x ; [y\leftrightarrow c])} = \lb \sigma\rb :: \Pi
x{:}\monad{\lb\tau\rb}.\lb K \rb$ \\\` by $\m{STEqT}$, $\m{STEqApp}$ and $\m{TMEq}\{\}\eta$
\end{tabbing}

\item[Case:] $\m{TEqT}\eta$

  \begin{tabbing}
$\Psi \vdash \tau :: \Pi t :: K_1.K_2$ and $t \not\in fv(\tau)$ \` by
inversion\\
$\monad{\lb \Psi \rb} \vdash \lb \tau \rb :: \Pi t :: \lb K_1 \rb
. \lb K_2 \rb$ \` by type preservation of the encoding\\
$\monad{\lb \Psi \rb} \vdash \lambda t :: \lb K \rb .\lb\tau\rb\,t =
\lb \tau\rb :: \Pi t :: \lb K\rb.\lb K'\rb$ \` by $\m{STEqT}\eta$

\end{tabbing}

\item[Case:] $\m{TEq}\{\}$

  \begin{tabbing}
    $\forall i , j . \Psi \vdash A_i = B_i :: \stype$,
    $\Psi \vdash C_j = D_j :: \stype$ and $\Psi \vdash A = B ::
    \stype$ \` by inversion\\
    $\monad{\lb \Psi \rb} \vdash \lb C_j \rb = \lb D_j \rb :: \stype$
    \` by i.h.\\
    $\monad{\lb \Psi \rb} \vdash \lb A_i\rb = \lb B_i\rb :: \stype$ \`
    by i.h.\\
    $\monad{\lb \Psi \rb}\vdash \lb A \rb = \lb B \rb :: \stype$ \` by
    i.h.\\
    $\monad{\lb \Psi \rb} \vdash \ov{\bang \lb C_j\rb} \lolli \ov{\lb A_i\rb} \lolli \lb A \rb =
     \ov{\bang \lb D_j\rb} \lolli \ov{\lb B_i\rb} \lolli \lb B \rb ::
     \stype$ \` by $\m{STEq}\lolli$ and
     $\m{STEq}\bang$
   \end{tabbing}

   \noindent {\bf (3)}

   All cases are identical to those of {\bf (2)}.

   \noindent {\bf (4)}

 \item[Case:] $\m{TMEqR}$

   \begin{tabbing}
     $\Psi \vdash M : \tau$ \` by inversion\\
     $\monad{\lb \Psi \rb} ; \cdot ;\cdot \vdash \lb M \rb_z ::
     z{:}\lb \tau\rb$ \` by type preservation of the encoding\\
     $\monad{\lb \Psi \rb} ; \cdot ; \cdot \vdash \lb M \rb_z = \lb M
     \rb_z :: z{:}\lb\tau\rb$ \` by $\m{PEqR}$
   \end{tabbing}

 \item[Case:] $\m{TMEqS}$ $\m{TMEqT}$

   Immediate by i.h. and the corresponding definitional equality rules
   for processes.

 \item[Case:] $\m{TMEq}\lambda$

   \begin{tabbing}
     $\Psi \vdash \lambda x{:}\tau.M : \Pi x{:}\tau.\sigma$,
     $\Psi \vdash \lambda x{:}\tau'.N : \Pi x{:}\tau'.\sigma'$,
     $\Psi \vdash \Pi x{:}\tau.\sigma = \Pi x{:}\tau'.\sigma' ::
     \type$\\
     and $\Psi , x{:}\tau \vdash M = N : \sigma$ \` by inversion\\
     $\monad{\lb \Psi \rb} , x{:}\monad{\lb \tau \rb} ; \cdot ; \cdot \vdash
     \lb M \rb_z = \lb N \rb_z :: z{:}\lb \sigma\rb$ \` by i.h.\\
     $\monad{\lb \Psi \rb} \vdash \forall x{:}\monad{\lb \tau\rb}.\lb
     \sigma\rb = \forall x{:}\monad{\lb \tau'\rb}.\lb \sigma'\rb ::
     \stype$ \` by i.h.\\
     $\monad{\lb \Psi \rb} \vdash z(x).\lb M\rb_z = z(x').\lb N \rb_z
     :: z{:} \forall x{:}\monad{\lb \tau\rb}.\lb \sigma\rb$ \` by $\m{PEq}\rgt\forall$

   \end{tabbing}

\item[Case:] $\m{TMEqApp}$

\begin{tabbing}
$\Psi \vdash M = M' : \Pi x{:}\tau.\sigma$ and $\Psi \vdash N = N' :
\tau$ \` by inversion\\
$\monad{\lb\Psi\rb} ; \cdot ; \cdot \vdash \lb M \rb_x = \lb M'\rb_x
:: x{:}\forall x{:}\monad{\lb \tau\rb}.\lb \sigma\rb$ \` by i.h.\\
$\monad{\lb \Psi \rb} ; \cdot ; \cdot \vdash \lb N \rb_y = \lb N' \rb_y :: y{:}\lb
\tau\rb$ \` by i.h.\\
$\monad{\lb \Psi \rb} \vdash \monad{\lb N \rb_y} = \monad{\lb N'\rb_y}
: \monad{\lb \tau\rb}$ \` by $\m{TMEq}\{\}$\\
$\monad{\lb \Psi \rb} ; \cdot ; \cdot \vdash (\nub x)(\lb M \rb_x \mid
x\langle  \monad{\lb N \rb_y}\rangle.[x\leftrightarrow z]) =$\\
$\qquad (\nub x)(\lb M' \rb_x \mid
x\langle  \monad{\lb N' \rb_y}\rangle.[x\leftrightarrow z]) :: z{:}\lb
\sigma\rb\{\monad{\lb N \rb_y}/x\}$ \` by $\m{PEq}\cut$,
$\m{PEq}\lft\forall$, $\m{PEqID}$\\
$\monad{\lb \Psi \rb} ; \cdot ; \cdot \vdash (\nub x)(\lb M \rb_x \mid
x\langle  \monad{\lb N \rb_y}\rangle.[x\leftrightarrow z]) =$\\
$\qquad (\nub x)(\lb M' \rb_x \mid
x\langle  \monad{\lb N' \rb_y}\rangle.[x\leftrightarrow z]) :: z{:}\lb
\sigma\{N/x\}\rb$ \` by compositionality and conversion\\
\end{tabbing}

\item[Case:] $\m{TMEq}\beta$

\begin{tabbing}
$\Psi \vdash \lambda x{:}\tau.M : \Pi x{:}\tau.\sigma$ and $\Psi
\vdash N : \tau$ \` by inversion\\
$\monad{\lb \Psi \rb} ;\cdot;\cdot\vdash y(x).\lb M \rb_y :: y{:}\forall
x{:}\monad{\lb \tau\rb}.\lb \sigma\rb$ \` by type preservation of the
encoding\\
$\monad{\lb \Psi \rb};\cdot;\cdot \vdash \lb N \rb_w :: w{:}\lb \tau\rb$ \` by
type preservation of the encoding\\
$\monad{\lb \Psi \rb} \vdash \monad{\lb N \rb_w} : \monad{\lb\tau\rb}$
\` by $\{\}I$\\

To show: $\monad{\lb\Psi\rb} ; \cdot ; \cdot \vdash \lb (\lambda
x{:}\tau.M)\,N\rb_z = \lb M\{N/x\} \rb_z :: z{:} \lb \sigma\{N/x\}\rb$\\
S.T.S: $\monad{\lb\Psi\rb} ; \cdot ; \cdot \vdash (\nub y)(y(x).\lb M \rb_y
\mid y\langle \monad{\lb N \rb_w} \rangle.[y\leftrightarrow z]) = \lb
M\{N/x\} \rb_z :: z{:}\lb \sigma\{N/x\}\rb$\\\\

$\monad{\lb\Psi\rb} ; \cdot ; \cdot \vdash (\nub y)(y(x).\lb M \rb_y
\mid y\langle \monad{\lb N \rb_w} \rangle.[y\leftrightarrow z]) ::
z{:}\lb\sigma\rb\{\monad{\lb N \rb_w}/x\}$ \\\` by above,
$\m{id}$,$\lft\forall$ and $\m{cut}$\\
$\tra{}\tra{} \lb M\rb_z\{\monad{\lb N \rb_w}/x\}$ \` by operational semantics\\
$\monad{\lb\Psi\rb} ; \cdot ; \cdot \vdash \lb M\rb_z\{\monad{\lb N
  \rb_w}/x\} :: z{:}\lb\sigma\rb\{\monad{\lb N \rb_w}/x\}$ \` by type
preservation\\
$\monad{\lb\Psi\rb} ; \cdot ; \cdot \vdash (\nub y)(y(x).\lb M \rb_y
\mid y\langle \monad{\lb N \rb_w} \rangle.[y\leftrightarrow z]) = \lb
M\{N/x\} \rb_z :: z{:}\lb \sigma\{N/x\}\rb$ \\\` by above,
$\m{PEqRed}$, compositionality and conversion
\end{tabbing}

\item[Case:] $\m{TMEq}\eta$

\begin{tabbing}
$\Psi \vdash M : \Pi x{:}\tau.\sigma$ and $x \not\in {fv}(M)$ \` by
inversion\\
$\monad{\lb \Psi \rb} ; \cdot ; \cdot \vdash \lb M \rb_y ::
z{:}\forall x{:}\monad{\lb \tau\rb}.\lb \sigma\rb$ \` by type
preservation of the encoding\\
$\monad{\lb \Psi \rb} ; \cdot ; \cdot \vdash
z(x).(\nub y)(\lb M \rb_y \mid y\langle x\rangle.[y\leftrightarrow z]) =$\\
\qquad $z(x).(\nub y)(\lb M \rb_y \mid y\langle \monad{c\leftarrow
  (y\leftarrow x ; [y\leftrightarrow c]) 
\leftarrow \cdot} \rangle.[y\leftrightarrow z]) :: z{:}\forall x{:}\monad{\lb \tau\rb}.\lb \sigma\rb$\\
\` by $\m{PEqCut}$, $\m{PEq}\lft\forall$, $\m{PEqID}$,
$\m{TMEq}\{\}\eta$ and $\m{PEqR}$\\

To show: $\monad{\lb \Psi \rb} ; \cdot ; \cdot \vdash
z(x).(\nub y)(\lb M \rb_y \mid y\langle \monad{c\leftarrow
  (y\leftarrow x ; [y\leftrightarrow c]) 
\leftarrow \cdot} \rangle.[y\leftrightarrow z])
=$\\
$\qquad \lb M \rb_z :: z{:}\forall x{:}\monad{\lb \tau\rb}.\lb \sigma\rb$\\\\

$\monad{\lb \Psi \rb} ; \cdot ; \cdot \vdash
z(x).(\nub y)(\lb M \rb_y \mid y\langle x\rangle.[y\leftrightarrow z])
= (\nub y)(\lb M \rb_y \mid z(x).y\langle x\rangle.[y\leftrightarrow
z]) :: z{:}\forall x{:}\monad{\lb \tau\rb}.\lb \sigma\rb$
\\\` by $\m{PEqCC}\forall$\\
$\monad{\lb \Psi \rb} ;\cdot ;\cdot \vdash
(\nub y)(\lb M \rb_y \mid z(x).y\langle x\rangle.[y\leftrightarrow
z]) = (\nub y)(\lb M \rb_y \mid [y\leftrightarrow z]) :: z{:}\forall
x{:}\monad{\lb \tau\rb}.\lb \sigma\rb$ \\\` by $\m{PEq}\forall\eta$\\

$(\nub y)(\lb M \rb_y \mid [y\leftrightarrow z]) \tra{} \lb M \rb_z$
\` by the operational semantics\\
$ \monad{\lb \Psi \rb} ;\cdot ;\cdot \vdash \lb M \rb_z :: z{:}\forall
x{:}\monad{\lb \tau\rb}.\lb \sigma\rb$ \` by type preservation\\
$ \monad{\lb \Psi \rb} ;\cdot ;\cdot \vdash 
(\nub y)(\lb M \rb_y \mid [y\leftrightarrow z]) = \lb M \rb_z :: z{:}\forall
x{:}\monad{\lb \tau\rb}.\lb \sigma\rb$ \` by $\m{PEqRed}$ \\
$\monad{\lb \Psi \rb} ; \cdot ; \cdot \vdash
z(x).(\nub y)(\lb M \rb_y \mid y\langle \monad{c\leftarrow
  (y\leftarrow x ; [y\leftrightarrow c]) 
\leftarrow \cdot} \rangle.[y\leftrightarrow z])
= \lb M \rb_z :: z{:}\forall x{:}\monad{\lb \tau\rb}.\lb \sigma\rb$\\
\` by the above reasoning and $\m{PEqT}$
\end{tabbing}

\item[Case:] $\m{TMEq}\{\}$   

\begin{tabbing}
$\Psi ; \ov{u_j {:}B_j} ; \ov{d_i{:}A_i} \vdash P = Q :: c{:}A$ \` by
inversion\\
$\monad{\lb \Psi \rb} ;  \ov{u_j {:}\lb B_j\rb} ; \ov{d_i{:}\lb
  A_i\rb} \vdash \lb P \rb = \lb Q \rb :: c{:}\lb A \rb$ \` by i.h.\\
$\monad{\lb \Psi \rb} ; \cdot ; \cdot \vdash c(u_0).\dots
c(u_j).c(d_0).\dots c(d_n).\lb P \rb =$\\
$\qquad c(u_0).\dots
c(u_j).c(d_0).\dots c(d_n).\lb Q \rb :: c{:} \ov{\bang \lb B_j\rb}
\lolli
\ov{\lb A_i\rb} \lolli \lb A \rb$ \` by $\m{PEq}\rgt\lolli$,
$\m{PEq}\lft\bang$

\end{tabbing}

\item[Case:] $\m{TMEq}\{\}\eta$

\begin{tabbing}
$\Psi \vdash M : \monad{ \ov{u_j {:}B_j} ; \ov{d_i{:}A_i} \vdash
  c{:}A}$ \` by inversion\\
$\monad{\lb \Psi \rb} ; \cdot ; \cdot \vdash \lb M \rb_z :: \ov{\bang \lb B_j\rb}
\lolli
\ov{\lb A_i\rb} \lolli \lb A \rb$\\
To show: $\monad{\lb \Psi \rb} ; \cdot ; \cdot \vdash
c(u_0).\dots c(u_j).c(d_0).\dots c(d_n).\lb z \leftarrow M ;
\ov{u_j};\ov{d_i};[z\leftrightarrow c]\rb$\\
$\qquad\qquad = \lb M \rb_c :: c{:}  \ov{\bang \lb B_j\rb}
\lolli
\ov{\lb A_i\rb} \lolli \lb A \rb$\\
S.T.S: $\monad{\lb \Psi \rb} ; \cdot ; \cdot \vdash
c(u_0).\dots c(u_j).c(d_0).\dots c(d_n).  
(\nub z)(\lb M \rb_z \mid$\\
$\qquad\ov{z}\langle v_1\rangle.(\ov{u_1}\langle a_1\rangle.[a_1\leftrightarrow v_1] \mid
\dots \mid  \ov{z}\langle d_1 \rangle.([y_1\leftrightarrow
      d_1] \mid \dots \mid \ov{z}\langle d_n\rangle.([y_n
                       \leftrightarrow d_n] \mid [z\leftrightarrow c])
                       \dots )$\\
$\qquad = \lb M \rb_c :: c{:}  \ov{\bang \lb B_j\rb}
\lolli
\ov{\lb A_i\rb} \lolli \lb A \rb$\\
\` by $\m{PEqCC}\lolli$ and $\m{PEqCC}\bang$ and $\m{PEq}\lolli\eta$
and $\m{PEq}\bang\eta$, $\m{PEqR}$ and $\m{PEqRed}$
 
\end{tabbing}

\item[Case:] $\m{PEqR}$

\begin{tabbing}
$\Psi ; \Ga ; \D \vdash P :: z{:}A$ \` by inversion\\
$\monad{\lb \Psi\rb} ; \lb \Ga \rb ; \lb \D\rb \vdash \lb P \rb ::
z{:}\lb A \rb$ \` by type preservation of the encoding\\
$\monad{\lb \Psi\rb} ; \lb \Ga \rb ; \lb \D\rb \vdash \lb P \rb = \lb
P \rb :: z{:} \lb A \rb$ \` by $\m{PEqR}$
\end{tabbing}
  
\item[Case:] $\m{PEqS}$ and $\m{PEqT}$

  Straightforward by i.h.

\item[Case:] $\m{PEqRed}$

  \begin{tabbing}
$\Psi ; \Ga ; \D \vdash P :: z{:}A$, $P \tra{}^* Q$ and $\Psi ; \Ga ;
\D \vdash Q :: z{:}A$ \` by inversion\\
$\monad{\lb \Psi \rb} ; \lb \Ga \rb ; \lb \D \rb \vdash \lb P \rb ::
z{:} \lb A \rb$ \` by type preservation of the encoding\\
$\monad{\lb \Psi \rb} ; \lb \Ga \rb ; \lb \D \rb \vdash \lb Q \rb ::
z{:} \lb A \rb$ \` by type preservation of the encoding\\
$\lb P \rb \tra{}^* \lb Q \rb$ \` by operational correspondence\\
$\monad{\lb \Psi \rb} ; \lb \Ga \rb ; \lb \D \rb \vdash \lb P \rb =
\lb Q \rb :: z{:}\lb A \rb$ \` by $\m{PEqRed}$
\end{tabbing}

\item[Case:] $\m{PEq}\{\}E$

  \begin{tabbing}
    $\Psi \vdash M = N: \monad{\ov{u_j{:}B_j};\ov{d_i{:}A_i}\vdash
      c{:}A}$,
    $\Psi ; \Ga ; \D , c{:}A \vdash Q = Q' :: z{:}C$, $\ov{u_j{:}B_j}
    \subseteq \Ga$ and $\ov{d_i{:}A_i} = \D'$\\ \` by inversion\\
    $\monad{\lb\Psi\rb} ; \cdot ; \cdot \vdash \lb M \rb_y = \lb N
    \rb_ y :: y{:}\ov{\bang \lb B_j\rb} \lolli \ov{\lb A_i\rb} \lolli
    \lb A\rb$ \` by i.h.\\
    $\monad{\lb\Psi\rb} ; \lb \Ga \rb ; \lb \D \rb, c{:}\lb A \rb \vdash \lb Q \rb =
    \lb Q' \rb :: z{:}\lb C \rb$ \` by i.h.\\
    We conclude by $\m{PEqCut}$, (repeated) $\m{PEq}\lft\lolli$,
    $\m{PEq}\lft\bang$ and  $\m{PEqID}$.
  \end{tabbing}

All other process cases follow fairly straightforwardly by i.h.
  
\end{description}
\end{proof}

%\encprestyp*
\begin{lemma}[Preservation of Typing]
 \label{lem:pwf}~
\begin{enumerate}
\item If $\Psi \vdash$ then $\lb \Psi\rb\vdash$ and $\{\lb \Psi \rb\} \vdash$.
\item If $\Psi \vdash K$ then $\{\lb \Psi \rb\} \vdash \lb K\rb$
\item If $\Psi \vdash \tau :: K$ then $\{\lb \Psi \rb\} \vdash
  \lb\tau\rb :: \lb K\rb$
\item If $\Psi \vdash A :: K$ then $\{\lb \Psi \rb\} \vdash \lb A \rb
  :: \lb K\rb$
\item If $\Psi \vdash M : \tau$ then $\{\lb \Psi \rb\} ; \cdot
  ; \cdot \vdash \lb M \rb_z :: z{:}\lb \tau\rb$
\item If $\Psi ; \Ga ; \D \vdash P :: z{:}A$ then
 $\{\lb\Psi\rb\} ; \lb\Ga\rb  ; \lb\D\rb \vdash \lb P\rb :: z{:}\lb
 A \rb$
\end{enumerate}
\end{lemma}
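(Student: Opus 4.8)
The plan is to prove all six statements \emph{simultaneously} by mutual induction on the height of the given well-formedness, kinding or typing derivation, interleaving the argument with the companion Preservation of Equality result. The reason the two cannot be separated is that the conversion rules $(\m{Conv})$, $(\rgt{\m{Conv}})$ and $(\lft{\m{Conv}})$ force typing preservation to appeal to equality preservation, while the $\beta$/$\eta$ cases of the latter appeal back to typing; neither is establishable in isolation. Throughout, the central auxiliary device is the Compositionality Lemma (Lemma~\ref{lem:comp}), which lets us commute the encoding with substitution, i.e.\ relate $\lb \sigma\{N/x\}\rb$ with $\lb\sigma\rb\{\monad{\lb N\rb_c}/x\}$ and analogously for kinds, session types and processes.

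First I would dispatch the structural layers in the order in which they depend on one another. For contexts (1), the extension $\Psi,x{:}\tau$ is mapped to $\monad{\lb\Psi\rb},x{:}\monad{\lb\tau\rb}$; well-formedness follows because the inductive hypothesis on (3) gives $\lb\tau\rb :: \stype$, whence the monadic type $\monad{\lb\tau\rb}$ is $\type$-kinded and the extension is admissible. Kinds (2) and the two type layers (3),(4) are then handled rule by rule: each formation rule is translated to the homomorphically corresponding rule in the target, with dependent binders $\Pi x{:}\tau.(-)$, $\forall x{:}\tau.(-)$ and $\exists x{:}\tau.(-)$ all routed through $\monad{\lb\tau\rb}$ exactly as in Fig.~\ref{fig:enc}, and the monadic type $\monad{\ov{u_j{:}B_j};\ov{d_i{:}A_i}\vdash c{:}A}$ translated to the $\bang$/$\lolli$-chain, whose formation requires only that each $\lb B_j\rb$, $\lb A_i\rb$ and $\lb A\rb$ be $\stype$-kinded (immediate from the IH). The genuinely non-syntactic cases are the type-level applications $\tau\,M$ and $A\,M$: here the target kind is $\lb K\rb\{\monad{\lb M\rb_c}/x\}$, which Lemma~\ref{lem:comp} rewrites to $\lb K\{M/x\}\rb$, so the result lands at the kind demanded by the statement.

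The term layer (5) is the crux. $(\Pi I)$ translates to an input $z(x).\lb M\rb_z$ and is typed by $(\rgt\forall)$ from the IH on the body; $(\{\}I)$ translates to the sequence of inputs $z(u_0).\dots.z(d_n).\lb P\rb$, typed by repeated $(\rgt\lolli)$ together with $(\lft\bang)$ to move the shared names into $\Ga$, using the IH on (6); and the variable case uses the monadic elimination rule $(\{\}E)$ on $x{:}\monad{\lb\tau\rb}$ followed by a forwarder. The delicate case is $(\Pi E)$: $\lb M\,N\rb_z = (\nub x)(\lb M\rb_x \mid x\langle\monad{\lb N\rb_y}\rangle.[x\leftrightarrow z])$ is typed by $(\cut)$ composing the IH-typed $\lb M\rb_x :: x{:}\forall x{:}\monad{\lb\tau\rb}.\lb\sigma\rb$ with a $(\lft\forall)$ that sends $\monad{\lb N\rb_y}$ (of type $\monad{\lb\tau\rb}$ by $(\{\}I)$ from the IH on $N$), leaving the session at type $\lb\sigma\rb\{\monad{\lb N\rb_y}/x\}$, which the forwarder $(\m{id})$ offers on $z$; Lemma~\ref{lem:comp} then identifies this with $\lb\sigma\{N/x\}\rb$ and a $(\rgt{\m{Conv}})$ step recovers the target type. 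The conversion rule $(\m{Conv})$ keeps the encoded process fixed and appeals to Preservation of Equality to convert $\lb\tau\rb$ into $\lb\sigma\rb$.

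Finally, the process layer (6) is almost entirely homomorphic, so most cases follow by direct application of the IH and the identically-named target rule; the dependent communications $(\rgt\exists)$, $(\lft\exists)$, $(\rgt\forall)$, $(\lft\forall)$ again thread a term $M$ through $\monad{\lb M\rb_y}$ and close with Lemma~\ref{lem:comp} plus conversion, exactly as in the application case. I expect the principal obstacle to be the monadic elimination rule $(\{\}E)$: its encoding replaces communication of free names by a cascade of fresh-name outputs $\ov{c}\langle v_1\rangle.(\dots)$ interleaved with forwarders $[y_i\leftrightarrow d_i]$ and copy-steps for the $\bang$-typed arguments, and showing this cascade is well-typed against the $\bang$/$\lolli$-chain produced by $\lb M\rb_c$ requires carefully peeling off the target type with repeated $(\lft\lolli)$, $(\lft\bang)$/$(\m{copy})$ and $(\m{id})$ rules in lock-step with the syntactic structure of the encoding, while simultaneously tracking that the residual context matches $\lb\D'\rb,\lb\D\rb$. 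Verifying this bookkeeping, together with the mutual reliance on equality preservation in the conversion cases, is where essentially all the real work lies.
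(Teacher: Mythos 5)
Your proposal matches the paper's proof in both structure and detail: the paper proves this by induction on the given judgement, mutually with Preservation of Equality (which it explicitly notes must be established simultaneously), routes every dependent binder and type-level application through the monad and the Compositionality Lemma exactly as you describe, and types $\lb M\,N\rb_z$ via $(\m{cut})$, $(\lft\forall)$, $(\m{id})$ followed by compositionality and conversion. The only divergence is one of emphasis: the paper dismisses the process-level $(\{\}E)$ case with ``all other process cases follow straightforwardly by i.h.,'' whereas you correctly flag the $\bang/\lolli$-peeling bookkeeping there as the most delicate remaining step.
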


\begin{proof}
  By induction on the given judgement.
\noindent {\bf (1)} is immediate by induction.
  \begin{description}

\item[Case:] $\tau = \Pi x{:}\tau'.\sigma$

  \begin{tabbing}
    $\Psi \vdash \tau' :: \type$ and $\Psi , x{:}\tau' \vdash \sigma ::
    \type$ \` by inversion\\
    $\monad{\lb \Psi \rb} \vdash \lb \tau' \rb :: \stype$ \` by i.h.\\
    $\monad{\lb \Psi \rb}, x{:}\monad{\lb\tau'\rb} \vdash \lb \sigma
    \rb :: \stype$ \` by i.h.\\
    $\monad{\lb \Psi \rb} \vdash \forall x{:}\monad{\lb\tau'\rb}.\lb
    \sigma\rb :: \stype$ \` by $\{\}$ and $\forall$ rules
  \end{tabbing}

\item[Case:] $\tau = \monad{\ov{u_j{:}B_j};\ov{d_i{:}B_i} \vdash
    c{:}A}$

Straightforward by induction.

\item[Case:] $\tau = \lambda x{:}\tau'.\sigma$

  \begin{tabbing}
    $\Psi \vdash \tau' :: \type$ and $\Psi , x{:}\tau' \vdash \sigma ::
    \type$ \` by inversion\\
    $\monad{\lb \Psi \rb} \vdash \lb \tau' \rb :: \stype$ \` by i.h.\\
    $\monad{\lb \Psi \rb}, x{:}\monad{\lb\tau'\rb} \vdash \lb \sigma
    \rb :: \stype$ \` by i.h.\\
    $\monad{\lb \Psi \rb} \vdash \lambda x{:}\monad{\lb\tau'\rb}.\lb
    \sigma\rb :: \stype$ \` by $\{\}$ and $\lambda$ rules
  \end{tabbing}

\item[Case:] $\tau = \tau'\, M$

  \begin{tabbing}
$\Psi \vdash \tau' :: \Pi x{:}\sigma . K$ and $\Psi \vdash M : \sigma$
\` by inversion\\
$\monad{\lb \Psi \rb} \vdash \lb \tau'\rb :: \Pi x{:}\monad{\lb \sigma
  \rb}.\lb K \rb$ \` by i.h.\\
$\monad{\lb \Psi \rb} ; \cdot ; \cdot \vdash \lb M \rb_c :: c{:}\lb
\sigma\rb$ \` by i.h.\\
$\monad{\lb \Psi \rb} \vdash \monad{\lb M \rb_c} :: c{:}\monad{\lb
  \sigma\rb}$ \` by $\{\}I$\\
$\monad{\lb \Psi \rb} \vdash \lb \tau'\rb \, \monad{\lb M \rb_c} ::
\lb K \rb \{\monad{\lb M \rb_c}/x \}$ \` by application
well-formedness rule\\
$\monad{\lb \Psi \rb} \vdash \lb \tau'\rb \, \monad{\lb M \rb_c} ::
 \lb K \{{ M}/x \}\rb$ \` by compositionality
\end{tabbing}

%\todo[inline]{This works if compositionality for kinds works}

\item[Case:] $\tau = \lambda t :: K . \tau'$

\begin{tabbing}
$\Psi , t :: K \vdash \tau' :: K_2$ \` by inversion\\
$\monad{\lb \Psi \rb} , t :: \lb K \rb \vdash \lb \tau' \rb :: \lb
K_2\rb$ \` by i.h.\\
$\monad{\lb \Psi \rb} \vdash \lambda t :: \lb K \rb . \lb \tau'\rb ::
\Pi t :: \lb K \rb . \lb K' \rb$ \` by $T\lambda$ well-formedness rule
\end{tabbing}

\item[Case:] $\tau = \tau' \, \sigma$

\begin{tabbing}
$\Psi \vdash \tau' :: \Pi t {::} K_1 . K_2$ and $\Psi \vdash \sigma ::
K_1$ \` by inversion\\
$\monad{\lb \Psi \rb} \vdash \lb \tau' \rb :: \Pi t {::}\lb K_1
\rb. \lb K_2\rb$ \` by i.h.\\
$\monad{\lb \Psi \rb} \vdash \lb \sigma \rb :: \lb K_1\rb$ \` by
i.h.\\
$\monad{\lb \Psi \rb} \vdash \lb \tau'\rb \, \lb \sigma\rb :: \lb
K_2\rb\{\lb K_1\rb/t\}$ \` by $Tapp$ well-formedness rule\\
$\monad{\lb \Psi \rb} \vdash \lb \tau'\rb \, \lb \sigma\rb :: 
\lb K_2\{ K_1 /t\}\rb$ \` by compositionality
\end{tabbing}
  
\item[Case:] $\tau = \tau'$ by conversion rule

\begin{tabbing}
$\Psi \vdash \tau' :: K$ \` by inversion\\
$\Psi \vdash K = K'$ \` by inversion\\
$\monad{\lb \Psi \rb} \vdash \lb \tau' \rb :: \lb K\rb$ \` by i.h.\\
$\monad{\lb \Psi \rb} \vdash \lb K \rb = \lb K' \rb$ \` by
preservation of equality\\
$\monad{\lb \Psi \rb} \vdash \lb \tau'\rb :: \lb K'\rb$ \` by
conversion rule
\end{tabbing}

\item[Case:] $A = \one$

Immediate from the definition.

\item[Case:] $A = \bang A'$

Immediate by i.h and $\bang$ well-formedness rule.
  
\item[Case:] $A = A_1 \lolli A_2$

Immediate by i.h. and $\lolli$ well-formedness rule.

\item[Case:] $A = A_1 \tensor A_2$

Immediate by i.h. and $\tensor$ well-formedness rule.
  
\item[Case:] $A = \forall x{:}\tau.A_0$

  \begin{tabbing}
  $\Psi \vdash \tau :: \type$ and $\Psi, x{:}\tau \vdash A_0 ::
  \stype$ \` by inversion\\
  $\monad{\lb\Psi\rb} \vdash \lb \tau \rb :: \stype$ \` by i.h.\\
  $\monad{\lb \Psi \rb} , x {:} \monad{\lb \tau \rb} \vdash \lb A_0 \rb ::
  \stype$ \` by i.h.\\
  $\monad{\lb \Psi \rb} \vdash \forall x{:}\monad{\lb \tau \rb}.\lb
  A_0 \rb :: \stype$ \` by $\forall$ well-formedness rule
  \end{tabbing}

\item[Case:] $A = \exists x{:}\tau.A_0$

  \begin{tabbing}
  $\Psi \vdash \tau :: \type$ and $\Psi, x{:}\tau \vdash A_0 ::
  \stype$ \` by inversion\\
  $\monad{\lb\Psi\rb} \vdash \lb \tau \rb :: \stype$ \` by i.h.\\
  $\monad{\lb \Psi \rb} , x {:} \monad{\lb \tau \rb} \vdash \lb A_0 \rb ::
  \stype$ \` by i.h.\\
  $\monad{\lb \Psi \rb} \vdash \exists x{:}\monad{\lb \tau \rb}.\lb
  A_0 \rb :: \stype$ \` by $\exists$ well-formedness rule
  \end{tabbing}
  
\item[Case:] $A = \with\{\ov{l_i {:}B_i}\}$

  Immediate by i.h. and $\with$ well-formedness rule.

\item[Case:] $A = \oplus\{\ov{l_i {:}B_i}\}$
  
  Immediate by i.h. and $\oplus$ well-formedness rule.

\item[Case:] $A = \lambda x{:}\tau.A'$

  \begin{tabbing}
$\Psi \vdash \tau :: \type$ and $\Psi , x{:}\tau \vdash A' :: K$
\` by inversion\\
$\monad{\lb\Psi\rb} \vdash \lb \tau \rb :: \stype$ \` by i.h.\\
$\monad{\lb \Psi \rb} , x{:}\monad{\lb \tau \rb} \vdash \lb A'\rb ::
\lb K \rb$ \` by i.h.\\
$\monad{\lb\Psi\rb} \vdash \lambda x{:}\monad{\lb \tau \rb} . \lb
A'\rb :: \Pi x{:}\monad{\lb \tau \rb}.\lb K \rb$ \` by $S\lambda$
well-formedness rule\\
$\monad{\lb\Psi\rb} \vdash \lambda x{:}\monad{\lb \tau \rb} . \lb
A'\rb :: \lb \Pi x{:} \tau . K \rb$ \` by compositionality
  \end{tabbing}
  
\item[Case:] $A = A_0\, M$

  \begin{tabbing}
    $\Psi \vdash A_0 :: \Pi x{:}\tau.K$ and
    $\Psi \vdash M : \tau$ \` by inversion\\
    $\monad{\lb\Psi\rb} \vdash \lb A_0\rb :: \Pi
    x{:}\monad{\lb\tau\rb}.\lb K \rb$ \` by i.h.\\
    $\monad{\lb\Psi\rb} ; \cdot ; \cdot \vdash \lb M \rb_c :: c{:}\lb \tau\rb$
    \` by i.h.\\
    $\monad{\lb\Psi\rb} \vdash \monad{\lb M \rb_c} :
    \monad{\lb\tau\rb}$ \` by $\{\}$\\
    $\monad{\lb\Psi\rb} \vdash \lb A_0\rb \, \monad{\lb M \rb_c} ::
    \lb K\rb\{ \monad{\lb M \rb_c} /x\}$ \` by $S$app well-formedness
    rule\\
    $\monad{\lb\Psi\rb} \vdash \lb A_0\rb \, \monad{\lb M \rb_c} ::
    \lb K \{M/x\} \rb$ \` by compositionality
  \end{tabbing}

\item[Case:] $A= \lambda t :: K.A'$

  \begin{tabbing}
$\Psi , t :: K \vdash A' :: K_2$ and $\Psi \vdash K_1$ \` by
inversion\\
$\monad{\lb \Psi \rb} , t :: \lb K \rb \vdash \lb A' \rb :: \lb
K_2\rb$ \` by i.h.\\
$\monad{\lb \Psi \rb} \vdash \lb K_1 \rb$ \` by i.h.\\
$\monad{\lb \Psi \rb} \vdash \lambda t :: \lb K_1 \rb . \lb A' \rb ::
\Pi t :: \lb K_1 \rb . \lb K_2\rb$ \` by $S\Pi$ well-formedness rule\\
$\monad{\lb \Psi \rb} \vdash \lambda t :: \lb K_1 \rb . \lb A' \rb ::
\lb \Pi t :: K_1 . K_2 \rb$ \` by compositionality\\

  \end{tabbing}
  
\item[Case:] $A = A' \, B$

  \begin{tabbing}
$\Psi \vdash A' :: \Pi t :: K_1 . K_2$ and $\Psi \vdash B :: K_1$ \`
by inversion\\
$\monad{\lb \Psi \rb} \vdash \lb A' \rb :: \Pi t :: \lb K_1 \rb . \lb
K_2 \rb$ \` by i.h.\\
$\monad{\lb \Psi \rb} \vdash \lb B \rb :: \lb K_1 \rb$ \` by i.h.\\
$\monad{\lb \Psi \rb} \vdash \lb A' \rb \, \lb B \rb :: \lb
K_2\rb\{\lb B\rb / x\}$ \` by S$app$ well-formedness rule\\
$\monad{\lb \Psi \rb} \vdash \lb A' \rb \, \lb B \rb :: \lb
K_2 \{ B /x \}\rb$ \` by compositionality
  \end{tabbing}

\item[Case:] $A = A'$ by conversion rule

  \begin{tabbing}
    $\Psi \vdash A' :: K$ and $\Psi \vdash K = K'$ \` by inversion\\
    $\monad{\lb \Psi \rb} \vdash \lb A' \rb :: \lb K\rb$ \` by i.h.\\
    $\monad{\lb \Psi \rb} \vdash \lb K \rb = \lb K' \rb$ \` by
    preservation of equality\\
    $\monad{\lb \Psi \rb} \vdash \lb A'\rb :: \lb K'\rb$ \` by
    conversion rule 
  \end{tabbing}

\item[Case:] $M = \lambda x{:}\tau.M'$

  \begin{tabbing}
    $\Psi , x{:}\tau \vdash M : \sigma$ \` by inversion\\
    $\monad{\lb \Psi \rb} , x{:}\monad{\lb \tau\rb} ; \cdot ; \cdot
    \vdash \lb M \rb_z :: z{:}\lb \sigma\rb$ \` by i.h.\\
    $\monad{\lb \Psi \rb} ; \cdot ; \cdot \vdash z(x).\lb M \rb_z ::
    z{:}\forall x{:}\monad{\lb \tau\rb}.\lb \sigma\rb$ \` by
    $\rgt\forall$\\

  \end{tabbing}
  
\item[Case:] $M = M' \, N$

  \begin{tabbing}
$\Psi \vdash M' : \Pi x{:}\tau.\sigma$ and $\Psi \vdash N : \tau$ \`
by inversion\\
$\monad{\lb \Psi \rb} ;\cdot ;\cdot \vdash \lb M'\rb_x :: x{:}\forall
x{:}\monad{\lb \tau\rb}.\lb \sigma\rb$ \` by i.h.\\
$\monad{\lb \tau\rb} ; \cdot ; \cdot \vdash \lb N \rb_y :: y{:}\lb
\tau\rb$ \` by i.h.\\
$\monad{\lb \tau\rb} \vdash \monad{\lb N \rb_y} : \monad{\lb\tau\rb}$
\` by $\{\}I$\\
$\monad{\lb \Psi \rb} ;\cdot ;\cdot \vdash (\nub x)(\lb M'\rb_x \mid
x\langle\monad{\lb N \rb_y} \rangle.[x\leftrightarrow z]) :: z{:} \lb
\sigma\rb\{\monad{\lb N \rb_y} / x\}$ \` by $\cut$, $\lft\forall$ and $\m{id}$\\
$\monad{\lb \Psi \rb} ;\cdot ;\cdot \vdash (\nub x)(\lb M'\rb_x \mid
x\langle\monad{\lb N \rb_y} \rangle.[x\leftrightarrow z]) :: z{:}
\lb \sigma\{N/x\}\rb$ \` by compositionality
\end{tabbing}

\item[Case:] $M = x$
\begin{tabbing}
  $\Psi , x{:}\tau \vdash x{:}\tau$ \` by assumption\\
  $\monad{\lb \Psi \rb} , x{:}\monad{\lb \tau\rb} ; \cdot ; \cdot
  \vdash y\leftarrow x ; [y\leftrightarrow z] :: z{:}\lb \tau\rb$ \`
  by $\monad{}E$ and $\m{id}$ rules
  
\end{tabbing}

\item[Case:] $M = \monad{c \leftarrow P \leftarrow \ov{u_j};\ov{d_i}}$

  \begin{tabbing}
  $\Psi \vdash  \monad{c \leftarrow P \leftarrow \ov{u_j};\ov{d_i}} :
  \monad{\ov{u_j{:}B_j} ; \ov{d_i{:}A_i} \vdash c{:}A}$ \` by
  assumption\\
  $\Psi ; \ov{u_j{:}B_j} ;  \ov{d_i{:}A_i} \vdash P :: c{:}A$ \` by
  inversion\\
  $\monad{\lb \Psi \rb} ; \ov{u_j{:}\lb B_j\rb} ; \ov{d_i{:}\lb
    A_i\rb} \vdash \lb P \rb :: c{:}\lb A \rb$  \` by i.h.\\
  $\monad{\lb \Psi \rb} ; \cdot ; \cdot \vdash
  z(u_0).\dots.z(u_j).z(d_0). \dots . z(d_n).\lb P\rb :: z{:}
  \ov{\bang\lb B_j\rb} \lolli \ov{\lb A_i\rb} \lolli \lb A\rb$ \\\` by
  $\lft{\bang}$, $\m{copy}$ and $\rgt\lolli$ (repeated)\\

  \end{tabbing}
  
\item[Case:] $M = M'$ by conversion rule
\begin{tabbing}
$\Psi \vdash M' : \sigma$ \` by inversion\\
$\Psi \vdash \sigma = \tau :: \type$ \` by inversion\\
$\monad{\lb \Psi \rb};\cdot ;\cdot \vdash \lb  M' \rb :: z{:}\lb
\sigma\rb$ \` by i.h.\\
$\monad{\lb \Psi \rb} \vdash \lb \sigma\rb = \lb \tau\rb :: \stype$ \`
by preservation of equality\\
$\monad{\lb \Psi \rb};\cdot ;\cdot \vdash \lb  M' \rb ::
z{:}\lb\tau\rb$ \` by conversion rule
\end{tabbing}
 
\item[Case:] $P = z\langle M \rangle.P_1$ by $\rgt\exists$

\begin{tabbing}
$\Psi \vdash M : \tau$ and $\Psi ; \Ga ; \D \vdash P_1 ::
z{:}A\{M/x\}$ \` by inversion\\
$\monad{\lb \Psi \rb} ; \cdot ; \cdot \vdash \lb M\rb_y :: y{:}\lb \tau\rb$ \` by
i.h.\\
$\monad{\lb \Psi \rb} \vdash \monad{\lb M\rb_y } : \monad{\lb
  \tau\rb}$ \` by $\monad{}I$\\ 
$\monad{\lb \Psi \rb} ; \lb \Ga \rb ; \lb \D \rb \vdash \lb P_1 \rb ::
z{:}\lb A\{M/x\}\rb$ \` by i.h.\\
$\monad{\lb \Psi \rb} ; \lb \Ga \rb ; \lb \D \rb \vdash \lb P_1 \rb ::
z{:}\lb A \rb\{\monad{\lb M \rb_y}/x\}$ \` by compositionality\\
$\monad{\lb \Psi \rb} ; \lb \Ga \rb ; \lb \D \rb \vdash z\langle
\monad{\lb M\rb_y }\rangle.\lb P_1 \rb :: z{:}\exists x{:}\monad{\lb
  \tau\rb}.\lb A\rb$ \` by $\rgt\exists$
\end{tabbing}

All other process cases follow straightforwardly by i.h. (and
compositionality/preservation of equality when needed).
 
\end{description}
\end{proof}

%\encopcorrcomp* 

\begin{theorem}[Operational Correspondence -- Completeness]~
  \begin{enumerate}
  \item Let $\Psi ; \Ga ; \D \vdash P :: z{:}A$. If $P \tra{} P'$ then
    $\lb P \rb \tra{}^+ Q$ with
    $\monad{\lb\Psi\rb} ; \lb\Ga\rb ; \lb\D\rb \vdash Q = \lb P'\rb ::
    z{:}A$
\item Let $\Psi \vdash M : \tau$. If $M \tra{} M'$ then
  $\lb M \rb_z \tra{} N$ with
  $\monad{\lb \Psi \rb} ; \cdot ; \cdot \vdash N = \lb M '\rb_z
  ::z{:}\lb \tau\rb$
\end{enumerate}
\end{theorem}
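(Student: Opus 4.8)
The plan is to prove both statements simultaneously by induction on the derivations of $P \tra{} P'$ and $M \tra{} M'$, since the two reduction relations are mutually entangled (outputs and monadic spawns carry terms, while monadic values carry processes). The workhorses are the Compositionality lemma (Lemma~\ref{lem:comp}), which lets me commute the encoding past substitution---rewriting $\lb M_0\rb_z\{\monad{\lb N_0\rb_y}/x\}$ as $\lb M_0\{N_0/x\}\rb_z$, and $\lb Q\rb\{\monad{\lb M\rb_y}/x\}$ as $\lb Q\{M/x\}\rb$, up to definitional equality---together with Preservation of Typing (Lemma~\ref{lem:pwf}), which guarantees every intermediate target process is well typed so that the reduction rule $(\m{PEqRed})$ may be invoked to fold genuine reductions into definitional equality.

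First I would dispatch the base cases by unfolding the encoding at the redex. For the $\beta$-rule $(\lambda x{:}\tau.M_0)\,N_0 \tra{} M_0\{N_0/x\}$, the term $\lb (\lambda x{:}\tau.M_0)\,N_0\rb_z$ is a cut between $a(x).\lb M_0\rb_a$ and $a\langle\monad{\lb N_0\rb_y}\rangle.[a\leftrightarrow z]$; it first performs the input/output communication and then the forwarding reduction $(\nub a)(R \mid [a\leftrightarrow z]) \tra{} R\{z/a\}$, yielding $\lb M_0\rb_z\{\monad{\lb N_0\rb_y}/x\}$, which Compositionality identifies with $\lb M_0\{N_0/x\}\rb_z$. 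Note that this base case already consumes two target steps, matching the $\tra{}^+$ of the main-text correspondence. On the process side, the homomorphic redexes---communication $c\langle M\rangle.P \mid c(x).Q$, the case/choice and bound-output rules, and forwarding---each map to exactly the corresponding target redex, after which the process part of Compositionality absorbs the induced substitution $\{\monad{\lb M\rb_y}/x\}$. The monadic spawn $c \leftarrow \monad{c \leftarrow P \leftarrow \ov{u_j};\ov{d_i}} \leftarrow \ov{u_j};\ov{d_i}; Q \tra{} (\nub c)(P \mid Q)$ is computationally the most involved: its encoding sequentially outputs fresh names to the input-prefixed $\lb \monad{\cdots}\rb_c$ and then forwards the ambient channels, so a whole cascade of communication and forwarding steps is needed before the reduct is seen, via Compositionality, to be definitionally equal to $\lb (\nub c)(P \mid Q)\rb$.

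The congruence cases---reduction under a cut, under a prefix, and so on---follow from the induction hypothesis by reducing inside the (homomorphic) encoding context and then closing under the congruence rules of process equality, appealing to the Preservation of Equality lemma and $(\m{PEqRed})$. The step I expect to be the main obstacle is the term congruence that reduces the \emph{argument}, $M\,N \tra{} M\,N'$ from $N \tra{} N'$: here the encoding $(\nub a)(\lb M\rb_a \mid a\langle\monad{\lb N\rb_y}\rangle.[a\leftrightarrow z])$ buries $N$ inside the monadic value $\monad{\lb N\rb_y}$, and the operational semantics forbids reduction underneath a monad, so no target reduction directly mimics this source step. The resolution is to bridge the gap with definitional equality rather than reduction: from $N \tra{} N'$ and subject reduction for terms (Theorem~\ref{thm:sreq}) I obtain $\Psi \vdash N = N' : \tau$, hence $\lb N\rb_y = \lb N'\rb_y$ by Preservation of Equality, and then $\monad{\lb N\rb_y} = \monad{\lb N'\rb_y}$ via $(\m{TMEq}\{\})$, whence congruence of process equality equates $\lb M\,N\rb_z$ with $\lb M\,N'\rb_z$.

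This last case is exactly where the slack of stating the correspondence \emph{modulo definitional equality} is indispensable, and it is the reason a strict operational bisimulation is unattainable. It also forces a subtlety in the exact shape of the statement, since this congruence contributes no genuine target step: depending on the reduction strategy adopted for the functional layer (for instance weak-head reduction, under which arguments sitting in monadic position are simply not reduced), the case either does not arise or is matched by definitional equality alone. I would therefore fix the strategy (or read the target transition as $\tra{}^{*}$ closed under definitional equality) precisely so that every computational source step is realised by at least one target step, keeping the remaining, genuinely reductive cases aligned with the $\tra{}^+$ formulation.
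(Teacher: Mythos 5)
Your proof follows essentially the same route as the paper's: induction on the reduction relation, unfolding the encoding at each redex, folding the resulting target reductions into definitional equality via $(\m{PEqRed})$ and type preservation, and discharging the induced substitutions with Compositionality (the $\beta$-case consuming two target steps, the monadic spawn a longer cascade, exactly as you describe). The one place where you diverge is also the most valuable part of your write-up: the congruence case $M\,N \tra{} M\,N'$ with $N \tra{} N'$. You are right that the encoding buries $N$ inside the monadic value $\monad{\lb N\rb_y}$, that the target semantics forbids reduction under a monad, and hence that no target step can witness this source step; the paper's own case analysis silently omits this case (it treats only $M\,N \tra{} M'\,N$), so your bridge via subject reduction for terms, Preservation of Equality, and $(\m{TMEq}\{\})$ --- together with the observation that the statement must then be read either under a reduction strategy that never reduces arguments in monadic position or with $\tra{}^{+}$ relaxed to reflexive closure modulo definitional equality --- is a genuine and correct repair rather than a restatement of the paper's argument.
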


\begin{proof}
  By induction on the reduction relation.

  \begin{description}

\item[Case:] $(\nub x)(x\langle M \rangle.P \mid x(y).Q) \tra{} (\nub
  x)(P \mid Q\{M/y\})$ 

\begin{tabbing}
$\lb (\nub x)(x\langle M \rangle.P \mid x(y).Q) \rb = 
(\nub x)(x\langle \monad{\lb M_c\rb} \rangle.\lb P\rb \mid x(y).\lb
Q\rb)$ \` by definition\\
$\tra{} (\nub x)(\lb P \rb \mid \lb Q\rb\{ \monad{\lb M_c\rb}/y\})$  \`
by operational semantics\\
$\lb (\nub  x)(P \mid Q\{M/y\}) \rb = (\nub x)(\lb P \rb \mid \lb Q
\{M/y\}\rb)$ \` by definition\\
$\Psi ; \Ga ; \D \vdash (\nub x)(\lb P \rb \mid \lb Q\rb\{ \monad{\lb
  M_c\rb}/y\}) = (\nub x)(\lb P \rb \mid \lb Q
\{M/y\}\rb) :: z{:}C$\\ \` by compositionality, type preservation and $\m{PEqCut}$
\end{tabbing}    

\item[Case:] $c\leftarrow M \leftarrow \ov{u_j};\ov{d_i};Q \tra{}
  c\leftarrow M' \leftarrow \ov{u_j};\ov{d_i};Q $ with $M \tra{} M'$

Straightforward by i.h.

\item[Case:] $c\leftarrow \monad{c\leftarrow P \leftarrow
    \ov{u_j};\ov{d_i}} \leftarrow \ov{u_j};\ov{d_i};Q \tra{} (\nub
  c)(P \mid Q)$

\begin{tabbing}
$\lb c\leftarrow \monad{c\leftarrow P \leftarrow
    \ov{u_j};\ov{d_i}} \leftarrow \ov{u_j};\ov{d_i};Q \rb \tra{}^+ 
(\nub c)(\lb P \rb \mid \lb Q \rb)$ \\\` by definition and operational
semantics\\ 
$\lb (\nub c)(P \mid Q)\rb = (\nub c)(\lb P \rb \mid \lb Q \rb)$ \` by
definition\\
We conclude by $\m{PEqR}$.
\end{tabbing}

  \item[Case:] $(\lambda x{:}\tau.M)\,N \tra{} M\{N/x\}$
    \begin{tabbing}
$\lb (\lambda x{:}\tau.M)\,N \rb_z = (\nub y)(y(x).\lb M \rb_y \mid
y\langle \monad{\lb N \rb_c}\rangle.[y\leftrightarrow z])$ \` by
definition\\
$\tra{} (\nub y)(\lb M \rb_y\{\monad{\lb N \rb_c}/x\} \mid
[y\leftrightarrow z])
\tra{} \lb M \rb_z\{\monad{\lb N \rb_c}/x\}$ \` by operational
semantics\\
$\monad{\lb \Psi \rb} ;\cdot ;\cdot \vdash \lb M \rb_z\{\monad{\lb N
  \rb_c}/x\} = \lb M\{N/x\}\rb_z :: z{:}\lb \sigma\{N/x\}\rb$\\ \` by
compositionality and type preservation

\end{tabbing}

\item[Case:] $M\, N \tra{} M' \, N$ with $M \tra{} M'$

\begin{tabbing}
$\lb M\, N \rb_z = (\nub x)(\lb M \rb_x \mid x\langle \monad{\lb
  N\rb_c}\rangle.[x\leftrightarrow z])$ \` by definition\\
$\lb M \rb_x \tra{} M_0$ with $\monad{\lb \Psi \rb} ; \cdot ; \cdot
\vdash M_0 = \lb M'\rb_z :: z{:}A$ \` by i.h.\\
$(\nub x)(\lb M \rb_x \mid x\langle \monad{\lb
  N\rb_c}\rangle.[x\leftrightarrow z]) \tra{}
(\nub x)(M_0 \mid  x\langle \monad{\lb
  N\rb_c}\rangle.[x\leftrightarrow z])$ \\\` by the operational
semantics\\
$=(\nub x)(\lb M'\rb_x \mid  x\langle \monad{\lb
  N\rb_c}\rangle.[x\leftrightarrow z]) :: z{:}A$ \` by type
preservation, $\m{PEqCut}$ and $\m{PEqR}$
\end{tabbing}

  \end{description}
\end{proof}

%%% Local Variables:
%%% mode: latex
%%% TeX-master: "main"
%%% End:

\end{document}